\def\identity{\leavevmode\hbox{\small1\kern-3.8pt\normalsize1}}
\newtheorem{theorem}{Theorem}
\newtheorem{lemma}{Lemma}
\newtheorem{proposition}{Proposition}
\newcommand{\tr}{\operatorname{tr}}
\newcommand{\1}{\operatorname{\uppercase\expandafter{\romannumeral1}}}
\newcommand{\2}{\operatorname{\uppercase\expandafter{\romannumeral2}}}
\newcommand{\3}{\operatorname{\uppercase\expandafter{\romannumeral3}}}
\newcommand{\4}{\operatorname{\uppercase\expandafter{\romannumeral4}}}
\newcommand{\5}{\operatorname{\uppercase\expandafter{\romannumeral5}}}
\newcommand{\6}{\operatorname{\uppercase\expandafter{\romannumeral6}}}
\newcommand{\iid}{\mathrm{iid}}
\newcommand{\rmi}{\mathrm{i}}
\newcommand{\rme}{\operatorname{e}}
\newcommand{\caH}{\mathcal{H}}
\newcommand{\bbZ}{\mathbb{Z}}
\def\eqref#1{\textup{(\ref{#1})}}
\newcommand{\eref}[1]{Eq.~\textup{(\ref{#1})}}
\newcommand{\lref}[1]{Lemma~\ref{#1}}
\def\<{\langle}  
\def\>{\rangle}
\newcommand{\rcite}[1]{Ref.~\cite{#1}}
\begin{document}
	\title{Robust and efficient verification of graph states in blind measurement-based quantum computation}
	
	\author{Zihao Li}
	\affiliation{State Key Laboratory of Surface Physics and Department of Physics, Fudan University, Shanghai 200433, China}
	\affiliation{Institute for Nanoelectronic Devices and Quantum Computing, Fudan University, Shanghai 200433, China}
	\affiliation{Center for Field Theory and Particle Physics, Fudan University, Shanghai 200433, China}
	
	\author{Huangjun Zhu}
	\email{zhuhuangjun@fudan.edu.cn}
	\affiliation{State Key Laboratory of Surface Physics and Department of Physics, Fudan University, Shanghai 200433, China}
	\affiliation{Institute for Nanoelectronic Devices and Quantum Computing, Fudan University, Shanghai 200433, China}
	\affiliation{Center for Field Theory and Particle Physics, Fudan University, Shanghai 200433, China}
	
	\author{Masahito Hayashi}
	\email{hmasahito@cuhk.edu.cn}
	\affiliation{School of Data Science, The Chinese University of Hong Kong, Longgang District, Shenzhen, 518172, China}
	\affiliation{International Quantum Academy (SIQA), Futian District, Shenzhen 518048, China}
	\affiliation{Graduate School of Mathematics, Nagoya University, Nagoya 464-8602, Japan}

	\begin{abstract}
		Blind quantum computation (BQC) is a secure quantum computation method that protects the privacy of clients. Measurement-based quantum computation (MBQC) is a promising approach for realizing BQC. To obtain reliable results in blind MBQC,  it is crucial to verify whether the resource graph states are accurately prepared in the adversarial scenario. However, previous verification protocols for this task are too resource consuming or noise susceptible to be applied in practice. Here, we propose a robust and efficient  protocol for verifying arbitrary graph states with any prime local dimension in the adversarial scenario, which leads to a robust and efficient  protocol for verifying the resource state in blind MBQC. Our protocol requires only local Pauli measurements and is thus easy to realize with current technologies. Nevertheless, it can achieve  the optimal scaling behaviors with respect to the system size and the target precision as quantified by the infidelity and significance level, which has never been achieved before. Notably, our protocol can exponentially enhance the scaling behavior with the significance level.
	\end{abstract}

	\date{\today}
	\maketitle

	\begin{center}
		\textbf{INTRODUCTION}
	\end{center}
	\vspace{1em}
	
	Quantum computation offers the promise of exponential speedups over classical computation on a number of important problems \cite{shor1994,NielC00,Presk18}. However, it is very challenging to realize practical quantum computation in the near future, especially for clients with limited quantum computational power. Blind quantum computation (BQC)~\cite{Fitzs17npj} is an effective method that enables such a client to delegate his (her)  computation to a server, who is capable to perform quantum computation, without leaking any information about the computation task.
	So far, various protocols of BQC have been proposed in theory \cite{Broadbent09,RUV13,Mantri13,Morimae13} and demonstrated in experiments \cite{Greganti16,JiangWei19,BarzScience12,BarzNP13}. Many of these protocols build on the model of measurement-based quantum computation (MBQC) \cite{RaussBrie01,RaussBrowne03,BriegelDur09}, in which graph states are used as resources and local projective measurements on qudits are used to drive the computation.

	To realize BQC successfully, it is crucial to protect the privacy of the client and verify the correctness of the computation results.
	The latter task, known as verification of BQC, has been studied in various models as explained in the Methods section, among which MBQC in the 
	receive-and-measure setting is particularly convenient \cite{Fujii17,HayaM15,TMMMF19,Xu21,MTH17,HayaHajdu18}. However, it is extremely challenging to construct robust and efficient verification protocols, especially for noisy, intermediate-scale quantum (NISQ) devices \cite{Presk18,Arute19,Zhong20}.  Actually, this problem  lies at the heart of the active research field of quantum characterization, verification, and validation (QCVV) \cite{Gheorg19,Eisert20,SupicBow20,Carra21,Kliesch21,YSG21}.

	\begin{table*}
		\caption{\label{tab:compare}
			Comparison of various protocols for verifying the resource states of blind MBQC in the adversarial scenario.
			Here $n$ is the qubit (qudit) number of the resource graph state;  $\epsilon$ and $\delta$ denote the target infidelity and significance level, respectively. The optimal scaling behaviors of the test number $N$ in $n$, $\epsilon$, and  $\delta$ are $O(1)$, $O(\epsilon^{-1})$, and $O(\ln \delta^{-1})$, respectively. By 'robust to noise' we mean  the verifier Alice can accept with a high probability if the state prepared has a sufficiently high fidelity. The robustness achieved in Ref.~\cite{Fujii17} is different from the current definition. 
			The scaling behaviors with respect to $\epsilon$ and $\delta$ are not clear for protocols in Refs.~\cite{TMMMF19,Xu21,Takeuchi18,MTH17}.  See Supplementary Note~3 for details.}
		\begin{math} 
			\begin{array}{c|cccccc}
				\hline\hline
				\mbox{Protocol}
				& \text{This paper}
				& \text{Ref.~\cite{ZhuEVQPSlong19} }
				& \text{Ref.~\cite{ZH3} }
				& \text{Ref.~\cite{Fujii17} }
				& \text{Ref.~\cite{HayaM15} }
				& \text{Refs.~\cite{TMMMF19,Xu21,Takeuchi18,MTH17}}
				\\[0.5ex]
				\hline
				\text{Is the scaling optimal in $n$?}
				&\text{Yes}   &\text{Yes}  &\text{No} &\text{Yes} &\text{Yes} &\text{No}  \\[0.5ex]
				\hline
				\text{Is the scaling optimal in $\epsilon$?}
				&\text{Yes}   &\text{Yes}  &\text{Yes} &\text{Yes} &\text{Yes} &\text{The choice of $\epsilon$ is restricted}   \\[0.5ex]
				\hline
				\text{Is the scaling optimal in $\delta$?}
				&\text{Yes}   &\text{Yes}  &\text{Yes} &\text{No}  &\text{No}  &\text{The choice of $\delta$ is restricted}   \\[0.5ex]
				\hline
				\text{Is it robust to noise?}
				& \text{Yes}  &\text{No}   &\text{No} &\text{Yes*} &\text{No}  &\text{No}   \\[0.5ex]
				\hline\hline
			\end{array}	
		\end{math}
	\end{table*}
	
	In this work, we  focus on the problem of verifying the resource graph states in the following adversarial scenario \cite{ZhuEVQPSshort19,HayaM15,ZhuEVQPSlong19}, which is crucial to the verification of blind MBQC in the receive-and-measure setting \cite{Morimae13,Fujii17,HayaM15,TMMMF19,Xu21,MTH17,HayaHajdu18}: Alice is a client (verifier) who can only perform single-qudit projective measurements with a trusted measurement device, and Bob is a server (prover) who can prepare arbitrary quantum states. 
	In order to perform MBQC, Alice delegates the preparation of the $n$-qudit graph state $|G\>\in \caH$ to Bob, who then prepares a quantum state $\rho$ on the whole space $\caH^{\otimes (N+1)}$ and sends it to Alice qudit by qudit. If Bob is honest, then he is supposed to prepare $N+1$ copies of $|G\>$; while if he is malicious, then he can mess up the computation of Alice by generating an arbitrary correlated or even entangled state $\rho$. To obtain reliable computation results, Alice needs to verify the resource state prepared by Bob with suitable tests on $N$ systems, where each test is a binary measurement on a single-copy system. If the test results satisfy certain conditions, then the conditional reduced state on  the remaining system is close to the target state $|G\>$ and can be used  for MBQC; otherwise, the state is  rejected. Since there is no  communication from Alice to Bob after the preparation of the state $\rho$, the information-theoretic blindness is guaranteed by the no-signaling principle \cite{Morimae13}.

	The assumption that the client can perform reliable local projective measurements can be justified as follows. First, the measurement devices are controlled by Alice in her laboratory and are not affected by the adversary. So it is reasonable to assume that the measurement devices are trustworthy. Second, in practice, Alice can calibrate and verify her measurement devices before performing  blind MBQC, and the resource costs of these operations are independent of the complexity of the quantum computation and the qudit number of the resource graph state. If high quality measurements can be certified after  calibration and verification, then Alice can safely use them to verify the graph state and perform blind MBQC.

	As pointed out above, the verification of the resource graph state in the adversarial scenario \cite{ZhuEVQPSshort19,HayaM15,ZhuEVQPSlong19} is a crucial and challenging part in the verification of blind MBQC. A valid verification protocol in the adversarial scenario has to meet the basic requirements of completeness and soundness \cite{TMMMF19,HayaM15,ZhuEVQPSlong19}. The completeness means Alice does not reject the ideal graph state $|G\>$. 
	Intuitively, the verification protocol is sound if Alice does not mistakenly accept any bad state that is far from the ideal state $|G\>$. 
	Concretely, the soundness means the following: once accepting, Alice needs to ensure with a high confidence level $1-\delta$ 
	that the reduced state for MBQC has a sufficiently high fidelity (at least $1-\epsilon$) with $|G\>$. 
	Here $0<\delta\leq1$ is called the significance level and the threshold $0<\epsilon<1$ is called the target infidelity. 
	The two parameters specify the target verification precision. The efficiency  of a protocol is characterized by the number $N$ of tests needed to achieve a given precision. Under the requirements of completeness and soundness, the optimal scaling behaviors of $N$ with respect to $\epsilon$, $\delta$, and the qudit number $n$ of $|G\>$ are $O(\epsilon^{-1})$, $O(\ln\delta^{-1})$, and $O(1)$, respectively, as explained in the Results section.
	However, it is highly nontrivial  to construct efficient verification protocols in the adversarial scenario. 
	Although various  protocols have been proposed \cite{ZhuEVQPSlong19,ZH3,HayaM15,TMMMF19,Xu21,Takeuchi18,MTH17}, most protocols known so far
	are too resource consuming. Even without considering noise robustness,  only the protocol of Refs.~\cite{ZhuEVQPSshort19,ZhuEVQPSlong19} achieves the optimal scaling behaviors with $n$, $\epsilon$, and $\delta$ (see Table~\ref{tab:compare}).

	Moreover, most protocols are not robust to experimental noise: the state prepared by Bob may be rejected with a high probability even if it has a  small deviation  from the ideal resource state.   
	However, in practice, it is extremely difficult to prepare quantum states with genuine multipartite entanglement perfectly. 
	So it is unrealistic to ask honest Bob to generate the perfect resource state. On the other hand, if the  deviation  from the ideal state is small enough, then it is still useful for MBQC \cite{TMMMF19,Takeuchi18}. 
	Therefore, a practical and robust protocol should accept nearly ideal states with a sufficiently high probability; otherwise, Alice needs to repeat the verification protocol many times to perform MBQC, which substantially increase the sample complexity. 
	Unfortunately, no protocol known in the literature can achieve this goal.

	Recently, a fault-tolerant protocol was proposed for verifying MBQC based on two-colorable graph states \cite{Fujii17}. 
	With this protocol, Alice can detect whether or not the given state belongs to a set of error-correctable states; then she can perform fault-tolerant MBQC on the accepted state. Although this protocol is noise-resilient to some extent, it is not very efficient  (see Table~\ref{tab:compare}),  
	and is difficult to realize in the current era of NISQ devices \cite{Presk18,Arute19,Zhong20} because too many physical qubits are required to encode the  logical qubits. In addition, this protocol is robust only to certain correctable errors since it is based on a given  error correcting code. 
	If the actual error is not correctable, then the probability of acceptance will decrease exponentially with the number of tests, which substantially increases the actual sample complexity.

	In this work, we propose a robust and efficient protocol for verifying general qudit graph states with a prime local dimension in the adversarial scenario, which plays a crucial role in robust and efficient verification of blind MBQC. 
	Our protocol is appealing to practical applications because it only requires stabilizer tests based on local Pauli measurements, which are easy to implement with current technologies. It is robust against arbitrary types of noise in state preparation, as long as the fidelity is sufficiently high. 
	Moreover, our protocol can achieve optimal scaling behaviors with respect to the system size and target precision $\epsilon,\delta$, and the sample cost is comparable to the counterpart in the nonadversarial scenario as clarified in the Methods section. 
	As far as we know, such a high efficiency  has never been achieved before when robustness is taken into account.
	In addition to qudit graph states, our protocol can also be applied to verifying many other important quantum states in the adversarial scenario, as explained in the Discussion section. Furthermore, many technical results developed in the course of our work are also useful to studying random sampling without replacement, as discussed in the companion paper \cite{Classical22} (cf. the Methods section).

	\bigskip
	\begin{center}
		\textbf{RESULTS}
	\end{center}
	\vspace{1em}
	\noindent\textbf{Qudit graph states}\\
	To establish our results, first we review the definition of qudit graph states as a preliminary, where the local dimension $d$ is a prime. 
	Mathematically, a graph $G=(V,E,m_E)$ is characterized by a set of $n$ vertices  $V=\{1,2,\dots,n\}$ and a set of edges $E$
	together with multiplicities specified by $m_E=(m_e)_{e\in E}$, where $m_e\in \bbZ_d$ and $\bbZ_d$ is the ring of integers modulo $d$,
	which is also a field given that $d$ is a prime. Two distinct vertices $i,j$ of $G$ are adjacent if they are connected by an edge. The generalized Pauli operators $X$ and $Z$ for a qudit read
	\begin{equation}
		Z|j\>=\omega^j|j\>,\quad\  X|j\>=|j+1\>,\quad\  \omega=\rme^{2\pi\rmi/d},
	\end{equation}
	where $j\in \bbZ_d$.

	Given a graph $G=(V,E,m_E)$ with $n$ vertices, we can construct an $n$-qudit graph state $|G\>\in\caH$ as follows \cite{ZH3,Keet10}: 
	first, prepare the state $|+\>:=\sum_{j\in \bbZ_d}\!|j\>/\sqrt{d}$ for each vertex; then, for each edge $e\in E$,  apply $m_{e}$ times the generalized controlled-$Z$ operation $\mathrm{CZ}_{e}$ on the vertices of $e$, where $\mathrm{CZ}_{e}=\sum_{k\in \bbZ_d}|k\>\<k |_i \otimes Z_j^k$ if $e=(i,j)$. 
	The resulting graph state has the form 
	\begin{equation}\label{eq:defGstate}
		|G\>=\left( \prod_{e\in E} \mathrm{CZ}_e ^{m_e} \right)  |+\>^{\otimes n}. 
	\end{equation}
	This graph state is also uniquely determined by its stabilizer group $S$ generated by the $n$ commuting operators $S_i:=X_i \bigotimes_{j\in V_i} Z_j^{m_{(i,j)}}$ for $i=1,2,\ldots ,n$, where  $V_i$ is the set of vertices adjacent to  vertex $i$.
	Each stabilizer operator  in $S$ can be written as
	\begin{equation}
		g_{\mathbf{k}}=\prod_{i=1}^{n} S_i^{k_i}=\bigotimes_{i=1}^n (g_{\mathbf{k}})_i,
	\end{equation} 
	where $\mathbf{k}:=(k_{1},\ldots, k_{n}) \in \mathbb{Z}_{d}^{n}$, and $(g_{\mathbf{k}})_i$ denotes the local generalized Pauli operator for the $i$th qudit.

	\bigskip
	\noindent\textbf{Strategy for testing qudit graph states}\\
	Recently, a homogeneous strategy \cite{ZhuEVQPSshort19,ZhuEVQPSlong19} for testing qubit stabilizer states based on stabilizer tests was proposed in Ref.~\cite{PLM18} and generalized to the qudit case with a prime local dimension in Sec.~X~E of Ref.~\cite{ZhuEVQPSlong19}. Here we use a variant  strategy for testing qudit graph states, which serves as an important subroutine of our verification protocol.
	Let $S$ be the stabilizer group of $|G\>\in\caH$ and $\mathcal{D}(\caH)$ be the set of all density operators on $\caH$. 
	For any operator $g_{\mathbf{k}}\in S$, the corresponding stabilizer test  is constructed as follows: party $i$ measures the local generalized Pauli operator $(g_{\mathbf{k}})_i$ for $i=1,2,\dots,n$, and records the outcome by an integer $o_{i} \in \mathbb{Z}_{d}$, which corresponds to the eigenvalue $\omega^{o_{i}}$ of $(g_{\mathbf{k}})_i$; then the test is passed if and only if the outcomes satisfy $\sum_{i} o_{i}=0 \bmod d$.
	By construction, the test can be represented by a two-outcome measurement $\{P_{\mathbf{k}},\openone-\!P_{\mathbf{k}}\}$. 
	Here $\openone$ is the identity operator on $\caH$; 
	\begin{equation}\label{eq:DefPk}
		P_{\mathbf{k}}= \frac{1}{d}\sum_{j\in \bbZ_d}g_{\mathbf{k}}^j
	\end{equation}
	is the projector onto the eigenspace of $g_{\mathbf{k}}$ with eigenvalue~1 and  corresponds to passing the test, while $\openone-P_{\mathbf{k}}$ corresponds to the failure. It is easy to check that $P_{\mathbf{k}}|G\>=|G\>$, which means $|G\>$ can always pass the test. The stabilizer test corresponding to the operator $\openone\in S$ is called the `trivial test' since all states can pass the test with certainty.
	
	To construct a verification strategy for $|G\>$, we  perform all distinct tests $P_{\mathbf{k}}$ for $\mathbf{k}\in \bbZ_d^n$  randomly each with probability $d^{-n}$. The resulting strategy is characterized by a two-outcome measurement $\{\tilde\Omega,\openone-\tilde\Omega\}$, which is determined by the verification operator 
	\begin{align}\label{eq:PLMstrategy}
		\tilde\Omega &=\frac{1}{d^{n}} \sum_{\mathbf{k} \in \mathbb{Z}_{d}^{n}} P_{\mathbf{k}}
		=|G\>\< G|+\frac{1}{d}(\openone-|G\>\<G|). 
	\end{align}
	For $1/d\leq \lambda<1$, if one performs  $\tilde\Omega$ and the trivial test with probabilities $p=\frac{d(1-\lambda)}{d-1}$ and $1-p$, respectively, then another strategy can be constructed as  \cite{ZhuEVQPSshort19,ZhuEVQPSlong19}
	\begin{align}\label{eq:strategy}
		\Omega
		=p\,\tilde\Omega+(1-p)\openone
		=|G\>\< G|+\lambda(\openone-|G\>\<G|). 
	\end{align}
	We denote by $\nu:=1-\lambda$ the spectral gap of $\Omega$ from the largest eigenvalue. 
	This strategy plays a key role in our verification protocol introduced in the next subsection. 
	
	As shown in Supplementary Note~6~A, the second equality in \eref{eq:PLMstrategy} holds whenever $d$ is a prime, but may fail if $d$ is not a prime. In the latter case, our strategy is no longer homogeneous in general, and many results in this work may not hold since they are based on homogeneous strategies. This is why we restrict our attention to the case of prime local dimensions.

	\bigskip
	\noindent\textbf{Verification of graph states in blind MBQC}\\
	Suppose Alice intends to perform quantum computation with single-qudit projective measurements on the $n$-qudit graph state $|G\>$ generated by Bob.  
	As shown in Fig.~\ref{fig:FigQSVadv}, our protocol for verifying $|G\>$ in the adversarial scenario runs as follows.

	\begin{figure}
		\begin{center}
			\includegraphics[width=6.5cm]{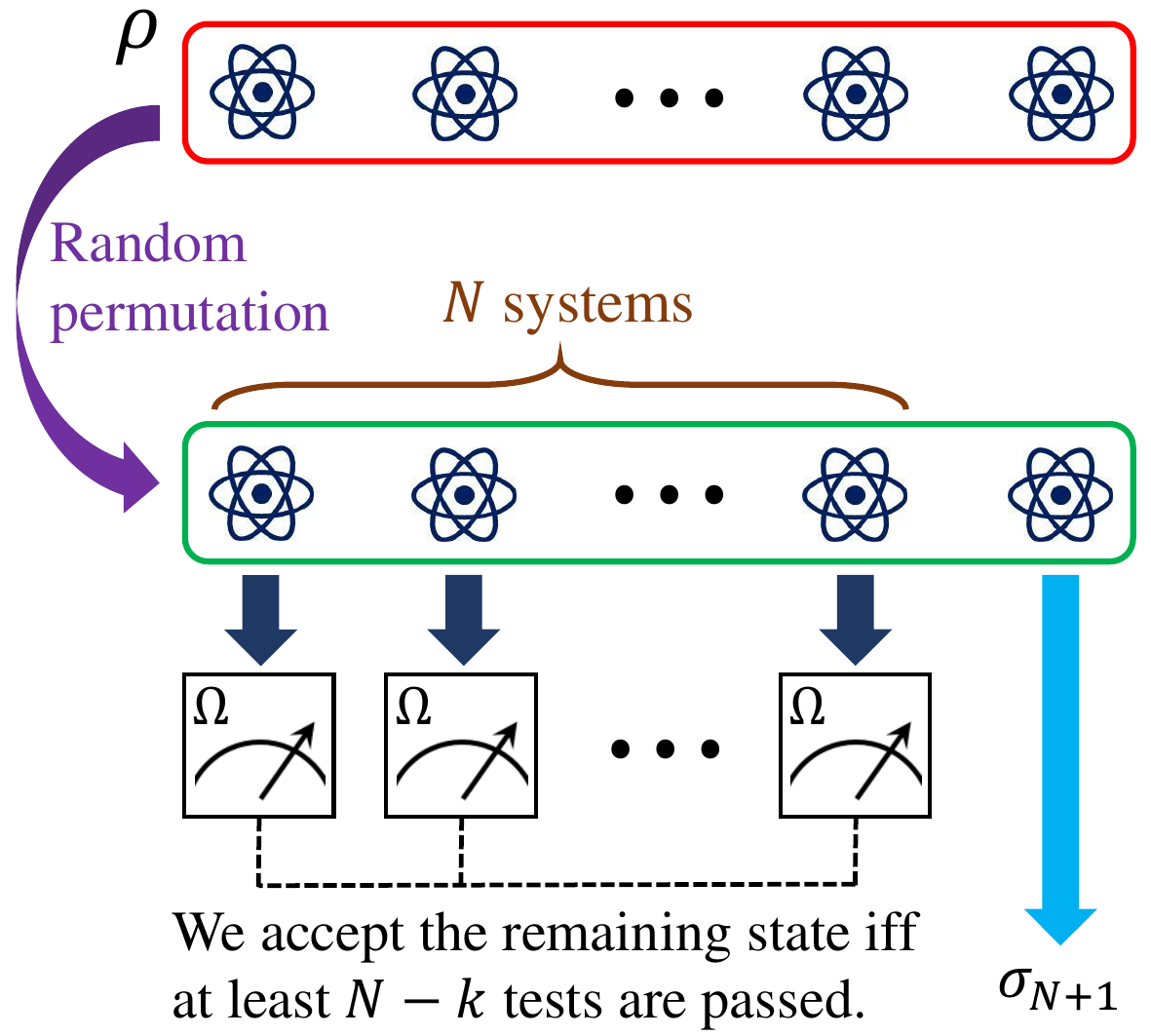}
			\caption{\label{fig:FigQSVadv}
				Schematic view of our verification protocol. Here the state $\rho$ generated by Bob might be arbitrarily correlated or entangled on the whole space $\caH^{\otimes (N+1)}$. To verify the target state,  Alice first randomly permutes all $N+1$ systems, and then uses a strategy $\Omega$ to test each of the first $N$ systems. Finally, she accepts the reduced state $\sigma_{N+1}$ on the remaining system iff at least $N-k$ tests are passed.
			}
		\end{center}
	\end{figure}

	\begin{enumerate}
		\item[1.] Bob produces a state $\rho$ on the whole space $\caH^{\otimes (N+1)}$ with $N\geq1$ and sends it to Alice. 
		
		\item[2.] After receiving the state, Alice randomly permutes the $N+1$ systems of $\rho$ 	(due to this procedure, we can assume that $\rho$ is permutation invariant without loss of generality) and applies the strategy $\Omega$ defined in \eref{eq:strategy} to the first $N$ systems.
		
		\item[3.] Alice chooses an integer $0\leq k\leq N-1$, called the number of allowed failures.
		If at most $k$ failures are observed among the $N$ tests, Alice accepts the reduced state $\sigma_{N+1}$ on the remaining system and uses it for MBQC; otherwise, she rejects it.
	\end{enumerate}

	With this verification protocol, Alice aims to achieve three goals: completeness, soundness, and robustness. 
	Recall that $|G\>$ can always pass each test, so the completeness is automatically  guaranteed. 
	The soundness is characterized by the target infidelity $\epsilon$ and significance level $\delta$ as explained in the introduction.  
	For verification protocols working in the nonadversarial scenario, where the source only produces independent states with no correlation or entanglement among different runs, the optimal scaling behaviors of the test number $N$ with respect to $\epsilon$, $\delta$, and $n$ are $O(\epsilon^{-1})$, $O(\ln\delta^{-1})$, and $O(1)$, respectively \cite{PLM18,ZhuEVQPSlong19}. The adversarial scenario studied in this work has a weaker assumption on the source \cite{PLM18,ZhuEVQPSlong19}, so the scaling behaviors in $\epsilon$, $\delta$, and $n$ cannot be better. Although the condition of soundness looks quite simple, it is highly nontrivial to determine the degree of soundness.
	Even in the special case $k=0$, this problem was resolved only very recently after quite a lengthy analysis \cite{ZhuEVQPSshort19,ZhuEVQPSlong19}. Unfortunately, the robustness of this protocol is poor in this special case, as we shall see later. 
	So we need to tackle this challenge in the general case.

	Most previous works did not consider the problem of robustness at all, because it is already very difficult to detect the bad case without considering robustness. To characterize the robustness of a protocol, we need to consider the case in which honest Bob prepares an independent and identically distributed (i.i.d.) quantum state, that is, $\rho$ is a tensor power of the form $\rho=\tau^{\otimes (N+1)}$ with $\tau\in \mathcal{D}(\caH)$. 
	Due to inevitable noise, $\tau$ may not equal the ideal state $|G\>\<G|$.
	Nevertheless, if the infidelity $\epsilon_\tau:=1-\<G|\tau|G\>$ is smaller than the target infidelity, that is, $\epsilon_\tau<\epsilon$, then  $\tau$ is still useful for quantum computing. For a robust verification protocol, such a state should be accepted with a high probability.

	In the i.i.d.\! case, the probability that Alice accepts $\tau$ reads 
	\begin{align}\label{eq:SuccessProbpk}
		p^{\iid}_{N,k}(\tau)
		=B_{N,k}\big(1-\tr(\Omega \tau) \big)
		=B_{N,k}(\nu\epsilon_\tau),
	\end{align}
	where $N$ is the number of tests, $k$ is the number of allowed failures, and $B_{N,k}(p)\!:=\!\sum_{j=0}^k \binom{N}{j} p^j (1-p)^{N-j}$ is the 
	binomial cumulative distribution function. To construct a robust verification protocol, it is preferable to choose a large value of $k$, so that $p^{\iid}_{N,k}(\tau)$ is sufficiently high. Unfortunately, most previous verification protocols can reach a meaningful conclusion only when  $k=0$ \cite{ZhuEVQPSshort19,ZhuEVQPSlong19,ZH3,HayaM15,MTH17}, in which case the probability 
	\begin{align}\label{eq:pracck=0}
		p^{\iid}_{N,k=0}(\tau)
		=(1-\nu\epsilon_\tau)^N
	\end{align}
	decreases exponentially  with the test number $N$, which is not satisfactory. These protocols need a large number of tests to guarantee the soundness, so it is difficult to get accepted even if Bob is honest. Hence,  previous protocols with the choice $k = 0$ are not robust to noise in state preparation. Since the acceptance probability is small, Alice needs to repeat the verification protocol many times to ensure that she accepts the state $\tau$  at least once, which substantially increases the actual sample cost.

	\begin{figure}
		\begin{center}
			\includegraphics[width=6.9cm]{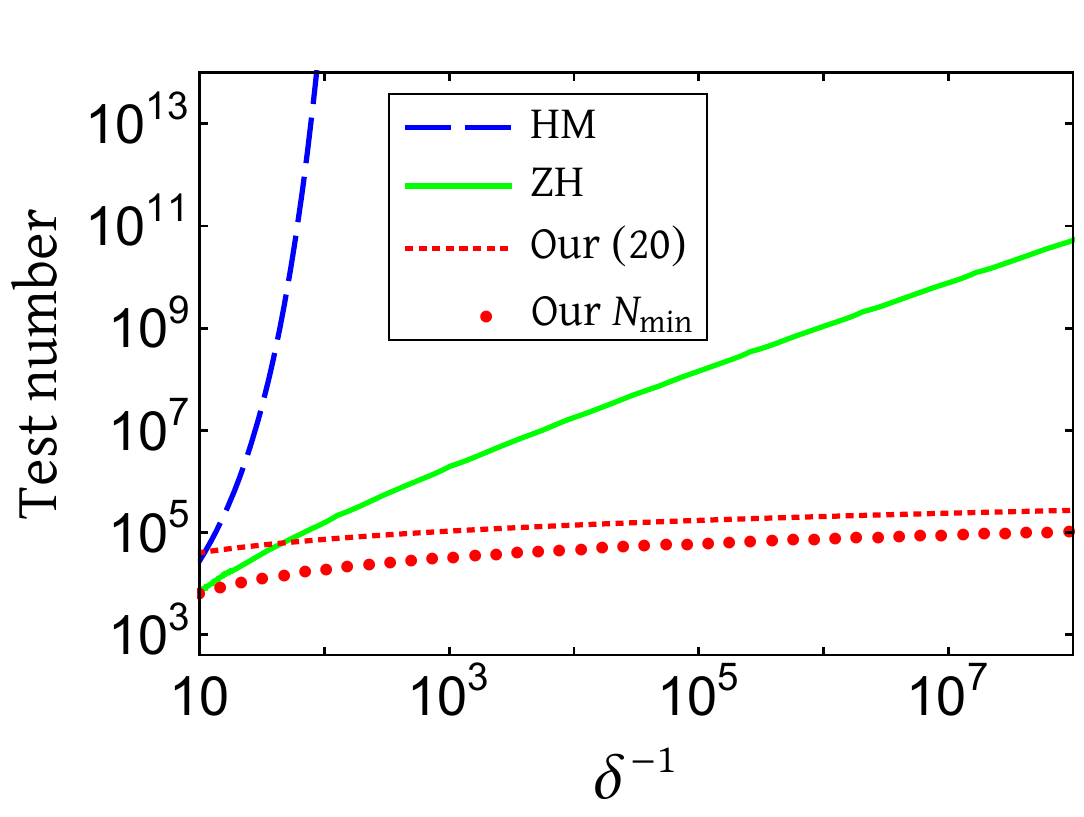}
			\caption{\label{fig:TotalNumTest}
				Number of tests required to verify a general qudit graph state in the adversarial scenario within infidelity $\epsilon=0.01$, significance level $\delta$, and robustness $r=1/2$.  The red dots correspond to $N_{\rm min}(\epsilon,\delta,\lambda,r)$ in \eref{eq:OptNadvDef} with $\lambda=1/2$,  and 
				the red dashed curve corresponds to the RHS of \eref{eq:ngeqiidHighProb}, which is an upper bound for $N_{\rm min}(\epsilon,\delta,\lambda,r)$.  	The blue dashed curve corresponds to the HM protocol \cite{HayaM15}, and the green solid curve corresponds to the ZH protocol \cite{ZhuEVQPSlong19} with $\lambda=1/2$. The performances of the TMMMF protocol \cite{TMMMF19} and  TM protocol \cite{Takeuchi18} are not shown  because the numbers of tests required are too large (see Supplementary Note~3). 
			}
		\end{center}
	\end{figure}

	When  $\epsilon_\tau= \frac{\epsilon}{2}$ for example, the number of repetitions required is at least $\Theta\big(\exp[\frac{1}{4\delta}]\big)$ for the HM protocol in  Ref.~\cite{HayaM15} and  $\Theta\big(\frac{1}{\sqrt{\delta}}\big)$  for the ZH protocol in Refs.~\cite{ZhuEVQPSshort19,ZhuEVQPSlong19} 
	(see Supplementary Note~3 for details). As a consequence, the total number of required tests is at least $\Theta\big(\frac{1}{\delta}\exp[\frac{1}{4\delta}]\big)$ for the HM protocol and $\Theta\big( \frac{\ln\delta^{-1}}{\sqrt{\delta}}\big)$ for the ZH protocol, as illustrated in Fig.~\ref{fig:TotalNumTest}. Therefore, although some protocols known in the literature are reasonably efficient in detecting the bad case, they are not useful in verifying the resource state of blind MBQC in a realistic scenario.

	\bigskip
	\noindent\textbf{Guaranteed infidelity}\\	
	Suppose $\rho$ is  permutation invariant. Then the probability that Alice accepts $\rho$ reads
	\begin{align}\label{eq:pnmkrho}
		p_k(\rho)=
		\sum_{i=0}^k \binom{N}{i}
		\tr \! \big(\big[\Omega^{\otimes (N-i)}\otimes \overline{\Omega}^{\otimes i} \otimes \openone\big] \rho\big),
	\end{align}
	where $\overline{\Omega}:=\openone-\Omega$. 
	Denote by $\sigma_{N+1}$ the reduced state on the remaining system when at most $k$ failures are observed. 
	The fidelity between $\sigma_{N+1}\!$ and the ideal state $|G\>$ reads $F_k(\rho) = f_k(\rho)/p_k(\rho)$ $[{\rm assuming}\  p_k(\rho)>0]$, where
	\begin{align}\label{eq:fnmkrho}
		\begin{split}
			f_k(\rho) =  
			\sum_{i=0}^k \binom{N}{i} \tr \! \big(\big[
			\Omega^{\otimes (N-i)} \otimes \overline{\Omega}^{\otimes i} 
			\otimes |G\>\<G|\big] \rho \big).
		\end{split}
	\end{align}
	The actual verification precision can be characterized by the following figure of merit with $0<\delta\leq1$,
	\begin{align}
		\bar{\epsilon}_{\lambda}(k,N,\delta):=
		1-\min_{\rho} \left\{F_k(\rho) \,|\, p_k(\rho)\geq \delta  \right\},
		\label{eq:hatzeta2}
	\end{align}
	where $\lambda$ is determined by \eref{eq:strategy}, and the minimization is taken over permutation-invariant states $\rho$ on ${\cal H}^{\otimes (N+1)}$.

	If Alice accepts the state prepared by Bob, then she can guarantee (with significance level $\delta$) that the reduced state $\sigma_{N+1}$ has
	infidelity at most $\bar{\epsilon}_{\lambda}(k,N,\delta)$ with the ideal state $|G\>$. 
	Consequently, according to the relation between the fidelity and trace norm, Alice can ensure  the condition \cite{HayaM15}
	\begin{align}\label{eq:relaFidTr}
		\left|\operatorname{tr}(E \sigma_{N+1})-\langle G|E| G\rangle\right| \leq \sqrt{\bar{\epsilon}_\lambda(k, N, \delta)}
	\end{align}
	for any POVM element $0\leq E\leq \openone$; that is, the deviation of any measurement outcome probability from the ideal value is not larger than $\sqrt{\bar{\epsilon}_{\lambda}(k,N,\delta)}$.

	In view of  the above discussions, the computation of $\bar{\epsilon}_{\lambda}(k,N,\delta)$ given in \eref{eq:hatzeta2} is of central importance 
	to analyzing the soundness of our protocol. 
	Thanks to the analysis presented in the Methods section, this quantum optimization problem can actually be reduced to a 
	classical sampling problem studied in the companion paper \cite{Classical22}. 
	Using the results derived in Ref.~\cite{Classical22}, we can deduce many useful properties of $\bar{\epsilon}_{\lambda}(k,N,\delta)$ 
	as well as its analytical formula, which are presented in Supplementary Note~1. Here it suffices to clarify the monotonicity properties of  $\overline{\epsilon}_\lambda(k,N,\delta)$ as stated  in Proposition~\ref{prop:epsMonoton} below, which follows from Proposition~6.5 in Ref.~\cite{Classical22}. Let $\bbZ^{\geq j}$ be the set of integers larger than or equal to $j$.
	
	\begin{proposition}\label{prop:epsMonoton}	Suppose $0\leq\lambda<1$, $0<\delta\leq1$, $k\in\bbZ^{\geq 0}$, and $N\in \bbZ^{\geq k+1}$. 
		Then $\overline{\epsilon}_\lambda(k,N,\delta)$ is nonincreasing in $\delta$ and $N$, but nondecreasing in $k$.
	\end{proposition}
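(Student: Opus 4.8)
The plan is to establish the three monotonicity statements separately, treating the dependence on $\delta$ directly and reducing the dependences on $k$ and $N$ to a classical sampling problem. The $\delta$-monotonicity is immediate from the variational form \eqref{eq:hatzeta2}: for $\delta'\ge\delta$ the feasible set $\{\rho\colon p_k(\rho)\ge\delta'\}$ of permutation-invariant states is contained in $\{\rho\colon p_k(\rho)\ge\delta\}$, so the minimum of $F_k$ over the former is at least that over the latter, whence $\bar\epsilon_\lambda(k,N,\delta')\le\bar\epsilon_\lambda(k,N,\delta)$. One should first note that the minimum is attained, which follows from compactness of the set of permutation-invariant density operators and continuity of $\rho\mapsto F_k(\rho)$ on the region $p_k(\rho)\ge\delta>0$; when the feasible set is empty the minimum is taken to be $1$, consistent with completeness.

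For the $k$- and $N$-monotonicity I would pass, as in the METHODS section, from the quantum optimization \eqref{eq:hatzeta2} to an equivalent classical one. The key observation is that $\Omega=|G\>\<G|+\lambda(\openone-|G\>\<G|)$ has only the two eigenvalues $1$ and $\lambda$, so after the random permutation of the $N+1$ systems the quantities $p_k$, $f_k$, $F_k$ depend on $\rho$ only through the distribution of the number of systems lying in the eigenspace $|G\>^\perp$ (and their overlaps with $|G\>$); an extremality argument then reduces the minimization to classical mixtures $\rho=\sum_m q_m\,\rho^{(m)}$, where $\rho^{(m)}$ is the symmetrization over the $N+1$ systems of a tensor product of $N+1-m$ copies of $|G\>\<G|$ and $m$ copies of a fixed state orthogonal to $|G\>$. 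Under this reduction $\bar\epsilon_\lambda(k,N,\delta)$ becomes the figure of merit of the following sampling-without-replacement experiment analyzed in the companion paper \cite{Classical22}: among $N+1$ balls $m$ are "bad"; one ball is withheld uniformly at random and the other $N$ are tested; each tested bad ball independently fails with probability $\nu=1-\lambda$; the run is accepted iff at most $k$ failures occur; and one minimizes, over distributions $\{q_m\}$ whose acceptance probability is at least $\delta$, the probability that the withheld ball is good conditioned on acceptance.

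With this identification in hand, the $k$- and $N$-monotonicity become precisely \cite[Proposition~6.5]{Classical22}. The intuition is that enlarging $k$ enlarges the acceptance region — by a coupling in which the $(k{+}1)$-failure rule accepts whenever the $k$-failure rule does — which can only help a dishonest Bob and hence can only increase the guaranteed infidelity, while enlarging $N$ supplies strictly more test outcomes about the same withheld ball, so the resulting worst-case estimate is at least as sharp. I expect the $N$-step to be the main obstacle: the ball population (equivalently the Hilbert-space dimension) grows with $N$, so one cannot simply restrict an optimal state of the larger problem to the smaller one; instead one must couple the $N$- and $(N{+}1)$-test experiments by appending the extra test, and then verify that the worst case over the distributions $\{q_m\}$ subject to the acceptance-probability constraint still respects the ordering — this is exactly where the quantitative hypergeometric estimates of \cite{Classical22} do the work. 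The $k$- and $\delta$-steps, by contrast, rest only on the nested-feasible-set and linear-fractional-programming structure of the optimization and should be routine once the reduction is available.
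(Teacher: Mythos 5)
Your proposal follows essentially the same route as the paper: the reduction of \eqref{eq:hatzeta2} to an optimization over permutation-invariant classical distributions (equivalently, mixtures over the number $m$ of systems orthogonal to $|G\>$) is exactly the argument of the METHODS section, and the $k$- and $N$-monotonicity are then delegated, as in the paper, to \cite[Proposition~6.5]{Classical22}. Your direct nested-feasible-set argument for the $\delta$-monotonicity is a correct, slightly more self-contained treatment of that one case, but it does not change the overall structure.
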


	\bigskip
	\noindent\textbf{Verification with a fixed error rate}\\
	If  the number $k$ of allowed failures is sublinear in $N$, that is, $k=o(N)$, then the acceptance probability $p^{\iid}_{N,k}(\tau)$  in \eref{eq:SuccessProbpk} for the i.i.d.\! case approaches 0 as the number of tests $N$ increases, which is not satisfactory.
	To achieve robust verification, here we set the number $k$ to be proportional to the number of tests, that is,
	$k=\lfloor s\nu N \rfloor$, where $0\leq s<1$ is the error rate, and $\nu=1-\lambda$ is the spectral gap of the strategy $\Omega$. 
	In this case, when Bob prepares i.i.d.\! states $\tau\in\mathcal{D}(\caH)$ with $\epsilon_\tau<s$, 
	the acceptance probability $p^{\iid}_{N,k}(\tau)$ approaches one as $N$ increases. 
	In addition, we can deduce the following theorem, which is proved in Supplementary Note~6~B. 
	
	\begin{figure}
		\begin{center}
			\includegraphics[width=6.7cm]{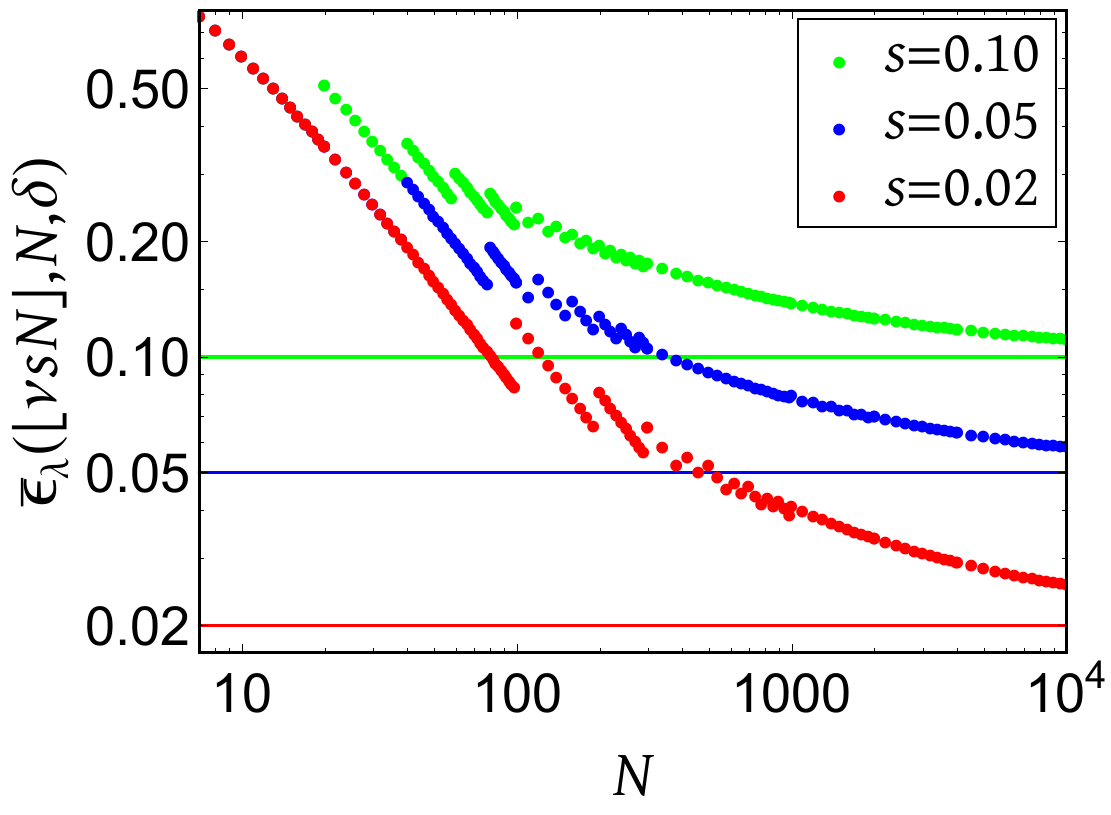}
			\caption{\label{fig:Logplot_barEps}
				Variations of $\bar{\epsilon}_{\lambda}(\lfloor\nu s N\rfloor, N, \delta)$ with the number $N$ of tests and error rate $s$ [by Eq.~(6) in Supplementary Note~1]. Here $\lambda=1/2$ and significance level $\delta=0.05$.  Each horizontal line represents an error rate. As the test number $N$ increases, $\bar{\epsilon}_{\lambda}(\lfloor\nu s N\rfloor, N, \delta)$ approaches~$s$.
			}
		\end{center}
	\end{figure}

	\begin{theorem}\label{thm:Boundeps}
		Suppose $0<s,\lambda<1$, $0<\delta\leq1/4$, and $N\in\bbZ^{\geq 1}$. Then
		\begin{align}\label{eq:infidelityLB}
			s- \frac{1}{\nu N}
			&<\bar{\epsilon}_{\lambda}(\lfloor\nu s N\rfloor, N, \delta) \nonumber\\
			&\leq s+ \frac{1}{\nu \lambda}
			\sqrt{\frac{s \ln\delta^{-1}}{N}} + \frac{\ln\delta^{-1}}{2\nu^2\lambda N}+\frac{2}{\lambda N}.
		\end{align}
	\end{theorem}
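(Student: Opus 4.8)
The plan is to combine the reduction of the quantum optimization \eqref{eq:hatzeta2} to a classical random-sampling-without-replacement problem (carried out in the METHODS section) with the analytical treatment of that problem in the companion paper \cite{Classical22}. The starting observation is that $\Omega$ and $\proj{G}$ commute, so every operator appearing in $p_k(\rho)$ and $f_k(\rho)$ is simultaneously diagonal with respect to the splitting of $\caH$ into $\operatorname{span}\{\ket{G}\}$ and its orthogonal complement. Consequently, for permutation-invariant $\rho$ these quantities depend only on the distribution $\{q_m\}_{m=0}^{N+1}$ of the number $m$ of subsystems that fall in $\ket{G}^{\perp}$: a subsystem in $\ket{G}$ always passes its test, a subsystem in $\ket{G}^{\perp}$ fails independently with probability $\nu$ and passes with probability $\lambda$, acceptance means at most $k$ failures among the $N$ tested subsystems, and the infidelity of the accepted reduced state equals the conditional probability that the untested subsystem $N{+}1$ lies in $\ket{G}^{\perp}$. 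Maximizing this conditional probability over $\{q_m\}$ subject to acceptance probability at least $\delta$ is precisely the classical problem treated in \cite{Classical22}, whose optimal value is $\bar{\epsilon}_\lambda(k,N,\delta)$ and is given by the explicit formula recorded in Supplementary Section~\ref{app:epsanalytical}; the theorem then follows by bounding that formula from both sides.

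For the lower bound I would use a simple i.i.d.\ witness. Fix $k=\lfloor\nu s N\rfloor$, pick $\ket{G^{\perp}}\perp\ket{G}$, and set $\tau=(1-\epsilon_\tau)\proj{G}+\epsilon_\tau\proj{G^{\perp}}$ with $\epsilon_\tau=k/(\nu N)\le s<1$. By \eqref{eq:SuccessProbpk} the acceptance probability for $\rho=\tau^{\otimes(N+1)}$ is $B_{N,k}(\nu\epsilon_\tau)=B_{N,k}(k/N)$, and since $k$ is an integer equal to the mean $N\cdot(k/N)$ of $\mathrm{Bin}(N,k/N)$ it is a median of that law, so $B_{N,k}(k/N)=\Pr[\mathrm{Bin}(N,k/N)\le k]\ge\tfrac12\ge\delta$ (this is where $\delta\le 1/4$ enters, rather weakly). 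On the other hand, by independence the reduced state on system $N{+}1$ conditioned on any event in the first $N$ systems is again $\tau$, so $F_k(\rho)=\langle G|\tau|G\rangle=1-\epsilon_\tau$. Hence $\bar{\epsilon}_\lambda(k,N,\delta)\ge\epsilon_\tau=k/(\nu N)>(\nu s N-1)/(\nu N)=s-1/(\nu N)$, the claimed strict inequality. (Alternatively this follows from the analytical formula together with the monotonicity in $\delta$ of Proposition~\ref{prop:epsMonoton}.)

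The hard part is the upper bound. In the classical picture the infidelity is essentially $m^{*}/(N+1)$, where $m^{*}$ is the largest number of subsystems in $\ket{G}^{\perp}$ for which the acceptance probability can still reach $\delta$; to leading order this probability is the lower-tail probability $\Pr[\mathrm{Bin}(m,\nu)\le k]$, so the constraint $p_k\ge\delta$ forces that tail to be at least a constant multiple of $\delta$. Inverting a standard Chernoff bound for the lower binomial tail, $\Pr[\mathrm{Bin}(m,\nu)\le k]\le\exp\!\big(-(m\nu-k)^2/(2m\nu)\big)$, turns this into an inequality of the shape $m\nu\le k+\sqrt{2k\ln\delta^{-1}}+O(\ln\delta^{-1})$; dividing by $\nu N$ and using $k=\lfloor\nu s N\rfloor\le\nu s N$ already produces the leading term $s$ and a $\sqrt{s\ln\delta^{-1}/N}$ correction, while the remaining $\lambda$-dependent factors together with the $1/(\lambda N)$ and $\ln\delta^{-1}/(\nu^{2}\lambda N)$ terms are accounted for by the gap between sampling $N$ of the $N{+}1$ systems rather than all of them, by relating the tail at $m$ to the one at $m-1$ (the case in which system $N{+}1$ is itself "defective"), and by the floor function and the explicit constants in the concentration estimates. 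The genuine obstacle — and the reason the companion paper \cite{Classical22} is invoked — is carrying out this inversion with honest non-asymptotic constants while simultaneously controlling the without-replacement correction, rather than the asymptotic heuristic sketched here; once the analytical formula of Supplementary Section~\ref{app:epsanalytical} is available, the remainder is a sequence of elementary but delicate estimates of binomial sums, with $\delta\le1/4$ ensuring the relevant tails and medians behave as needed.
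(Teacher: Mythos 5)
Your overall route is the same as the paper's: reduce \eref{eq:hatzeta2} to the classical sampling problem of the companion paper \cite{Classical22} and then bound the resulting quantity from both sides. Your lower-bound argument is complete and correct: the i.i.d.\ witness with $\epsilon_\tau=\lfloor\nu sN\rfloor/(\nu N)$, combined with the median property $B_{N,k}(k/N)\geq 1/2\geq\delta$ (the paper's \lref{lem:Bnk1-k/n>1/2}) and the fact that $F_k(\tau^{\otimes(N+1)})=\<G|\tau|G\>$ for product states, gives $\bar{\epsilon}_{\lambda}(\lfloor\nu sN\rfloor,N,\delta)\geq\lfloor\nu sN\rfloor/(\nu N)>s-1/(\nu N)$. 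The paper reaches the same conclusion by citing \lref{lem:LLP}, which encapsulates essentially your witness, so this half is fine.

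The upper bound, however, is an outline rather than a proof, and what is missing is precisely the quantitative crux. The paper's derivation is the chain: \lref{lem:zeta2Bound2} gives the non-asymptotic bound $\bar{\epsilon}_{\lambda}(k,N,\delta)\leq\bigl(z^*-k+1+\sqrt{\lambda k}\bigr)/(\lambda N)$, where $z^*$ is the smallest integer $z$ with $B_{z,k}(\nu)\leq\delta$; \lref{lem:z*UB3} inverts Hoeffding's bound to get $z^*\leq k/\nu+\sqrt{2\nu k\ln\delta^{-1}}/(2\nu^2)+\ln\delta^{-1}/(2\nu^2)+1$; and \lref{lem:1.098} (which is where the hypothesis $\delta\leq 1/4$ is actually used, not in the lower bound) absorbs the leftover $\sqrt{\lambda\nu}$ term into the $\sqrt{\ln\delta^{-1}}$ coefficient. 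Your heuristic ``infidelity $\approx m^*/(N+1)$, then invert a Chernoff bound'' reproduces the leading term $s$ and the $\sqrt{s\ln\delta^{-1}/N}$ shape, but it cannot by itself yield the stated $1/(\nu\lambda)$ coefficient, the $\ln\delta^{-1}/(2\nu^2\lambda N)$ term, or the $2/(\lambda N)$ tail: all of the $\lambda$-dependence enters through the specific form of \lref{lem:zeta2Bound2}, which you attribute only vaguely to ``the without-replacement correction'' and to relating the tail at $m$ to the one at $m-1$, without establishing it. You do correctly identify that this step must be imported from the companion paper (as the paper itself does), so the gap is one of execution rather than conception; but as written the right-hand side of the theorem is not derived, and a reader could not reconstruct the claimed constants from your sketch.
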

	Theorem~\ref{thm:Boundeps} implies that  $\bar{\epsilon}_{\lambda}(\lfloor\nu s N\rfloor, N, \delta)$ converges to the error rate 
	$s$ when the number $N$ of tests  gets large, as illustrated in Fig.~\ref{fig:Logplot_barEps}. 
	To achieve a given infidelity $\epsilon$ and significance level $\delta$, which means $\bar{\epsilon}_{\lambda}(\lfloor\nu s N\rfloor,N,\delta)\leq\epsilon$, it suffices to  set $s< \epsilon$ and choose a sufficiently  large $N$. By virtue of Theorem~\ref{thm:Boundeps} we can derive the  following theorem as proved in Supplementary Note~6~C.

	\begin{theorem}\label{thm:UBtestsNumber}
		Suppose  $0<\lambda<1$, $0\leq s<\epsilon<1$, and $0<\delta\leq1/2$.
		If the number  $N$ of tests satisfies
		\begin{align}
			N\geq
			\frac{\epsilon}{\left[ \lambda\nu (\epsilon-s) \right]^2}
			\left( \ln\delta^{-1} +4\lambda\nu^2 \right) ,
		\end{align}
		then  $\bar{\epsilon}_{\lambda}(\lfloor\nu s N\rfloor,N, \delta)\leq\epsilon$.
	\end{theorem}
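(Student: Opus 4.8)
\medskip
\noindent\emph{Proof strategy.} The plan is to read the bound off directly from the upper bound in Theorem~\ref{thm:Boundeps}. Completeness holds trivially because the ideal graph state passes every stabilizer test, so only the size of $\overline{\epsilon}_\lambda(\lfloor\nu sN\rfloor,N,\delta)$ must be controlled. Write $\Delta:=\epsilon-s>0$ and $L:=\ln\delta^{-1}$, and note that $L>0$ since $\delta\le1/2<1$, while $\nu s<1$ gives $\lfloor\nu sN\rfloor\le N-1$ so the quantity is well defined. By the upper bound of Theorem~\ref{thm:Boundeps} it suffices to show that the three correction terms obey
\[
\frac{1}{\nu\lambda}\sqrt{\frac{sL}{N}}+\frac{L}{2\nu^2\lambda N}+\frac{2}{\lambda N}\le\Delta ,
\]
for then $\overline{\epsilon}_\lambda(\lfloor\nu sN\rfloor,N,\delta)\le s+\Delta=\epsilon$.

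First I would multiply this target inequality by $\nu\lambda>0$ and merge the two $O(1/N)$ terms, obtaining the equivalent form $\sqrt{sL/N}+(L+4\nu^2)/(2\nu N)\le\nu\lambda\Delta$; this is a quadratic inequality in $1/\sqrt{N}$, which could be solved explicitly, but it is cleaner to substitute the hypothesis. The hypothesis on $N$ is equivalent to $1/\sqrt{N}\le\lambda\nu\Delta/\sqrt{\epsilon(L+4\lambda\nu^2)}$, and hence to $1/N\le\lambda^2\nu^2\Delta^2/[\epsilon(L+4\lambda\nu^2)]$. Inserting these into the two terms on the left, bounding each, and pulling out the common factor $\lambda\nu\Delta$ reduces the claim to
\[
\sqrt{\frac{sL}{\epsilon\,(L+4\lambda\nu^2)}}+\frac{\lambda\Delta\,(L+4\nu^2)}{2\epsilon\,(L+4\lambda\nu^2)}\le 1 .
\]

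To prove this last inequality I would use three elementary facts: (i) $L\le L+4\lambda\nu^2$, so the first summand is at most $\sqrt{s/\epsilon}$; (ii) $\lambda(L+4\nu^2)\le L+4\lambda\nu^2$ because $\lambda<1$ and $L>0$, so the coefficient of $\Delta/\epsilon$ in the second summand is at most $1/2$; (iii) with $u:=s/\epsilon\in[0,1)$ one has $\Delta/\epsilon=1-u$, so the left-hand side is at most $\sqrt{u}+\tfrac12(1-u)\le\tfrac12(1+u)+\tfrac12(1-u)=1$, the middle step being $\sqrt{u}\le(1+u)/2$. This yields $\overline{\epsilon}_\lambda(\lfloor\nu sN\rfloor,N,\delta)\le\epsilon$.

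The algebra is mechanical; the points needing care are the bookkeeping in the reduced inequality --- in particular keeping $\sqrt{L/(L+4\lambda\nu^2)}\le1$ rather than discarding it, and using the cancellation $\Delta/\epsilon=1-u$ against $\sqrt{u}$, since replacing $\Delta/\epsilon$ by $1$ would be too lossy --- and reconciling the hypotheses with Theorem~\ref{thm:Boundeps}, which is stated for $0<s<1$ and $0<\delta\le1/4$ while here $0\le s$ and $0<\delta\le1/2$. When $s=0$ the term $\sqrt{sL/N}$ vanishes and the same computation goes through ($u=0$ gives $\sqrt{u}+\tfrac12(1-u)=\tfrac12\le1$), so one only needs the upper bound of Theorem~\ref{thm:Boundeps} to remain valid at the boundary $s=0$. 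For $\delta\in(1/4,1/2]$ one needs the upper-bound half of Theorem~\ref{thm:Boundeps} --- which does not obviously use $\delta\le1/4$ --- on the full range $0<\delta\le1/2$, or else the monotonicity of $\overline{\epsilon}_\lambda$ in $\delta$ from Proposition~\ref{prop:epsMonoton} together with a careful tracking of constants. I expect these extensions of Theorem~\ref{thm:Boundeps}'s upper bound to be the only genuinely non-routine obstacle.
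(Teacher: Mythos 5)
Your route is genuinely different from the paper's: you treat Theorem~\ref{thm:Boundeps} as a black box and absorb its three correction terms into $\Delta=\epsilon-s$ via the clean reduction to $\sqrt{u}+\tfrac12(1-u)\le 1$ with $u=s/\epsilon$, whereas the paper re-derives everything from \lref{lem:zeta2Bound2} and \lref{lem:z*UB3} through a long chain of sufficient conditions using $(x+y)^2\le 2x^2+2y^2$ twice. On the range $0<\delta\le 1/4$ your algebra is correct (I checked the substitution of $1/\sqrt{N}$ and $1/N$, the bounds (i)--(iii), and the $s=0$ boundary, which indeed causes no trouble), and it is arguably tidier than the paper's derivation.

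However, there is a genuine gap for $\delta\in(1/4,1/2]$, which the theorem claims to cover. Your hope that the upper bound of Theorem~\ref{thm:Boundeps} "does not obviously use $\delta\le1/4$" is not borne out: the paper's proof of that upper bound invokes \lref{lem:1.098}, i.e.\ $\frac{\sqrt{2\ln\delta^{-1}}}{2\nu}+\sqrt{\lambda\nu}\le\frac{\sqrt{\ln\delta^{-1}}}{\nu}$, which is equivalent to $(1-1/\sqrt{2})\sqrt{\ln\delta^{-1}}\ge\nu\sqrt{\lambda\nu}$ and genuinely fails for, e.g., $\lambda=1/4$ (where $\nu\sqrt{\lambda\nu}=3\sqrt{3}/16\approx0.325$) once $\delta\gtrsim 0.29$. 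Your alternative patch via Proposition~\ref{prop:epsMonoton} replaces $\ln\delta^{-1}$ by $\ln 4\le 2\ln\delta^{-1}$, which injects a factor $\sqrt{2}$ into the $\sqrt{sL/N}$ term; the final step then requires $\sqrt{2u}+\tfrac12(1-u)(\cdots)\le 1$, which fails for $u$ near $1$, so the stated constant in the hypothesis on $N$ no longer suffices. The correct repair --- and what the paper actually does --- is to bypass \lref{lem:1.098} entirely and start from the intermediate bound $\bar{\epsilon}_\lambda(\lfloor\nu sN\rfloor,N,\delta)\le s+\frac{1}{\lambda N}\bigl[\bigl(\frac{\sqrt{2\ln\delta^{-1}}}{2\nu}+\sqrt{\lambda\nu}\bigr)\sqrt{sN}+\frac{\ln\delta^{-1}}{2\nu^2}+2\bigr]$, which follows from \lref{lem:zeta2Bound2} and \lref{lem:z*UB3} alone and hence holds for all $0<\delta\le1/2$; your substitution-and-AM--GM argument can then be rerun with the unsimplified coefficient, but that extra estimate must actually be carried out.
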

	Notably, if the ratio $s/\epsilon$ is a constant, then the sample cost is only $O(\epsilon^{-1}\ln \delta^{-1})$. 
	The scaling behaviors in $\epsilon$ and $\delta$ are the same as the counterparts in the nonadversarial scenario, and are thus optimal.

	\bigskip
	\noindent\textbf{The number of allowed failures}\\
	Next, we consider the case in which the number $N$ of tests is given. 
	To  construct a concrete verification protocol, we need to specify the number $k$ of allowed failures such that the conditions of soundness and robustness are satisfied simultaneously. According to Proposition~\ref{prop:epsMonoton}, a small $k$ is preferred to guarantee soundness, while a larger $k$ is preferred to guarantee robustness. To construct a robust and efficient verification protocol, we need to find a good balance between the two conflicting requirements. The following proposition provides a suitable interval for the number $k$ of allowed failures that can guarantee soundness; see Supplementary Note~6~E for a proof.  
	
	\begin{proposition}\label{prop:kbounds}
		Suppose $0<\lambda,\epsilon <1$, $0<\delta\leq1/4$, and $N,k\in\bbZ^{\geq 0}$. 
		If $\nu \epsilon N\leq k\leq N-1$, then $\bar{\epsilon}_{\lambda}(k, N, \delta)>\epsilon$.
		If $k\leq l(\lambda, N, \epsilon,\delta)$, then $\bar{\epsilon}_{\lambda}(k, N, \delta)\leq\epsilon$. Here
		\begin{align}\label{eq:definel}
			l(\lambda, N, \epsilon,\delta):=\!
			\bigg\lfloor\nu \epsilon N -\! \frac{\sqrt{  N \epsilon \ln\delta^{-1}}}{\lambda}
			-\! \frac{\ln\delta^{-1}}{2\lambda\nu}-\! \frac{2\nu}{\lambda} \bigg\rfloor.\!
		\end{align}
	\end{proposition}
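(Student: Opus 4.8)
The plan is to prove the two assertions separately. For the inequality $\bar{\epsilon}_\lambda(k,N,\delta)>\epsilon$ I would exhibit an explicit feasible state in the definition \eref{eq:hatzeta2} of $\bar{\epsilon}_\lambda$; for $\bar{\epsilon}_\lambda(k,N,\delta)\le\epsilon$ I would combine the $k$-monotonicity of Proposition~\ref{prop:epsMonoton} with the upper bound of Theorem~\ref{thm:Boundeps}, evaluated at a carefully tuned error rate.

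\emph{Lower bound.} Assume $\nu\epsilon N\le k\le N-1$. By \eref{eq:hatzeta2} it is enough to produce a permutation-invariant $\rho$ on $\caH^{\otimes(N+1)}$ with $p_k(\rho)\ge\delta$ and $F_k(\rho)<1-\epsilon$. I would take the product state $\rho=\tau^{\otimes(N+1)}$ with $\tau=(1-\epsilon_\tau)|G\>\<G|+\epsilon_\tau\sigma$, where $\sigma$ is any density operator on $\caH$ with $\<G|\sigma|G\>=0$ (which exists since $\dim\caH\ge 2$) and $\epsilon_\tau\in(\epsilon,1)$ is fixed below. Because the last (untested) system factorizes out, evaluating \eref{eq:pnmkrho} and \eref{eq:fnmkrho} on $\rho$ and using $\tr(\Omega\tau)=1-\nu\epsilon_\tau$ (from \eref{eq:strategy}) gives $p_k(\rho)=B_{N,k}(\nu\epsilon_\tau)$, in accordance with \eref{eq:SuccessProbpk}, and $f_k(\rho)=(1-\epsilon_\tau)B_{N,k}(\nu\epsilon_\tau)$, so $F_k(\rho)=1-\epsilon_\tau<1-\epsilon$. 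It remains to choose $\epsilon_\tau$ so that $B_{N,k}(\nu\epsilon_\tau)\ge\delta$. From $k\ge\nu\epsilon N$ we get $\nu\epsilon\le k/N$; since the binomial distribution $\mathrm{Bin}(N,k/N)$ has the integer mean $k$, its median equals $k$, so $B_{N,k}(\nu\epsilon)\ge B_{N,k}(k/N)\ge 1/2>\delta$. As $p\mapsto B_{N,k}(p)$ is continuous and nonincreasing, some $\epsilon_\tau\in(\epsilon,1)$ close enough to $\epsilon$ still satisfies $B_{N,k}(\nu\epsilon_\tau)>\delta$; this $\rho$ is then feasible in \eref{eq:hatzeta2} and forces $\bar{\epsilon}_\lambda(k,N,\delta)>\epsilon$.

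\emph{Upper bound.} Write $l:=l(\lambda,N,\epsilon,\delta)$; if $l<0$ the claim is vacuous, so assume $l\ge0$. By the monotonicity of $\bar{\epsilon}_\lambda$ in $k$ (Proposition~\ref{prop:epsMonoton}) it suffices to bound $\bar{\epsilon}_\lambda(l,N,\delta)$. The key step is to set
\[
  s^\ast:=\epsilon-\frac{1}{\lambda\nu}\sqrt{\frac{\epsilon\ln\delta^{-1}}{N}}-\frac{\ln\delta^{-1}}{2\lambda\nu^2N}-\frac{2}{\lambda N},
\]
so that $\nu s^\ast N$ is exactly the quantity inside the floor in \eref{eq:definel}, i.e.\ $l=\lfloor\nu s^\ast N\rfloor$, and to note $s^\ast\le\epsilon<1$. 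If $s^\ast>0$ I would apply the upper bound of Theorem~\ref{thm:Boundeps} with $s=s^\ast$: substituting $s^\ast$, the terms $\ln\delta^{-1}/(2\nu^2\lambda N)$ and $2/(\lambda N)$ cancel against the corresponding summands of $s^\ast$, leaving
\[
  \bar{\epsilon}_\lambda(l,N,\delta)\le\epsilon+\frac{1}{\lambda\nu}\sqrt{\frac{\ln\delta^{-1}}{N}}\,\big(\sqrt{s^\ast}-\sqrt{\epsilon}\,\big)\le\epsilon,
\]
the last step because $s^\ast\le\epsilon$. The borderline case $s^\ast=0$ (whence $l=0$) follows from the same bound in the limit $s\downarrow0$, which gives $\bar{\epsilon}_\lambda(0,N,\delta)\le\ln\delta^{-1}/(2\lambda\nu^2N)+2/(\lambda N)=\epsilon-\frac{1}{\lambda\nu}\sqrt{\epsilon\ln\delta^{-1}/N}\le\epsilon$, using $s^\ast=0$. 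Finally $\bar{\epsilon}_\lambda(k,N,\delta)\le\bar{\epsilon}_\lambda(l,N,\delta)\le\epsilon$ for all $0\le k\le l$.

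\emph{Main obstacle.} The only genuinely delicate step is the bookkeeping behind $s^\ast$: one must recognize that the cumbersome expression defining $l$ in \eref{eq:definel} is exactly $\lfloor\nu s^\ast N\rfloor$, so that plugging $s=s^\ast$ into Theorem~\ref{thm:Boundeps} makes the error terms telescope and leaves only $\epsilon$ plus a nonpositive remainder. The remaining ingredients — the product-state evaluation of $p_k$ and $f_k$, the classical fact that a binomial with integer mean $k$ has median $k$, and the two boundary coincidences ($\nu\epsilon N\in\bbZ$ in the lower bound, $s^\ast\le0$ in the upper bound) handled by continuity — are routine.
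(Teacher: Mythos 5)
Your proof is correct, but both halves take routes that differ from the paper's. For the first statement the paper simply invokes \lref{lem:LLP} (imported from the companion paper), which already packages the bound $\bar{\epsilon}_\lambda(k,N,\delta)>k/(\nu N)$; you instead rebuild it from scratch with the explicit i.i.d.\! witness $\tau=(1-\epsilon_\tau)|G\>\<G|+\epsilon_\tau\sigma$ together with the fact that a binomial with integer mean $k$ has median $k$ (the paper records exactly this as \lref{lem:Bnk1-k/n>1/2} and uses it elsewhere). Your version is more self-contained at the cost of re-deriving a cited lemma, and the product-state evaluation $p_k(\rho)=B_{N,k}(\nu\epsilon_\tau)$, $F_k(\rho)=1-\epsilon_\tau$ is carried out correctly. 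For the second statement the paper runs a chain of implications directly on $k$, using \lref{lem:1.098}, \lref{lem:z*UB3} and \lref{lem:zeta2Bound2} --- essentially repeating the computation from the proof of Theorem~\ref{thm:Boundeps} but solving for $k$ instead of for the infidelity. You instead observe that $l(\lambda,N,\epsilon,\delta)=\lfloor\nu s^\ast N\rfloor$ for your $s^\ast$ and feed $s^\ast$ into the already-proved upper bound of Theorem~\ref{thm:Boundeps}, after which the error terms telescope; combined with the $k$-monotonicity of Proposition~\ref{prop:epsMonoton} this is shorter and makes transparent that Proposition~\ref{prop:kbounds} and Theorem~\ref{thm:Boundeps} are essentially inverse statements. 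There is no circularity, since Theorem~\ref{thm:Boundeps} is proved independently of Proposition~\ref{prop:kbounds}. The delicate points --- that $l\geq 0$ forces $s^\ast\geq 0$ so Theorem~\ref{thm:Boundeps} applies (with the measure-zero boundary $s^\ast=0$ handled by a limit), and that $l\leq\nu\epsilon N<N$ so the monotonicity in $k$ is available --- are all handled correctly.
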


	\begin{figure}
		\begin{center}
			\includegraphics[width=6.5cm]{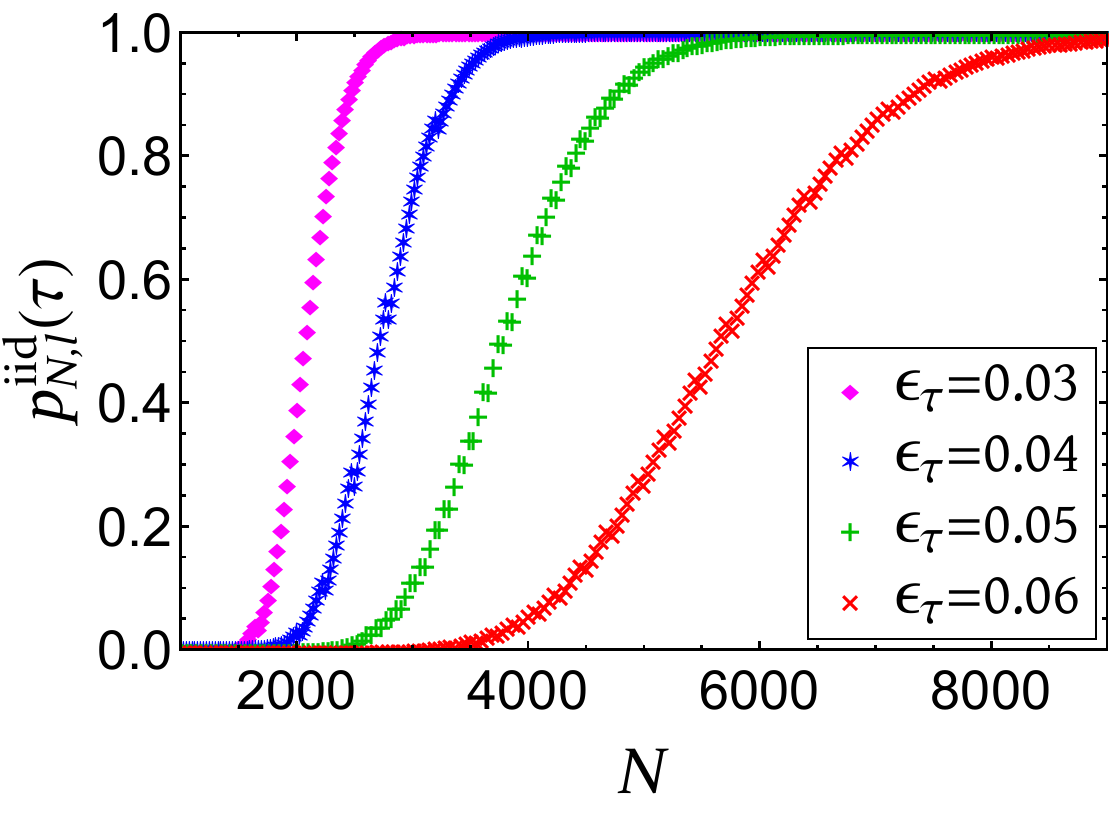}
			\caption{\label{fig:plot_pNk}
				The probability $p^{\iid}_{N,l(\lambda, N, \epsilon,\delta)}(\tau)$ that Alice accepts i.i.d.\! quantum states $\tau\in \mathcal{D}(\caH)$. 
				Here $\lambda=1/2$, infidelity $\epsilon=0.1$, and significance level $\delta=0.01$; $\epsilon_\tau$ is the infidelity between $\tau$ and the target state $|G\>$; and  $l(\lambda, N, \epsilon,\delta)$ is  the number of allowed failures defined in \eref{eq:definel}.
			}
		\end{center}
	\end{figure}
	
	Next, we turn to the condition of robustness.
	When honest Bob prepares i.i.d.\! quantum states $\tau\in \mathcal{D}(\caH)$ with infidelity $0<\epsilon_\tau<\epsilon$, the probability that Alice accepts $\tau$ is $p^{\iid}_{N,k}(\tau)$ given in \eref{eq:SuccessProbpk}, which is strictly increasing in $k$ according to Lemma~S4 in Supplementary Note~2. Suppose we set $k=l(\lambda, N, \epsilon,\delta)$. As the number of tests $N$ increases, the acceptance probability has the following asymptotic behavior if $0<\epsilon_\tau<\epsilon$ (see Supplementary Note~6~F for a proof),
	\begin{align}\label{eq:pkLimit}
		p^{\iid}_{N,l}(\tau)=
		1-\exp\bigl[ - D( \nu\epsilon \| \nu\epsilon_\tau ) N +O(\sqrt{N}\,) \bigr],
	\end{align}
	where $D(p\|q):=p\ln\frac{p}{q}+(1-p)\ln\frac{1-p}{1-q}$ is the relative entropy between 
	two binary probability vectors $(p,1-p)$ and $(q,1-q)$, and $l$ is a shorthand for $l(\lambda, N, \epsilon,\delta)$. 
	Therefore, the probability of acceptance is arbitrarily close to one  as long as  $N$ is sufficiently large, as illustrated in Fig.~\ref{fig:plot_pNk}.
	Hence,  our verification protocol is able to reach any degree of robustness.

	\bigskip
	\noindent\textbf{Sample complexity of robust verification}\\	
	Now we consider the resource cost required by our protocol to reach given verification precision and robustness. 
	Let $\rho$ be the state on $\caH^{\otimes (N+1)}$ prepared by Bob and $\sigma_{N+1}\!$ be the reduced state after Alice performs suitable tests and accepts the state $\rho$. To verify the target state within infidelity $\epsilon$, significance level $\delta$, and robustness $r$ (with $0\leq r<1$) entails the following two conditions. 
	\begin{enumerate}
		\item[1.] (Soundness) If the infidelity of  $\sigma_{N+1}\!$ with the target state is larger than $\epsilon$, then
		the probability that Alice accepts $\rho$ is less than $\delta$.
		
		\item[2.] (Robustness) If $\rho=\tau^{\otimes (N+1)}$ with $\tau\in\mathcal{D}(\caH)$  and  $\epsilon_\tau\leq r\epsilon$, then
		the probability that Alice accepts $\rho$ is at least  $1-\delta$.
	\end{enumerate}
	The tensor power $\rho$ in Condition 2 can be replaced by the tensor product of $N+1$ independent quantum states $\tau_1,\tau_2,\dots,\tau_{N+1}\in\mathcal{D}(\caH)$ that have infidelities at most $r\epsilon$. All our conclusions do not change under this modification.

	To achieve the conditions of soundness and robustness, we need to choose the test number $N$ and the number $k$ of allowed failures properly. To determine the resource cost, we define  $N_{\rm min}(\epsilon,\delta,\lambda,r)$ as the minimum number of tests required for robust verification, that is, the minimum positive integer $N$ such that there exists an integer $0\leq k\leq N-1$ which together with $N$ achieves the above two conditions. 
	Note that the conditions of soundness and robustness can be expressed as 
	\begin{align}
		&\bar{\epsilon}_{\lambda}(k, N, \delta)\leq\epsilon,\qquad  B_{N,k}(\nu r\epsilon)\geq1-\delta.  \label{eq:robustCondition}
	\end{align}
	So $N_{\rm min}(\epsilon,\delta,\lambda,r)$ can be expressed as 
	\begin{align}\label{eq:OptNadvDef}
		N_{\rm min}(\epsilon,\delta,\lambda,r) &:= 
		\min_{N,k} \big\{ N \,\big|\,  k\in\bbZ^{\geq 0}, N\in \bbZ^{\geq k+1}, \nonumber\\
		&\bar{\epsilon}_{\lambda}(k, N, \delta)\leq\epsilon, B_{N,k}(\nu r\epsilon)\geq1-\delta \big\}.  
	\end{align}

	\begin{figure}
		\begin{algorithm}[H]
			{\small
				\hspace{-98pt}\textbf{Input:}  $\lambda,\epsilon,\delta\in(0,1)$ and $r\in[0,1)$.\\
				\hspace{-71pt} \textbf{Output:} $k_{\min}(\epsilon,\delta,\lambda,r)$ and $N_{\rm min}(\epsilon,\delta,\lambda,r)$.
				
				\begin{algorithmic}[1]
					\caption{{\small Minimum test number for robust verification}}
					\label{alg:NoptAdv}
					
					\If{$r=0$} 
					
					\State{$k_{\min}\leftarrow0$}
					
					\Else \For{$k=0,1,2,\dots$}
					
					\State{Find the largest integer $M$ such that $B_{M,k}(\nu r\epsilon)\geq1-\delta$.}
					
					\If{$M\geq k+1$ and $\bar{\epsilon}_{\lambda}(k, M, \delta)\leq\epsilon$,} 
					
					\State{stop}
					
					\EndIf
					
					\EndFor 
					
					\State{$k_{\min}\leftarrow k$}
					
					\EndIf
					
					\State{Find the smallest integer $N$ that satisfies $N\geq k_{\min}+1$ 
						and $\bar{\epsilon}_{\lambda}(k_{\min}, N, \delta)\leq\epsilon$.}
					
					\State{$N_{\min}\leftarrow N$ }
					
					\State{\textbf{return} $k_{\min}$ and $N_{\min}$}.
					
				\end{algorithmic}
			}
		\end{algorithm}
	\end{figure}
	
	Next, we propose a simple algorithm, Algorithm~\ref{alg:NoptAdv},  for computing $N_{\rm min}(\epsilon,\delta,\lambda,r)$, which is very useful to practical applications. In addition to $N_{\rm min}(\epsilon,\delta,\lambda,r)$, this algorithm determines the corresponding number of allowed failures, 
	which is denoted by $k_{\rm min}(\epsilon,\delta,\lambda,r)$. 
	In Supplementary Note~7~C we explain why Algorithm~\ref{alg:NoptAdv} works. Algorithm~\ref{alg:NoptAdv} is particularly useful to studying  the variations of $N_{\rm min}(\epsilon,\delta,\lambda,r)$ with the four parameters $\epsilon$,  $\delta$,  $\lambda$, $r$ as illustrated in  Fig.~\ref{fig:optN}. 
	When $\delta$ and $r$ are fixed,  $N_{\rm min}(\epsilon,\delta,\lambda,r)$ is inversely proportional to $\epsilon$;  
	when $\epsilon,r$ are fixed and $\delta$ approaches 0, $N_{\rm min}(\epsilon,\delta,\lambda,r)$ is proportional to $\ln\delta^{-1}$. 
	In addition, Fig.~\ref{fig:optN}~(d) indicates that a strategy $\Omega$ with small or large $\lambda$ is not very efficient for robust verification, while any  choice satisfying $0.3\leq\lambda\leq0.5$ is nearly optimal.

	The following theorem provides an informative upper bound for $N_{\rm min}(\epsilon,\delta,\lambda,r)$ and clarifies the sample complexity of robust verification; see Supplementary Note~6~D for a proof. 
	\begin{theorem}\label{thm:iidHighProb}
		Suppose $0<\lambda,\epsilon<1$, $0<\delta\leq1/2$, and $0\leq r<1$. Then the conditions of soundness and robustness in \eref{eq:robustCondition} hold as long as
		\begin{align}
			k&
			= \bigg\lfloor  \bigg(\frac{\lambda\sqrt{2\nu}+r}{\lambda\sqrt{2\nu}+1}\bigg)\nu \epsilon N\bigg\rfloor, \\
			N&
			\geq
			\left\lceil \bigg[ \frac{\lambda\sqrt{2\nu}+1}{\lambda \nu (1-r)} \bigg]^2 \;
			\frac{\ln\delta^{-1} +4\lambda\nu^2}{ \epsilon} \right\rceil.
			\label{eq:ngeqiidHighProb}
		\end{align}
	\end{theorem}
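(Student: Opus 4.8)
The plan is to check the two requirements in \eref{eq:robustCondition} separately for the prescribed $k$ and $N$. Write $c:=\frac{\lambda\sqrt{2\nu}+r}{\lambda\sqrt{2\nu}+1}$, so that $0\leq r<c<1$, the prescribed number of allowed failures is $k=\lfloor\nu(c\epsilon)N\rfloor$, and $1-c=\frac{1-r}{\lambda\sqrt{2\nu}+1}$; set the error rate $s:=c\epsilon$, so that $0<s<\epsilon<1$. Throughout, fix an arbitrary $N$ meeting \eref{eq:ngeqiidHighProb} and the corresponding $k$.

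\emph{Soundness.} I would read $\bar\epsilon_\lambda(k,N,\delta)\leq\epsilon$ straight off Theorem~\ref{thm:UBtestsNumber}, applied with error rate $s=c\epsilon$ (then $k=\lfloor\nu sN\rfloor$ exactly). That theorem's hypothesis asks for $N\geq\frac{\epsilon}{[\lambda\nu(\epsilon-s)]^2}(\ln\delta^{-1}+4\lambda\nu^2)$, and since $\epsilon-s=(1-c)\epsilon=\frac{(1-r)\epsilon}{\lambda\sqrt{2\nu}+1}$ the right-hand side here is precisely $\big[\frac{\lambda\sqrt{2\nu}+1}{\lambda\nu(1-r)}\big]^2\frac{\ln\delta^{-1}+4\lambda\nu^2}{\epsilon}$, i.e.\ the bound \eref{eq:ngeqiidHighProb} before rounding up. So the stated lower bound on $N$ already forces the soundness condition. (Equivalently one may verify $k\leq l(\lambda,N,\epsilon,\delta)$ and invoke Proposition~\ref{prop:kbounds}; that is the same computation.)

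\emph{Robustness.} It remains to show $B_{N,k}(\nu r\epsilon)\geq1-\delta$. Since $B_{N,k}(p)$ is nonincreasing in $p$ this also handles all $\epsilon_\tau\leq r\epsilon$, and by stochastic dominance of $\mathrm{Bin}(N,\nu r\epsilon)$ over the number of failed tests it covers the independent-but-not-identical variant too. Now $1-B_{N,k}(\nu r\epsilon)=\Pr[\mathrm{Bin}(N,\nu r\epsilon)>k]$ is an upper-tail event, because the Chernoff threshold $k+1>c\nu\epsilon N$ exceeds the mean $N\nu r\epsilon$ (recall $c>r$). I would estimate it with the Chernoff--Hoeffding bound $\Pr[\mathrm{Bin}(N,p)\geq m]\leq\exp(-ND(m/N\|p))$ (valid when $m/N\geq p$; $D$ the binary relative entropy), the monotonicity of $D$ in its first argument, and the refinement $D(q\|p)\geq\frac{(q-p)^2}{2q}$ for $q\geq p$. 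With $\frac{k+1}{N}>c\nu\epsilon$ and $c\nu\epsilon-\nu r\epsilon=\frac{\lambda\sqrt{2\nu}(1-r)}{\lambda\sqrt{2\nu}+1}\nu\epsilon$, the exponent is at least $\frac{(c-r)^2\nu\epsilon N}{2c}$, which exceeds $\ln\delta^{-1}$ as soon as $N\geq\frac{2c\ln\delta^{-1}}{(c-r)^2\nu\epsilon}=c\big[\frac{\lambda\sqrt{2\nu}+1}{\lambda\nu(1-r)}\big]^2\frac{\ln\delta^{-1}}{\epsilon}$. Because $c<1$, this threshold is strictly below the right-hand side of \eref{eq:ngeqiidHighProb}, so the same condition on $N$ yields $\Pr[\mathrm{Bin}(N,\nu r\epsilon)>k]\leq\delta$, i.e.\ robustness; note this already uses $k+1$ in place of $k$, so no separate floor correction is needed here.

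\emph{Main obstacle.} The delicate point is to make the robustness estimate quantitatively sharp: a crude Hoeffding/McDiarmid bound on the same tail yields an exponent of order $N(\nu\epsilon)^2$, forcing $N=\Omega(\epsilon^{-2})$ and destroying the optimal $\epsilon^{-1}$ scaling. One must exploit the smallness of the failure probability $\nu r\epsilon$---via a relative-entropy or Bennett-type inequality---to upgrade the exponent to order $N\nu\epsilon$, and it is exactly this that fixes the split coefficient $\frac{\lambda\sqrt{2\nu}+r}{\lambda\sqrt{2\nu}+1}$: the factor $\sqrt{2\nu}$ is tuned so that the $N$-threshold demanded by robustness stays dominated by the one demanded by soundness through Theorem~\ref{thm:UBtestsNumber}, leaving the single clean condition \eref{eq:ngeqiidHighProb} in force for both requirements at once. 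Beyond this, one only has to track the constants in the tail inequality and the rounding in the definitions of $k$ and $N$, all of which stay comfortably within the $4\lambda\nu^2$ slack.
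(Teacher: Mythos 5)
Your proposal is correct and follows essentially the same route as the paper's proof: soundness is read off Theorem~\ref{thm:UBtestsNumber} with the error rate $s=\frac{\lambda\sqrt{2\nu}+r}{\lambda\sqrt{2\nu}+1}\epsilon$ chosen so that the two $N$-thresholds coincide, and robustness follows from the Chernoff--Hoeffding tail bound combined with $D(x\|y)\geq(x-y)^2/(2x)$, exactly as in Supplementary Section~\ref{app:Proofmain}. The only cosmetic differences are that you bound the upper tail directly rather than via $B_{N,k}(\nu r\epsilon)=1-B_{N,N-k-1}(1-\nu r\epsilon)$, and you keep the factor $c<1$ where the paper replaces $s$ by $\epsilon$ to land exactly on $\ln\delta^{-1}$.
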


	\begin{figure}
		\includegraphics[width=8.63cm]{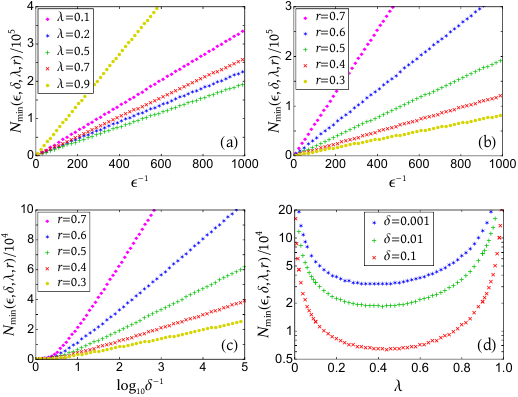}
		\caption{\label{fig:optN}
			Minimum number of tests required for robust verification (by Algorithm~\ref{alg:NoptAdv}). 
			(a) Variations of $N_{\rm min}(\epsilon,\delta,\lambda,r)$ with $\epsilon^{-1}$ and $\lambda$, where robustness $r=1/2$ and significance level $\delta=0.01$. 
			(b) Variations of $N_{\rm min}(\epsilon,\delta,\lambda,r)$ with $\epsilon^{-1}$ and $r$,  where $\delta=0.01$ and $\lambda=1/2$. 
			(c) Variations of $N_{\rm min}(\epsilon,\delta,\lambda,r)$ with $\log_{10}\delta^{-1}$ and $r$, where $\lambda=1/2$ and infidelity $\epsilon=0.01$.
			(d) Variations of $N_{\rm min}(\epsilon,\delta,\lambda,r)$ with $\lambda$ and $\delta$, where $\epsilon=0.01$ and $r=1/2$. 
		}
	\end{figure}

	For  given $\lambda$ and  $r$,  the minimum number of tests is only $O(\epsilon^{-1}\ln \delta^{-1})$, which is independent of the qudit number $n$ of $|G\>$ and achieves the optimal scaling behaviors with respect to the infidelity $\epsilon$ and significance level $\delta$. The coefficient is large when $\lambda$ is close to 0 or 1, while it is around the minimum for any value of $\lambda$ in the interval $[0.3, 0.5]$. Numerical calculation based on Algorithm~\ref{alg:NoptAdv} shows that the upper bound for $N_{\rm min}(\epsilon,\delta,\lambda,r)$ provided in Theorem~\ref{thm:iidHighProb} is  a bit conservative, especially when $r$ is small. In other words, the actual sample cost is smaller than what can be proved rigorously. Nevertheless, the bound is quite informative about the general trends. If we choose $r=\lambda=1/2$ for example, then Theorem~\ref{thm:iidHighProb} implies that 
	\begin{align}
		N_{\rm min}(\epsilon,\delta,\lambda,r)\leq\lceil 144 \,\epsilon^{-1} (\ln\delta^{-1}+0.5)\rceil,
	\end{align} 
	while numerical calculation shows that $N_{\rm min}(\epsilon,\delta,\lambda,r)\leq 67 \,\epsilon^{-1}\ln\delta^{-1}$. Compared with previous works \cite{HayaM15, ZhuEVQPSshort19, ZhuEVQPSlong19}, our protocol improves the scaling behavior with respect to the significance level $\delta$ exponentially and even doubly exponentially,  as illustrated in Fig.~\ref{fig:TotalNumTest}. 
	
	\bigskip
	\begin{center}
		\textbf{DISCUSSION}
	\end{center}
	\vspace{1em}
	Verification of resource graph states in the adversarial scenario is a crucial step in the verification of blind MBQC. 
	We have proposed a highly robust and efficient  protocol for achieving this task, which applies to any qudit graph state with a prime local dimension. To implement this protocol, it suffices to perform simple stabilizer tests based on local Pauli measurements, which is quite appealing to NISQ devices. For any given degree of robustness, to verify the target graph state within infidelity $\epsilon$ and significance level $\delta$,  only $O(\epsilon^{-1}\ln \delta^{-1})$ tests are required, which achieves the optimal sample complexity with respect to the system size, infidelity, and significance level.
	Compared with previous protocols, our protocol can reduce the sample cost dramatically in a realistic scenario; notably, 
	the scaling behavior in the significance level can be  improved exponentially.

	So far we have focused on the verification of resource graph states with trustworthy and ideal local projective measurements.
	According to \eref{eq:relaFidTr}, if the blind MBQC is performed with ideal measurements after Alice accepts the state prepared by Bob, 
	then the precision of the computation results is guaranteed by the precision of the graph state. 
	However, in practice, it is unrealistic to assume that the measurement devices are perfect. So we need additional operations to guarantee the precision of the computation results when verifying blind MBQC in the receive-and-measure setting.  As mentioned in the introduction, the client can calibrate her measurement devices before performing  blind MBQC with a small overhead. In addition, we can convert the noise in measurements to noise in  state preparation. To apply this method, we need the assumption that any  measurement used in MBQC and graph state verification can be expressed as a composition of	a measurement-independent noise process and  the noiseless measurement. 
	The detail of this conversion method is presented in Supplementary Note~4. 
	When the noise process depends on the specific measurement, the situation is more complicated, and further study is required to deal with such noise.

	After obtaining a reliable resource graph state accepted by  the verification protocol, Alice  can use it to perform MBQC. 
	In this procedure, she needs to adaptively select local projective measurements to drive the computation.  
	Nevertheless, these operations can be completed by using a classical computer, and the classical computation complexity scales linearly with the size of the original quantum computation \cite{RaussBrie01}. Therefore, the most challenging part in the verification of blind MBQC is the verification of the resource graph state, which is the focus of this work.

	In the above discussion, we assume that the measurement devices are controlled by the client and are trustworthy. 
	It is also desirable to construct robust and efficient protocols for verifying blind MBQC when the measurement devices are not trustworthy. 
	To this  end, a device-independent (DI) verification protocol was proposed in Ref.~\cite{GKW15}.
	However, this protocol has a quantum communication complexity of the order $O(\tilde{n}^c)$, where $\tilde{n}$ is the size of the delegated quantum computation and $c>2048$, which is too prohibitive for any practical implementation. 
	By combining the CHSH inequality and stabilizer tests applied to a qubit graph state, Ref.~\cite{HayaHajdu18} 
	proposed a protocol for self-testing MBQC in the receive-and-measure setting. 
	This protocol requires $O(n^4\log n)$ samples with $n$ being the qubit number of the resource graph state, which is much more efficient than previous protocols, but is still far from the optimal scaling achieved in this work. In addition, it does not consider the  problem of robustness. 
	To further reduce the overhead and improve the robustness, it might be helpful to combine our approach with DI quantum state certification (DI QSC) developed recently \cite{Aleks22}. See Supplementary Note~5 for details.

	In addition to graph states, our protocol can also be used to verify many other pure quantum states in the adversarial scenario, where the 
	state preparation is controlled by a potentially malicious adversary Bob, who can produce an arbitrary correlated or entangled state $\rho$ on the whole system $\caH^{\otimes(N+1)}$. Let $|\Psi\>\in\caH$ be the target pure state to be verified. 
	Then a verification strategy $\Omega$ for $|\Psi\>$  is called homogeneous \cite{ZhuEVQPSshort19,ZhuEVQPSlong19} if it has the form 
	\begin{align}\label{eq:homo}
		\Omega=|\Psi\>\< \Psi|+\lambda(\openone-|\Psi\>\<\Psi|),
		\qquad 0\leq\lambda<1. 
	\end{align}
	Efficient homogeneous strategies based on local projective measurements have been constructed for many important quantum states \cite{ZH4,Ha09-2,HMT,LHZ19,Wang19,LiGHZ19,Li21,PLM18,ZhuEVQPSlong19,LLSZ22}.

	If  a homogeneous strategy $\Omega$ given in \eref{eq:homo} can be constructed, then the target state $|\Psi\>$ can be verified in the adversarial scenario by virtue of our protocol:  Alice first randomly permutes all systems of $\rho$  and applies the strategy $\Omega$ to the first $N$ systems, then she accepts the remaining unmeasured system if at most $k$ failures are observed among these tests. Most results (including Theorems~\ref{thm:Boundeps}, \ref{thm:UBtestsNumber}, \ref{thm:iidHighProb}, Algorithm~\ref{alg:NoptAdv}, and Propositions~\ref{prop:epsMonoton}, \ref{prop:kbounds}) in this paper are still applicable  if the target graph state $|G\>$ is replaced by $|\Psi\>$. Therefore, our verification protocol is of interest not only to blind MBQC, but also to many other tasks in quantum information processing  that entail high-security. More results on quantum state verification (QSV) in the adversarial scenario are presented in Supplementary Note~7.

	Up to now we have focused  on robust QSV in the adversarial scenario, in which the prepared state $\rho$ can be arbitrarily correlated or entangled, which is pertinent to blind MBQC. On the other hand, robust QSV in the i.i.d.\! scenario is also important to  many applications. 
	Although this scenario is much simpler than the adversarial scenario, the sample complexity of robust QSV  has not been clarified before. In the Methods section and Supplementary Note~8 we will discuss this issue in detail and clarify the sample complexity of robust QSV in the i.i.d.\! scenario in comparison with the adversarial scenario. Not surprisingly, most of our results on the adversarial scenario have analog for the i.i.d. scenario.

\bigskip	

\begin{center}
	\textbf{METHODS}
\end{center}
\vspace{1em}

\noindent\textbf{Protocols for realizing verifiable BQC}\\
To put our work into context, here we briefly review existing protocols for realizing verifiable BQC, which  can be broadly  divided into four classes \cite{Gheorg19}. Many protocols in the four classes build on the model of MBQC due to its convenience and flexibility. 

The first class of protocols work in the multi-prover setting \cite{Colada19,RUV13,GKW15,HPDF15}. 
These protocols can achieve a classical client (verifier), but a trade-off is the requirement of multiple non-communicating servers (provers) that share entanglement with each other, which is very difficult to realize in practice.

The second and third classes of protocols need only a single server, but assume that the client has  limited quantum computational power. The second class of protocols work in the prepare-and-send setting \cite{BarzNP13,Fitzsimons17,Aharonov10,Broadb15}, in which the client has a trusted preparation device and the ability to send single-qudit quantum states to the server. This class includes the protocol based on quantum authentication \cite{Aharonov10}, protocol based on repeating indistinguishable runs of tests and computations \cite{Broadb15}, and protocol based on trap qubits \cite{Fitzsimons17}, which has been  demonstrated experimentally  \cite{BarzNP13}.  The third class of protocols work in the receive-and-measure setting \cite{GKW15,HayaM15,Fujii17,MTH17,TMMMF19,Xu21}, in which the client receives quantum states from the server and has the ability to perform reliable local projective measurements. This class includes the protocol based on CHSH games \cite{GKW15}, protocols based on QSV in the adversarial scenario \cite{HayaM15,Fujii17,MTH17,TMMMF19,Xu21}, and our protocol. Notably, the above three classes of protocols are all information-theoretically secure \cite{Gheorg19}.

Recently, the forth class of protocols based on computational assumptions have been developed \cite{Mahadev18,GheorVidick19,Bartusek22,JZhang22}, which elegantly enables a classical client to hide and verify the quantum computation of a single server. However, these schemes are no longer information-theoretically secure, and their overheads are too prohibitive for any sort of practical implementation in the near future.

\vspace{1em}
\noindent\textbf{Simplifying the calculation of $\overline{\epsilon}_\lambda(k,N,\delta)$}\\
Here we show how to simplify the calculation of the guaranteed infidelity 
$\overline{\epsilon}_\lambda(k,N,\delta)$ given in \eref{eq:hatzeta2} by virtue of results derived in 
the companion paper \cite{Classical22}. 

Recall that $\Omega$ is a homogeneous strategy for the target state $|G\>\in\caH$ as shown in \eref{eq:strategy}. It
has the following spectral decomposition,
\begin{align}\label{eq:homoDecompose}
	\Omega
	= |G\>\< G|+\lambda(\openone-|G\>\<G|)
	= \Pi_1 +\lambda \sum_{j=2}^{D} \Pi_j ,
\end{align}
where $D$ is the dimension of $\caH$, and $\Pi_j$ are mutually orthogonal rank-1 projectors with $\Pi_1=|G\>\< G|$. In addition, $\rho$ is a permutation-invariant state on $\caH^{\otimes (N+1)}$. 
Note that $p_k(\rho)$ defined in \eref{eq:pnmkrho} and $f_k(\rho)$ defined in \eref{eq:fnmkrho} only depend on the
diagonal elements of $\rho$ in the product
basis constructed from the eigenbasis of $\Omega$ (as determined by $\Pi_j$).
Hence, we may assume that $\rho$ is diagonal in this basis without loss of generality.  In other words, $\rho$ can be expressed as a mixture of tensor products of $\Pi_j$. 
For $i=1,2,\dots,N+1$, we can associate the $i$th system of $\rho$ with a $\{0,1\}$-valued variable $Y_i$: we define $Y_i=0$ (1) if the state on the $i$th system is $\Pi_1$ ($\Pi_{j\ne1}$). 
Since the state $\rho$ is  permutation invariant, the variables $Y_1,\dots,Y_{N+1}$ are subject to 
a permutation-invariant joint distribution $P_{Y_1, \ldots, Y_{N+1}}$ on $[N+1]:=\{1,2,\dots,N+1\}$. 
Conversely, for any permutation-invariant joint distribution on $[N+1]$, we can always find a diagonal state $\rho$, 
whose corresponding variables $Y_1,\dots,Y_{N+1}$ are subject to this distribution. 

Next, we  define a $\{0,1\}$-valued random variable $U_i$ to  express the test outcome on the $i$th system, 
where 0 corresponds to passing the test and 1 corresponds to the failure.
If $Y_i=0$, which means the state on the $i$th system is $\Pi_1$, then the $i$th system must pass the test;  
if $Y_i=1$, which means the state on the $i$th system is $\Pi_{j\ne1}$, then the $i$th system passes the test with probability $\lambda$, 
and fails with probability $1-\lambda$. So  we have the following conditional distribution: 
\begin{align}\label{eq:deflambda}
	\begin{split}
		P_{U_i|Y_i}(0|0)=1, \qquad &P_{U_i|Y_i}(1|0)=0 , \\
		P_{U_i|Y_i}(0|1)=\lambda, \qquad &P_{U_i|Y_i}(1|1)=1-\lambda.
	\end{split}
\end{align}
Note that $U_i$ is determined by the random  variable $Y_i$ and the parameter $\lambda$ in \eref{eq:strategy}. 
Let $K$ be the random variable that counts the number of 1, that is, the number of failures, among $U_1,U_2,\ldots,U_N$. Then the probability that Alice accepts is
\begin{align}
	p_k(\rho)=\Pr(K \le k),
\end{align} 
given that Alice  accepts
if at most $k$ failures are observed among the $N$ tests. This probability only depends on the joint distribution $P_{Y_1, \ldots, Y_{N+1}}$. 
If at most $k$ failures are observed, then the fidelity of the state on the $(N+1)$th system can be expressed 
as the conditional probability 
\begin{align}
	F_k(\rho)=\Pr(Y_{N+1}\!=0|K \le k),
\end{align}  
which also only depends on $P_{Y_1, \ldots, Y_{N+1}}$. 
Hence, the guaranteed infidelity defined in \eref{eq:hatzeta2} can be expressed as 
\begin{align}\label{eq:epslamDef}
	&\overline{\epsilon}_\lambda(k,N,\delta) \nonumber\\
	&= 1-\min\, \{\Pr(Y_{N+1}=0|K \le k)\,|\Pr(K \le k) \geq \delta \} \nonumber\\
	&=   \max\, \{\Pr(Y_{N+1}=1|K \le k)\,|\Pr(K \le k) \geq \delta \},
\end{align}
where the optimization is taken over all permutation-invariant joint distributions $P_{Y_1, \ldots, Y_{N+1}}$.  

Equation~\eqref{eq:epslamDef} reduces the computation of $\overline{\epsilon}_\lambda(k,N,\delta)$ to the computation of a maximum conditional probability. The latter problem was studied in detail in our companion paper \cite{Classical22}, in which $\overline{\epsilon}_\lambda(k,N,\delta)$
is called the upper confidence limit. Hence, all properties of $\overline{\epsilon}_\lambda(k,N,\delta)$ derived in Ref.~\cite{Classical22} also hold in the current context. Notably, several results in  this paper are simple corollaries of the counterparts in Ref.~\cite{Classical22}.
To be specific, Proposition~1 follows from Proposition 6.5 in Ref.~\cite{Classical22}; 
Theorem~S1 in Supplementary Note~1 follows from Theorem 6.4 in Ref.~\cite{Classical22};
Lemma~S6 in Supplementary Note~2 follows from Lemma 6.7 in Ref.~\cite{Classical22}; 
Lemma~S7 in Supplementary Note~2 follows from Lemma 2.2 in Ref.~\cite{Classical22}; 
Proposition~S7 in Supplementary Note~7 follows from Lemma 5.4 and Eq.~(89) in Ref.~\cite{Classical22}.

Although  this paper and the companion paper \cite{Classical22} study essentially the same quantity $\overline{\epsilon}_\lambda(k,N,\delta)$, they have different focuses. In Ref.~\cite{Classical22}, we mainly focus on  asymptotic behaviors of $\overline{\epsilon}_\lambda(k,N,\delta)$
and its related quantities, which are of interest to the theory of statistical sampling and hypothesis testing. 
The main goal of Ref.~\cite{Classical22} is to show that the randomized test with parameter $\lambda>0$ can substantially improve the significance level over the deterministic test with $\lambda=0$. In this paper, by contrast, we focus on  finite bounds for $\overline{\epsilon}_\lambda(k,N,\delta)$ and its related quantities, which are important to  practical applications. In addition, the key result on robust verification, Theorem~\ref{thm:iidHighProb}, has no analog in the companion paper. The main goal of this paper is to provide a robust and efficient protocol for verifying the resource graph state in blind MBQC and clarify the sample complexity. So the two papers are complementary to each other.

It is worth pointing out that the `randomized test' considered in Ref.~\cite{Classical22} has a different meaning from the `quantum test' in this paper because of different conventions in the two communities. The `randomized test' in Ref.~\cite{Classical22} means the whole procedure that one observes the $N$ variables $U_1, U_2,\ldots, U_{N}$ and makes a decision based on the number of failures observed;  
while a `quantum test' in this paper means Alice performs a two-outcome  measurement on one system of the state $\rho$,  in which one outcome corresponds to passing the test, and the other outcome  corresponds to a failure.

\bigskip
\noindent\textbf{Robust and efficient verification of quantum states in the i.i.d.\! scenario}\\
Up to now we have focused  on QSV in the adversarial scenario, in which the server Bob can prepare an arbitrary state $\rho$ on the whole space $\caH^{\otimes (N+1)}$. In this section, we turn to the i.i.d.\! scenario, in which the prepared state is a tensor power of the form $\rho=\sigma^{\otimes (N+1)}$ with  $\sigma\in \mathcal{D}(\caH)$. This verification problem was originally  studied in  Refs.~\cite{Ha09-2,HMT} and later more systematically in Ref.~\cite{PLM18}. So far, efficient verification strategies based on  local operations and classical communication (LOCC) have been constructed for various classes of pure states, including bipartite pure states \cite{LHZ19,Wang19,Yu19}, stabilizer states (including graph states) \cite{HayaM15,PLM18,ZH3,ZhuEVQPSlong19,Dangn20}, hypergraph states \cite{ZH3}, weighted graph states \cite{HayaTake19}, Dicke states \cite{Liu19,Li21}, 
ground states of local Hamiltonians \cite{ZLC22,CLZ22}, and certain continuous-variable states \cite{LiuYC21},
see Refs.~\cite{Kliesch21,YSG21} for overviews. Verification protocols based on local collective measurements have also been constructed for Bell states \cite{Ha09-2,PhysRevLett.129.190504}. However, most previous works did not consider the problem of robustness. Consequently, most protocols known so far are not robust, and the sample cost may increase substantially if robustness is taken into account, see Supplementary Note~8~A for explanation. Only recently,  several  works considered the problem of robustness \cite{YSG21,WHZhang20,WHZhang20npj,Jiang20npj,Xia22NJP}; however, the degree of robustness of  verification protocols has not been analyzed, and the sample complexity of robust verification has not been clarified, although this problem is apparently much simpler than the counterpart in the adversarial scenario.

In this section, we propose a general approach for constructing robust and efficient verification protocols in the i.i.d.\! scenario and clarify the  sample complexity of robust verification. The results presented here can serve as a benchmark for understanding QSV in the adversarial scenario. 
To streamline the presentation, the proofs of these results [including Propositions~\ref{prop:epsiidMonoton}--\ref{prop:iidHighProbiid} and \eref{eq:iidNoptUB}] are relegated to Supplementary Note~8.

Consider a quantum device that is expected to produce the target state $|\Psi\>\in\caH$, but actually produces the  states $\sigma_1,\sigma_2,\dots,\sigma_N$  in $N$ runs. In the i.i.d.\! scenario, all these states are identical to the state $\sigma$, and the goal of Alice is to verify whether $\sigma$ is sufficiently close to the target state $|\Psi\>$. If a strategy $\Omega$ of the form in \eref{eq:homo} can be constructed for $|\Psi\>$, then our verification protocol runs as follows: Alice applies the strategy $\Omega$ to each of the $N$ states, and counts the number of failures.   
If at most $k$ failures are observed among the $N$ tests, then Alice accepts the states prepared; otherwise, she rejects.
Here $0\leq k\leq N-1$ is called the number of allowed failures.
The completeness of this protocol is guaranteed because the target state $|\Psi\>$ can never be mistakenly rejected.

Most previous works did not consider the problem of robustness and can reach a meaningful conclusion only when $k=0$ \cite{ZH4,Ha09-2,ZH3,HMT,LHZ19,Wang19,LiGHZ19,PLM18,ZhuEVQPSlong19,Li21}, i.e., Alice accepts iff all $N$ tests are passed. 
However, the requirement of passing all tests is too demanding in a realistic scenario and leads to  poor robustness, as clarified in Supplementary Note~8. To remedy this problem, several recent works considered modifications that allow some failures \cite{YSG21,WHZhang20,WHZhang20npj,Jiang20npj,Xia22NJP}. However, the robustness of such verification protocols has not been analyzed, and the sample complexity of robust verification has not been clarified.

Here we consider robust verification in which at most $k$ failures are allowed. Then the probability of acceptance is given by 
\begin{align}\label{eq:iidPrAcpt}
	p^{\iid}_{N,k}(\sigma)
	&=\sum_{j=0}^k \binom{N}{j} [1-\tr(\Omega \sigma)]^j \tr(\Omega \sigma)^{N-j} \nonumber\\
	&=B_{N,k}\big(1-\tr(\Omega \sigma) \big)
	=B_{N,k}(\nu\epsilon_\sigma), 
\end{align}
where $\epsilon_\sigma:=1-\<\Psi|\sigma|\Psi\>$ is the infidelity between $\sigma$ and the target state.
Similar to \eref{eq:hatzeta2}, for $0<\delta\leq1$ we define the guaranteed infidelity in the i.i.d.\! scenario as 
\begin{align}\label{eq:hatzetaiid}
	\bar{\epsilon}^{\,\iid}_{\lambda}(k,N,\delta)
	&:=  \max_{\sigma}   \left\{\epsilon_\sigma \,|\, p^{\iid}_{N,k}(\sigma)\geq \delta  \right\} \nonumber\\
	& =  \max_{\epsilon} \left\{0\leq \epsilon\leq1 \,|\, B_{N,k}(\nu\epsilon) \geq \delta  \right\},
\end{align}
where  the first maximization is taken over all states $\sigma$ on $\cal H$, and the second equality follows from \eref{eq:iidPrAcpt}. 
By definition, if Alice accepts the state $\sigma$, then she can ensure (with significance level $\delta$) that $\sigma$ has
infidelity at most $\bar{\epsilon}^{\,\iid}_{\lambda}(k,N,\delta)$ with the target state (soundness). 
Hence, $\bar{\epsilon}^{\,\iid}_{\lambda}(k,N,\delta)$ characterizes the verification precision in the i.i.d.\! scenario. Since the i.i.d.\! scenario has a stronger constraint than the full adversarial scenario, the guaranteed infidelity for the former scenario cannot be larger than that for the later scenario, that is, 
\begin{align}
	\bar{\epsilon}^{\,\iid}_{\lambda}(k,N,\delta) 
	\leq \bar{\epsilon}_{\lambda}(k,N,\delta),
\end{align}
as illustrated in Fig.~\ref{fig:Logplot_barEpsIID}. 

The following proposition clarifies the monotonicities of $\bar{\epsilon}^{\,\iid}_{\lambda}(k,N,\delta)$. 
It is the counterpart of Proposition~\ref{prop:epsMonoton}.

\begin{proposition}\label{prop:epsiidMonoton} 
	Suppose $0\leq\lambda<1$, $0<\delta\leq1$, $k\in \bbZ^{\geq 0}$, and $N\in \bbZ^{\geq k+1}$. 
	Then $\overline{\epsilon}^{\,\iid}_\lambda(k,N,\delta)$ is strictly decreasing in $\delta$ and $N$, but strictly increasing in $k$.
\end{proposition}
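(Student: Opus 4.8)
The plan is to work directly from the second expression for the guaranteed infidelity in \eref{eq:hatzetaiid}, namely $\overline{\epsilon}^{\,\iid}_\lambda(k,N,\delta)=\max\{0\le\epsilon\le1\mid B_{N,k}(\nu\epsilon)\ge\delta\}$, and to reduce each of the three monotonicity claims to a corresponding monotonicity of the binomial cumulative distribution function $B_{N,k}$ in the relevant argument. The key preliminary observation is that, for fixed $N\ge k+1$ and $0\le\lambda<1$ (so $0<\nu\le1$), the map $\epsilon\mapsto B_{N,k}(\nu\epsilon)$ is continuous on $[0,1]$, equals $1$ at $\epsilon=0$, and is \emph{strictly} decreasing there. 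Strict decrease follows from the standard identity $\frac{\rmd}{\rmd p}B_{N,k}(p)=-N\binom{N-1}{k}p^{k}(1-p)^{N-1-k}<0$ for $p\in(0,1)$, $k\le N-1$ (equivalently, by writing $1-B_{N,k}(p)$ as a regularized incomplete beta function), together with $\nu\epsilon\in[0,\nu]\subseteq[0,1)$. Hence the feasible set in \eref{eq:hatzetaiid} is an interval $[0,\overline{\epsilon}^{\,\iid}_\lambda(k,N,\delta)]$, and $\overline{\epsilon}^{\,\iid}_\lambda(k,N,\delta)$ is exactly the generalized inverse of the strictly decreasing function $B_{N,k}(\nu\,\cdot\,)$ evaluated at $\delta$.

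For monotonicity in $\delta$: if $\delta<\delta'$ and $\epsilon'=\overline{\epsilon}^{\,\iid}_\lambda(k,N,\delta')$ lies in $(0,1)$, then feasibility gives $B_{N,k}(\nu\epsilon')\ge\delta'>\delta$, and by continuity and strict decrease one may increase $\epsilon'$ slightly while still keeping $B_{N,k}(\nu\epsilon)\ge\delta$, so $\overline{\epsilon}^{\,\iid}_\lambda(k,N,\delta)>\epsilon'$. For monotonicity in $k$: the elementary identity $B_{N,k+1}(p)-B_{N,k}(p)=\binom{N}{k+1}p^{k+1}(1-p)^{N-k-1}>0$ for $p\in(0,1)$ gives $B_{N,k+1}(\nu\epsilon)>B_{N,k}(\nu\epsilon)$ on $(0,1)$, and feeding this pointwise strict inequality into the same inversion argument yields $\overline{\epsilon}^{\,\iid}_\lambda(k+1,N,\delta)>\overline{\epsilon}^{\,\iid}_\lambda(k,N,\delta)$.

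The monotonicity in $N$ is the one substantive step, since it needs a genuine combinatorial input rather than a one-line identity: I would show $B_{N+1,k}(p)<B_{N,k}(p)$ for $p\in(0,1)$, $k\le N-1$, by conditioning on the last of $N+1$ i.i.d.\ Bernoulli$(p)$ trials, which gives the recursion $B_{N+1,k}(p)=(1-p)B_{N,k}(p)+p\,B_{N,k-1}(p)$ (with $B_{N,-1}\equiv0$), and then using $B_{N,k-1}(p)<B_{N,k}(p)$ — already available from the $k$-step — to conclude $B_{N+1,k}(p)<B_{N,k}(p)$; equivalently, one couples the partial sums $S_N\le S_{N+1}$ and notes that the event $\{S_N\le k,\ X_{N+1}=1\}$ has positive probability. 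Plugging $B_{N+1,k}(\nu\epsilon)<B_{N,k}(\nu\epsilon)$ into the inversion argument then gives $\overline{\epsilon}^{\,\iid}_\lambda(k,N+1,\delta)<\overline{\epsilon}^{\,\iid}_\lambda(k,N,\delta)$.

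The remaining difficulty is purely bookkeeping at the edges of the parameter range: when $\lambda>0$ and $\delta\le B_{N,k}(\nu)$ one has $\overline{\epsilon}^{\,\iid}_\lambda=1$, and at $\delta=1$ one has $\overline{\epsilon}^{\,\iid}_\lambda=0$, so the inversion argument (and the strictness) should be applied on the regime $0<\overline{\epsilon}^{\,\iid}_\lambda<1$; isolating these degenerate cases — or simply invoking the precise statement behind \cite[Proposition~6.5]{Classical22}, from which this proposition is deduced — is what makes the full write-up longer than the idea.
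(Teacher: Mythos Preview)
Your approach is essentially the same as the paper's: both identify $\overline{\epsilon}^{\,\iid}_\lambda(k,N,\delta)$ as the inverse of the strictly decreasing map $\epsilon\mapsto B_{N,k}(\nu\epsilon)$ and then read off each monotonicity from the corresponding strict monotonicity of $B_{N,k}$ in $p$, $k$, and $N$ (the paper citing \lref{lem:Bzkmono} and packaging the inversion as \lref{lem:epsiidsolve}, you supplying the derivative, binomial-term, and conditioning-recursion arguments directly). Your explicit flagging of the boundary regime $\overline{\epsilon}^{\,\iid}_\lambda=1$ when $\lambda>0$ and $\delta\le B_{N,k}(\nu)$ is a point the paper's own proof glosses over; note, however, that the reference you suggest at the end (\cite[Proposition~6.5]{Classical22}) concerns the adversarial quantity $\overline{\epsilon}_\lambda$, not the i.i.d.\ one.
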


\begin{figure}
	\begin{center}
		\includegraphics[width=7.2cm]{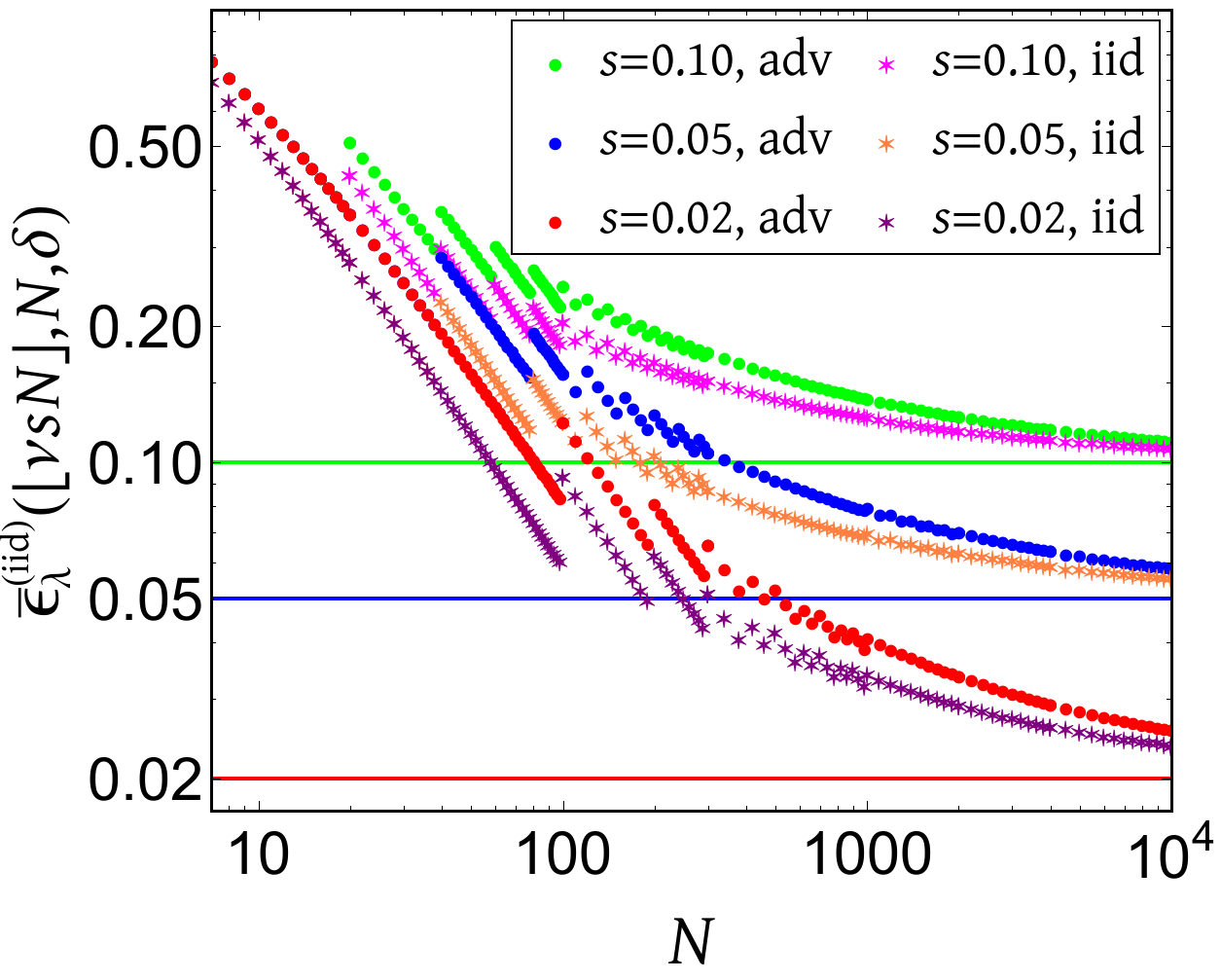}
		\caption{\label{fig:Logplot_barEpsIID}
			Guaranteed infidelities in the i.i.d.\! scenario and adversarial scenario. 
			Here $\lambda=1/2$ and $\delta=0.05$; the green, blue, and red dots represent $\bar{\epsilon}_{\lambda}(\lfloor\nu s N\rfloor, N, \delta)$
			[by Eq.~(6) in Supplementary Note 1] for the adversarial scenario, while the magenta, orange, and purple stars represent $\bar{\epsilon}^{\,\iid}_{\lambda}(\lfloor\nu s N\rfloor, N, \delta)$ [by \eref{eq:hatzetaiid}] for the i.i.d.\! scenario; each horizontal line represents an error rate $s$.
		}
	\end{center}
\end{figure}

Next, we consider the verification with a fixed error rate in the i.i.d.\! scenario. 
Concretely, we set the number of allowed failures $k$ to be proportional to the number of tests, i.e., $k=\lfloor s\nu N \rfloor$, where $0\leq s<1$ is the error rate, and $\nu=1-\lambda$ is the spectral gap of the strategy $\Omega$. The following proposition  provides informative bounds for 
$\bar{\epsilon}^{\,\iid}_{\lambda}(\lfloor\nu s N\rfloor,N,\delta)$. It is the counterpart of Theorem \ref{thm:Boundeps}.

\begin{proposition}\label{prop:Boundepsiid}
	Suppose $0<s,\lambda<1$, $0<\delta\leq 1/2$, and $N\in\bbZ^{\geq 1}$; then
	\begin{align}\label{eq:iidBDeps}
		\begin{split}
			s-\! \frac{1}{\nu N}
			<\bar{\epsilon}^{\,\iid}_{\lambda}(\lfloor\nu s N\rfloor,N,\delta) 
			\leq s+\! 
			\sqrt{\frac{2s \ln\delta^{-1}}{\nu N}}\! + \frac{2\ln\delta^{-1}}{\nu N}.
		\end{split}
	\end{align}
\end{proposition}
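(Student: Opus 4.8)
The plan is to argue directly from the explicit characterization of $\bar{\epsilon}^{\,\iid}_{\lambda}$ in \eref{eq:hatzetaiid}. Write $k:=\lfloor\nu s N\rfloor$ and $\epsilon^\ast:=\bar{\epsilon}^{\,\iid}_{\lambda}(k,N,\delta)$, and recall $0<\nu=1-\lambda<1$. By the second line of \eref{eq:hatzetaiid}, $\epsilon^\ast=\max\{\epsilon\in[0,1]\,:\,B_{N,k}(\nu\epsilon)\ge\delta\}$; since $k\le\nu s N<N$ we have $k\le N-1$, so $p\mapsto B_{N,k}(p)$ is continuous (indeed strictly decreasing) on $[0,1]$, the constraint set is compact and nonempty (it contains $0$), the maximum is attained, and $B_{N,k}(\nu\epsilon^\ast)\ge\delta$. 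Both inequalities of \eref{eq:iidBDeps} then reduce to elementary estimates on the lower tail $B_{N,k}(p)=\Pr[\mathrm{Bin}(N,p)\le k]$ of a binomial distribution, in parallel with the proof of Theorem~\ref{thm:Boundeps}.

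For the lower bound I would exhibit a feasible $\epsilon$ just above $s-1/(\nu N)$. Taking $p=k/N$, the mean of $\mathrm{Bin}(N,k/N)$ is the integer $k$; since a binomial has median equal to its mean whenever the mean is an integer, $B_{N,k}(k/N)=\Pr[\mathrm{Bin}(N,k/N)\le k]\ge 1/2\ge\delta$ (this is the only place $\delta\le 1/2$ is used). Because $k/(\nu N)\le s<1$, the value $\epsilon=k/(\nu N)$ is feasible in \eref{eq:hatzetaiid}, so $\epsilon^\ast\ge k/(\nu N)$. Finally $k=\lfloor\nu s N\rfloor>\nu s N-1$ gives $\epsilon^\ast\ge k/(\nu N)>s-1/(\nu N)$, the claimed strict lower bound; in the degenerate case $k=0$ one has $\nu s N<1$, and then $\epsilon^\ast\ge 0>s-1/(\nu N)$ still holds.

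For the upper bound I may assume $\epsilon^\ast>s$, since otherwise the right-hand side of \eref{eq:iidBDeps}, which is at least $s$, already bounds $\epsilon^\ast$ from above. Then $k\le\nu s N<\nu\epsilon^\ast N=:\mu$, so $k$ lies strictly below the mean $\mu$ of $\mathrm{Bin}(N,\nu\epsilon^\ast)$, and the multiplicative Chernoff bound for the lower tail yields $\delta\le B_{N,k}(\nu\epsilon^\ast)\le\exp[-(\mu-k)^2/(2\mu)]$, i.e.\ $(\mu-k)^2\le 2\mu\ln\delta^{-1}$. Solving this quadratic inequality for $\mu$ and using $\sqrt{u+v}\le\sqrt{u}+\sqrt{v}$ gives $\mu\le k+2\ln\delta^{-1}+\sqrt{2k\ln\delta^{-1}}$; dividing by $\nu N$ and inserting $k\le\nu s N$ produces exactly $\epsilon^\ast\le s+\sqrt{2s\ln\delta^{-1}/(\nu N)}+2\ln\delta^{-1}/(\nu N)$. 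The routine parts are this Chernoff/quadratic manipulation; the one point requiring care is that the Chernoff step needs $k$ below the binomial mean, which is precisely why I establish $\epsilon^\ast\ge k/(\nu N)$ (equivalently the reduction $\epsilon^\ast>s$) first, so the two bounds interlock, and the only non-routine input is the exact value of the binomial median when the mean is an integer.
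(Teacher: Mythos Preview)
Your proof is correct and follows essentially the same approach as the paper. For the upper bound, both you and the paper apply the Chernoff lower-tail bound and the quadratic inequality $D(x\|y)\ge(x-y)^2/(2y)$; the only cosmetic difference is that the paper first replaces $k/N$ by $\nu s$ via monotonicity of $D(\cdot\|\nu\epsilon)$ and then solves, whereas you keep $k$ until the end and substitute $k\le\nu s N$ afterward. For the lower bound, the paper invokes the lemma $B_{z,k}(k/z)>1/2$ (strict) and continuity to get $\epsilon^\ast>k/(\nu N)$, while you use the weaker median fact $B_{N,k}(k/N)\ge 1/2$ to get $\epsilon^\ast\ge k/(\nu N)$; both then conclude the strict bound from $k>\nu sN-1$.
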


Similar to the behavior of $\bar{\epsilon}_{\lambda}(\lfloor\nu s N\rfloor,N,\delta)$, the guaranteed infidelity $\bar{\epsilon}^{\,\iid}_{\lambda}(\lfloor\nu s N\rfloor,N,\delta)$ for the i.i.d.\! scenario converges to the error rate 
$s$ as the number $N$ gets large, as illustrated in Fig.~\ref{fig:Logplot_barEpsIID}. 
To achieve a given infidelity $\epsilon$ and significance level $\delta$, which means $\bar{\epsilon}^{\,\iid}_{\lambda}(\lfloor\nu s N\rfloor,N,\delta)\leq\epsilon$, it suffices to  set $s< \epsilon$ and choose a sufficiently  large $N$. 
By virtue of Proposition~\ref{prop:Boundepsiid} we can derive the  following proposition,  which is the counterpart of  Theorem \ref{thm:UBtestsNumber}.

\begin{proposition}\label{prop:iidUBtestsNumber}
	Suppose  $0\leq \lambda<1$, $0\leq s<\epsilon<1$, and $0<\delta<1$.
	If the number of tests $N$ satisfies
	\begin{align}
		N\geq \frac{\ln\delta^{-1}}{D(\nu s\|\nu\epsilon)} ,
	\end{align}
	then  $\bar{\epsilon}^{\,\iid}_{\lambda}(\lfloor\nu s N\rfloor,N, \delta)\leq\epsilon$.
\end{proposition}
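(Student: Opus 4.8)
\textbf{Proof plan for Proposition~\ref{prop:iidUBtestsNumber}.}
The plan is to reduce the claim to the lower-tail estimate of the binomial cumulative distribution function and then apply the Chernoff bound. By the definition of the guaranteed infidelity in \eref{eq:hatzetaiid}, the inequality $\bar{\epsilon}^{\,\iid}_{\lambda}(\lfloor\nu s N\rfloor,N, \delta)\leq\epsilon$ is equivalent to the statement that every $\epsilon' > \epsilon$ fails the membership condition $B_{N,\lfloor \nu s N\rfloor}(\nu\epsilon')\geq\delta$. Since $B_{N,k}(p)$ is nonincreasing in $p$, it suffices to show $B_{N,\lfloor \nu s N\rfloor}(\nu\epsilon)<\delta$, because then for any $\epsilon'>\epsilon$ we get $B_{N,\lfloor\nu s N\rfloor}(\nu\epsilon')\leq B_{N,\lfloor\nu s N\rfloor}(\nu\epsilon)<\delta$. (One should double-check the boundary case $\epsilon'=\epsilon$; using strict inequality $B_{N,\lfloor\nu s N\rfloor}(\nu\epsilon)<\delta$ and the monotonicity in $p$ handles it, and if needed one can appeal to right-continuity of the max in \eref{eq:hatzetaiid}.)

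First I would write $B_{N,\lfloor\nu s N\rfloor}(\nu\epsilon)=\Pr[\mathrm{Bin}(N,\nu\epsilon)\leq \lfloor\nu s N\rfloor]\leq\Pr[\mathrm{Bin}(N,\nu\epsilon)\leq \nu s N]$. Since $\nu s < \nu\epsilon$, this is a genuine lower-tail event, and the standard Chernoff bound for the binomial lower tail gives
\begin{align}
\Pr[\mathrm{Bin}(N,\nu\epsilon)\leq \nu s N]\leq \exp\!\big[-N\,D(\nu s\,\|\,\nu\epsilon)\big],
\end{align}
where $D(a\|b)=a\ln(a/b)+(1-a)\ln\!\frac{1-a}{1-b}$ is the binary relative entropy. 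Imposing the hypothesis $N\geq \ln\delta^{-1}/D(\nu s\|\nu\epsilon)$ then yields $\exp[-N D(\nu s\|\nu\epsilon)]\leq\delta$, and the strictness needed above comes either from $\lfloor\nu s N\rfloor\leq \nu s N$ being typically strict or, more cleanly, from noting that the Chernoff bound is strict whenever $\nu s N$ is not an integer and otherwise sharpening the inequality by a tiny margin; alternatively one argues directly with $\epsilon'>\epsilon$ as above so that only the non-strict Chernoff bound is required. The edge case $s=0$ is handled separately: there $k=0$, $D(0\|\nu\epsilon)=-\ln(1-\nu\epsilon)$, and $B_{N,0}(\nu\epsilon)=(1-\nu\epsilon)^N\leq\delta$ is immediate from the hypothesis.

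The main obstacle, such as it is, is bookkeeping rather than a deep difficulty: one must be careful that the floor $\lfloor\nu s N\rfloor$ only helps (it makes the tail event smaller), that the Chernoff exponent is evaluated at the correct point $\nu s$ rather than $\lfloor\nu s N\rfloor/N$, and that the strict-versus-non-strict inequality in the definition \eref{eq:hatzetaiid} of $\bar{\epsilon}^{\,\iid}_\lambda$ is reconciled with the (possibly non-strict) Chernoff bound. I expect the cleanest route is to prove $B_{N,\lfloor\nu s N\rfloor}(\nu\epsilon')<\delta$ for all $\epsilon'>\epsilon$ directly, using $D(\nu s\|\nu\epsilon')>D(\nu s\|\nu\epsilon)$ (monotonicity of relative entropy in the second argument for $\nu s<\nu\epsilon<\nu\epsilon'$), which sidesteps the strictness issue entirely. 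Everything else is a one-line invocation of a textbook concentration inequality.
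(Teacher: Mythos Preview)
Your proposal is correct and follows essentially the same route as the paper: reduce $\bar{\epsilon}^{\,\iid}_{\lambda}(\lfloor\nu s N\rfloor,N,\delta)\leq\epsilon$ to $B_{N,\lfloor\nu s N\rfloor}(\nu\epsilon)\leq\delta$ and then apply the Chernoff bound together with $\lfloor\nu s N\rfloor/N\leq\nu s<\nu\epsilon$. The paper packages the first reduction as a separate lemma (equivalence of $B_{N,k}(\nu\epsilon)\leq\delta$ and $\bar{\epsilon}^{\,\iid}_\lambda(k,N,\delta)\leq\epsilon$), which makes your strict-vs-nonstrict worry moot, and it applies Chernoff at the integer threshold $\lfloor\nu s N\rfloor/N$ before invoking monotonicity of $D(\cdot\|\nu\epsilon)$---your direct application at the real threshold $\nu s$ is a harmless shortcut of the same argument.
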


In the rest of this section, we turn to study the sample complexity of robust verification in the i.i.d.\! scenario. To verify the target state within infidelity $\epsilon$, significance level $\delta$, and robustness $r$ (with $0\leq r<1$) entails the following conditions,  
\begin{enumerate}
	\item[1.] (Soundness) If the device prepares i.i.d.\! states $\tau\in\mathcal{D}(\caH)$  with infidelity $\epsilon_\tau>\epsilon$, then the
	probability that Alice accepts $\tau$ is smaller than $\delta$.
	
	\item[2.] (Robustness) If the device prepares i.i.d.\! states $\tau\in\mathcal{D}(\caH)$  with infidelity $\epsilon_\tau\leq r\epsilon$, then
	the probability that Alice accepts $\tau$ is at least $1-\delta$.		
\end{enumerate}
Here the condition of robustness is the same as the counterpart in the adversarial scenario, while the condition of soundness is  different.  
In the adversarial scenario, once accepting, only the reduced state on the remaining unmeasured system can be used for application,   
so the condition of soundness only focuses on the fidelity of this state. In the i.i.d.\! scenario, by contrast, the prepared states are identical and independent, so the condition of soundness focuses on the fidelity of each state.

Given the total number $N$ of tests and the number  $k$ of allowed failures,  then the conditions of soundness and robustness can be expressed as 
\begin{align}\label{eq:robustConditionIID}
	B_{N,k}(\nu\epsilon)\leq\delta, \qquad B_{N,k}(\nu r\epsilon)\geq1-\delta. 
\end{align}
Let $N_{\rm min}^{\rm iid}(\epsilon,\delta,\lambda,r)$ be the minimum number of tests required for robust verification in the i.i.d.\! scenario. Then $N_{\rm min}^{\rm iid}(\epsilon,\delta,\lambda,r)$ is the minimum positive integer $N$ such that \eref{eq:robustConditionIID} holds for some  $0\leq k\leq N-1$, namely, 
\begin{align}\label{eq:iidOptNadvDef2}
	N_{\rm min}^{\rm iid}(\epsilon,\delta,\lambda,r)
	&:=
	\min_{N,k} \big\{ N \,\big|\,  k\in \bbZ^{\geq 0}, N\in\bbZ^{\geq k+1}, \nonumber\\
	&\ \ 
	B_{N,k}(\nu\epsilon)\leq\delta, B_{N,k}(\nu r\epsilon)\geq1-\delta \big\} . 
\end{align}
It is determined by $\nu\epsilon$, $\delta$, $r$,   
and is the counterpart of  $N_{\rm min}(\epsilon,\delta,\lambda,r)$ in the adversarial scenario.

\begin{figure}
	\begin{algorithm}[H]
		{\small
			\hspace{-98pt}\textbf{Input:}  $\lambda,\epsilon,\delta\in(0,1)$ and $r\in[0,1)$.\\
			\hspace{-71pt} \textbf{Output:} $k_{\min}^{\rm iid}(\epsilon,\delta,\lambda,r)$ and $N_{\rm min}^{\rm iid}(\epsilon,\delta,\lambda,r)$.
			
			\begin{algorithmic}[1]
				\caption{{\small Minimum test number for robust verification in the i.i.d.\! scenario}}
				\label{alg:iidNoptAdv}
				
				\If{$r=0$} 
				
				\State{$k_{\min}^{\rm iid}\leftarrow0$}
				
				\Else \For{$k=0,1,2,\dots$}
				
				\State{Find the largest integer $M$ such that $B_{M,k}(\nu r\epsilon)\geq1-\delta$.}
				
				\If{$M\geq k+1$ and $B_{M,k}(\nu\epsilon)\leq\delta$} 
				
				\State{stop}
				
				\EndIf
				
				\EndFor 
				
				\State{$k_{\min}^{\rm iid}\leftarrow k$}
				
				\EndIf
				
				\State{Find the smallest integer $N$ that satisfies $N\geq k_{\min}^{\rm iid}+1$ 
					and $B_{N,k_{\min}^{\rm iid}}(\nu\epsilon)\leq\delta$.}
				
				\State{$N_{\min}^{\rm iid}\leftarrow N$ }
				
				\State{\textbf{return} $k_{\min}^{\rm iid}$ and $N_{\min}^{\rm iid}$}
				
			\end{algorithmic}
		}
	\end{algorithm}
\end{figure}

\begin{figure}
	\includegraphics[width=8.63cm]{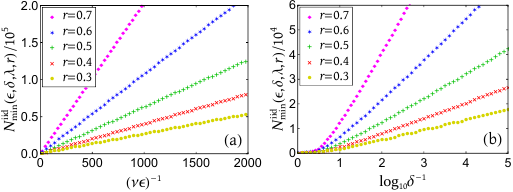}
	\caption{\label{fig:optNiid}
		Minimum number of tests required for robust verification in the i.i.d.\! scenario (by Algorithm~\ref{alg:iidNoptAdv}). (a)~Variations of $N_{\rm min}^{\rm iid}(\epsilon,\delta,\lambda,r)$ with $(\nu\epsilon)^{-1}$ and $r$,  where $\delta=0.01$. (b) Variations of $N_{\rm min}^{\rm iid}(\epsilon,\delta,\lambda,r)$ with $\log_{10}\delta^{-1}$ and $r$,  where $\nu\epsilon=0.005$.
	}
\end{figure}

Next, we propose a simple algorithm, Algorithm~\ref{alg:iidNoptAdv}, for computing $N_{\rm min}^{\rm iid}(\epsilon,\delta,\lambda,r)$, which is very useful to practical applications. This algorithm is the counterpart of  Algorithm~\ref{alg:NoptAdv} for computing $N_{\rm min}(\epsilon,\delta,\lambda,r)$. In addition to the number of tests, Algorithm~\ref{alg:iidNoptAdv}
also determines the corresponding number of allowed failures, which is denoted by $k_{\rm min}^{\rm iid}(\epsilon,\delta,\lambda,r)$. 
In Supplementary Note~8~F we explain why Algorithm~\ref{alg:iidNoptAdv} works.

Algorithm \ref{alg:iidNoptAdv} is quite useful to studying the variations of $N_{\rm min}^{\rm iid}(\epsilon,\delta,\lambda,r)$ with 
$\lambda$, $\delta$, $\epsilon$, and $r$ as illustrated in Fig.~\ref{fig:optNiid}.
When $\epsilon,r$ are fixed and $\delta$ approaches 0, $N_{\rm min}^{\rm iid}(\epsilon,\delta,\lambda,r)$ is proportional to $\ln\delta^{-1}$. 
When $\delta$ and $r$ are fixed, $N_{\rm min}^{\rm iid}(\epsilon,\delta,\lambda,r)$ is inversely proportional to $\nu\epsilon$. 
This fact shows that strategies with larger spectral gaps are more efficient, in sharp contrast with the adversarial scenario. 

At this point it is instructive  to compare the minimum number of tests for robust verification in the adversarial scenario 
with the counterpart in the i.i.d.\! scenario. Numerical calculation shows that the ratio of $N_{\rm min}(\epsilon,\delta,\lambda,r)$ over $N_{\rm min}^{\rm iid}(\epsilon,\delta,\lambda,r)$ is decreasing in $\lambda$,  as reflected in Fig.~\ref{fig:RatioNNiid}. For a typical value of $\lambda$, say $\lambda=1/2$, this ratio is smaller than 2, so the sample complexity in the adversarial scenario is comparable to the counterpart in the i.i.d.\! scenario. When $\lambda$ is small, one can construct another strategy with a larger $\lambda$ by adding the trivial test 
[see \eref{eq:strategy}], which can achieve a higher efficiency in the adversarial scenario. 
Due to this reason, the ratio of $N_{\rm min}(\epsilon,\delta,\lambda,r)$ over $N_{\rm min}^{\rm iid}(\epsilon,\delta,\lambda,r)$ is not so important when $\lambda \leq 0.3$.

\begin{figure}
	\includegraphics[width=6.5cm]{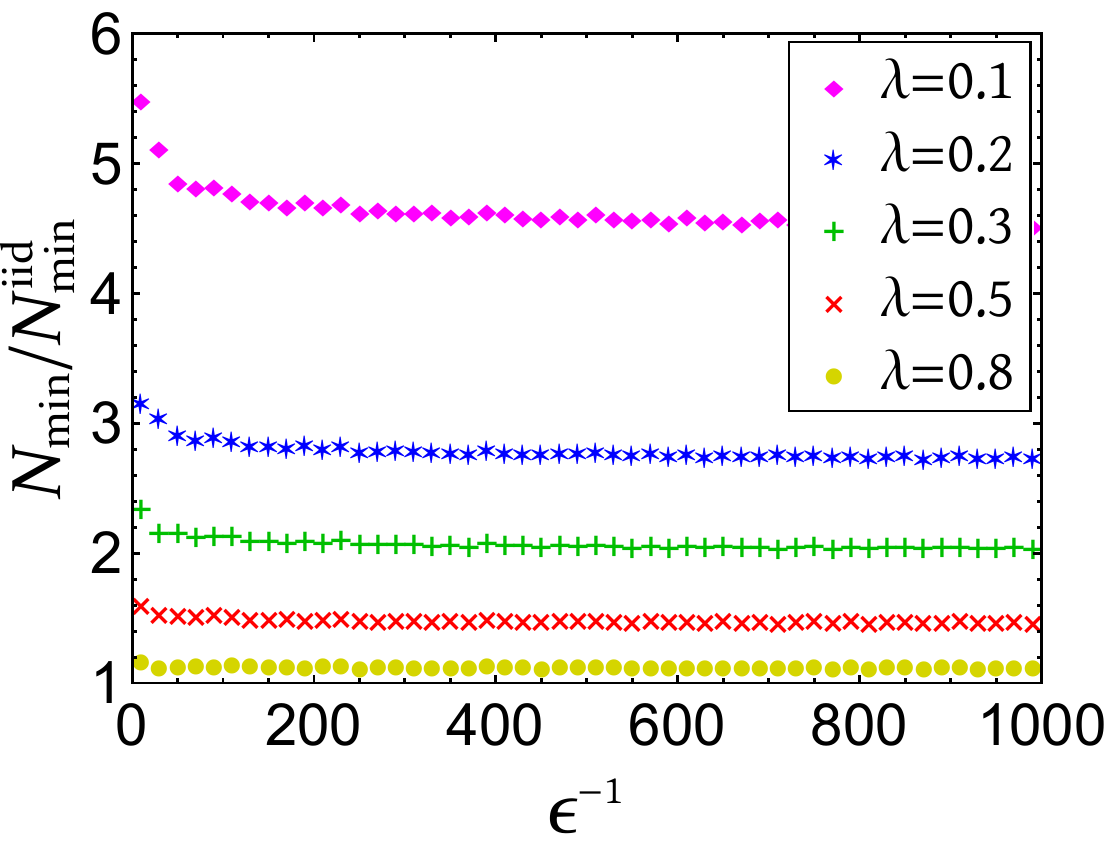}
	\caption{\label{fig:RatioNNiid}
		The ratio of $N_{\rm min}(\epsilon,\delta,\lambda,r)$ over $N_{\rm min}^{\rm iid}(\epsilon,\delta,\lambda,r)$
		with $\epsilon=\delta$ and $r=1/2$. 
		Here $N_{\rm min}(\epsilon,\delta,\lambda,r)$ and $N_{\rm min}^{\rm iid}(\epsilon,\delta,\lambda,r)$ 
		are the minimum numbers of tests required for robust verification in the adversarial scenario and i.i.d.\! scenario, respectively.  
	}
\end{figure}

The following proposition provides a guideline for choosing appropriate parameters $N$ and $k$ for achieving a given verification precision and robustness.

\begin{proposition}\label{prop:iidHighProbiid}
	Suppose $0<\delta,\epsilon,r<1$ and $0\leq~\!\!\!\lambda<~\!\!\!1$. Then the conditions of soundness and robustness in \eref{eq:robustConditionIID} hold as long as 
	$s\in (r\epsilon,\epsilon)$, $k=\left\lfloor\nu s N\right\rfloor$, and 
	\begin{align}\label{eq:choosesknIID}
		N \geq
		\left\lceil \frac{ \ln\delta^{-1}}{ \min\{D(\nu s\|\nu r\epsilon), D(\nu s\|\nu\epsilon)\} } \right\rceil. 
	\end{align}
\end{proposition}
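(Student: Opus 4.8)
The plan is to read the two requirements in \eref{eq:robustConditionIID} as a pair of one‑sided binomial tail estimates and bound each by the standard relative‑entropy (Chernoff) inequality for the binomial distribution; since the i.i.d.\ scenario involves no reduced state, no control of $\overline{\epsilon}_\lambda(k,N,\delta)$ is needed and the argument is elementary. Fix $s\in(r\epsilon,\epsilon)$ and set $k=\lfloor\nu sN\rfloor$, so that $\nu sN-1<k\leq\nu sN$; write $X_p$ for a binomially distributed random variable with $N$ trials and success probability $p$, so that $B_{N,k}(p)=\Pr(X_p\leq k)$. Since $0\leq\lambda<1$ gives $0<\nu\leq1$ and $s\in(r\epsilon,\epsilon)$ gives $0<\nu r\epsilon<\nu s<\nu\epsilon<1$, the binary relative entropies $D(\nu s\|\nu r\epsilon)$ and $D(\nu s\|\nu\epsilon)$ are strictly positive, so the lower bound \eref{eq:choosesknIID} on $N$ is well defined; moreover $k\leq\nu sN<N$ yields $N\geq k+1$, so the protocol is well posed.

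\emph{Soundness.} I would first treat $B_{N,k}(\nu\epsilon)=\Pr(X_{\nu\epsilon}\leq k)$. Because $k\leq\nu sN<\nu\epsilon N$, this is a genuine lower‑tail event contained in $\{X_{\nu\epsilon}\leq\nu sN\}$, and the relative‑entropy Chernoff bound for the lower tail gives
\begin{align}
B_{N,k}(\nu\epsilon)\leq\Pr(X_{\nu\epsilon}\leq\nu sN)\leq\exp[-ND(\nu s\|\nu\epsilon)],
\end{align}
where passing from the integer cutoff $\lfloor\nu sN\rfloor$ to the real threshold $\nu s$ uses that $x\mapsto D(x\|\nu\epsilon)$ is nonincreasing on $[0,\nu\epsilon]$. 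The right‑hand side is at most $\delta$ precisely when $N\geq\ln\delta^{-1}/D(\nu s\|\nu\epsilon)$, which is implied by \eref{eq:choosesknIID}.

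\emph{Robustness.} For $B_{N,k}(\nu r\epsilon)\geq1-\delta$ I would pass to the complementary upper tail, $1-B_{N,k}(\nu r\epsilon)=\Pr(X_{\nu r\epsilon}\geq k+1)$. Since $k+1>\nu sN>\nu r\epsilon N$, this is a genuine upper‑tail event, and the relative‑entropy Chernoff bound for the upper tail, with $x\mapsto D(x\|\nu r\epsilon)$ nondecreasing on $[\nu r\epsilon,1]$, gives
\begin{align}
1-B_{N,k}(\nu r\epsilon)\leq\Pr(X_{\nu r\epsilon}\geq\nu sN)\leq\exp[-ND(\nu s\|\nu r\epsilon)],
\end{align}
which is at most $\delta$ whenever $N\geq\ln\delta^{-1}/D(\nu s\|\nu r\epsilon)$. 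Taking $N$ at least $\lceil\ln\delta^{-1}/\min\{D(\nu s\|\nu r\epsilon),D(\nu s\|\nu\epsilon)\}\rceil$ makes both estimates hold simultaneously, which establishes \eref{eq:robustConditionIID}.

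\textbf{Expected difficulty.} There is no substantive obstacle here: the whole content is a two‑sided binomial Chernoff estimate, which is exactly why this i.i.d.\ proposition is much simpler than its adversarial analog Theorem~\ref{thm:iidHighProb}, where one must control $\overline{\epsilon}_\lambda(k,N,\delta)$ rather than a bare binomial tail. The only care required is bookkeeping around the floor in $k=\lfloor\nu sN\rfloor$ when moving between the integer cutoffs and the real thresholds $\nu s,\nu r\epsilon,\nu\epsilon$, and checking that $s$ lying strictly between $r\epsilon$ and $\epsilon$ places $\nu s$ strictly between the two Bernoulli parameters, so that both tails decay exponentially and both relative entropies are positive. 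Alternatively, one may invoke the binomial tail lemmas already developed in the companion paper \cite{Classical22} in place of the explicit Chernoff bound; I would use whichever yields the cleaner constants.
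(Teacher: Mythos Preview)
Your proposal is correct and follows essentially the same approach as the paper: both arguments reduce the two conditions in \eref{eq:robustConditionIID} to a pair of binomial tail bounds and control them via the relative-entropy Chernoff inequality together with the monotonicity of $D(\cdot\|\cdot)$ to pass from the integer cutoff $k=\lfloor\nu sN\rfloor$ to the real threshold $\nu s$. The only cosmetic differences are that the paper routes the soundness part through Proposition~\ref{prop:iidUBtestsNumber} and \lref{lem:epsiidANDBnk} rather than applying Chernoff directly, and handles the robustness part by rewriting $1-B_{N,k}(\nu r\epsilon)=B_{N,N-k-1}(1-\nu r\epsilon)$ before applying the lower-tail Chernoff bound, whereas you use the upper-tail form directly.
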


For $0<p,r<1$ we define functions
\begin{align}
	\zeta(r,p)&:=
	p \left[ D\!\left( \frac{\ln\!\big(\frac{1-p}{1-rp}\big) }{\ln r + \ln\!\big(\frac{1-p}{1-rp}\big)} \Bigg\|p\right)\right] ^{-1}, \label{eq:zetarpxi}
	\\
	\xi(r) &:=
	\lim_{p \to 0}\zeta(r,p) \nonumber\\
	&\;= 
	\left[ \frac{r-1}{\ln r} \ln\!\left( \frac{r-1}{\ln r}\right) 
	+\left( 1- \frac{r-1}{\ln r}\right)\right]^{-1} . \label{eq:xir}
\end{align}
By virtue of Proposition~\ref{prop:iidHighProbiid} we can derive the following informative bounds (for $0<\delta,\epsilon,r<1$), 
\begin{align}\label{eq:iidNoptUB}
	N_{\rm min}^{\rm iid}(\epsilon,\delta,\lambda,r)
	\leq \! \left\lceil \frac{\ln\delta^{-1}}{\nu\epsilon} \zeta(r,\nu\epsilon) \right\rceil 
	\leq \! \left\lceil \frac{ \ln\delta^{-1}}{ \nu\epsilon} \xi(r) \right\rceil.  
\end{align} 
These bounds become tighter when the significance level $\delta$ approaches 0, as shown in Supplementary Figure~4. 
The coefficient $\xi(r)$ in the second bound is plotted in Supplementary Figure~5. 
When $r=\lambda=1/2$ for instance, the second bound in \eref{eq:iidNoptUB} implies that 
\begin{align}
	N_{\rm min}^{\rm iid}(\epsilon,\delta,\lambda,r)
	\leq \left\lceil \frac{2\,\xi(1/2)\ln\delta^{-1}}{\epsilon} \right\rceil 
	\leq \left\lceil \frac{46.5 \ln\delta^{-1}}{\epsilon} \right\rceil,
\end{align}   
while numerical calculation shows that $N_{\rm min}^{\rm iid}(\epsilon,\delta,\lambda,r)$ is smaller than 
$41 \,\epsilon^{-1}\ln\delta^{-1}$ for $\delta\geq 10^{-10}$ and approaches $2\,\xi(1/2)\,\epsilon^{-1}\ln\delta^{-1}$ when $\delta,\epsilon\to 0$. 
Therefore, our protocol can enable robust and efficient  verification of quantum states in the  i.i.d.\! scenario.

Finally, it is instructive to clarify the relation between QSV in the i.i.d.\! scenario, nonadversarial scenario, and adversarial scenario. In the i.i.d.\! scenario, the assumptions on the source are the strongest, so QSV  is the easiest, and the sample cost is the smallest. In the adversarial scenario, by contrast,  the assumptions on the source are the weakest, so QSV  is the most difficult, and the sample cost is the largest. For graph states with a prime local dimension, the sample cost in the adversarial scenario is comparable to the counterpart in the i.i.d.\! scenario thanks to our analysis above, which means the sample costs in all three scenarios are comparable.

\let\oldaddcontentsline\addcontentsline
\renewcommand{\addcontentsline}[3]{}

\let\addcontentsline\oldaddcontentsline

\bigskip
\noindent\textbf{ACKNOWLEDGMENTS}\\	
The work at Fudan is supported by the National Natural Science Foundation of China (Grants No.~92165109 and No.~11875110),  
National Key Research and Development Program of China (Grant No.~2022YFA1404204), and Shanghai Municipal Science and Technology Major Project (Grant No.~2019SHZDZX01). MH is supported in part by the National Natural Science Foundation of China (Grants No.~62171212 and No.~11875110)
and Guangdong Provincial Key Laboratory (Grant No.~2019B121203002).

\newpage
\onecolumngrid
\newpage
\vspace{2em}
\begin{center}
	\textbf{\large Robust and efficient verification of graph states in blind measurement-based\\ \vspace{0.5ex} quantum computation: Supplementary Information}
	\\ \vspace{3ex}
	\text{Zihao Li,\; Huangjun Zhu,\; and Masahito Hayashi}
\end{center}

\setcounter{page}{1}

\renewcommand{\figurename}{Supplementary Figure}
\renewcommand{\theequation}{\arabic{equation}}
\renewcommand{\thetable}{S\arabic{table}}
\renewcommand{\thetheorem}{S\arabic{theorem}}
\renewcommand{\thelemma}{S\arabic{lemma}} 
\renewcommand{\theproposition}{S\arabic{proposition}} 

\renewcommand{\thesection}{Supplementary Note~\arabic{section}}
\renewcommand{\thesubsection}{\Alph{subsection}}
\renewcommand{\thesubsubsection}{\alph{subsubsection}}

\setcounter{section}{0}
\setcounter{equation}{0}
\setcounter{figure}{0}
\setcounter{lemma}{0}
\setcounter{theorem}{0}
\setcounter{proposition}{0}	

\def\eqref#1{\textup{(\ref{#1})}}

\tableofcontents

\section{Analytical formula of $\overline{\epsilon}_\lambda(k,N,\delta)$}\label{app:epsanalytical}
In this section, we  provide an analytical formula for the guaranteed infidelity $\overline{\epsilon}_\lambda(k,N,\delta)$ defined in \eref{eq:hatzeta2} in the main text. 

For $0\leq p \leq 1$ and $z,k\in\bbZ^{\geq 0}$, define 
\begin{align}\label{eq:binomCFD}
	B_{z,k}(p):= \sum_{j=0}^k \binom{z}{j} p^j (1-p)^{z-j}.
\end{align}
Here it is understood that $x^0=1$ even if $x=0$.
For integer $0\leq z\leq N+1$, define
\begin{align}
	h_z(k,N,\lambda)&:=
	\begin{cases}
		1 & 0\leq z \le k, \\
		\frac{(N-z+1) B_{z,k}(\nu) +z B_{z-1,k}(\nu) }{N+1} & k+1 \leq z \leq N+1,
	\end{cases}
	\label{eq:hzHomo} \\
	g_z(k,N,\lambda)&:=
	\begin{cases}
		\frac{N-z+1}{N+1} &  0\leq z \le k, \\
		\frac{(N-z+1)B_{z,k}(\nu)}{N+1}   &  k+1 \leq z \leq N+1. 
	\end{cases}
	\label{eq:gzHomo}
\end{align}

\begin{lemma}[Lemma 6.2, \cite{Classical22}]\label{lem:gzhzMono}
	Suppose $0<\lambda<1$, $k,z,N\in\bbZ^{\geq 0}$, and $N \geq k+1$.
	Then 
	$h_z(k,N,\lambda)$ strictly decreases with $z$ for $k\leq z\leq N+1$, and
	$g_z(k,N,\lambda)$ strictly decreases with $z$ for $0\leq z\leq N+1$.
\end{lemma}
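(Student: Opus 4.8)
The statement reduces to a single elementary monotonicity fact about the binomial tail: for fixed $0<p<1$ and $k\in\bbZ^{\geq0}$, the quantity $B_{z,k}(p)$ is strictly decreasing in the integer $z$ as soon as $z\geq k$. I would prove this from the one-step recursion $B_{z+1,k}(p)=(1-p)B_{z,k}(p)+pB_{z,k-1}(p)$ (obtained by conditioning on the last of $z+1$ Bernoulli trials, with the convention $B_{z,-1}(p):=0$) together with the elementary identity $B_{z,k}(p)-B_{z,k-1}(p)=\binom{z}{k}p^{k}(1-p)^{z-k}$, which is strictly positive exactly when $z\geq k$; subtracting gives $B_{z+1,k}(p)-B_{z,k}(p)=-\binom{z}{k}p^{k+1}(1-p)^{z-k}<0$ for $z\geq k$. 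All later steps use this with $p=\nu=1-\lambda\in(0,1)$, together with the trivial bounds $0<(1-\nu)^{z}\leq B_{z,k}(\nu)<1$ for $z\geq k+1$ and the exact values $B_{k,k}(\nu)=1$ and $B_{k+1,k}(\nu)=1-\nu^{k+1}$.

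For $g_z$ I would handle three regimes. On $0\leq z\leq k$ we have $g_z=(N-z+1)/(N+1)$, manifestly strictly decreasing. At the seam $z=k\to z=k+1$, $g_{k+1}=\frac{(N-k)(1-\nu^{k+1})}{N+1}<\frac{N-k}{N+1}<\frac{N-k+1}{N+1}=g_k$, using $N\geq k+1$. On $k+1\leq z\leq N$, the key fact gives $B_{z+1,k}(\nu)<B_{z,k}(\nu)$ (as $z>k$), and together with $N-z<N-z+1$ and $B_{z,k}(\nu)>0$ this yields
\begin{align}
g_{z+1}=\frac{N-z}{N+1}B_{z+1,k}(\nu)<\frac{N-z+1}{N+1}B_{z,k}(\nu)=g_z;
\end{align}
finally $g_{N+1}=0<\frac{1}{N+1}B_{N,k}(\nu)=g_N$. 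Chaining these inequalities gives strict decrease over all of $0\leq z\leq N+1$.

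For $h_z$ (the claim being only on $k\leq z\leq N+1$) I would again isolate the seam from the interior. At $z=k$, $h_k=1$; at $z=k+1$, $h_{k+1}=\frac{(N-k)(1-\nu^{k+1})+(k+1)}{N+1}=1-\frac{(N-k)\nu^{k+1}}{N+1}<1=h_k$ because $N-k\geq1$. For $k+1\leq z\leq N$, the plan is to first verify, by a short rearrangement of the definition, the telescoping identity
\begin{align}
(N+1)\bigl(h_{z+1}-h_z\bigr)=(N-z)\bigl[B_{z+1,k}(\nu)-B_{z,k}(\nu)\bigr]+z\bigl[B_{z,k}(\nu)-B_{z-1,k}(\nu)\bigr];
\end{align}
then, since $z\geq k+1$ makes both $z\geq k$ and $z-1\geq k$, the key fact forces both bracketed differences to be strictly negative, and as $N-z\geq0$ and $z\geq1$ the right-hand side is strictly negative, so $h_{z+1}<h_z$ (at the endpoint $z=N$ only the second term survives, and it too is strictly negative). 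Combining the seam step with the interior steps gives the claim.

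The argument is essentially one fact plus bookkeeping, so I expect the only real care to be needed at the piecewise boundaries $z=k$, $z=k+1$ and at $z=N+1$, where the weight $N-z+1$ degenerates to $0$. As an alternative one can recognize $h_z$ and $g_z$ as acceptance probabilities for a configuration with exactly $z$ "bad" subsystems among the $N+1$ and deduce monotonicity from a coupling on the number of bad subsystems, but the direct computation above is shorter and fully self-contained.
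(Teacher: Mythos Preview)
Your proof is correct. The paper does not actually prove this lemma: it is imported verbatim as Lemma~6.2 of the companion paper~\cite{Classical22}, so there is no in-paper argument to compare against. Your approach---reducing everything to the one-step binomial-tail identity $B_{z+1,k}(p)-B_{z,k}(p)=-\binom{z}{k}p^{k+1}(1-p)^{z-k}$ and then handling the piecewise seams and the telescoping identity for $h_z$ explicitly---is elementary and self-contained, and the boundary cases ($z=k$, $z=k+1$, $z=N$, $z=N+1$) are all treated correctly. The only cosmetic point is that your separate treatment of $g_{N+1}<g_N$ is already subsumed by your displayed inequality at $z=N$, but this redundancy does no harm.
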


By this lemma, we have $B_{N,k}(\nu)=h_{N+1}(k,N,\lambda)\leq h_z(k,N,\lambda)\leq 1$ for $0\leq z\leq N+1$.
Hence, for $B_{N,k}(\nu)<\delta\leq 1$, we can define $\hat{z}$ as the largest integer $z$ such that
$h_z(k,N,\lambda)\geq\delta$.
For $z\in\{k,k+1,\dots,N\}$, define
\begin{align}
	\kappa_z(k,N,\delta,\lambda)
	:=\;& \frac{\delta-h_{z+1}(k,N,\lambda)}{h_z(k,N,\lambda)-h_{z+1}(k,N,\lambda)}, \label{eq:kappaz} \\
	\tilde{\zeta}_\lambda(k,N,\delta,z)
	:=\;& [1-\kappa_z(k,N,\delta,\lambda)]g_{z+1}(k,N,\lambda)
	+\kappa_z(k,N,\delta,\lambda)g_z(k,N,\lambda).  \label{eq:tildezeta1}
\end{align}
The exact value of $\overline{\epsilon}_\lambda(k,N,\delta)$ is determined by the following theorem, which follows from 
Theorem 6.4 in the companion paper \cite{Classical22} according to the discussions in the Methods section. 

\begin{theorem}[Theorem 6.4, \cite{Classical22}]\label{thm:AdvFidelity}
	Suppose $0<\lambda<1$, $0<\delta\leq1$,
	$k\in\bbZ^{\geq 0}$, and $N\in\bbZ^{\geq k+1}$. Then we have
	\begin{align}
		&\overline{\epsilon}_\lambda(k,N,\delta)
		=
		\begin{cases}
			1                             & \delta \leq B_{N,k}(\nu), \\
			1-\tilde{\zeta}_\lambda(k,N,\delta,\hat{z})/\delta >0 \  & \delta > B_{N,k}(\nu).
		\end{cases}
		\label{eq:True-plot}
	\end{align}
\end{theorem}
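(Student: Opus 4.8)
The plan is to carry out the classical reduction announced in the METHODS section and then invoke \cite[Theorem~6.4]{Classical22}, of which the present statement is a transcription. First I would note that $p_k(\rho)$ and $f_k(\rho)$ in \eqref{eq:pnmkrho}--\eqref{eq:fnmkrho} depend only on the diagonal entries of $\rho$ in the product eigenbasis of $\Omega$ built from the rank-one projectors $\Pi_j$ of \eqref{eq:homoDecompose}, so $\rho$ may be taken to be a classical mixture of tensor products of the $\Pi_j$. Encoding the $i$th system by a bit $Y_i\in\{0,1\}$ ($Y_i=0$ for $\Pi_1$, $Y_i=1$ otherwise), permutation invariance of $\rho$ is the same as exchangeability of the law $P_{Y_1,\dots,Y_{N+1}}$, and every exchangeable law on $N+1$ bits is a mixture, over the total count $Z:=\sum_i Y_i\in\{0,\dots,N+1\}$ with weights $q_z:=\Pr(Z=z)$, of the uniform distribution on the configurations with exactly $z$ ones. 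Plugging this into the response law \eqref{eq:deflambda}, the failure count $K$ among the first $N$ tests is, conditioned on $Z=z$ and on $Y_{N+1}$, binomial with $z$ (resp.\ $z-1$) trials and failure probability $\nu$, and an elementary count gives $\Pr(K\le k\mid Z=z)=h_z(k,N,\lambda)$ and $\Pr(Y_{N+1}=0,\,K\le k\mid Z=z)=g_z(k,N,\lambda)$ with $h_z,g_z$ as in \eqref{eq:hzHomo}--\eqref{eq:gzHomo}; in particular $z=N+1$ yields $h_{N+1}=B_{N,k}(\nu)$ and $g_{N+1}=0$.

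Then, by \eqref{eq:epslamDef}, I would rewrite $\overline{\epsilon}_\lambda(k,N,\delta)$ as the value of a linear-fractional program over the probability simplex in $(q_z)$,
\begin{align}
\overline{\epsilon}_\lambda(k,N,\delta)=\max\Big\{\,1-\frac{\sum_z q_z\,g_z}{\sum_z q_z\,h_z}\ \Big|\ q_z\ge0,\ \textstyle\sum_z q_z=1,\ \sum_z q_z\,h_z\ge\delta\,\Big\},
\end{align}
and solve it. For a fixed value $t$ of $\sum_z q_z h_z$, minimizing $\sum_z q_z g_z$ is a two-constraint linear program whose optimum is a mixture of at most two of the points $(h_z,g_z)$, so that minimum equals the lower convex envelope $E(t)$ of $\{(h_z,g_z)\}_{z=0}^{N+1}$ at $t$. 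A short convexity argument (using $E(h_{N+1})=0$ and $h_{N+1}\ge0$) shows $E(t)/t$ is nondecreasing, hence the optimum sits at $t=\delta$ when $\delta>h_{N+1}$ and at $t=h_{N+1}$ --- all mass on $z=N+1$, with value $1$ --- when $\delta\le h_{N+1}=B_{N,k}(\nu)$; this is the first branch of \eqref{eq:True-plot}. For the second branch, $E(\delta)$ is the ordinate at $h=\delta$ of the envelope edge straddling $\delta$; identifying that edge with the chord between $(h_{\hat z},g_{\hat z})$ and $(h_{\hat z+1},g_{\hat z+1})$, where $\hat z$ is the largest index with $h_{\hat z}\ge\delta$, and inserting the barycentric coordinate $\kappa_{\hat z}$ from \eqref{eq:kappaz}, turns $1-E(\delta)/\delta$ into $1-\tilde{\zeta}_\lambda(k,N,\delta,\hat z)/\delta$ via \eqref{eq:tildezeta1}; strict positivity follows from $h_z>g_z$ for $z\ge1$.

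The one genuinely delicate step --- and the one I would outsource to the companion paper rather than reprove --- is the assertion that the relevant envelope edge joins the \emph{adjacent} indices $\{\hat z,\hat z+1\}$; equivalently, that the finite sequence $(h_z,g_z)_z$, already known to have both coordinates strictly decreasing by \lref{lem:gzhzMono}, is in convex position with chord slopes monotone in $z$. Establishing this amounts to a sharp second-difference inequality for partial sums of binomial coefficients, which is precisely what is packaged in \cite[Theorem~6.4]{Classical22} (there $\overline{\epsilon}_\lambda(k,N,\delta)$ is called the ``upper confidence limit''). With that result in hand, the present theorem follows from the reduction above together with the definitions \eqref{eq:binomCFD}--\eqref{eq:tildezeta1}, so that in the main text only the one-line remark ``this follows from \cite[Theorem~6.4]{Classical22} through the METHODS reduction'' is required.
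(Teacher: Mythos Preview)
Your proposal is correct and follows essentially the same route as the paper: reduce the quantum optimization \eqref{eq:hatzeta2} to the classical maximization \eqref{eq:epslamDef} via the METHODS argument, and then invoke \cite[Theorem~6.4]{Classical22}. The paper does this in a single sentence; you additionally unpack the linear-fractional structure (the convex-envelope description of the optimum over exchangeable laws) and correctly isolate the one nontrivial ingredient---that the lower envelope of $\{(h_z,g_z)\}_z$ passes through the \emph{consecutive} pair $\{\hat z,\hat z+1\}$---as the content imported from the companion paper. One minor remark: your justification of the strict positivity in the second branch via ``$h_z>g_z$ for $z\ge1$'' tacitly assumes $\hat z\ge1$ or $\kappa_{\hat z}<1$; at the single boundary point $k=0,\ \delta=1$ one has $\hat z=0$, $\kappa_0=1$, and the value is $0$, so the inequality is not strict there.
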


\section{Auxiliary lemmas}\label{app:UsefulLemma}
To establish our main results
here we prepare several auxiliary lemmas. 

\begin{lemma}\label{lem:ln(1-x)}
	Suppose $0<x<1$; then 
	\begin{align}\label{eq:ln(1-x)}
		\frac{1}{x}-1< -\frac{1}{\ln(1-x)}< \frac{1}{x}.
	\end{align}
\end{lemma}

\begin{proof}[Proof of \lref{lem:ln(1-x)}]
	The upper bound follows because $\ln(1-x)<-x$ for $0<x<1$. 
	To prove the lower bound, we define the function $\mu(x):=x/(1-x)+\ln(1-x)$, which 
	is strictly increasing in $x$ because 
	\begin{align}
		\frac{{\rm d } \mu(x)}{{\rm d} x} 
		=  \frac{x}{(1-x)^2} >0 \qquad \forall\, 0<x<1.  
	\end{align}
	This fact and the equality $\mu(0)=0$ imply that $\mu(x)>0$ for $0<x<1$, which implies the lower bound in \eref{eq:ln(1-x)}. 
\end{proof}

For $0< p,q<1$,
the relative entropy between two binary probability vectors $(p,1-p)$ and $(q,1-q)$ reads
\begin{align}\label{eq:RelEntropy}
	D(p\|q):=p\ln\frac{p}{q}+(1-p)\ln\frac{1-p}{1-q}.
\end{align}
\begin{lemma}\label{lem:DpqMonoton}
	When $0< p< q< 1$, $D(p\|q)$ is strictly increasing in $q$ and strictly decreasing in $p$;  
	when $0< q< p< 1$, $D(p\|q)$ is strictly decreasing in $q$ and strictly increasing in $p$. 
\end{lemma}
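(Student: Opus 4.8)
The plan is to differentiate $D(p\|q)$ with respect to each argument separately, factor the result, and read off its sign on the two regions of interest. First I would fix $p$ and compute the partial derivative in $q$ directly from \eref{eq:RelEntropy}:
\begin{equation}
\frac{\partial}{\partial q} D(p\|q) = -\frac{p}{q} + \frac{1-p}{1-q} = \frac{q-p}{q(1-q)}.
\end{equation}
Since $q(1-q)>0$ for $0<q<1$, this derivative has exactly the sign of $q-p$. Hence $D(p\|q)$ is strictly increasing in $q$ when $p<q$ and strictly decreasing in $q$ when $q<p$, which gives both claims about monotonicity in $q$.

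Next I would fix $q$ and differentiate in $p$. Writing $D(p\|q)=p\ln p-p\ln q+(1-p)\ln(1-p)-(1-p)\ln(1-q)$, the two ``$+1$'' and ``$-1$'' terms produced by $\partial_p(p\ln p)$ and $\partial_p[(1-p)\ln(1-p)]$ cancel, leaving
\begin{equation}
\frac{\partial}{\partial p} D(p\|q) = \ln\frac{p}{q} - \ln\frac{1-p}{1-q} = \ln\frac{p(1-q)}{q(1-p)}.
\end{equation}
The argument of the logarithm exceeds $1$ precisely when $p(1-q)>q(1-p)$, i.e.\ when $p>q$, and is smaller than $1$ precisely when $p<q$. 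Therefore the derivative is positive for $p>q$ and negative for $p<q$, which establishes the two claims about monotonicity in $p$ and completes the argument.

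I do not expect any real obstacle here; the only point requiring a moment's care is that on each of the two open regions $\{0<p<q<1\}$ and $\{0<q<p<1\}$ the relevant partial derivative is nowhere zero, so the monotonicity is strict rather than merely weak, but this is immediate from the factored forms above. As a sanity check, the same two identities also recover the familiar facts that $D(p\|q)\ge 0$ with equality iff $p=q$, so the sign pattern is consistent with $D$ attaining its minimum along the diagonal.
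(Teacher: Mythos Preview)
Your proof is correct and is essentially identical to the paper's own argument: both compute $\partial_q D(p\|q)=(q-p)/[q(1-q)]$ and $\partial_p D(p\|q)=\ln\frac{p}{q}+\ln\frac{1-q}{1-p}$ and read off the signs. The only cosmetic difference is that you combine the two logarithms in the $p$-derivative into a single one before determining the sign.
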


\begin{proof}[Proof of \lref{lem:DpqMonoton}]
	We have
	\begin{align}
		\frac{\partial D(p\|q)}{\partial q} &= \frac{q-p}{q(1-q)} , 
	\end{align}		
	which is positive when $0< p< q< 1$, and is negative when $0< q< p< 1$. 
	In addition, we have 
	\begin{align}
		\frac{\partial D(p\|q)}{\partial p} &= \ln \frac{p}{q} + \ln \frac{1-q}{1-p} ,
	\end{align}
	which is negative when $0< p< q< 1$, and is positive when $0< q< p< 1$. 
	These observations complete the proof. 
\end{proof}

\begin{lemma}[Lemma 3.2, \cite{Binomial22}]\label{lem:Bzkmono}
	Suppose $k,z\in\bbZ^{\geq 0}$, $0 \leq k \leq z$ and $0<p<1$. Then $B_{z, k}(p)$ is strictly increasing in $k$, strictly decreasing in $z$, and
	nonincreasing in $p$.
	In addition, $B_{z, k}(p)$ is strictly decreasing in $p$ when $k<z$.
\end{lemma}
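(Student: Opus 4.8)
The plan is to treat $B_{z,k}(p)$ as the binomial cumulative distribution function --- the probability of at most $k$ successes in $z$ independent Bernoulli($p$) trials --- and to prove each of the three monotonicities by an explicit computation of a finite difference or a derivative, so that strictness is read off directly from the sign of an elementary positive expression. For the dependence on $k$, I would first note that for $0\leq k\leq z-1$ and $0<p<1$,
\begin{align}
B_{z,k+1}(p) - B_{z,k}(p) = \binom{z}{k+1} p^{k+1}(1-p)^{z-k-1} > 0 ,
\end{align}
which gives strict monotonicity in $k$ on $\{0,1,\dots,z\}$ at once (and the increment vanishes for $k\geq z$, consistent with $B_{z,k}\equiv 1$ there).

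For the dependence on $z$, I would establish the Pascal-type recurrence
\begin{align}
B_{z+1,k}(p) = (1-p)\,B_{z,k}(p) + p\,B_{z,k-1}(p) ,
\end{align}
obtained either by conditioning on the outcome of the $(z{+}1)$-th trial or by expanding $\binom{z+1}{j}=\binom{z}{j}+\binom{z}{j-1}$ and re-indexing. Subtracting $B_{z+1,k}(p)$ from $B_{z,k}(p)$ then yields
\begin{align}
B_{z,k}(p) - B_{z+1,k}(p) = p\bigl[B_{z,k}(p) - B_{z,k-1}(p)\bigr] = p\binom{z}{k}p^{k}(1-p)^{z-k} > 0
\end{align}
whenever $0\leq k\leq z$ and $0<p<1$, which is exactly strict decrease in $z$ on the stated range.

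For the dependence on $p$, I would differentiate term by term, use the identities $j\binom{z}{j}=z\binom{z-1}{j-1}$ and $(z-j)\binom{z}{j}=z\binom{z-1}{j}$ to rewrite each summand as a difference of consecutive terms, and telescope; the boundary term at $j=0$ vanishes, leaving
\begin{align}
\frac{\rmd}{\rmd p} B_{z,k}(p) = -z\binom{z-1}{k}p^{k}(1-p)^{z-1-k} \leq 0 ,
\end{align}
with strict inequality precisely when $\binom{z-1}{k}\neq 0$, i.e. $k\leq z-1$. This simultaneously shows that $B_{z,k}(p)$ is nonincreasing in $p$ in general and strictly decreasing when $k<z$ (and constant, equal to $1$, when $k=z$).

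The arguments are all elementary, so there is no deep obstacle; the only place demanding care is the bookkeeping of boundary conventions --- that $x^{0}=1$ including $x=0$, that binomial coefficients with a negative or out-of-range lower index vanish, and that the leading term of the telescoping sum genuinely drops out --- which is what makes the edge cases $k=z$ and $p\in\{0,1\}$ work out cleanly. An alternative route is a coupling argument (writing $\mathrm{Bin}(z{+}1,p)=\mathrm{Bin}(z,p)+\mathrm{Bern}(p)$ for the $z$-monotonicity, and a monotone coupling of Bernoulli variables across $p<p'$ for the $p$-monotonicity), which renders the weak monotonicities transparent but requires an extra step to extract strictness; I would favour the algebraic route above since it delivers the strict statements in one stroke and pins down exactly the degenerate case $k=z$.
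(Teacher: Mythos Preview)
Your proof is correct. Each of the three monotonicities is established by a clean one-line identity: the difference in $k$ is a single positive binomial term; the Pascal recurrence $B_{z+1,k}(p)=(1-p)B_{z,k}(p)+pB_{z,k-1}(p)$ reduces the $z$-difference to $p\binom{z}{k}p^{k}(1-p)^{z-k}>0$; and the term-by-term derivative telescopes to $-z\binom{z-1}{k}p^{k}(1-p)^{z-1-k}$, which is strictly negative precisely when $k\leq z-1$ and vanishes at $k=z$. The edge cases ($k=0$ in the recurrence, $k=z$ and $z=0$ in the derivative) are handled by the stated conventions.

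As for comparison with the paper: the present paper does not prove this lemma at all --- it is quoted verbatim as Lemma~3.2 of the companion work~\cite{Binomial22} and used as a black box throughout. Your write-up therefore supplies a self-contained elementary argument where the paper defers to an external reference. The approach you take (finite differences plus a telescoping derivative) is the standard textbook route and is almost certainly what~\cite{Binomial22} does as well, so there is no substantive methodological divergence to discuss.
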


For $0<p<1$ and $k,z\in\bbZ^{\geq 0}$, the Chernoff bound states that 
\begin{align}
	B_{z,k}(p)\leq \rme^{-z D(\frac{k}{z}\| p )}\qquad \forall k\leq p z.  \label{eq:ChernoffB} 
\end{align}
The following lemma provides a reverse Chernoff bound for $B_{z,k}(p)$.

\begin{lemma}[Proposition 5.4, \cite{Binomial22}]\label{lem:ChernoffRev}
	Suppose $k,z$ are positive integers that satisfy $k\leq z-1$ and $0<p<1$; then
	\begin{align}
		B_{z,k}(p) \geq \frac{1}{\rme \sqrt{k}} \rme^{-zD(\frac{k}{z}\|p)}	.  \label{eq:ChernoffRev}
	\end{align}
\end{lemma}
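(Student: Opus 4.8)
The plan is to reduce the bound to an anti-concentration estimate for the binomial distribution at its own mean. First I would keep only the largest term of the tail sum, using the trivial inequality $B_{z,k}(p)\ge\binom{z}{k}p^{k}(1-p)^{z-k}$. Setting $q:=k/z$, which lies strictly in $(0,1)$ because $1\le k\le z-1$, a direct computation gives the exact identity
\begin{align}
\binom{z}{k}p^{k}(1-p)^{z-k}=g(z)\,\rme^{-zD(q\|p)},\qquad
g(z):=\binom{z}{k}q^{k}(1-q)^{z-k},
\end{align}
because $k\ln(q/p)+(z-k)\ln\frac{1-q}{1-p}$ equals $zD(q\|p)$. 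Hence it suffices to prove $g(z)\ge\frac{1}{\rme\sqrt{k}}$ for every integer $z\ge k+1$; note $g(z)$ is exactly the probability that a $\mathrm{Bin}(z,k/z)$ variable hits its mean $k$.

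Next I would show that $g(z)$ is strictly decreasing in $z$ on $\bbZ^{\geq k+1}$. The ratio of consecutive values telescopes to
\begin{align}
\frac{g(z+1)}{g(z)}=\frac{\bigl(1+\tfrac{1}{z-k}\bigr)^{z-k}}{\bigl(1+\tfrac{1}{z}\bigr)^{z}},
\end{align}
and since $m\mapsto(1+1/m)^{m}$ is strictly increasing and $z-k<z$ (as $k\ge1$), this ratio is $<1$. Being monotone and bounded below, $g$ converges; from $\binom{z}{k}(k/z)^{k}=\frac{k^{k}}{k!}\prod_{i=0}^{k-1}(1-i/z)$ and $(1-k/z)^{z-k}\to\rme^{-k}$ one reads off $\lim_{z\to\infty}g(z)=k^{k}\rme^{-k}/k!$. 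Strict monotonicity then gives $g(z)>k^{k}\rme^{-k}/k!$ for all finite $z\ge k+1$.

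It then remains to verify the numerical inequality $k^{k}\rme^{-k}/k!\ge\frac{1}{\rme\sqrt{k}}$, i.e.\ $k!\le\rme\sqrt{k}\,(k/\rme)^{k}$. For $k\ge2$ this follows from Robbins' sharpening of Stirling's formula, $k!<\sqrt{2\pi k}\,(k/\rme)^{k}\rme^{1/(12k)}$, combined with $\sqrt{2\pi}\,\rme^{1/24}<\rme$; for $k=1$ there is equality, $k^{k}\rme^{-k}/k!=1/\rme=\frac{1}{\rme\sqrt{k}}$, but the strict decrease of $g$ already established still yields $g(z)>1/\rme$. Chaining the displays gives $B_{z,k}(p)\ge g(z)\,\rme^{-zD(k/z\|p)}>\frac{1}{\rme\sqrt{k}}\,\rme^{-zD(k/z\|p)}$, slightly stronger than the stated bound. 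I expect the only delicate point to be nailing the constant $1/\rme$: a crude Stirling estimate applied directly to $\binom{z}{k}$ loses a factor and falls short of $\frac{1}{\rme\sqrt{k}}$, so passing through the monotonicity of $g$ and its limit, which is already tight at $k=1$, is what makes the sharp prefactor attainable.
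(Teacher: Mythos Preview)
Your proof is correct. The paper does not actually prove this lemma; it merely imports it from an external reference (\cite{Binomial22}, Proposition~5.4), so there is no in-paper argument to compare against. Your route---dropping to the single term $\binom{z}{k}p^{k}(1-p)^{z-k}$, factoring it as $g(z)\rme^{-zD(k/z\|p)}$, proving $g(z)=\binom{z}{k}(k/z)^{k}(1-k/z)^{z-k}$ is strictly decreasing in $z$ via the ratio $\frac{(1+1/(z-k))^{z-k}}{(1+1/z)^{z}}<1$, identifying the limit $k^{k}\rme^{-k}/k!$, and finishing with Robbins' Stirling bound for $k\ge2$ together with the exact evaluation at $k=1$---is a clean, self-contained argument that yields the stated bound (in fact strictly). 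The one place worth a word of care is the $k=1$ case: Robbins' upper bound $\sqrt{2\pi}\,\rme^{1/12}\approx2.724>\rme$ does not suffice there, and you handled this correctly by observing that $g(z)>\lim_{z\to\infty}g(z)=1/\rme$ by strict monotonicity, so no Stirling estimate is needed at $k=1$.
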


For $0<\delta\leq 1$, define
\begin{align}\label{eq:definez}
	z^*(k,\delta,\lambda):=\min\left\{z\in\bbZ^{\geq k}|B_{z,k}(\nu)\leq \delta\right\}, \qquad
	z_*(k,\delta,\lambda):=z^*(k,\delta,\lambda)-1.
\end{align}
The following two lemmas follow from results in our companion paper \cite{Classical22} according to the discussions in the Methods section. Note that $\nu=1-\lambda$.

\begin{lemma}[Lemma 6.7, \cite{Classical22}]\label{lem:zeta2Bound2}
	Suppose $0<\lambda<1$, $0<\delta\leq1/2$, $k\in\bbZ^{\geq 0}$, and $N\in\bbZ^{\geq k+1}$; then we have
	\begin{align}
		\min \left\{1, \frac{z_*-k}{\lambda(N+1)+\nu z_*-k }\right\}
		\leq 
		\bar{\epsilon}_{\lambda}(k,N,\delta)
		\leq
		\frac{z^*-k+1+\sqrt{\lambda k}}{\lambda(N-z^*)+z^*-k+1+\sqrt{\lambda k} }
		\leq 
		\frac{z^*-k+1+\sqrt{\lambda k}}{\lambda N }, 
		\label{eq:zeta2LB1-frac}
	\end{align}
	where $z^*$ and $z_*$ are shorthands for $z^*(k,\delta,\lambda)$ and $z_*(k,\delta,\lambda)$, respectively.
\end{lemma}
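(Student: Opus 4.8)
The plan is to derive \eref{eq:zeta2LB1-frac} from the closed form for $\overline{\epsilon}_\lambda(k,N,\delta)$ supplied by Theorem~\ref{thm:AdvFidelity}. The regime $\delta\leq B_{N,k}(\nu)$ is immediate: there $\overline{\epsilon}_\lambda(k,N,\delta)=1$, while $z^*(k,\delta,\lambda)\geq N+1$, so $\lambda(N-z^*)\leq0$ and the middle fraction in \eref{eq:zeta2LB1-frac} is at least $1$, whereas its leftmost term is at most $1$. So the real content is the case $\delta> B_{N,k}(\nu)$, where $\overline{\epsilon}_\lambda(k,N,\delta)=1-\tilde{\zeta}_\lambda(k,N,\delta,\hat{z})/\delta$ with $\hat{z}$ the largest integer $z$ such that $h_z(k,N,\lambda)\geq\delta$.

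The first step is to locate $\hat{z}$ relative to $z^*:=z^*(k,\delta,\lambda)$ and $z_*:=z^*-1$ from \eref{eq:definez}. Since $B_{z,k}(\nu)$ is nonincreasing in $z$ (\lref{lem:Bzkmono}), the weighted average defining $h_z$ in \eref{eq:hzHomo} obeys $B_{z,k}(\nu)\leq h_z(k,N,\lambda)\leq B_{z-1,k}(\nu)$ for all $z$. Feeding this into the defining relations $h_{\hat z}(k,N,\lambda)\geq\delta> h_{\hat z+1}(k,N,\lambda)$ gives $B_{\hat z-1,k}(\nu)\geq\delta> B_{\hat z+1,k}(\nu)$, and comparing with the definition of $z^*$ yields $z_*\leq\hat z\leq z^*$; a short check (using $h_{N+1}(k,N,\lambda)=B_{N,k}(\nu)<\delta$ and $\delta<1$) rules out the boundary possibilities $\hat z=N+1$ or $\hat z\leq k$. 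Thus $\hat z\in\{z_*,z^*\}$.

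Next I would assemble the bounds. By \eref{eq:tildezeta1}, $\tilde{\zeta}_\lambda(k,N,\delta,\hat z)$ is a convex combination of $g_{\hat z}(k,N,\lambda)$ and $g_{\hat z+1}(k,N,\lambda)$ with weight $\kappa=\kappa_{\hat z}(k,N,\delta,\lambda)\in(0,1]$; combining this with the identity $h_z(k,N,\lambda)-g_z(k,N,\lambda)=zB_{z-1,k}(\nu)/(N+1)$ read off from \eref{eq:hzHomo} and \eref{eq:gzHomo} gives
\[
\overline{\epsilon}_\lambda(k,N,\delta)=\frac{\mathcal{N}}{\mathcal{N}+\mathcal{D}},\qquad \mathcal{N}=(1-\kappa)(\hat z+1)B_{\hat z,k}(\nu)+\kappa\,\hat z\,B_{\hat z-1,k}(\nu),
\]
with $\mathcal{D}:=(N+1)\delta-\mathcal{N}=(1-\kappa)(N-\hat z)B_{\hat z+1,k}(\nu)+\kappa(N-\hat z+1)B_{\hat z,k}(\nu)$. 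The one-step recursion $B_{z+1,k}(\nu)=\lambda B_{z,k}(\nu)+\nu B_{z,k-1}(\nu)$, which yields $\lambda\leq B_{z+1,k}(\nu)/B_{z,k}(\nu)\leq1$, together with $z_*\leq\hat z\leq z^*$ and $B_{z^*,k}(\nu)\leq\delta< B_{z_*,k}(\nu)$, already produces crude versions of both inequalities in \eref{eq:zeta2LB1-frac}: for instance $\mathcal{D}/\mathcal{N}\geq\lambda(N-z^*)/(z^*+1)$, hence $\overline{\epsilon}_\lambda(k,N,\delta)\leq(z^*+1)/[\lambda(N-z^*)+z^*+1]$, which is already the asserted upper bound when $k=0$.

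The main obstacle is sharpening these estimates: replacing the numerator $z^*+1$ by $z^*-k+1+\sqrt{\lambda k}$, and inserting the factor $\lambda$ into the denominator of the lower bound. This needs quantitative control of how $B_{z,k}(\nu)$ flattens as $z$ decreases toward its "knee" at $z\approx k$, i.e.\ a comparison of the increments $B_{z-1,k}(\nu)-B_{z,k}(\nu)=\binom{z-1}{k}\nu^{k+1}\lambda^{z-1-k}$ with $B_{z,k}(\nu)$ itself. I would carry this out via the Chernoff bound \eref{eq:ChernoffB} and the reverse Chernoff bound \eref{eq:ChernoffRev} — the $\sqrt{k}$ prefactor in the latter being the source of the $\sqrt{\lambda k}$ correction term — using the monotonicity of the binary relative entropy (\lref{lem:DpqMonoton}) to control the exponents. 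Since this is precisely the random-sampling-without-replacement estimate carried out in \cite[Lemma 6.7]{Classical22}, and the reduction in the METHODS section identifies $\overline{\epsilon}_\lambda(k,N,\delta)$ with the upper confidence limit studied there, the paper establishes the lemma simply by invoking that result; the sketch above indicates the self-contained route.
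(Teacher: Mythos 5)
Your proposal ultimately rests on the same justification as the paper: the METHODS-section reduction identifies $\overline{\epsilon}_\lambda(k,N,\delta)$ with the upper confidence limit of the classical sampling problem, and the bounds are then imported directly from Lemma 6.7 of the companion paper \cite{Classical22} — which is exactly how the paper establishes this lemma, no independent proof being given in the supplementary material. Your additional self-contained sketch via Theorem~\ref{thm:AdvFidelity} is correct in the steps you actually carry out (the location of $\hat z$ in $\{z_*,z^*\}$, the identity $\overline{\epsilon}_\lambda=\mathcal{N}/(\mathcal{N}+\mathcal{D})$, and the ratio bound $\lambda\leq B_{z+1,k}(\nu)/B_{z,k}(\nu)\leq 1$), but as you yourself flag it stops short of the decisive refinement — the $z^*-k+1+\sqrt{\lambda k}$ numerator and the $\lambda$ in the lower bound's denominator — so the citation remains the operative proof.
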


\begin{lemma}[Lemma 2.2, \cite{Classical22}]\label{lem:LLP}
	Suppose  $0< \delta \leq 1/2$, $0\leq \lambda<1$, $k\in\bbZ^{\geq 0}$, and $N\in\bbZ^{\geq k+1}$; then 
	we have
	\begin{align}\label{XMA6C}
		\overline{\epsilon}_\lambda(k,N,\delta)\ge
		\overline{\epsilon}_\lambda(k,N,1/2)
		\left\{
		\begin{array}{ll}
			=1 &\quad
			\hbox{ when \ }
			k\geq \nu N,\\
			> \frac{k}{\nu N}&\quad
			\hbox{ when \ }
			k<\nu N.
		\end{array}
		\right.
	\end{align} 
\end{lemma}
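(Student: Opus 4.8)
The first inequality $\overline{\epsilon}_\lambda(k,N,\delta)\ge\overline{\epsilon}_\lambda(k,N,1/2)$ is immediate, since Proposition~\ref{prop:epsMonoton} asserts that $\overline{\epsilon}_\lambda(k,N,\delta)$ is nonincreasing in $\delta$ and here $\delta\le 1/2$. All the content therefore lies in lower bounding $\overline{\epsilon}_\lambda(k,N,1/2)$, and the plan is to do this by feeding a single explicit permutation-invariant distribution into the reformulation of the METHODS section, $\overline{\epsilon}_\lambda(k,N,1/2)=\max\{\Pr(Y_{N+1}=1\mid K\le k):\Pr(K\le k)\ge 1/2\}$. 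I would use the i.i.d.\ distribution $Y_1,\dots,Y_{N+1}\sim\mathrm{Ber}(\theta)$ for a parameter $\theta\in[0,1]$ fixed later; it is manifestly permutation invariant, hence a legitimate competitor. By the conditional law~\eqref{eq:deflambda} the test outcomes $U_1,\dots,U_N$ are then i.i.d.\ $\mathrm{Ber}(\theta\nu)$, so $K\sim\mathrm{Bin}(N,\theta\nu)$ and $\Pr(K\le k)=B_{N,k}(\theta\nu)$; moreover $Y_{N+1}$ is independent of $K$ (which depends only on $Y_1,\dots,Y_N$ and fresh independent coins), so $\Pr(Y_{N+1}=1\mid K\le k)=\theta$ whenever $\Pr(K\le k)>0$.

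Hence this competitor certifies $\overline{\epsilon}_\lambda(k,N,1/2)\ge\theta$ as soon as $B_{N,k}(\theta\nu)\ge 1/2$. Because $N\ge k+1$, \lref{lem:Bzkmono} makes $p\mapsto B_{N,k}(p)$ continuous (it is a polynomial) and strictly decreasing on $[0,1]$, with $B_{N,k}(0)=1$ and $B_{N,k}(1)=0$; so there is a unique $p^{\star}\in(0,1)$ with $B_{N,k}(p^{\star})=1/2$, and $B_{N,k}(p)\ge 1/2$ precisely when $p\le p^{\star}$. Choosing $\theta=\min\{1,p^{\star}/\nu\}$, so that $\theta\nu=\min\{\nu,p^{\star}\}\le p^{\star}$, keeps the construction admissible, and therefore
\[
\overline{\epsilon}_\lambda(k,N,1/2) \ge \min\{1,p^{\star}/\nu\}.
\]

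To conclude I would locate $p^{\star}$ relative to $\nu$ and to $k/N$, which is where two classical binomial-median facts come in. First, the median of $\mathrm{Bin}(N,\nu)$ is at most $\lceil\nu N\rceil$; so if $k\ge\nu N$ then $k\ge\lceil\nu N\rceil$, whence $B_{N,k}(\nu)\ge 1/2$, i.e.\ $p^{\star}\ge\nu$, which gives $\overline{\epsilon}_\lambda(k,N,1/2)\ge 1$ and hence $=1$ --- the first branch of the claim. Second, for $0\le k\le N-1$ one has $B_{N,k}(k/N)>1/2$, that is, the cumulative distribution function of $\mathrm{Bin}(N,k/N)$ strictly exceeds $1/2$ at its integer mean $k$ (the case $k=0$ being trivial, as then $k/N=0<p^{\star}$); strict monotonicity of $B_{N,k}$ then forces $p^{\star}>k/N$. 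Therefore, when $k<\nu N$ we have $k/(\nu N)<1$, so $\min\{1,p^{\star}/\nu\}>k/(\nu N)$ --- the second branch.

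Everything apart from the two displayed binomial-median facts is elementary. The bound $\mathrm{median}(\mathrm{Bin}(N,\nu))\le\lceil\nu N\rceil$ is the standard estimate on binomial medians, and the strict inequality $B_{N,k}(k/N)>1/2$ is likewise classical (see, e.g., \cite{Binomial22} and references therein). I expect this strict inequality to be the delicate step, since it is exactly what upgrades the conclusion from $\ge k/(\nu N)$ to the asserted strict inequality. (Alternatively one could start from the analytical formula of Theorem~\ref{thm:AdvFidelity}, but the i.i.d.\ construction above is cleaner and, unlike that formula, covers the endpoint $\lambda=0$ as well.)
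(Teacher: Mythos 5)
Your proposal is correct. Note first that the paper does not actually prove this lemma in-house: it imports it wholesale as Lemma~2.2 of the companion paper \cite{Classical22}, via the reduction of $\overline{\epsilon}_\lambda(k,N,\delta)$ to the classical conditional-probability maximization in \eref{eq:epslamDef}. You instead give a direct, self-contained argument: the monotonicity in $\delta$ from Proposition~\ref{prop:epsMonoton} handles the first inequality, and an explicit i.i.d.\ Bernoulli$(\theta)$ witness with $\theta=\min\{1,p^{\star}/\nu\}$, where $B_{N,k}(p^{\star})=1/2$, certifies the lower bound on $\overline{\epsilon}_\lambda(k,N,1/2)$. The computation that this witness yields $\Pr(K\le k)=B_{N,k}(\theta\nu)$ and $\Pr(Y_{N+1}=1\mid K\le k)=\theta$ is right (independence of $Y_{N+1}$ from $K$ is exactly what makes the conditional probability collapse to $\theta$), and the two binomial-median inputs you need are both legitimate: the bound $\Pr(\mathrm{Bin}(N,\nu)\le\lceil\nu N\rceil)\ge 1/2$ is the standard Kaas--Buhrman median localization, and the strict inequality $B_{N,k}(k/N)>1/2$ for $k\ge 1$ is precisely \lref{lem:Bnk1-k/n>1/2}, which the paper already quotes from \cite{Classical22}. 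What your route buys is transparency and self-containment: the extremal behavior is exhibited by a concrete i.i.d.\ state whose infidelity sits exactly at the detection threshold for significance level $1/2$, and the argument covers $\lambda=0$ without appeal to the analytical formula of Theorem~\ref{thm:AdvFidelity}. What it costs is reliance on the binomial-median facts, which would need their own short proofs (or citations) to make the lemma fully standalone; the paper's route avoids this only by deferring the entire burden to the companion paper.
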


\begin{lemma}\label{lem:z*UB3}
	Suppose $0<\lambda<1$, $0<\delta<1$, and $k\in\bbZ^{\geq 0}$. Then
	\begin{align}
		z^*(k,\delta,\lambda) \leq \frac{k}{\nu}+\frac{\sqrt{2\nu k\ln\delta^{-1}}}{2\nu^2}+\frac{\ln\delta^{-1}}{2\nu^2}+1.
	\end{align}
\end{lemma}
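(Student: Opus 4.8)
The plan is to reduce the claimed inequality to a single explicit sufficient condition and then verify that condition using the Chernoff bound \eref{eq:ChernoffB} and Pinsker's inequality. Write $L:=\ln\delta^{-1}>0$ and put
\begin{align}
z_1:=\frac{k}{\nu}+\frac{\sqrt{2\nu kL}}{2\nu^2}+\frac{L}{2\nu^2},\qquad z_0:=\lceil z_1\rceil.
\end{align}
By \eref{eq:definez}, $z^*(k,\delta,\lambda)$ is the least integer $z\ge k$ with $B_{z,k}(\nu)\le\delta$, so it suffices to show that $z_0\ge k$ and $B_{z_0,k}(\nu)\le\delta$: this gives $z^*(k,\delta,\lambda)\le z_0\le z_1+1$, which is exactly the stated bound. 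The inequality $z_0\ge z_1\ge k/\nu\ge k$ is immediate since the last two terms of $z_1$ are nonnegative; in particular $k\le\nu z_0$, so \eref{eq:ChernoffB} applies and yields $B_{z_0,k}(\nu)\le\exp[-z_0 D(k/z_0\|\nu)]$.

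It therefore remains only to prove $z_0 D(k/z_0\|\nu)\ge L$. I would first record the binary Pinsker bound in the form needed here: Taylor's theorem with integral remainder at $a=\nu$ (where both $D(\nu\|\nu)$ and $\partial_a D(a\|\nu)|_{a=\nu}$ vanish), together with $\partial_a^2 D(a\|\nu)=1/[a(1-a)]$, gives for $0\le a\le\nu$
\begin{align}
D(a\|\nu)=\int_a^{\nu}(t-a)\,\frac{dt}{t(1-t)}\ \ge\ 4\int_a^{\nu}(t-a)\,dt\ =\ 2(\nu-a)^2 ,
\end{align}
using $t(1-t)\le 1/4$. Applying this at $a=k/z_0\le\nu$ gives $z_0 D(k/z_0\|\nu)\ge \phi(z_0)$, where $\phi(z):=2(\nu z-k)^2/z$. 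Since $\phi'(z)=2(\nu^2-k^2/z^2)\ge0$ for $z\ge k/\nu$, $\phi$ is nondecreasing there and $\phi(z_0)\ge\phi(z_1)$, so it is enough to check $\phi(z_1)\ge L$.

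Writing $s_1:=\nu z_1-k=(\sqrt{2\nu kL}+L)/(2\nu)$, so that $z_1=(k+s_1)/\nu$, we have $\phi(z_1)=2\nu s_1^2/(k+s_1)$, and the desired inequality $\phi(z_1)\ge L$ is equivalent to $2\nu s_1^2\ge L(k+s_1)$. This is an exact computation: from $2\nu s_1=\sqrt{2\nu kL}+L$ one gets $(2\nu s_1-L)^2=2\nu kL$, hence $2\nu s_1^2=kL+2Ls_1-L^2/(2\nu)$, and therefore
\begin{align}
2\nu s_1^2-L(k+s_1)=Ls_1-\frac{L^2}{2\nu}=L\Big(s_1-\frac{L}{2\nu}\Big)=\frac{L\sqrt{2\nu kL}}{2\nu}\ \ge\ 0 .
\end{align}
Hence $\phi(z_1)\ge L$, so $B_{z_0,k}(\nu)\le e^{-\phi(z_0)}\le e^{-L}=\delta$, completing the argument; the case $k=0$ needs no separate treatment, since then $s_1=L/(2\nu)$, $z_1=L/(2\nu^2)$, and the same chain gives $\phi(z_1)=2\nu^2 z_1=L$.

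I do not expect a genuine obstacle here; the only point that needs care is pinning down the three terms defining $z_1$—especially the middle term $\sqrt{2\nu kL}/(2\nu^2)$—so that $2\nu s_1^2-L(k+s_1)$ comes out nonnegative, indeed exactly equal to $L\sqrt{2\nu kL}/(2\nu)$. This also explains why the bound takes the stated form rather than a simpler one: it is essentially the weakest choice of $z_1$ for which the Pinsker/Chernoff estimate still closes.
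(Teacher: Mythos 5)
Your proof is correct and follows essentially the same route as the paper's: both rest on the tail bound $B_{z,k}(\nu)\le \exp[-2z(k/z-\nu)^2]$ (which you re-derive from the Chernoff bound \eqref{eq:ChernoffB} plus Pinsker's inequality, while the paper invokes Hoeffding's bound directly) and then reduce the claim to the same quadratic condition in $z$. The only presentational difference is that you verify the condition at the explicit candidate $z_1$ using the monotonicity of $\phi(z)=2(\nu z-k)^2/z$, whereas the paper solves the quadratic for the minimal admissible $z$ and then relaxes via $\sqrt{x+y}\le\sqrt{x}+\sqrt{y}$; both computations check out.
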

\begin{proof}[Proof of \lref{lem:z*UB3}]
	According to \eref{eq:definez}	we have
	\begin{align}
		z^*(k,\delta,\lambda)
		&=\min\bigl\{z\in\bbZ^{\geq k} \, | B_{z,k}(\nu)\leq \delta \bigr\}
		\stackrel{(a)}{\leq} \min\left\{z \geq \frac{k}{\nu} \,\bigg|\,\rme^{-2z (\frac{k}{z}-\nu)^2}\leq \delta \right\} \nonumber \\
		&=\left\lceil \frac{k}{\nu}+\frac{\sqrt{(\ln\delta^{-1})^2+8\nu k\ln\delta^{-1}}+\ln\delta^{-1}}{4\nu^2} \right\rceil \nonumber \\
		&\stackrel{(b)}{\leq} \left\lceil \frac{k}{\nu}+\frac{\ln\delta^{-1}+\sqrt{8\nu k\ln\delta^{-1}}+\ln\delta^{-1}}{4\nu^2} \right\rceil
		\leq\frac{k}{\nu}+\frac{\sqrt{2\nu k\ln\delta^{-1}}}{2\nu^2}+\frac{\ln\delta^{-1}}{2\nu^2}+1,
	\end{align}
	where $(a)$ follows from the Hoeffding's bound, and $(b)$ follows from the relation $\sqrt{x+y}\leq\sqrt{x}+\sqrt{y}$ for $x,y\geq0$.
\end{proof}

\begin{lemma}\label{lem:z*UB3smallnu}
	Suppose $1/2\leq\lambda<1$, $0<\delta<1$, and $k\in\bbZ^{\geq 0}$. Then
	\begin{align}
		z^*(k,\delta,\lambda) \leq \frac{k+2\lambda\ln\delta^{-1}+\sqrt{2\lambda k\ln\delta^{-1}}}{\nu}+1. 
	\end{align}
\end{lemma}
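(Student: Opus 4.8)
The plan is to reprise the proof of \lref{lem:z*UB3} with one change: in the regime $\lambda\ge 1/2$ (so $\nu=1-\lambda\le 1/2$) Hoeffding's inequality is wasteful for the lower binomial tail, so I would instead bound $B_{z,k}(\nu)$ by the relative-entropy Chernoff bound \eqref{eq:ChernoffB}. Since $z^*(k,\delta,\lambda)$ is by definition the least integer $z\ge k$ with $B_{z,k}(\nu)\le\delta$, it suffices to exhibit one integer $z$ satisfying $z\ge k$ and $B_{z,k}(\nu)\le\delta$ and to verify that $z$ does not exceed the asserted bound.

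The one genuinely new ingredient is an elementary quadratic lower bound on the binary relative entropy with small second argument: because $\nu\le 1/2$, for all $0\le p\le\nu$,
$$D(p\|\nu)\ \ge\ \frac{(\nu-p)^2}{2\nu\lambda}.$$
I would prove this by a one-variable convexity argument. Setting $\psi(p):=D(p\|\nu)-(\nu-p)^2/(2\nu\lambda)$, one has $\psi(\nu)=\psi'(\nu)=0$ and $\psi''(p)=\tfrac{1}{p(1-p)}-\tfrac{1}{\nu(1-\nu)}>0$ on $(0,\nu)$, since $\nu(1-\nu)=\nu\lambda$ and $p(1-p)<\nu(1-\nu)$ there by monotonicity of $t\mapsto t(1-t)$ on $[0,1/2]$; hence $\psi'$ increases up to $\psi'(\nu)=0$, so $\psi'<0$ on $(0,\nu)$, so $\psi$ is nonincreasing and $\psi\ge\psi(\nu)=0$. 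Now I would set $z:=\big\lceil\,(k+2\lambda\ln\delta^{-1}+\sqrt{2\lambda k\ln\delta^{-1}})/\nu\,\big\rceil$. Letting $z_+$ denote the larger root of $q(z):=\nu^2z^2-2\nu(k+\lambda\ln\delta^{-1})z+k^2$, one has
$$z_+=\frac{k+\lambda\ln\delta^{-1}+\sqrt{2k\lambda\ln\delta^{-1}+\lambda^2(\ln\delta^{-1})^2}}{\nu}\le\frac{k+2\lambda\ln\delta^{-1}+\sqrt{2\lambda k\ln\delta^{-1}}}{\nu}\le z,$$
using $\sqrt{x+y}\le\sqrt x+\sqrt y$ exactly as in the proof of \lref{lem:z*UB3}; in particular $z\ge z_+\ge k/\nu\ge k$, so $k/z\le\nu$ and $q(z)\ge 0$. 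Combining $q(z)\ge0$ with \eqref{eq:ChernoffB} and the displayed relative-entropy bound (applied at $p=k/z$),
$$B_{z,k}(\nu)\ \le\ \exp\!\Big(-z\,D(\tfrac{k}{z}\|\nu)\Big)\ \le\ \exp\!\Big(-\frac{(\nu z-k)^2}{2\nu\lambda z}\Big)\ \le\ \delta,$$
the last step being equivalent to $q(z)\ge0$. Therefore $z^*(k,\delta,\lambda)\le z\le(k+2\lambda\ln\delta^{-1}+\sqrt{2\lambda k\ln\delta^{-1}})/\nu+1$, which is the claim.

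I expect the only real obstacle to be the quadratic relative-entropy bound, and in particular extracting the denominator $\nu\lambda=\nu(1-\nu)$ rather than just $\nu$: this is precisely what produces the coefficient $\lambda$ (not $\nu$) in the statement and separates this lemma from \lref{lem:z*UB3}. The remaining quadratic estimate is routine and structurally identical to the earlier proof. A small point worth handling explicitly is the endpoint $p=0$ (equivalently $k=0$), where $D(0\|\nu)=-\ln(1-\nu)$; the convexity argument still applies since $\psi$ is continuous on $[0,\nu]$ with $\psi'<0$ on all of $(0,\nu)$.
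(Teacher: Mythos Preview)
Your proof is correct and follows essentially the same route as the paper's: Chernoff bound \eqref{eq:ChernoffB}, then the quadratic lower bound $D(p\|\nu)\ge(\nu-p)^2/(2\nu\lambda)$ for $0\le p\le\nu\le1/2$, then the elementary estimate $\sqrt{x+y}\le\sqrt x+\sqrt y$ to simplify the root of the resulting quadratic. The only cosmetic difference is that the paper simply invokes the inequality $D(x\|y)\ge(x-y)^2/[2y(1-y)]$ as a known fact, whereas you supply a short convexity proof; and the paper phrases the argument as a chain of minimizations over $z$ rather than exhibiting a single candidate $z$.
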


\begin{proof}[Proof of \lref{lem:z*UB3smallnu}]
	According to \eref{eq:definez}	we have
	\begin{align}
		z^*(k,\delta,\lambda)
		&=\min\bigl\{z\in\bbZ^{\geq k} \, | B_{z,k}(\nu)\leq \delta \bigr\}
		\stackrel{(a)}{\leq} \min\left\{z \geq \frac{k}{\nu} \,\bigg|\rme^{-z D(\frac{k}{z}\| \nu )}\leq \delta \right\}  \nonumber\\
		&\stackrel{(b)}{\leq} \min\left\{z \geq \frac{k}{\nu} \,\bigg|\rme^{-\frac{z}{2\nu\lambda} (\frac{k}{z}- \nu )^2}\leq \delta \right\}
		=\left\lceil \frac{k+\lambda\ln\delta^{-1}+\sqrt{(\lambda\ln\delta^{-1})^2+ 2\lambda k\ln\delta^{-1}}}{\nu} \right\rceil  \nonumber\\
		&\stackrel{(c)}{\leq} \left\lceil \frac{k+2\lambda\ln\delta^{-1}+\sqrt{2\lambda k\ln\delta^{-1}}}{\nu} \right\rceil
		\leq \frac{k+2\lambda\ln\delta^{-1}+\sqrt{2\lambda k\ln\delta^{-1}}}{\nu}+1,
	\end{align}
	where $(a)$ follows from the Chernoff bound \eqref{eq:ChernoffB}, 
	$(b)$ follows from the relation $D(x\|y)\geq \frac{(x-y)^2}{2y(1-y)}$ for $0\leq x\leq y\leq 1/2$, 
	and $(c)$ follows from the relation $\sqrt{x+y}\leq\sqrt{x}+\sqrt{y}$ for $x,y\geq0$.
\end{proof}

\begin{lemma}\label{lem:1.098}
	Suppose $0<\lambda<1$ and $0<\delta\leq1/4$; then 
	\begin{align}\label{eq:lem1.098}
		\frac{\sqrt{2 \ln\delta^{-1}}}{2\nu}+\sqrt{\lambda \nu }
		\leq 
		\frac{\sqrt{\ln\delta^{-1}}}{\nu } . 
	\end{align}
\end{lemma}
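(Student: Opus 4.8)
The plan is to reduce \eqref{eq:lem1.098} to an elementary one‑variable estimate. Set $a:=\ln\delta^{-1}$; the hypothesis $\delta\leq 1/4$ gives $a\geq 2\ln 2$. Multiplying \eqref{eq:lem1.098} through by $\nu>0$ and using $\nu\sqrt{\lambda\nu}=\lambda^{1/2}(1-\lambda)^{3/2}$ together with $\tfrac{\sqrt{2a}}{2}=\tfrac{\sqrt a}{\sqrt 2}$, the inequality is equivalent to
\[
\lambda^{1/2}(1-\lambda)^{3/2}\;\leq\;\Bigl(1-\tfrac{1}{\sqrt2}\Bigr)\sqrt{a}.
\]
Hence it suffices to bound the left‑hand side from above uniformly in $\lambda\in(0,1)$ and the right‑hand side from below using $a\geq 2\ln 2$.

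For the upper bound I would use the weighted AM--GM inequality on the four numbers $3\lambda,(1-\lambda),(1-\lambda),(1-\lambda)$: their arithmetic mean is $3/4$, so $3\lambda(1-\lambda)^3\leq(3/4)^4$, whence $\lambda(1-\lambda)^3\leq 27/256$ and therefore $\lambda^{1/2}(1-\lambda)^{3/2}\leq \tfrac{3\sqrt3}{16}$ (the maximum is attained at $\lambda=1/4$). For the lower bound, $a\geq 2\ln 2$ gives $\bigl(1-\tfrac1{\sqrt2}\bigr)\sqrt a\geq \bigl(1-\tfrac1{\sqrt2}\bigr)\sqrt{2\ln 2}=(\sqrt2-1)\sqrt{\ln 2}$.

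It then remains to verify the numerical inequality $\tfrac{3\sqrt3}{16}\leq(\sqrt2-1)\sqrt{\ln 2}$. Both sides being positive, I would square it to the equivalent form $\tfrac{27}{256}\leq(3-2\sqrt2)\ln 2$, and close it with explicit rational bounds: $\sqrt2<1.415$ gives $3-2\sqrt2>0.17$, and $\ln 2>0.69$, so $(3-2\sqrt2)\ln 2>0.117>0.106>\tfrac{27}{256}$. This completes the argument. The only mildly delicate point is that the final numerical margin is not large ($\tfrac{3\sqrt3}{16}\approx0.325$ versus $(\sqrt2-1)\sqrt{\ln 2}\approx0.345$), so the rational bounds on $\sqrt2$ and $\ln 2$ must be chosen tightly enough; otherwise every step is routine.
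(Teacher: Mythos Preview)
Your proof is correct and follows essentially the same route as the paper's: both rewrite \eqref{eq:lem1.098} as $\nu\sqrt{\lambda\nu}\le\bigl(1-\tfrac{1}{\sqrt2}\bigr)\sqrt{\ln\delta^{-1}}$, bound the left side by $\sqrt{27/256}$ (the paper states this without the AM--GM justification you supply), and bound the right side from below using $\delta\le 1/4$. The only difference is cosmetic: the paper closes with the decimal estimates $\sqrt{27/256}\le 0.33$ and $\sqrt{\ln 4}\ge 1.17$, whereas you square first and then insert rational bounds for $\sqrt2$ and $\ln 2$.
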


\begin{proof}[Proof of \lref{lem:1.098}]
	Equation \eqref{eq:lem1.098} is equivalent to the following inequality, 
	\begin{align}\label{eq:condition1.098}
		\left( 1- \frac{1}{\sqrt{2}}\right)\sqrt{\ln\delta^{-1}} \geq  \nu\sqrt{\lambda \nu},
	\end{align}
	which in turn follows from the inequalities $\nu\sqrt{\lambda \nu}\leq \sqrt{27/256} \leq 0.33$ and $\sqrt{\ln\delta^{-1}} \geq \sqrt{\ln4} \geq 1.17$.
\end{proof}

\section{Performances of previous protocols for verifying the resource states in blind MBQC}\label{app:PreviousWork}

To illustrate the advantage of our verification protocol, here we provide more details on the performances of the previous protocols summarized in Table~1 in the main text, including 
protocols in Refs.~\cite{ZhuEVQPSlong19, HayaM15, TMMMF19, Takeuchi18, Fujii17}. It turns out none of these protocols can verify the resource states of blind MBQC in a robust and efficient way. Actually, most previous works did not consider the problem of robustness at all, because it is already very difficult to detect the bad case without considering robustness.

Suppose Bob is honest and prepares an i.i.d. state of the form $\rho=\tau^{\otimes (N+1)}$, where $\tau\in\mathcal{D}(\caH)$ has a high fidelity with the target state and is useful for MBQC. For an ideal verification protocol, such a state should be accepted with a high probability. However, the acceptance probability is very small for most protocols known in the literature. This is not surprising given that a large number of tests are required by these protocols to detect the bad case, which means it is difficult to get accepted even if Bob is honest. 
So many repetitions are necessary to ensure that Alice can accept the state  preparation at least once, which may substantially increase the actual sample cost.

To be concrete, suppose Alice repeats the verification protocol $M$ times, and the 
acceptance probability of each run is $p_{\text{acc}}$ (depending on $\tau$). 
Then the probability that she accepts the state $\tau$ at least once
reads $1-(1-p_{\text{acc}})^M$. To ensure that this probability is at least
$1-\tilde\delta$ with $0<\tilde\delta<1$, the minimum of $M$ reads
\begin{align}
	M
	= \left\lceil \frac{\ln\tilde\delta}{\ln\left( 1-p_{\text{acc}}\right) } \right\rceil
	\geq  \frac{\ln\tilde\delta^{-1}}{p_{\text{acc}}}-\ln\tilde\delta^{-1}
	\approx \frac{\ln\tilde\delta^{-1}}{p_{\text{acc}}}, \label{eq:repetitionNum}
\end{align}
where the inequality follows from \lref{lem:ln(1-x)} and the approximation is applicable when $p_{\text{acc}}\ll 1$. To make a fair comparison with our protocol presented in the main text, we can choose $\tilde\delta=\delta$. When robustness is taken into account, therefore, the actual sample complexity 
will be increased by a factor of $M\approx (\ln\delta^{-1})/p_{\text{acc}}$, which is a very large overhead for most previous protocols.

\subsection{\label{sec:HM}HM protocol \cite{HayaM15}}
In Ref.~\cite{HayaM15}, Hayashi and Morimae (HM) introduced a protocol for verifying two-colorable graph states 
in the  adversarial scenario; the state $\rho$ prepared by Bob is accepted only if all tests are passed. To verify an $n$-qubit two-colorable graph state $|G\>\in\caH$ within target infidelity $\epsilon$ and significance level $\delta$, 
this protocol requires 
\begin{align}\label{eq:NumHM}
	N_{\rm HM}=\left\lceil \frac{1}{\delta\epsilon}-1\right\rceil=\Theta(\epsilon^{-1}\delta^{-1})
\end{align} 
tests. This scaling is optimal with respect to $\epsilon$ and $n$, but not optimal with respect to $\delta$.

Next, we  analyze the robustness of the HM protocol, which is not considered in the original paper \cite{HayaM15}.
When Bob generates i.i.d.\! quantum states $\tau\in\mathcal{D}(\caH)$ with infidelity $\epsilon_\tau$, 
the average probability that $\tau$ passes each test is at most $1-\epsilon_\tau/2$. 
So the probability $p_{\text{acc}}^{\rm HM}$ that Alice accepts $\tau$ satisfies 
\begin{align}\label{eq:HM15accp}
	p_{\text{acc}}^{\rm HM}
	\leq \left( 1-\frac{\epsilon_\tau}{2}\right) ^{N_{\rm HM}}.
\end{align}
For high precision verification, we have $\epsilon,\delta\ll1$ and  $N_{\rm HM}\gg1$, 
so the RHS of \eref{eq:HM15accp} is $\Theta(1)$ only if $\epsilon_\tau=O(N_{\rm HM}^{-1})=O(\delta\epsilon)$. 
Note that $\delta\epsilon$ is much smaller than  the target infidelity $\epsilon$, so it is in general very difficult to achieve such a low infidelity even if the target infidelity $\epsilon$ is accessible. Therefore, the robustness of the HM protocol is very poor.

\begin{figure}[b]
	\begin{center}
		\includegraphics[width=7.8cm]{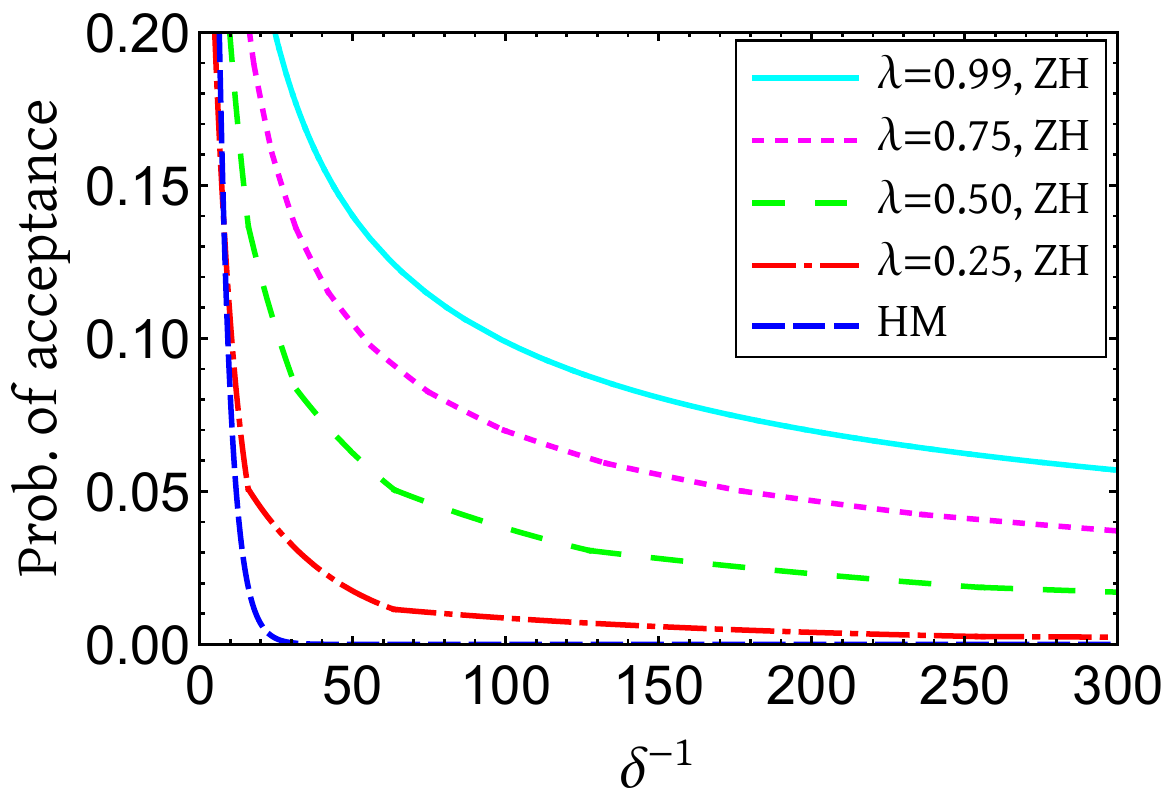}
		\caption{\label{fig:ProbAcceptPrevious}
			The maximum probability that Alice accepts i.i.d.\! quantum states $\tau\in \mathcal{D}(\caH)$ with previous verification protocols. 
			Here $\delta$ is the significance level, $\epsilon=0.01$ is the  target infidelity, and 
			$\tau$ has infidelity $\epsilon_\tau=\epsilon/2$ with the target state $|G\>$. 
			The blue curve corresponds to the HM protocol \cite{HayaM15} and is based on \eref{eq:HM15accp};    
			the other four curves correspond to the ZH protocol \cite{ZhuEVQPSlong19} and are based on \eref{eq:piidapp1}. 
		}
	\end{center}
\end{figure}

If the ratio $\epsilon_\tau/\epsilon$ is a constant, then the probability $p_{\text{acc}}^{\rm HM}$ decreases rapidly as $\delta$ decreases. 
When $\epsilon_\tau=\epsilon/2$ and $\delta\leq1/8$ for example, this probability is upper bounded by 
\begin{align}\label{eq:HMpassPrIIDUB}
	p_{\text{acc}}^{\rm HM}
	\leq \left( 1-\frac{\epsilon}{4}\right)^{\lceil 1/(\delta\epsilon)-1\rceil}\leq \exp\!\left(- \frac{1}{4\delta}  \right),
\end{align}
as illustrated in Supplementary Figure~\ref{fig:ProbAcceptPrevious}.
Here the first inequality follows from Eqs.~\eqref{eq:NumHM} and \eqref{eq:HM15accp}, and the second inequality is proved in \ref{sec:probHMproof}. This probability decreases exponentially with $\delta^{-1}$ and is already extremely small when $\epsilon=\delta=0.01$, in which case $p_{\text{acc}}^{\rm HM}
< 1.35 \times 10^{-11}$. According to \eref{eq:repetitionNum}, to ensure that Alice accepts the state prepared by Bob at least once with confidence level $1-\delta$, the number of repetitions is given by
\begin{align}
	M_{\rm HM}
	= \left\lceil \frac{\ln\delta}{\ln\left( 1-p_{\text{acc}}^{\rm HM}\right) } \right\rceil
	\geq  \frac{\ln\delta^{-1}}{p_{\text{acc}}^{\rm HM}}-\ln\delta^{-1}
	\geq\exp\left(\frac{1}{4\delta}  \right) \ln\delta^{-1}-\ln\delta^{-1}\approx \exp\left(\frac{1}{4\delta}  \right) \ln\delta^{-1}.
\end{align}
Therefore, the total sample cost reads
\begin{align}\label{eq:TotalHM}
	M_{\rm HM}N_{\rm HM}
	= \left\lceil \frac{1}{\delta\epsilon}-1 \right\rceil  
	\left\lceil \frac{\ln\delta}{\ln\left( 1-p_{\text{acc}}^{\rm HM}\right) } \right\rceil
	\geq \Theta\left( \frac{1}{\delta\epsilon}\exp\left(\frac{1}{4\delta}  \right)\ln\delta^{-1}\right),
\end{align}
which increases exponentially with $1/\delta$.
By contrast, the sample cost of our
protocol is only $O(\epsilon^{-1}\ln \delta^{-1})$, which improves the scaling behavior with $\delta$ doubly exponentially.

\subsection{ZH protocol \cite{ZhuEVQPSlong19}}
In Ref.~\cite{ZhuEVQPSlong19}, Zhu and Hayashi (ZH) introduced a protocol for verifying 
qudit stabilizer states (including graph states) in the  adversarial scenario. 
In this protocol, Alice uses the strategy in \eref{eq:strategy} in the main text to test $N$ systems of the state $\rho$ prepared by Bob, and 
she accepts the state on the remaining system iff all $N$ tests are passed. 
Hence, the ZH protocol can be viewed as a special case of our protocol with $k=0$.  
To verify an $n$-qudit graph state $|G\>\in\caH$ within target infidelity $\epsilon$ and significance level $\delta$, 
the ZH protocol requires $N_{\rm ZH}=\Theta(\epsilon^{-1}\ln\delta^{-1})$ tests, which is optimal with respect to all $n$, $\epsilon$ and $\delta$ if robustness is not taken into account. 
The analytical formula of  $N_{\rm ZH}$ is provided in Theorem 2 of Ref.~\cite{ZhuEVQPSlong19}.

The robustness of the ZH protocol is higher than the HM protocol, but is still not satisfactory. 
When Bob generates i.i.d.\! quantum states $\tau\in\mathcal{D}(\caH)$ with infidelity $\epsilon_\tau$, 
the average probability that $\tau$ passes each test in the ZH protocol is $1-\nu\epsilon_\tau$, 
where $\nu=1-\lambda$ is the spectral gap of the strategy in \eref{eq:strategy} in the main text.
The probability that Alice accepts $\tau$ reads [cf. \eref{eq:pracck=0} in the main text] 
\begin{align}\label{eq:piidapp1}
	p_{\text{acc}}^{\rm ZH}
	=(1-\nu\epsilon_\tau)^{N_{\rm ZH}}.
\end{align}
For high precision verification, we have $\epsilon,\delta\ll1$ and  $N_{\rm ZH}\gg1$, 
so $p_{\text{acc}}^{\rm ZH}$ is $\Theta(1)$ only if $\epsilon_\tau=O(N_{\rm ZH}^{-1})=O(\epsilon/(\ln\delta^{-1}))$, which is much more demanding than achieving the target infidelity $\epsilon$. So the robustness of the ZH protocol is not satisfactory. 

In most cases of practical interest, we have $0<\delta\leq\lambda<1$. If in addition $\epsilon_\tau=\epsilon/2\leq 1/4$, then 
\begin{align}\label{eq:piidapp2}
	p_{\text{acc}}^{\rm ZH}
	=\left( 1-\frac{\nu\epsilon}{2}\right) ^{N_{\rm ZH}}\leq \frac{3}{2}\sqrt{\delta},
\end{align}
as illustrated in Supplementary Figure~\ref{fig:ProbAcceptPrevious}. 
Here the inequality is proved in \ref{sec:probZHproof}. According to \eref{eq:repetitionNum}, to ensure that Alice accepts the state prepared by Bob at least once with confidence level $1-\delta$, the number of repetitions reads
\begin{align}
	M_{\rm ZH}
	= \left\lceil \frac{\ln\delta}{\ln\left( 1-p_{\text{acc}}^{\rm ZH}\right) } \right\rceil
	\geq \frac{\ln\delta^{-1}}{p_{\text{acc}}^{\rm ZH}} -\ln\delta^{-1}\geq \frac{2\ln\delta^{-1}}{3\sqrt{\delta}} -\ln\delta^{-1}
	=\Theta\left(\frac{\ln\delta^{-1}}{\sqrt{\delta}}\right).
\end{align}
So the total sample cost reads
\begin{align}
	M_{\rm ZH}N_{\rm ZH}
	\geq \Theta\left(\frac{(\ln\delta)^2}{\epsilon\sqrt{\delta}}\right).
\end{align}
By contrast, the sample cost of our
protocol is only $O(\epsilon^{-1}\ln \delta^{-1})$, which improves the scaling behavior with $\delta$ exponentially. 

When $\lambda=1/2$ and $\epsilon=\delta=0.01$
for example, calculation shows that $N_{\rm ZH}=1307$ and $p_{\text{acc}}^{\rm ZH}\approx 3.79\%$. 

\subsection{TMMMF protocol \cite{TMMMF19}}
In Ref.~\cite{TMMMF19}, Takeuchi, Mantri, Morimae, Mizutani, and Fitzsimons (TMMMF) 
introduced a protocol for verifying qudit graph states in the  adversarial scenario.
Let $|G\>\in\caH$ be the target graph state of $n$ qudits. 
In the TMMMF protocol, the total number of copies required is $N_{\rm TMMMF}=2n\lceil (5 n^{4} \ln n )/ 32\rceil$, and
the  number of tests required is $N_{\text {test}}\!=n\lceil (5 n^{4} \ln n )/ 32\rceil$. 
If at least $[1-(2n^2)^{-1}]N_{\text {test}}$ tests are passed, then the verifier Alice can guarantee that 
the reduced state $\sigma$ on one of the remaining systems satisfies
\begin{align}
	\langle G|\sigma| G\rangle \geq 1-\frac{2 \sqrt{c}+1}{n}
\end{align}
with significance level $n^{1-5 c / 64}$,
where $c$ is a constant that satisfies $64/5<c<(n-1)^{2} / 4$. The number of required copies $N_{\rm TMMMF}$ grows rapidly with $n$, 
which makes the TMMMF protocol hardly practical.

By contrast, to verify an $n$-qudit graph state $|G\>$ with infidelity $\epsilon=(2 \sqrt{c}+1)/n$ 
and significance level $\delta=n^{1-5 c / 64}$, the number of copies required by our protocol is 
\begin{align}
	N
	=O\left( \epsilon^{-1}\ln\delta^{-1}\right)
	=O\left( \frac{n}{2 \sqrt{c}+1}\left(\frac{5c}{64}-1\right)\ln n \right)
	\leq O(n^2\ln n) .
\end{align}
If the parameter $c$ is a constant independent of $n$, then
only $N=O(n\ln n)$ copies are needed.
So  our protocol is much more efficient than  the TMMMF protocol. In addition, in our protocol, it is easy to adjust the verification precision as quantified by the infidelity and significance level. By contrast, the TMMMF protocol does not have this flexibility because  the target
infidelity and significance level are intertwined with the qudit number.

To analyze the robustness of the TMMMF protocol, suppose Bob generates i.i.d.\! states $\tau\in\mathcal{D}(\caH)$ with infidelity $\epsilon_\tau=\epsilon/2$. 
In this case, the following proposition proved in \ref{sec:probTMMMFproof} shows that the probability of acceptance decreases exponentially fast with the qudit number $n$
if Alice applies the TMMMF protocol.
Therefore, the robustness of the TMMMF protocol is very  poor.

\begin{proposition}\label{prop:TMMMF19passPrIIDUB}
	Suppose Alice applies the TMMMF protocol to verify the $n$-qudit graph state $|G\>$
	within infidelity $\epsilon=(2 \sqrt{c}+1)/n$ and significance level  $\delta=n^{1-5 c/64}$, where $64/5<c<(n-1)^{2}/4$. 
	When Bob generates i.i.d.\! states $\tau\in\mathcal{D}(\caH)$ with infidelity $\epsilon_\tau=\epsilon/2$, 
	the probability that Alice accepts satisfies $p_{\rm acc}^{\rm TMMMF}\leq\exp(-0.245\, n^3 \ln n )$. 
\end{proposition}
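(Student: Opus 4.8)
The plan is to turn the acceptance probability into a binomial lower-tail probability and then control it with the Chernoff bound \eqref{eq:ChernoffB}. First I would fix notation: the TMMMF protocol performs $N_{\rm test}=n\lceil(5n^4\ln n)/32\rceil$ stabilizer tests and accepts precisely when the number of failures is at most $k:=\lfloor N_{\rm test}/(2n^2)\rfloor$ (that is, at least a fraction $1-(2n^2)^{-1}$ of the tests pass), each of the $n$ generators $S_1,\dots,S_n$ being tested the same number of times. When honest Bob prepares $\tau^{\otimes N_{\rm TMMMF}}$, a test of $S_i$ acts on a fresh copy of $\tau$ and fails with probability $q_i:=1-\Tr(P_{S_i}\tau)$, where $P_{S_i}:=\frac1d\sum_{j\in\bbZ_d}S_i^j$; hence the $N_{\rm test}$ test outcomes are mutually independent. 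Writing $K$ for the number of failures and $\bar q:=\mathbb E[K]/N_{\rm test}=\frac1n\sum_i q_i$ for the mean per-test failure probability, the first step is to establish $\Pr(K\le k)\le B_{N_{\rm test},k}(\bar q)$ for $k\le\bar qN_{\rm test}$: this is an identity when the tested generator is drawn uniformly at random each round, and otherwise follows from Hoeffding's comparison of sums of independent Bernoulli trials with the binomial \cite{Hoeffd56}.

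The quantitative core is a lower bound on $\bar q$, which I would obtain from the operator union bound for the (commuting) projectors $P_{S_i}$: since $\prod_{i=1}^nP_{S_i}=\proj{G}$, we have $\openone-\proj{G}\le\sum_{i=1}^n(\openone-P_{S_i})$, and taking the expectation value in $\tau$ gives $\epsilon_\tau\le\sum_i q_i=n\bar q$, that is, $\bar q\ge\epsilon_\tau/n$. With $\epsilon_\tau=\epsilon/2=(2\sqrt c+1)/(2n)$ this reads $\bar q\ge p_0:=(2\sqrt c+1)/(2n^2)$, so, using that $B_{z,k}(p)$ is nonincreasing in $p$ (\lref{lem:Bzkmono}) and $k\le N_{\rm test}/(2n^2)<p_0N_{\rm test}$, one gets $p_{\rm acc}^{\rm TMMMF}\le B_{N_{\rm test},k}(p_0)$. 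It is exactly this factor $n$ in $\bar q\ge\epsilon_\tau/n$ — reflecting that a single-generator stabilizer test detects, on average, only a fraction of order $1/n$ of the infidelity — that is responsible for the $n^3$ rather than $n^4$ in the final exponent; any cruder analysis of the TMMMF test that still delivers $\bar q$ of order $\epsilon_\tau/n$ would do.

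Finally I would feed this into the Chernoff bound \eqref{eq:ChernoffB}: with $\beta:=k/N_{\rm test}\le1/(2n^2)<p_0$ it gives $p_{\rm acc}^{\rm TMMMF}\le\exp[-N_{\rm test}D(\beta\|p_0)]$. A coarse estimate of the relative entropy suffices: from $\ln t\ge1-1/t$ one has $D(\beta\|p_0)\ge\beta\ln(\beta/p_0)+p_0-\beta$, and then $\beta\ln(\beta/p_0)\ge-p_0/\rme$ (the minimum of $x\mapsto x\ln(x/p_0)$) together with $p_0-\beta\ge\sqrt c/n^2$ gives $D(\beta\|p_0)\ge[(2\rme-2)\sqrt c-1]/(2\rme n^2)$, which is increasing in $c$ and hence at least $2.07/n^2$ for $c>64/5$. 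Combining with $N_{\rm test}\ge\frac{5}{32}n^5\ln n$ yields $N_{\rm test}D(\beta\|p_0)\ge\frac{5\times2.07}{32}\,n^3\ln n>0.245\,n^3\ln n$, which is the assertion.

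I expect the only genuinely delicate points to be in the first step — modelling the (possibly non-identically-distributed) failure statistics of the actual TMMMF scheduling and invoking Hoeffding's inequality rigorously — and in carrying enough numerical slack through the last step so that the stated constant $0.245$ is respected; the remaining estimates are elementary.
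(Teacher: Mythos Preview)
Your proof is correct and follows essentially the same route as the paper: the operator union bound $\openone-\proj{G}\le\sum_i(\openone-P_{S_i})$ to get $\bar q\ge\epsilon_\tau/n$, Hoeffding's Poisson--binomial comparison \cite{Hoeffd56} to reduce to a binomial tail, and then the Chernoff bound \eqref{eq:ChernoffB}. The only difference is cosmetic: to lower-bound the relative entropy the paper first uses monotonicity (\lref{lem:DpqMonoton}) to replace $\beta$ by $1/(2n^2)$ and $p_0$ by $8.155/(2n^2)$ and then applies $D(x\|y)\ge(x-y)^2/(2y)$, obtaining $D\ge1.569/n^2$ and hence the constant $5\times1.569/32\approx0.2452$; your elementary estimate $D(\beta\|p_0)\ge p_0(1-1/\rme)-\beta$ gives a slightly larger constant ($\approx0.32$) with less work, which is a harmless improvement.
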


For example, in order to reach infidelity $\epsilon=0.01$, 
the qudit number $n$ should satisfy $n\geq \lceil(2 \sqrt{c}+1)/0.01\rceil\geq816$, which means $N_{\rm TMMMF}\geq7\times 10^{14}$. 
If Bob prepares i.i.d.\! states $\tau\in\mathcal{D}(\caH)$ with infidelity $\epsilon_\tau=\epsilon/2=0.005$, then 
Proposition~\ref{prop:TMMMF19passPrIIDUB} implies that the acceptance probability satisfies 
\begin{align}
	p_{\rm acc}^{\rm TMMMF}
	\leq \exp(-0.245\, n^3 \ln n)
	\leq 10^{-387598171} .
\end{align} 
According to \eref{eq:repetitionNum}, to ensure that Alice accepts the state prepared by Bob at least once with confidence level $1-\delta$, the number of repetitions reads
\begin{align}
	M_{\rm TMMMF}
	= \left\lceil \frac{\ln\delta}{\ln\left( 1-p_{\text{acc}}^{\rm TMMMF}\right) } \right\rceil
	\geq 10^{3\times 10^8} \ln\delta^{-1}.
\end{align}
Consequently, the total number of copies consumed by Alice is  
\begin{align}\label{eq:TotalTMMMF}
	M_{\rm TMMMF}N_{\rm TMMMF} 
	\geq 7\times 10^{14} \times 10^{3\times 10^8} \ln\delta^{-1} 
	\geq 10^{3\times 10^8} \ln\delta^{-1}, 
\end{align}
which is astronomical. 

\subsection{TM protocol \cite{Takeuchi18}}
In Ref.~\cite{Takeuchi18}, Takeuchi and Morimae (TM) introduced a protocol for verifying 
qubit hypergraph states (including graph states) in the  adversarial scenario.
Let  $k\geq (4n)^7$ and $m\geq (2\ln 2)n^3k^{18/7}$  be positive integers.
To verify an $n$-qubit graph state within infidelity $\epsilon=k^{-1/7}$ and significance level $\delta=k^{-1/7}$. The number of required tests is $N_{\text {test}}=nk$, and the total number of samples is 
\begin{align}\label{eq:TakeNum}
	N_{\rm TM} = m+nk
	\geq (2\ln 2)n^3 k^{18/7} + nk 
	\geq (2\ln 2)n^3 (4n)^{18} + n(4n)^7
	\geq 9.5\times 10^{10} n^{21},
\end{align}
which is astronomical and too prohibitive for any practical application. By contrast, the sample number required by our protocol to achieve the same precision is only 
\begin{align}
	N
	=O\bigl( \epsilon^{-1}\ln\delta^{-1}\bigr)
	=O\bigl( k^{1/7} \ln k \bigr),  
\end{align}
which is much smaller than $N_{\rm TM}$. 
Furthermore, in the TM protocol, the choices of the target infidelity $\epsilon$ and significance level $\delta$ are restricted, 
while our protocol is applicable for all valid choices of $\epsilon$ and $\delta$. Since the TM protocol is impossible to realize in practice, we do not  
analyze its robustness further.

\subsection{FH protocol \cite{Fujii17}}
In Ref.~\cite{Fujii17}, Fujii and Hayashi (FH) introduced a protocol for verifying fault-tolerant MBQC based on two-colorable graph states. 
Let $|G\>\in\caH$ be the ideal two-colorable resource graph state of $n$ qubits, $\caH_S$ be the subspace of states that  are error-correctable, and $\Pi_S$ be the projector onto $\caH_S$. 
With the FH protocol, by performing 
$N_{\rm FH}=\lceil 1/(\delta\epsilon)-1\rceil$ tests, Alice can ensure that 
the reduced state $\sigma$ on the remaining system 
satisfies $\tr(\sigma\Pi_S)\geq 1-\epsilon$ with significance level $\delta$. 
That is, Alice can verify whether or not the given state belongs to the class of error-correctable states.  
Once accepting, she can safely use the reduced state $\sigma$ to perform fault-tolerant MBQC. 
In this sense, the FH protocol is fault tolerant, which offers a  kind of robustness different from our protocol. The scaling of  $N_{\rm FH}$ is optimal with respect to both $\epsilon$ and $n$, but not optimal with respect to $\delta$. 

Since the FH protocol relies on a given quantum error correcting code, it is difficult to realize for NISQ devices because too many physical qubits are required to encode logical qubits. In addition, the FH protocol is  robust only to certain  correctable errors. If the actual error is not correctable (even if the error probability is small), then the probability of acceptance will  decrease exponentially  with $\delta^{-1}$ (cf. the discussion in \ref{sec:HM}). By contrast, our protocol is robust against arbitrary
error as long as the  error probability is small.

The core idea of the FH protocol lies in quantum subspace verification. 
Later the idea of subspace verification was studied more systematically in Refs.~\cite{ZLC22,CLZ22}. 
In principle, this idea can be combined with our protocol to construct  verification protocols that are robust to both correctable errors and noncorrectable errors. This line of research deserves further exploration in the future.

\subsection{\label{sec:probHMproof}Proof of \eref{eq:HMpassPrIIDUB}}

When $0<\epsilon<1$ and $0<\delta\leq1/8$, we have 
\begin{align}\label{eq:pdelexp1HM15}
	\left( 1-\frac{\epsilon}{4}\right)^{\lceil 1/(\delta\epsilon)-1\rceil} 
	\leq \left( 1-\frac{\epsilon}{4}\right) ^{1/(\delta\epsilon)-1}
	=  \exp\left[ h_{\delta}(\epsilon)  \right]
	\stackrel{(a)}{\leq} \exp\left[ \lim_{\epsilon\to0} h_{\delta}(\epsilon)  \right]
	=  \exp\left(- \frac{1}{4\delta}  \right), 
\end{align}
which confirms \eref{eq:HMpassPrIIDUB}. 
Here the function $h_{\delta}(x)$ is defined as
\begin{align}\label{eq:defhdelx}
	h_{\delta}(x) := \left(\frac{1}{\delta x}-1 \right)  \ln\!\left( 1-\frac{x}{4} \right) , 
\end{align}	
and $(a)$ follows from \lref{lemma:defhdelx} below.

\begin{lemma}\label{lemma:defhdelx}
	When $0<\delta\leq1/8$, the function  $h_{\delta}(x)$ defined in \eref{eq:defhdelx}
	is strictly decreasing in $x$ for $0<x\leq 1$.  
\end{lemma}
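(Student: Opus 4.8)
The plan is to prove the lemma by showing directly that $h_\delta'(x)<0$ for $0<x\le1$ whenever $0<\delta\le1/8$. First I would differentiate, obtaining
\begin{align}
h_\delta'(x)=-\frac{\ln\!\bigl(1-\tfrac{x}{4}\bigr)}{\delta x^{2}}-\frac{1}{4-x}\Bigl(\frac{1}{\delta x}-1\Bigr).
\end{align}
Here the first term is positive (since $\ln(1-x/4)<0$) while the second is negative (since $\delta\le1/8$ forces $\delta x<1$, hence $\tfrac{1}{\delta x}-1>0$), so the sign of $h_\delta'(x)$ is not immediate and a short argument is needed.

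The key step is to clear denominators: multiplying through by $\delta x^{2}(4-x)>0$, the inequality $h_\delta'(x)<0$ becomes equivalent to
\begin{align}
x(1-\delta x)>(4-x)\ln\frac{4}{4-x}.
\end{align}
Because $\delta\le1/8$ gives $\delta x\le x/8$, we have $x(1-\delta x)\ge x\bigl(1-\tfrac{x}{8}\bigr)$, so it suffices to establish the $\delta$-free inequality $\psi(x):=x\bigl(1-\tfrac{x}{8}\bigr)-(4-x)\ln\tfrac{4}{4-x}>0$ on $(0,1]$.

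Finally I would dispatch $\psi$ by a one-variable monotonicity argument: $\psi(0)=0$, and a short computation gives
\begin{align}
\psi'(x)=\ln\frac{4}{4-x}-\frac{x}{4}=-\ln\!\Bigl(1-\frac{x}{4}\Bigr)-\frac{x}{4}>0\qquad(0<x\le1),
\end{align}
where positivity follows from the bound $\ln(1-t)<-t$ for $0<t<1$ (the upper bound in \lref{lem:ln(1-x)}) applied with $t=x/4$. Hence $\psi$ is strictly increasing on $[0,1]$ and $\psi(x)>\psi(0)=0$, which yields $h_\delta'(x)<0$ and completes the proof. The only place requiring care is the reduction in the second step — tracking the signs of $\ln(1-x/4)$ and of $4-x$ so that the direction of the inequality is preserved when clearing denominators — but once past that, everything is elementary and I anticipate no real obstacle.
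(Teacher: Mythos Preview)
Your proof is correct and follows essentially the same approach as the paper: both compute $h_\delta'(x)$, reduce the sign question to showing an auxiliary function (the paper's $\hat{h}_\delta(x)=\ln(1-x/4)+\tfrac{x(1-\delta x)}{4-x}$, your $\psi(x)$) is positive on $(0,1]$, and establish this by noting it vanishes at $x=0$ and has strictly positive derivative. The only cosmetic difference is that you first use $\delta\le1/8$ to bound $\delta x\le x/8$ and obtain a $\delta$-free auxiliary function, whereas the paper keeps $\delta$ in $\hat{h}_\delta$ and verifies $\hat{h}_\delta'(x)=\tfrac{x(1-8\delta+\delta x)}{(4-x)^2}>0$ directly.
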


\begin{proof}[Proof of \lref{lemma:defhdelx}]
	For $0\leq x\leq 1$ and $0<\delta\leq1/8$, we define the function 
	\begin{align}
		\hat{h}_{\delta}(x) := \ln\!\left( 1-\frac{x}{4} \right) + \frac{x(1-\delta x)}{4-x},  
	\end{align}	
	which is strictly increasing in $x$ for $0< x\leq 1$, because 
	\begin{align}
		\frac{{\rm d }\hat{h}_{\delta}(x)}{{\rm d} x} 
		= -\frac{1}{4-x} + \frac{4-8\delta x+\delta x^2}{(4-x)^2} 
		= \frac{x(1-8\delta +\delta x)}{(4-x)^2}  
		>  0
		\qquad \forall\,0< x\leq 1,\, 0<\delta\leq1/8. 
	\end{align}
	This fact and the relation $\hat{h}_{\delta}(0)=0$ together imply that $\hat{h}_{\delta}(x)>0$ for $0< x\leq 1$ and $0<\delta\leq1/8$. 
	Therefore, 
	\begin{align}
		\frac{{\rm d }h_{\delta}(x)}{{\rm d} x} 
		&= -\frac{1}{\delta x^2} \ln\!\left( 1-\frac{x}{4} \right)  -  \frac{1-\delta x}{\delta x (4-x)}
		= -\frac{\hat{h}_{\delta}(x)}{\delta x^2} 
		< 0 
		\qquad \forall\,0< x\leq 1,\, 0<\delta\leq1/8,  
	\end{align}
	which confirms \lref{lemma:defhdelx}. 
\end{proof}

\subsection{\label{sec:probZHproof}Proof of \eref{eq:piidapp2}}

Let $N_{\rm ZH}(\epsilon,\delta,\lambda)$ be the number of tests required to 
reach infidelity $\epsilon$ and significance level $\delta$. According to Theorem~3 in Ref.~\cite{ZhuEVQPSlong19} we have 
\begin{align}\label{eq:Nmindel0}
	N_{\rm ZH}(\epsilon,\delta,\lambda)
	\geq k_{-}+\left\lceil\frac{k_{-} (1-\epsilon)}{\lambda \epsilon}\right\rceil
	\geq k_{-}+ \frac{k_{-} (1-\epsilon)}{\lambda \epsilon} 
	=    \frac{(1-\nu\epsilon)k_{-}}{\lambda \epsilon}
	\geq \frac{1-\nu\epsilon}{\lambda \epsilon \ln \lambda} (\ln \delta - \ln \lambda),   
\end{align}
where $k_{-}:= \left\lfloor\log _\lambda \delta\right\rfloor$. 

In preparation for further discussion, for $0<x<1$ we define the function 
\begin{align}
	f(x) := \left( \frac{1}{2x} -1 \right) \ln (1-x),  
\end{align}
which is strictly increasing in $x$, because 
\begin{align}
	\frac{{\rm d }f(x)}{{\rm d} x} 
	= -\frac{\ln(1-x)}{2x^2} - \frac{1-2x}{2x(1-x)} 
	\geq  \frac{x}{2x^2} - \frac{1-2x}{2x(1-x)} 
	= \frac{1}{2(1-x)}>0. 
\end{align}

The following proof is composed of two steps.

\noindent \textbf{Step 1:} The aim of this step is to prove \eref{eq:piidapp2} in the case $0<\delta\leq\lambda^2$. 

In this case we have 
\begin{align}
	p_{\text{acc}}^{\rm ZH}
	&= \left( 1-\frac{\nu\epsilon}{2}\right) ^{N_{\rm ZH}(\epsilon,\delta,\lambda)}
	\stackrel{(a)}{\leq}  \exp\left[(\ln \delta - \ln \lambda)
	\frac{1-\nu\epsilon}{\lambda \epsilon \ln \lambda} \ln\!\left( 1-\frac{\nu\epsilon}{2}\right) \right]
	= 
	\exp \left[\frac{\nu(\ln \delta - \ln \lambda)}{\lambda \ln \lambda} f\!\left( \frac{\nu\epsilon}{2}\right) \right]\nonumber
	\\&\stackrel{(b)}{\leq} 
	\exp \left[\frac{\nu(\ln \delta - \ln \lambda)}{\lambda \ln \lambda} f\!\left( \frac{\nu}{4}\right) \right] 
	=
	\exp \left[(\ln \delta - \ln \lambda) \frac{1+\lambda}{\lambda \ln \lambda} \ln\!\Big( \frac{3+\lambda}{4}\Big) \right],   \label{eq:pdelexp1}
\end{align}
where $(a)$ follows from \eref{eq:Nmindel0},  
and $(b)$ follows from the monotonicity of $f(x)$ and the assumption $0<\epsilon\leq1/2$. 
Hence,  to prove \eref{eq:piidapp2}, it suffices to show 
that the last expression in \eqref{eq:pdelexp1} is not larger than $\leq\frac{3}{2}\sqrt{\delta}$, that is,  
\begin{align}\label{eq:F16geq0}
	\left[\frac{1+\lambda}{\lambda \ln \lambda} \ln\!\Big( \frac{3+\lambda}{4}\Big) - \frac{1}{2} \right] \ln\delta^{-1}
	+ \frac{1+\lambda}{\lambda} \ln\!\Big( \frac{3+\lambda}{4}\Big) + \ln\!\Big( \frac{3}{2}\Big) 
	\geq 0 .
\end{align}
This inequality can be proved as follows, 
\begin{align}	
	\text{ LHS of \eqref{eq:F16geq0}}
	&\stackrel{(a)}{\geq} 
	\left[\frac{1+\lambda}{\lambda \ln \lambda} \ln\!\Big( \frac{3+\lambda}{4}\Big) - \frac{1}{2} \right] \ln\lambda^{-2}
	+ \frac{1+\lambda}{\lambda} \ln\!\Big( \frac{3+\lambda}{4}\Big) + \ln\!\Big( \frac{3}{2}\Big) \nonumber
	\\
	&=  \ln \lambda -\frac{\lambda+1}{\lambda}\ln\!\Big( \frac{3+\lambda}{4}\Big) + \ln \!\Big( \frac{3}{2}\Big)
	\stackrel{(b)}{\geq} 0, 
\end{align}
which completes the proof of Step 1. 
Here $(a)$ follows because $0<\delta\leq\lambda^2$ and 
$\frac{1+\lambda}{\lambda \ln \lambda} \ln\!\big( \frac{3+\lambda}{4}\big) - \frac{1}{2}\geq0$ by \lref{lemma:defgx} below,  
and $(b)$ follows from \lref{lem:NumeriGeq0} below.

\noindent \textbf{Step 2:} The aim of this step is to prove \eref{eq:piidapp2} in the case $\lambda^2<\delta\leq\lambda$. 

In this case we have $k_{-}= \left\lfloor\log _\lambda \delta\right\rfloor=1$ and 
$N_{\rm ZH}(\epsilon,\delta,\lambda)\geq (1-\nu\epsilon)/(\lambda \epsilon)$ according to \eref{eq:Nmindel0}. 
So we have 
\begin{align}	
	p_{\text{acc}}^{\rm ZH}
	&= \left( 1-\frac{\nu\epsilon}{2}\right) ^{N_{\rm ZH}(\epsilon,\delta,\lambda)}
	\leq  \exp\left[
	\frac{1-\nu\epsilon}{\lambda \epsilon } \ln\!\left( 1-\frac{\nu\epsilon}{2}\right) \right] 
	= \exp\left[ \frac{\nu}{\lambda} f\!\left( \frac{\nu\epsilon}{2}\right) \right]  
	\stackrel{(a)}{\leq}
	\exp\left[ \frac{\nu}{\lambda} f\!\left( \frac{\nu}{4}\right) \right]  \nonumber
	\\
	&= \exp\left[ \frac{1+\lambda}{\lambda} \ln\!\Big( \frac{3+\lambda}{4}\Big) \right]  
	\stackrel{(b)}{\leq} 
	\exp\left[ \ln\!\Big( \frac{3}{2}\Big) + \ln \lambda \right]  
	= \frac{3\lambda}{2} 
	\stackrel{(c)}{\leq} \frac{3\sqrt{\delta}}{2},  
\end{align}
which completes the proof of Step 2. 
Here $(a)$ follows from the monotonicity of $f(x)$ and the assumption $0<\epsilon\leq1/2$, 
$(b)$ follows from \lref{lem:NumeriGeq0} below, 
and $(c)$ follows from the assumption $\lambda^2<\delta\leq\lambda$.

\begin{lemma}\label{lemma:defgx}
	Suppose  $0< x\leq 1$; then
	\begin{align}
		2(1+x)\ln\!\Big( \frac{3+x}{4}\Big) - x\ln x\leq 0. \label{eq:defgx}
	\end{align}
\end{lemma}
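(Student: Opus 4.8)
The plan is to reduce \eqref{eq:defgx} to a one-variable concavity argument. Set $g(x):=2(1+x)\ln\!\big(\tfrac{3+x}{4}\big)-x\ln x$ on $(0,1]$, extended continuously to $x=0$ with $g(0)=2\ln(3/4)<0$ since $x\ln x\to0$. The claim is precisely $g(x)\le0$, and a direct check gives $g(1)=2\cdot2\cdot\ln1-\ln1=0$; so it suffices to show that $x=1$ is the global maximum of $g$ on $(0,1]$.

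First I would differentiate:
\[
g'(x)=2\ln\!\Big(\tfrac{3+x}{4}\Big)+\frac{2(1+x)}{3+x}-\ln x-1,
\]
and note that $g'(1)=2\ln1+1-\ln1-1=0$. Next I would compute the second derivative and simplify it to a single rational function; the expected result is
\[
g''(x)=\frac{2}{3+x}+\frac{4}{(3+x)^2}-\frac1x=\frac{x^2+4x-9}{x(3+x)^2}.
\]
The numerator $x^2+4x-9$ vanishes at $x=-2+\sqrt{13}\approx1.606$, so it is strictly negative for all $x\in(0,1]$; hence $g''(x)<0$ on $(0,1]$, i.e.\ $g$ is strictly concave there.

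From strict concavity, $g'$ is strictly decreasing on $(0,1]$, and since $g'(1)=0$ this forces $g'(x)>0$ for $x\in(0,1)$. Therefore $g$ is strictly increasing on $(0,1)$ and attains its maximum on $(0,1]$ at $x=1$, where $g(1)=0$. This yields $g(x)\le0$ on $(0,1]$ (with equality only at $x=1$), which is exactly \eqref{eq:defgx}. As a sanity check, $g'(x)\to+\infty$ as $x\to0^+$ owing to the $-\ln x$ term, consistent with $g'$ decreasing from $+\infty$ down to $0$ across the interval.

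The argument is routine calculus, so I do not expect a genuine obstacle. The only step needing a little care is the algebraic simplification of $g''$ into the form $(x^2+4x-9)/(x(3+x)^2)$ together with the observation that this numerator remains negative on the whole of $(0,1]$; once that is in hand, the concavity-plus-$g'(1)=0$ reasoning closes the proof.
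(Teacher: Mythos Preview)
Your proof is correct and follows essentially the same approach as the paper: both compute the second derivative, simplify it to $(x^2+4x-9)/(x(3+x)^2)$, observe that this is negative on $(0,1]$, and combine $g'(1)=0$ with the concavity to conclude that $g$ is increasing up to $x=1$ where $g(1)=0$. The paper's first-derivative expression $2\ln\!\big(\tfrac{3+x}{4}\big)-\ln x+\tfrac{x-1}{x+3}$ is just your $g'(x)$ after simplifying $\tfrac{2(1+x)}{3+x}-1=\tfrac{x-1}{3+x}$.
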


\begin{proof}[Proof of \lref{lemma:defgx}]
	Denote the LHS in \eref{eq:defgx} by $g_1(x)$; then we have
	\begin{align}
		\frac{{\rm d }g_1(x)}{{\rm d} x} 
		&= 2\ln\!\Big(\frac{3+x}{4} \Big) -\ln x +\frac{x-1}{x+3}, \quad \frac{{\rm d }g_1(x)}{{\rm d} x}\bigg|_{x=1}=0, \\
		\frac{{\rm d ^2}g_1(x)}{{\rm d} x^2} 
		&= \frac{2}{x+3} - \frac{1}{x} + \frac{4}{(x+3)^2} 
		= \frac{x^2+4x-9}{x(x+3)^2} 
		< 0 
		\qquad \forall\, 0< x<1. 
	\end{align}
	It follows that $\frac{{\rm d }g_1(x)}{{\rm d} x}\geq 0$ for $0< x\leq 1$. 
	Therefore, $g_1(x)$ is nondecreasing in $x$ for $0< x\leq 1$, which confirms \eref{eq:defgx} given that 
	$g_1(1)=0$. 
\end{proof}	

\begin{lemma}\label{lem:NumeriGeq0}
	Suppose $0< x<1$; then
	\begin{align}
		-\frac{x+1}{x}\ln\!\Big( \frac{3+x}{4}\Big) + \ln x + \ln \!\Big( \frac{3}{2}\Big)\geq 0.  \label{eq:NumeriGeq0}
	\end{align}
\end{lemma}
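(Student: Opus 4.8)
The plan is to treat $F(x) := -\frac{x+1}{x}\ln\!\big(\frac{3+x}{4}\big) + \ln x + \ln\frac{3}{2}$ as a function of a single real variable on $(0,1)$ and to locate its minimum directly. First I would differentiate: combining the rational terms over $x(3+x)$ gives
\begin{align}
F'(x) = \frac{1}{x^{2}}\left[\,\ln\!\Big(\frac{3+x}{4}\Big) + \frac{2x}{3+x}\,\right] =: \frac{G(x)}{x^{2}},
\end{align}
so $F'$ has the sign of $G$. Since $G'(x) = \frac{9+x}{(3+x)^{2}} > 0$, $G$ is strictly increasing on $[0,1]$, with $G(0) = \ln\frac{3}{4} < 0$ and $G(1) = \frac{1}{2} > 0$; hence $G$ has a unique zero $x_{0}\in(0,1)$, $F$ is strictly decreasing on $(0,x_{0}]$ and strictly increasing on $[x_{0},1)$, and therefore $F(x)\ge F(x_{0})$ for every $x\in(0,1)$. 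It remains to show $F(x_{0})>0$.

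The key step is that $F(x_{0})$ collapses to a simple expression once we feed in the defining relation $G(x_{0})=0$, i.e.\ $\ln\!\big(\frac{3+x_{0}}{4}\big) = -\frac{2x_{0}}{3+x_{0}}$. Substituting this into $F(x_{0})$ gives
\begin{align}
F(x_{0}) = \frac{2(x_{0}+1)}{3+x_{0}} + \ln x_{0} + \ln\frac{3}{2} =: H(x_{0}),
\end{align}
and $H(x) := \frac{2(x+1)}{3+x} + \ln x + \ln\frac{3}{2}$ satisfies $H'(x) = \frac{4}{(3+x)^{2}} + \frac{1}{x} > 0$, so $H$ is strictly increasing on $(0,1)$. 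Hence it is enough to produce an explicit rational $a\in(0,1)$ with $x_{0}\ge a$ and $H(a)\ge 0$, for then $F(x_{0}) = H(x_{0}) > H(a) \ge 0$. Since $G$ is strictly increasing and $G(x_{0})=0$, the requirement $x_{0}\ge a$ is just the verifiable inequality $G(a)\le 0$, so no two-sided localization of $x_{0}$ is needed.

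Finally I would take $a = \frac{31}{100}$ and verify the two elementary numerical inequalities $G(\frac{31}{100}) = \ln(0.8275) + \frac{62}{331} < 0$ and $H(\frac{31}{100}) = \frac{262}{331} + \ln(0.31) + \ln\frac{3}{2} > 0$. The second has ample slack (about $0.026$) and follows from crude estimates of $\ln(0.31)$ and $\ln\frac{3}{2}$. The first has only modest slack (about $0.002$), so there the cheap bound $\ln(1-t)<-t$ is not enough; instead I would write $\ln(0.8275)=\ln(1-0.1725)$ and use $\ln(1-t)<-t-\frac{t^{2}}{2}-\frac{t^{3}}{3}$ (a truncation of $\ln(1-t)=-\sum_{n\ge1}t^{n}/n$, the omitted terms being negative), which gives $\ln(0.8275) < -0.189 < -\frac{62}{331}$. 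This yields $F(x)\ge F(x_{0}) = H(x_{0}) > H(\frac{31}{100}) > 0$ on $(0,1)$, as claimed.

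The only real subtlety is this endgame tightness: the minimum of $F$ is interior, and the zero of $H$ (near $0.303$) and the zero $x_{0}$ of $G$ (near $0.313$) sit only about $0.01$ apart, so $a$ must be chosen in a narrow band and the logarithm bound for $G(a)$ carried to a few Mercator terms; everything upstream — the derivative identity, the monotonicity of $G$ and $H$, and the substitution trick — is routine.
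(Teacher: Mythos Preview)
Your argument is correct and shares the same backbone as the paper's proof: both compute $F'(x)=G(x)/x^{2}$ with $G(x)=\ln\bigl(\frac{3+x}{4}\bigr)+\frac{2x}{3+x}$ (the paper calls this $\hat g_2$), use the strict monotonicity of $G$ to localize the unique interior minimizer near $x\approx 0.31$, and finish with an explicit numerical check. The difference lies in the endgame. The paper brackets the critical point in $[0.310,0.315]$ by evaluating $G$ at both endpoints, and then bounds $F$ directly on that short interval by replacing each term with its worst-case endpoint value. Your substitution trick---using $G(x_0)=0$ to rewrite $F(x_0)=\frac{2(x_0+1)}{3+x_0}+\ln x_0+\ln\frac{3}{2}=:H(x_0)$ and then exploiting the monotonicity of $H$---is cleaner: it needs only a \emph{one-sided} localization $x_0\ge a$ (via $G(a)\le 0$) and avoids any piecewise monotonicity analysis of the summands of $F$. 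The price is that the single inequality $G(0.31)<0$ carries very little slack, which you handle correctly with three Mercator terms; the paper's two-sided bracketing spreads the numerical burden across two looser checks. Either route works; yours is marginally more elegant.
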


\begin{proof}[Proof of \lref{lem:NumeriGeq0}]
	Denote the LHS in \eref{eq:NumeriGeq0} by $g_2(x)$; then we have
	\begin{align}
		\frac{{\rm d }g_2(x)}{{\rm d} x} 
		= \frac{1}{x^2} \ln\!\Big(\frac{3+x}{4} \Big) - \frac{x+1}{x(3+x)} + \frac{1}{x}
		= \frac{\hat{g}_2(x)}{x^2} ,
	\end{align}
	where 
	\begin{align}
		\hat{g}_2(x):= \frac{2x}{3+x} + \ln\!\Big(\frac{3+x}{4} \Big)  
	\end{align}
	is strictly increasing in $x$, and satisfies 
	\begin{align}
		\hat{g}_2(0.31)= \frac{62}{331} - \ln\!\Big(\frac{400}{331} \Big)  \approx -0.00203 <0 ,
		\qquad 
		\hat{g}_2(0.315)= \frac{42}{221} - \ln\!\Big(\frac{800}{663}\Big)   \approx 0.00221 >0. 
		\qquad 
	\end{align}
	Therefore, $g_2(x)$ is strictly decreasing in $x$ for $0<x\leq 0.310$, 
	but strictly increasing in $x$ for $0.315\leq x<1$. 
	In addition, for $0.310\leq x\leq0.315$ we have
	\begin{align}
		g_2(x) \geq 
		-\frac{0.315+1}{0.315}\ln\!\Big( \frac{3+0.315}{4}\Big) + \ln 0.31 + \ln \!\Big( \frac{3}{2}\Big)
		\approx 0.0184 >0. 
	\end{align}
	These facts together confirm \lref{lem:NumeriGeq0}. 
\end{proof}	

\subsection{Proof of Proposition~\ref{prop:TMMMF19passPrIIDUB}}\label{sec:probTMMMFproof}
Let $S_1,\dots,S_n$ be the $n$ stabilizer generators of  $|G\>$ as  given below \eref{eq:defGstate} in the main text. 
Let $P_i=\frac{1}{d}\sum_{j\in \bbZ_d} \!S_i^j$ 
be the projector onto the eigenspace of $S_i$ with eigenvalue 1; then $|G\>\< G|=\prod_{i=1}^n P_i$. Accordingly, we have 
\begin{align}\label{eq:TMMMFstrategy}
	\sum_{i=1}^n P_i\leq
	\prod_{i=1}^n P_i+(n-1) \openone= |G\>\< G|+(n-1) \openone.  
\end{align}
Since the $n$ projectors $P_1,P_2,\dots,P_n$ commute with each other,  the above inequality can be proved by considering the common eigenspaces of these projectors. 
If all $P_i$ have eigenvalue 1, then the inequality holds because $n\leq 1+(n-1)$; 
if  $0\leq t<n$ projectors have eigenvalue 1, then the inequality holds because $t\leq 0+(n-1)$.

In the TMMMF protocol, the $N_{\text {test}}$ systems to be tested are divided uniformly into $n$ groups, 
and Alice performs the stabilizer test $\{P_i, \openone-P_i\}$ on each system in the $i$th group, where $P_i$ corresponds to passing the test. 
If at most $\lfloor N_{\text {test}}/(2n^2)\rfloor$ failures are observed among the  $N_{\text {test}}$ tests, Alice accepts Bob's state. 
When Bob generates i.i.d.\! quantum states $\tau\in\mathcal{D}(\caH)$ with infidelity $\epsilon_\tau$, 
the average probability that $\tau$ passes a test is given by 
\begin{align}\label{eq:TMMMFmaxpassPr}
	\frac{1}{n} \sum_{i=1}^n \tr(P_i\tau) 
	= \frac{1}{n} \tr\left( \tau\sum_{i=1}^n P_i \right) 
	\leq \frac{1}{n} \tr\big\{ \tau [ |G\>\< G|+(n-1) \openone ] \big\}
	= \frac{\< G|\tau|G\>}{n} + \frac{n-1}{n}
	= 1-\frac{\epsilon_\tau}{n},  
\end{align}
where the inequality follows from \eref{eq:TMMMFstrategy} above. 

For $j=1,2,\dots,N_{\text {test}}$, the outcome of the $j$th test  
can be associated with a $\{0,1\}$-valued variable $W_j$, where 0 corresponds to passing the test and 1 corresponds to failure.
Then \eref{eq:TMMMFmaxpassPr} implies that $\mathbb{E}\big[\sum_{j} W_j\big]/N_{\text {test}} \geq \epsilon_\tau/n$. 
The probability that Alice accepts reads 
\begin{align}
	p_{\text{acc}}^{\rm TMMMF}=\Pr\bigg(\sum_{j} W_j \leq \left\lfloor \frac{N_{\text {test}}}{2n^2} \right\rfloor \bigg). 
\end{align}
According to the Hoeffding's theorem on Poisson binomial distribution
(see Theorem 4 in Ref.~\cite{Hoeffd56}), we have
\begin{align}
	\Pr\bigg(\sum_{j} W_j \leq \left\lfloor \frac{N_{\text {test}}}{2n^2} \right\rfloor \bigg)
	\leq B_{N_{\text {test}},\left\lfloor \frac{N_{\text {test}}}{2n^2} \right\rfloor}
	\Bigg(  \frac{\mathbb{E}\big[\sum_{j} W_j\big]}{N_{\text {test}}} \Bigg) 
	\leq B_{N_{\text {test}},\left\lfloor \frac{N_{\text {test}}}{2n^2} \right\rfloor}\!\left( \frac{\epsilon_\tau}{n} \right) , 
\end{align}
where the second inequality follows from \lref{lem:Bzkmono}. Therefore, 
the probability of acceptance satisfies 
\begin{align}
	p_{\text{acc}}^{\rm TMMMF}
	&\leq B_{N_{\text {test}},\left\lfloor \frac{N_{\text {test}}}{2n^2} \right\rfloor}\!\left( \frac{\epsilon_\tau}{n} \right) 
	\stackrel{(a)}{\leq} \exp \! \left[-N_{\text {test}} D\!\left( \frac{\lfloor N_{\text {test}}/(2n^2)\rfloor}{N_{\text {test}}} \bigg\|                
	\frac{\epsilon_\tau}{n} \right)  \right] 
	\nonumber\\
	&\stackrel{(b)}{\leq} \exp \! \left(- \frac{1.569}{ n^2} N_{\text {test}} \right)
	= \exp \! \left( -\frac{1.569}{ n^2} n\left\lceil \frac{5 n^{4} \ln n }{32}\right\rceil \right)
	\leq \exp \!\left(-0.245\, n^3 \ln n \right), 
\end{align}
which confirms Proposition~\ref{prop:TMMMF19passPrIIDUB}. 
Here $(a)$ follows from the Chernoff bound \eqref{eq:ChernoffB}, 
and $(b)$ follows from the following relation, 
\begin{align}\label{eq:F28geq}
	D\!\left( \frac{\lfloor N_{\text {test}}/(2n^2)\rfloor}{N_{\text {test}}} \bigg\| \frac{\epsilon_\tau}{n} \right)
	\stackrel{(a)}{\geq}
	D\!\left( \frac{1}{2n^2} \bigg\| \frac{\epsilon_\tau}{n} \right)
	\stackrel{(b)}{\geq} 
	D\!\left( \frac{1}{2n^2} \bigg\| \frac{8.155}{2n^2} \right)
	\stackrel{(c)}{\geq} 
	\frac{1.569}{ n^2}. 
\end{align}
In \eref{eq:F28geq}, 
$(a)$ follows from \lref{lem:DpqMonoton}; 
$(b)$ follows from \lref{lem:DpqMonoton} and the relation  
\begin{align}
	\epsilon_\tau=\frac{\epsilon}{2}=\frac{2 \sqrt{c}+1}{2n} \geq \frac{2 \sqrt{64/5}+1}{2n} \geq \frac{8.155}{2n};  
\end{align}
and $(c)$ follows from the inequality $D(x\|y)\geq(x-y)^2/(2y)$ for $x\leq y$.

\section{Conversion of measurement noise}
In the main text, we proposed a robust and efficient  protocol for verifying the resource graph state $|G\>\in\caH$ with ideal local projective measurements.
To apply our protocol to verifying blind MBQC in a realistic setting, we need to  deal with the unknown noise processes 
that afflict the measurements of Alice.

Here we consider a noise model in which any actual measurement performed by Alice can be expressed as a composition of 
a noise process and the noiseless measurement. 
Suppose $\mathcal{M}_j=\big\{M_l^j\big\}_l$ is the measurement that Alice intends to perform on the space $\caH$.  
Then the actual noisy measurement that she performs has the form $\mathcal{M}'_j=\big\{{\cal E}^{\dag}_j(M_l^j)\big\}_l$, 
where ${\cal E}_j$ is a CPTP map that encodes the noise process
associated with $\mathcal{M}_j$, 
and ${\cal E}_j^{\dag}$ is the adjoint map of ${\cal E}_j$.  
To eliminate the impact of measurement noise on the verification and MBQC, one method is 
to convert the measurement noise to preparation noise of the state $\rho$ on $\caH^{\otimes(N+1)}$ prepared by Bob. 
To apply this method, we need to further assume that the noise process ${\cal E}_j$ is independent  of the specific measurement,  
that is, ${\cal E}_j=\cal E$ for all $j$. 

Under the above noise assumption, suppose Alice runs the verification protocol developed in the main text 
with her noisy measurement devices. 
In the protocol, for $i=1,2,\dots,N$, Alice performs a measurement $\mathcal{M}'_{j_i}$ on the $i$th system of $\rho$ for state verification; 
once accepting, she then performs a measurement $\mathcal{M}'_{j_{N+1}}$ on the last system for MBQC. 
Note that this procedure is equivalent to the following procedure: The initial state is ${\cal E}^{\otimes (N+1)}(\rho)$ instead of $\rho$; 
for $i=1,2,\dots,N$, Alice performs the noiseless measurement $\mathcal{M}_{j_i}$ on the $i$th system of ${\cal E}^{\otimes (N+1)}(\rho)$; 
once accepting, she then performs the noiseless measurement $\mathcal{M}_{j_{N+1}}$ on the remaining system for MBQC. 
Therefore, the  verification of $\rho$ with noisy measurements is equivalent to the  verification of ${\cal E}^{\otimes (N+1)}(\rho)$ with noiseless measurements. 
In this sense, the noise in measurements has been successfully converted to noise in the state $\rho$. 
With noisy measurements, if Alice accepts the state prepared by Bob, then she
can guarantee (with significance level $\delta$) that the reduced state $\sigma_{N+1}$ on the remaining system satisfies 
\begin{align}
	\<G|{\cal E}(\sigma_{N+1})|G\>\geq 1-\bar{\epsilon}_{\lambda}(k,N,\delta).
\end{align} 
Consequently, according to the relation between the fidelity and trace norm, Alice can ensure  the condition 
\begin{align}
	\left|\operatorname{tr}\!\left[ {\cal E}^{\dag}(E) \sigma_{N+1}\right] -\langle G|E| G\rangle\right| 
	\leq 
	\sqrt{\bar{\epsilon}_\lambda(k, N, \delta)}
\end{align}
for any POVM element $0\leq E\leq \openone$; that is, when performing a noisy measurement on $\sigma_{N+1}$, 
the deviation of any outcome probability from the ideal value is not larger than $\sqrt{\bar{\epsilon}_{\lambda}(k,N,\delta)}$.

When the noise processes ${\cal E}_j$ depend on the measurements 
and the concrete forms of ${\cal E}_j$ are not known, the above conversion method does not work, 
especially when some local measurements used in MBQC are different from those used in state verification. 
To see this, note that the verification of $\rho$ with noisy measurements can be viewed as the verification of
$\left( {\cal E}_{j_1}\otimes\dots\otimes{\cal E}_{j_{N+1}}\right)\!(\rho)$ with noiseless measurements. 
In this case, even if Alice accepts, the fidelity of the remaining system cannot be guaranteed 
since it is subject to a noise process ${\cal E}_{j_{N+1}}$ which might be different from the noise processes 
in the state verification procedure. 
To solve this problem, we need to use the same set of local measurements in MBQC and state verification. However, universal quantum computation cannot be achieved using graph states together with  Pauli measurements. 
By contrast, \rcite{TMH19} proposed a special qubit hypergraph state that can realize MBQC  with Pauli $X$ and $Z$ measurements. 
In addition,  this state can be verified with Pauli $X$ and $Z$ measurements. 
Although the same set of measurements are used in MBQC and state verification, there are still two problems to be addressed.  
First, it is not easy to construct a homogeneous strategy for this hypergraph state  with $X$ and $Z$ measurements.
So many results in this work cannot be applied directly since they are based on homogeneous strategies. 
Second, the noise conversion method summarized above cannot be applied if  $X$ and $Z$ measurements have different noises.

The first problem shows the importance of extending our work to the non-homogeneous case, which deserves further study. 
The second problem can be resolved when the noise in measurements has a classical form, which means each actual $X$ ($Z$) measurement can be regarded as the application of some classical noise (i.e., the application of a $2 \times 2$ probability transition matrix) to the outcomes of the noiseless $X$ ($Z$) measurement. 
In this case, the noise effect on a single system of $\rho$ is determined by the ratio of $X$ and $Z$ measurements used in the test or MBQC. 
To eliminate the impact of measurement noise, we can use two strategies $\Omega_A$ and $\Omega_B$ for testing the hypergraph state, which are applied to different systems of $\rho$.   
Here $\Omega_A$ has a larger ratio of $X$ measurements and $\Omega_B$ has a smaller ratio of $X$ measurements. 
When 
the MBQC performed on the last system has an intermediate ratio of $X$ measurements between the above two ratios and
the error rates of both strategies are smaller than a threshold to verify 
the hypergraph state, we accept it.
In this case, the noise effect is also an intermediate value between the noise effects of the two  strategies $\Omega_A$ and $\Omega_B$.
Therefore, we can conclude that the sum of the classical noise in MBQC and the error in state preparation
is smaller than a certain threshold, and can guarantee the precision of the computation result on the last system. 
In this way, the  conversion method is applicable under the assumption of classical noise. 
When this assumption does not hold, further study is required to devise a noise conversion method.

\section{Comparison with DI QSC \cite{Aleks22}}\label{app:comDIQSV}
Recently, by combining the idea of QSV and self-testing \cite{SupicBow20}, Go\v{c}anin, \v{S}upi\'{c}, and Daki\'{c} \cite{Aleks22} developed a 
systematic approach for verifying  quantum states in a device-independent (DI) way, and they call their approach DI quantum
state certification (DI QSC). 
In the framework of DI QSC, there is an untrusted quantum source which is supposed to produce the pure state $|\Psi\>\in\caH$, 
but may actually produce independent states $\sigma_1,\sigma_2,\dots,\sigma_N$ in $N$ runs.
Meanwhile, the verifier Alice has an untrusted measurement device. 
By measuring a fragment of the produced states, her goal is to verify whether the rest of the copies 
are close enough to the target state $|\Psi\>$ on average. 
In this DI setting,  the closeness between a state $\sigma\in\mathcal{D}(\caH)$ and the target state $|\Psi\>$ is quantified by the extractability, which is defined as 
\begin{equation}
	\Xi(\sigma, \Psi) := \max_{\Lambda} F \big( \Lambda (\sigma), |\Psi\> \big), 
\end{equation}
where $F(\Lambda (\sigma),|\Psi\>)$ is the fidelity between $\Lambda (\sigma)$ and $|\Psi\>$, and the maximization is taken over all local isometries. 

Robust self-testing is a key for constructing DI QSC protocols. 
Suppose a Bell inequality $\mathcal{B}$ is used to self-test the target state $|\Psi\>$. 
Then $\mathcal{B}$ can be converted into a nonlocal game \cite{Silman08}, in which Alice performs a random test 
from a set of tests based on measuring local observables. Robust self-testing relies on
certain relation between the extractability and the average probability $p$ of successfully wining  the game. 
To be concrete, if a state $\sigma$ has extractability $1-\epsilon$, then its success probability $p$ cannot be larger than $p_{\rm QM}-\tilde c\,\epsilon$ \cite{Aleks22}. Here 
$p_{\rm QM}$ is the maximal success probability of the nonlocal game, which can only be reached by the target state $|\Psi\>$ (up to local isometries), 	
and $\tilde c$ is a constant depending on the Bell inequality $\mathcal{B}$.

In the DI QSC approach \cite{Aleks22}, 
Alice first randomly chooses $N_1\approx \mu N$ copies from all $N$ produced states;   
then she tests each of the chosen states with the nonlocal game mentioned above, and counts the number of successful rounds $q_1$. 
The idea of DI QSC can be intuitively understood in the limit of large $N_1$. 
In this case, the frequency $q_1/N_1$ of successes is close to the real (average) success probability $p$.   
In addition, as discussed above, a certain value of $p$ can in turn 
guarantee a lower bound on the (average) extractability of the $N_1$ tested states. 
Since the $N_1$ tested states are chosen at random, this can further imply a lower bound on 
the average extractability of the remaining unmeasured states. 
To insure that the average extractability of the remaining states is larger than $1-\epsilon$ with significance level $\delta$,
the sample cost of DI QSC is given by \cite{Aleks22}
\begin{align}
	N = \begin{cases}
		O\! \left( \frac{\ln\delta^{-1}}{\mu(1-\mu) c\epsilon} \right)   &\text{when $\beta_{Q}=\beta_{A}$},  \\  
		\vspace{-1em} \\
		O\! \left( \frac{\ln\delta^{-1}}{\mu(1-\mu)^2 c^2\epsilon^2} \right)  &\text{when $\beta_{Q}\ne\beta_{A}$}, 
	\end{cases}
\end{align}
where $\beta_{Q}$ and $\beta_{A}$ are the quantum bound and algebraic bound of the underlying Bell inequality $\mathcal{B}$, respectively, and $c$ is a constant depending on $\mathcal{B}$.

When the quantum bound of the Bell inequality $\mathcal{B}$ coincides with the algebraic bound, the scaling behaviors 
of the sample complexity with respect to $\delta$ and $\epsilon$ for DI QSC are the same as the counterpart of our verification protocol. 
However, this coincidence is quite rare. 
In a generic case, the coincidence does not hold, and 
the scaling behavior of $N$ with respect to $\epsilon$ is $O(\epsilon^{-2})$, which is quadratically worse. 
For general graph states studied in this work, no Bell inequality with such a coincidence has been proposed yet,  so the efficiency of DI QSC  is suboptimal compared with our protocol.

Although Ref.~\cite{Aleks22} established  a general framework of DI QSC, it is still not easy to construct a robust and efficient DI QSC protocol for a specific quantum state. Notably, to construct a concrete protocol for a state $|\Psi\>$, 
it is crucial not only to construct  a robust	self-testing protocol for $|\Psi\>$, but also to determine the values of several key parameters such as $p_{\rm QM}$ and $c$, which are usually quite challenging. For graph states considered in this work, a robust self-testing protocol was developed recently \cite{Baccari20}. 
However, this protocol only applies to  qubit graph states; in addition, it is not easy to calculate or give a nontrivial bound for $c$, since the derivation relies on the proofs of complicated operator inequalities \cite{Baccari20}. 
As a consequence, no concrete DI QSC protocol for general qudit graph states are known so far, and the scaling behavior of the sample cost $N$ with respect to the qudit number is not clear.

There are several similarities between our verification approach in the adversarial scenario and the DI QSC approach. 
First, after receiving states produced by the untrusted source, 
both approaches test some systems and provide a certificate on the remaining system(s) for application. 
This capability is powerful and crucial especially when the state preparation is controlled by a potentially malicious adversary. 
Second, both approaches are able to deal with the case in which not all tests are successfully passed. As discussed in the main text, this ability is crucial to achieve  robustness against noise and is appealing to 
practical applications.

The DI QSC approach has different assumptions from our protocol working in the adversarial scenario. 
On the one hand, its assumption on the devices is weaker than ours.  
In our protocol, the source is untrusted and the measurement device is trusted; 
while in DI QSC, the source and the measurement device are both untrusted. 
On the other hand, DI QSC has a stronger assumption on the source than our protocol.  
It assumes that the states produced by the source in different runs are mutually independent,
while our protocol applies to a more general source which may produce arbitrary correlated or entangled states.\footnote{
	Reference~\cite{Aleks22} also proposed a DI QSV approach for handling the general source. However, this approach can only make a conclusion about the states consumed in tests, which cannot be reused in quantum information processing tasks like MBQC. Due to this reason, we do not discuss this approach in this section.}
By combing DI QSC with our protocol, it might be possible to construct a robust and efficient 
verification protocol that applies to a general  untrusted source and an untrusted measurement device. We leave this line of research to future work.

\section{Proofs of results in the main text}\label{app:Proofmain}
In this section, we prove several results presented in the main text, 
including Eqs.~\eqref{eq:PLMstrategy} and \eqref{eq:pkLimit}, Theorems~\ref{thm:Boundeps}, \ref{thm:UBtestsNumber}, \ref{thm:iidHighProb}, 
and Proposition~\ref{prop:kbounds}.

\subsection{Proof of the second equality of \eref{eq:PLMstrategy} in the main text}\label{app:ProofEq5}
When the local dimension $d$ is a prime, the sum of all $P_{\mathbf{k}}$ can be expressed as
\begin{align}\label{proveEq5}
	\sum_{\mathbf{k} \in \mathbb{Z}_{d}^{n}} P_{\mathbf{k}}
	&\stackrel{(a)}{=} \frac{1}{d} \sum_{\mathbf{k} \in \mathbb{Z}_{d}^{n}} \sum_{j\in \bbZ_d} g_{\mathbf{k}}^j
	=\frac{1}{d} \left( \sum_{\mathbf{k} \in \mathbb{Z}_{d}^{n}}  g_{\mathbf{k}}^0 + \sum_{j=1}^{d-1} \sum_{\mathbf{k} \in \mathbb{Z}_{d}^{n}}  g_{\mathbf{k}}^j \right) 
	\stackrel{(b)}{=} \frac{1}{d} \left(  d^n \openone + \sum_{j=1}^{d-1} d^n |G\>\<G| \right) 
	\nonumber\\
	&=d^n \left[ |G\>\< G|+\frac{1}{d}(\openone-|G\>\<G|)\right] , 
\end{align}
which implies \eref{eq:PLMstrategy} in the main text. Here $(a)$ follows from \eref{eq:DefPk} in the main text, and $(b)$ follows from the following relation:  
\begin{align}\label{eq:sumgk}
	\frac{1}{d^n} \sum_{\mathbf{k} \in \mathbb{Z}_{d}^{n}}  g_{\mathbf{k}}^j 
	\stackrel{(c)}{=} \frac{1}{d^n} \sum_{\mathbf{k} \in \mathbb{Z}_{d}^{n}}  g_{\mathbf{k}}
	\stackrel{(d)}{=} |G\>\<G| \qquad \forall j\in\{1,2,\dots,d-1\},    
\end{align}
where $(c)$ holds because $\big\{ g_{\mathbf{k}}^j \big\}_{\mathbf{k}}=\big\{ g_{\mathbf{k}} \big\}_{\mathbf{k}}$,
and $(d)$ holds because $\sum_{\mathbf{k}\in\mathbb{Z}_{d}^{n}}g_{\mathbf{k}}/d^n$ is the stabilizer projector associated with the stabilizer group $S$.

It worth pointing out that the equality $(d)$ of \eref{eq:sumgk} holds for any local dimension $d$, but 
the equality $(c)$ of \eref{eq:sumgk} and the second equality of \eref{eq:PLMstrategy} in the main text may fail if $d$ is not a prime.  
For example, for the single-vertex graph state $|G\>$ with $n=1$ and $d=4$, we have 
\begin{align}
	|G\>=\frac{1}{2} \left( |0\> +|1\>+|2\>+|3\>\right) , \qquad 
	\bigl\{ g_{\mathbf{k}}^j\bigr\}_{\mathbf{k}} = 
	\begin{cases}
		\left\{ \identity, X, X^2, X^3 \right\}   &j=1,  \\
		\left\{ \identity, X^2, \identity, X^2 \right\}   &j=2,  \\
		\left\{ \identity, X^3, X^2, X \right\}   &j=3. 
	\end{cases}
\end{align}
It turns out that 
\begin{align}
	\frac{1}{d^{n}}\sum_{\mathbf{k} \in \mathbb{Z}_{d}^{n}} P_{\mathbf{k}}
	&= \frac{1}{d^{n}\cdot d} \sum_{j\in \bbZ_d} \sum_{\mathbf{k} \in \mathbb{Z}_{d}^{n}}  g_{\mathbf{k}}^j
	= \frac{1}{8}(4\identity+X+2X^2+X^3)
	\ne |G\>\< G|+\frac{1}{4}(\openone-|G\>\<G|).   
\end{align}
So the second equality of \eref{eq:PLMstrategy} in the main text does not hold in this case.

Our verification protocol proposed in the main text works for graph states with prime local dimensions, which is in line with the convention of most
literature on graph states. Much less is known about graph states when the local dimension $d$ is not a prime. In particular, no homogeneous strategy for graph states is known so far, and many results in this work may not hold since they are based on homogeneous strategies. 
To remedy this problem, one may either try to construct homogeneous strategies for general graph states or try to extend our work to non-homogeneous cases. Both lines of researches deserve further exploration in the future.

\subsection{Proof of Theorem \ref{thm:Boundeps}}\label{app:ProofThm1}
First, \lref{lem:LLP} implies that 
\begin{align}
	\bar{\epsilon}_{\lambda}(\lfloor\nu s N\rfloor, N, \delta)
	> \frac{\lfloor\nu s N\rfloor}{\nu N}
	\geq s- \frac{1}{\nu N}, 
\end{align}
which confirms the lower bound in Theorem~\ref{thm:Boundeps}.	

In addition, we have
\begin{align}
	\bar{\epsilon}_{\lambda}(\lfloor\nu s N\rfloor,N,\delta)
	&\stackrel{(a)}{\leq}
	\frac{z^*(\lfloor\nu s N\rfloor,\delta,\lambda) -\lfloor\nu s N\rfloor+1+\sqrt{\lambda \lfloor\nu s N\rfloor}}{\lambda N } 
	\nonumber\\
	&\stackrel{(b)}{\leq}
	\frac{1}{\lambda N }\left[ \bigg(\frac{\lfloor\nu s N\rfloor}{\nu}+\frac{\sqrt{2\nu \lfloor\nu s N\rfloor\ln\delta^{-1}}}{2\nu^2}+\frac{\ln\delta^{-1}}{2\nu^2}+1\bigg) -\lfloor\nu s N\rfloor+1+\sqrt{\lambda \lfloor\nu s N\rfloor}\right]
	\nonumber	\\
	&\leq
	\frac{1}{\lambda N }\left[\lambda sN + \bigg(\frac{\sqrt{2 \ln\delta^{-1}}}{2\nu}+\sqrt{\lambda \nu }\bigg)\sqrt{s N}  +\frac{\ln\delta^{-1}}{2\nu^2}+2\right]
	\nonumber	\\
	&\stackrel{(c)}{\leq}
	s+ \frac{1}{\nu \lambda}
	\sqrt{\frac{s \ln\delta^{-1}}{N}} + \frac{\ln\delta^{-1}}{2\nu^2\lambda N}+\frac{2}{\lambda N}, 
\end{align}
which confirms the upper bound in Theorem~\ref{thm:Boundeps}. 
Here $(a)$ follows from \lref{lem:zeta2Bound2}, 
$(b)$ follows from \lref{lem:z*UB3}, and $(c)$ follows from \lref{lem:1.098}.

\subsection{Proof of Theorem \ref{thm:UBtestsNumber}}\label{app:ProofThm2}

By assumption we have $0<\lambda<1$, $0\leq s<\epsilon<1$,  $0<\delta\leq1/2$, and  $N\in \bbZ^{\geq1}$. Therefore,
\begin{align}
	&\bar{\epsilon}_{\lambda}(\lfloor\nu s N\rfloor,N,\delta) \leq \epsilon , 
	\nonumber\\   \stackrel{(a)}{\Leftarrow} \quad
	&\frac{z^*(\lfloor\nu s N\rfloor,\delta,\lambda)-\lfloor\nu s N\rfloor+1+\sqrt{\lambda\lfloor\nu s N\rfloor}}{\lambda N} \leq \epsilon ,
	\nonumber\\   \stackrel{(b)}{\Leftarrow} \quad
	&\left(\frac{\lfloor\nu s N\rfloor}{\nu}+\frac{\sqrt{2\nu \lfloor\nu s N\rfloor\ln\delta^{-1}}}{2\nu^2}+\frac{\ln\delta^{-1}}{2\nu^2}+1 \right) -\lfloor\nu s N\rfloor+1+\sqrt{\lambda\lfloor\nu s N\rfloor} \leq \epsilon\lambda N,
	\nonumber\\   \Leftarrow\quad
	&\left(\frac{\nu s N}{\nu}+\frac{\sqrt{2\nu (\nu s N) \ln\delta^{-1}}}{2\nu^2}+\frac{\ln\delta^{-1}}{2\nu^2}+1 \right)
	-\nu s N+1+\sqrt{\lambda\nu s N} \leq \epsilon\lambda N,
	\nonumber\\   \Leftarrow\quad
	&\lambda (\epsilon-s)N - \left(\frac{\sqrt{2 s \ln\delta^{-1}}}{2\nu}+\sqrt{\lambda \nu s} \right) \sqrt{N} - \left(\frac{\ln\delta^{-1}}{2\nu^2}+2\right)
	\geq 0, \label{eq:T2solve}
	\\
	\stackrel{(c)}{\Leftrightarrow} \quad
	&N\geq \left\lceil \frac{1}{\left[ 2\lambda (\epsilon-s) \right]^2}
	\left[\sqrt{\left(\frac{\sqrt{2 s \ln\delta^{-1}}}{2\nu}+\sqrt{\lambda \nu s} \right)^2
		+4\lambda (\epsilon-s) \left(\frac{\ln\delta^{-1}}{2\nu^2}+2\right)}
	+\left(\frac{\sqrt{2 s \ln\delta^{-1}}}{2\nu}+\sqrt{\lambda \nu s} \right)\right]^2 \right\rceil,
	\nonumber\\    \stackrel{(d)}{\Leftarrow} \quad
	&N\geq \left\lceil \frac{1}{\left[ \lambda (\epsilon-s) \right]^2}
	\left[ \left(\frac{\sqrt{2 s \ln\delta^{-1}}}{2\nu}+\sqrt{\lambda \nu s} \right)^2
	+2\lambda (\epsilon-s) \left(\frac{\ln\delta^{-1}}{2\nu^2}+2\right)\right] \right\rceil,
	\nonumber\\    \stackrel{(e)}{\Leftarrow} \quad
	&N\geq
	\left\lceil \frac{1}{\left[ \lambda (\epsilon-s) \right]^2}
	\left[ \frac{s \ln\delta^{-1}}{\nu^2}+2\lambda \nu s
	+\lambda (\epsilon-s) \left(\frac{\ln\delta^{-1}}{\nu^2}+4\right)\right] \right\rceil,
	\nonumber\\    \Leftrightarrow\quad
	&N\geq
	\left\lceil \frac{1}{\left[ \lambda (\epsilon-s) \right]^2}
	\left( [s+\lambda (\epsilon-s)] \frac{ \ln\delta^{-1}}{\nu^2}+ [2\lambda \nu s+4\lambda (\epsilon-s)]
	\right) \right\rceil,
	\nonumber\\    \stackrel{(f)}{\Leftarrow} \quad
	&N\geq
	\frac{1}{\left[ \lambda (\epsilon-s) \right]^2}
	\left( \frac{ \epsilon \ln\delta^{-1}}{\nu^2}+ 4\lambda\epsilon
	\right)
	\quad \Leftarrow \quad
	N\geq
	\frac{\epsilon}{\left[\lambda \nu (\epsilon-s) \right]^2}
	\left( \ln\delta^{-1}+ 4\lambda\nu^2
	\right) , \nonumber
\end{align}
which confirm Theorem~\ref{thm:UBtestsNumber}. 
Here the relation $A\Leftarrow B$ means  $B$ is a sufficient condition of $A$; 
$(a)$ follows from \lref{lem:zeta2Bound2}; 
$(b)$ follows from \lref{lem:z*UB3}; 
$(c)$ is obtained by directly solving \eref{eq:T2solve};
both $(d)$ and $(e)$  follow from the inequality $(x+y)^2\leq 2x^2+2y^2$; and 
$(f)$ follows from the assumption $s<\epsilon$.

\subsection{Proof of Theorem \ref{thm:iidHighProb}}\label{app:ProofThm3}

Let 
\begin{align}
	s= \frac{\lambda\sqrt{2\nu}+r}{\lambda\sqrt{2\nu}+1} \,\epsilon, 
	\qquad k= \left\lfloor \nu s N\right\rfloor,
	\qquad
	N\geq \frac{(\ln\delta^{-1}+  4\lambda\nu^2)\epsilon}{\left[\lambda \nu (\epsilon-s) \right]^2}
	= \bigg[ \frac{\lambda\sqrt{2\nu}+1}{\lambda \nu (1-r)} \bigg]^2 \;
	\frac{\ln\delta^{-1} +4\lambda\nu^2}{ \epsilon};
\end{align}
then $\bar{\epsilon}_{\lambda}(k, N, \delta)\leq\epsilon$ by Theorem~\ref{thm:UBtestsNumber},
which confirms the first inequality of \eref{eq:robustCondition} in the main text. 

To complete the proof, it remains to prove the second inequality of \eref{eq:robustCondition} in the main text, that is, $B_{N,k}(\nu r\epsilon)\geq1-\delta$.
When  $r=0$, this inequality is obvious. 
When $r>0$, we have
\begin{align}
	B_{N,k}(\nu r\epsilon)
	&=1-B_{N,N-k-1}(1-\nu r\epsilon)
	\stackrel{(a)}{\geq} 1-\exp\left[-N D\!\left( \frac{N-k-1}{N}\bigg\| 1-\nu r\epsilon \right) \right] 
	\nonumber\\
	&\stackrel{(b)}{\geq} 1-\rme^{-N D\left(1- \nu s \|1-\nu r\epsilon \right) }
	\stackrel{(c)}{=}    1-\rme^{-N D\left(\nu s \| \nu r\epsilon \right) },  \label{eq:44TH3}
\end{align}
where $(a)$ follows from the Chernoff bound \eqref{eq:ChernoffB} and the inequality $N\nu r\epsilon\leq k+1$;
$(b)$ follows from \lref{lem:DpqMonoton} and the inequality $(k+1)/N\geq \nu s$;
$(c)$ follows from the relation $D(x\|y)=D(1-x\|1-y)$.
In addition, we have
\begin{align}
	N D(\nu s \| \nu r\epsilon )
	&\geq \frac{\epsilon\ln\delta^{-1}}{\left[\lambda \nu (\epsilon-s) \right]^2} D(\nu s \| \nu r\epsilon )
	\geq \frac{\epsilon\ln\delta^{-1}}{\left[\lambda \nu (\epsilon-s) \right]^2} \frac{(\nu s- \nu r\epsilon)^2}{2\nu \epsilon}
	= \frac{(s-r\epsilon)^2}{2\nu \lambda^2 (\epsilon-s)^2} \ln\delta^{-1}
	\nonumber\\
	&= \frac{\Big(\frac{\lambda\sqrt{2\nu}+r}{\lambda\sqrt{2\nu}+1}\,\epsilon-r\epsilon\Big)^2}{2\nu \lambda^2 \Big(\epsilon-\frac{\lambda\sqrt{2\nu}+r}{\lambda\sqrt{2\nu}+1}\,\epsilon\Big)^2} \ln\delta^{-1}
	=\ln\delta^{-1},  \label{eq:45TH3}
\end{align}
where the second inequality follows from the assumption $s\leq \epsilon$ and the inequality $D(x\|y)\geq(x-y)^2/(2x)$ for $x\geq y$.
Equations~\eqref{eq:44TH3} and~\eqref{eq:45TH3} together confirm the inequality $B_{N,k}(\nu r\epsilon)\geq1-\delta$ and complete the proof of Theorem~\ref{thm:iidHighProb}.

\subsection{Proof of Proposition \ref{prop:kbounds}}\label{app:ProofProp2}
If $\nu \epsilon N\leq k\leq N-1$, then \lref{lem:LLP} implies that $\bar{\epsilon}_{\lambda}(k, N, \delta)>\epsilon$, 
which confirms the first statement in Proposition~\ref{prop:kbounds}.

For given $0<\lambda,\epsilon <1$, $0<\delta\leq1/4$, and $N,k\in\bbZ^{\geq 0}$, we have
\begin{align}
	\begin{split}
		&k \leq l(\lambda, N, \epsilon,\delta), 
		\\  \Leftrightarrow \quad
		&k
		\leq \bigg\lfloor\nu \epsilon N - \frac{\sqrt{\ln\delta^{-1}}}{\nu}\cdot \frac{\nu\sqrt{  N \epsilon} }{\lambda}
		-\frac{\ln\delta^{-1}}{2\lambda\nu}-\frac{2\nu}{\lambda} \bigg\rfloor,
		\\   \stackrel{(a)}{\Rightarrow} \quad
		&k
		\leq \nu \epsilon N - \bigg( \frac{\sqrt{2 \ln\delta^{-1}}}{2\nu}+\sqrt{\lambda \nu} \bigg) \frac{\nu\sqrt{  N \epsilon} }{\lambda}
		-\frac{\nu}{\lambda} \left(\frac{\ln\delta^{-1}}{2\nu^2}+2 \right),
		\\   \Rightarrow\quad
		&k\leq \nu \epsilon N,\ 
		\frac{\lambda}{\nu}k +\bigg( \frac{\sqrt{2\nu \ln\delta^{-1}}}{2\nu^2}+\sqrt{\lambda} \bigg) \sqrt{\nu \epsilon N} + \left(\frac{\ln\delta^{-1}}{2\nu^2}+2- \lambda \epsilon N \right)\leq 0,
		\\   \Rightarrow\quad
		&k\leq N-1,\ 
		\frac{\lambda}{\nu}k +\bigg( \frac{\sqrt{2\nu \ln\delta^{-1}}}{2\nu^2}+\sqrt{\lambda} \bigg) \sqrt{k} + \left(\frac{\ln\delta^{-1}}{2\nu^2}+2-  \lambda \epsilon N \right)\leq 0,
		\\   \Rightarrow\quad
		&k\leq N-1,\ 
		\bigg(\frac{k}{\nu}+\frac{\sqrt{2\nu k\ln\delta^{-1}}}{2\nu^2}+\frac{\ln\delta^{-1}}{2\nu^2}+1 \bigg) -k+1+\sqrt{\lambda k} 
		\leq \epsilon\lambda N,
		\\   \stackrel{(b)}{\Rightarrow}\quad
		&k\leq N-1,\ 
		\frac{z^*(k,\delta,\lambda)-k+1+\sqrt{\lambda k}}{\lambda N} \leq \epsilon 
		\quad  \stackrel{(c)}{\Rightarrow} \quad
		k\in\bbZ^{\geq 0},\ k\leq N-1,\ 
		\bar{\epsilon}_{\lambda}(k,N,\delta) \leq \epsilon , 
	\end{split}
\end{align}
which confirm the second statement in Proposition \ref{prop:kbounds}.
Here the notation $A\Rightarrow B$ means  $A$ is a sufficient condition of $B$; 
$(a)$ follows from \lref{lem:1.098}; 
$(b)$ follows from \lref{lem:z*UB3}; and
$(c)$ follows from \lref{lem:zeta2Bound2}.

\subsection{Proof of \eref{eq:pkLimit} in the main text}\label{app:ProofeqpkLimit}

In the limit $N\to+\infty$ we have
\begin{align}\label{eq:limkinftyNadv}
	l(\lambda, N, \epsilon,\delta) 
	= \nu \epsilon N + O( \sqrt{  N})
	> \nu \epsilon_{\tau} N,   
\end{align}
where the inequality follows from the assumption $0<\epsilon_\tau<\epsilon$.
Let $l$ be a shorthand for $l(\lambda, N, \epsilon,\delta)$. Then
\begin{align}
	p^{\iid}_{N,l}(\tau)
	&\stackrel{(a)}{=} B_{N,l}(\nu\epsilon_\tau)
	=1-B_{N,N-l-1}(1-\nu\epsilon_\tau)
	\stackrel{(b)}{\geq} 1-\exp\left[- D\!\left( \frac{N-l-1}{N}\bigg\| 1-\nu\epsilon_\tau \right) N \right]  \nonumber \\
	&\stackrel{(c)}{=}    1-\exp\left[- D\!\left( \frac{\nu \epsilon N + O( \sqrt{  N}\,)}{N}\bigg\| \nu\epsilon_\tau \right) N \right]
	=     1-\exp\left[- D\!\left( \nu\epsilon + O\Bigl(\frac{1}{\sqrt{N}}\Bigr) \bigg\| \nu\epsilon_\tau \right) N \right] \nonumber \\
	&  =     1-\exp\bigl[- D ( \nu\epsilon \| \nu\epsilon_\tau ) N +O(\sqrt{N}\,)\bigr],  \label{eq:limkinftyNadvLB}
\end{align}
where $(a)$ follows from \eref{eq:SuccessProbpk} in the main text;
$(b)$ follows from the Chernoff bound \eref{eq:ChernoffB} and \eref{eq:limkinftyNadv};
$(c)$ follows from \eref{eq:limkinftyNadv} and the relation $D(x\|y)=D(1-x\|1-y)$.

On the other hand,
\begin{align}
	p^{\iid}_{N,l}(\tau)
	&= B_{N,l}(\nu\epsilon_\tau)
	=1-B_{N,N-l-1}(1-\nu\epsilon_\tau)
	\stackrel{(a)}{\leq} 1-\frac{1}{\rme \sqrt{l}}\exp\left[- D\!\left( \frac{N-l-1}{N}\bigg\| 1-\nu\epsilon_\tau \right) N \right]  \nonumber \\
	&\stackrel{(b)}{=}   1-\exp\Bigl[-1-\frac{1}{2}\ln \big(\nu \epsilon N + O( \sqrt{  N}\,) \big) \Bigr] \exp\left[- D\!\left( \nu\epsilon + O\Bigl(\frac{1}{\sqrt{N}}\Bigr) \bigg\| \nu\epsilon_\tau \right) N \right] \nonumber \\
	&               =     1-\exp\bigl[- D ( \nu\epsilon \| \nu\epsilon_\tau ) N +O(\sqrt{N}\,)\bigr],  \label{eq:limkinftyNadvUB}
\end{align}
where $(a)$ follows from the reverse Chernoff bound in \lref{lem:ChernoffRev}, 
and $(b)$ follows from \eref{eq:limkinftyNadv}. 

Equations \eqref{eq:limkinftyNadvLB} and \eqref{eq:limkinftyNadvUB} together confirm \eref{eq:pkLimit} in the main text.

\section{Robust and efficient verification of quantum states in the adversarial scenario}\label{app:relationQSV}

In the main text, we proposed a robust and efficient protocol for verifying the resource graph state in blind MBQC. To achieve this goal we
proposed a robust and efficient protocol for verifying qudit graph states with a prime local dimension in the adversarial scenario. 
As mentioned in the Discussion section, our verification protocol can also be used to verify many 
other pure quantum states in the adversarial scenario. 
Suppose a homogeneous strategy as given in \eref{eq:homo} in the main text can be constructed for a given state  $|\Psi\>\in\caH$;
then our protocol and most related results are still applicable if the target graph state $|G\>$ is replaced by $|\Psi\>$.

\subsection{Better results when the spectral gap $\nu$ is small}
As shown in the main text, for any qudit graph state with a prime local dimension $d$
we can construct a homogeneous verification strategy with $0<\lambda\leq1/2$, that is, $1/2\leq \nu<1$.  
However, for a general multipartite state $|\Psi\>$, it is not always easy to construct a homogeneous strategy with a large spectral gap. 
In this case, the following propositions proved in \ref{sec:proofsmallnu}
can provide stronger performance guarantee than the results presented in the main text. 
To be specific, when $1/2\leq\lambda<1$, that is, $0<\nu\leq 1/2$, 
Proposition \ref{prop:BoundepsSmallnu} is stronger than Theorem \ref{thm:Boundeps};  
Proposition \ref{prop:UBtestsNumberSmallnu} is stronger than Theorem \ref{thm:UBtestsNumber}; 
Proposition \ref{prep:iidHighProbSmallnu} is stronger than Theorem \ref{thm:iidHighProb}; and 
Proposition~\ref{prep:kboundsSmallnu} is stronger than the second statement in Proposition \ref{prop:kbounds}.

\begin{proposition}\label{prop:BoundepsSmallnu}
	Suppose $1/2\leq\lambda<1$, $0<s<1$, $0<\delta\leq1/3$, and $N\in\bbZ^{\geq 1}$; then
	\begin{align}
		\bar{\epsilon}_{\lambda}(\lfloor\nu s N\rfloor, N, \delta)
		\leq s+ 2\sqrt{\frac{s \ln\delta^{-1}}{\nu\lambda N}} + \frac{2\ln\delta^{-1}}{\nu N}+\frac{2}{\lambda N}.
	\end{align}
\end{proposition}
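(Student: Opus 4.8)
The plan is to imitate the proof of the upper bound in Theorem~\ref{thm:Boundeps}, but to exploit the hypothesis $\lambda\geq 1/2$ (equivalently $\nu\leq 1/2$) by replacing the generic estimate for $z^*(k,\delta,\lambda)$ coming from \lref{lem:z*UB3} with the sharper bound of \lref{lem:z*UB3smallnu}, which is tailored to the small-$\nu$ regime.

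Concretely, write $k:=\lfloor\nu s N\rfloor$, so $k\leq\nu s N$, and abbreviate $z^*:=z^*(k,\delta,\lambda)$. Starting from the upper bound in \lref{lem:zeta2Bound2},
\begin{align}
\bar{\epsilon}_{\lambda}(k,N,\delta)\leq\frac{z^*-k+1+\sqrt{\lambda k}}{\lambda N},
\end{align}
I would plug in $z^*\leq\bigl(k+2\lambda\ln\delta^{-1}+\sqrt{2\lambda k\ln\delta^{-1}}\,\bigr)/\nu+1$ from \lref{lem:z*UB3smallnu}. Using the identity $1-\nu=\lambda$, the numerator is then at most $\lambda k/\nu+2\lambda\ln\delta^{-1}/\nu+\sqrt{2\lambda k\ln\delta^{-1}}/\nu+2+\sqrt{\lambda k}$; dividing by $\lambda N$ and bounding each occurrence of $k$ by $\nu s N$ at the appropriate order yields
\begin{align}
\bar{\epsilon}_{\lambda}(\lfloor\nu s N\rfloor,N,\delta)\leq s+\frac{2\ln\delta^{-1}}{\nu N}+\frac{2}{\lambda N}+\sqrt{\frac{2s\ln\delta^{-1}}{\nu\lambda N}}+\sqrt{\frac{\nu s}{\lambda N}}.
\end{align}

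It then remains to merge the last two square-root terms into $2\sqrt{s\ln\delta^{-1}/(\nu\lambda N)}$. Dividing through by $\sqrt{s/(\nu\lambda N)}$, this reduces to the elementary inequality $\sqrt{2\ln\delta^{-1}}+\nu\leq 2\sqrt{\ln\delta^{-1}}$, i.e.\ $\nu\leq(2-\sqrt{2})\sqrt{\ln\delta^{-1}}$, which I would isolate as a short numerical lemma (the small-$\nu$ analog of \lref{lem:1.098}): it holds because $\nu\leq 1/2$ while $(2-\sqrt{2})\sqrt{\ln\delta^{-1}}\geq(2-\sqrt{2})\sqrt{\ln 3}>1/2$ whenever $\delta\leq 1/3$. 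Assembling these steps gives the claimed bound.

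The only new ingredient beyond rearranging the argument of Theorem~\ref{thm:Boundeps} is this last numerical lemma; since both of its sides are monotone in $\delta$, I expect it to be routine. The part requiring the most care is the bookkeeping in the substitution, so that each $k$-dependent term is controlled by $k\leq\nu s N$ at the correct power, but I do not anticipate any genuine obstacle.
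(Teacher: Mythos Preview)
Your proposal is correct and matches the paper's proof essentially line for line: the paper also starts from \lref{lem:zeta2Bound2}, substitutes the bound of \lref{lem:z*UB3smallnu}, replaces $k$ by $\nu s N$, and then absorbs the two square-root terms into one via the same numerical inequality $\sqrt{2\ln\delta^{-1}}+\nu\leq 2\sqrt{\ln\delta^{-1}}$ for $\nu\leq 1/2$ and $\delta\leq 1/3$ (used inline rather than stated as a separate lemma). The only difference is cosmetic ordering of when the square-root terms are grouped.
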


\begin{proposition}\label{prop:UBtestsNumberSmallnu}
	Suppose  $1/2\leq\lambda<1$, $0\leq s<\epsilon<1$, and $0<\delta\leq1/2$.
	If the number of tests $N$ satisfies
	\begin{align}
		N\geq \frac{4\, \epsilon }{\lambda\nu (\epsilon-s)^2}
		\left(  \ln\delta^{-1} + \nu \right) ,
	\end{align}
	then we have $\bar{\epsilon}_{\lambda}(\lfloor\nu s N\rfloor,N, \delta)\leq\epsilon$.
\end{proposition}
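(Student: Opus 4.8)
The plan is to follow the proof of Theorem~\ref{thm:UBtestsNumber} almost verbatim, substituting the refined estimate of \lref{lem:z*UB3smallnu} for that of \lref{lem:z*UB3}; indeed the hypothesis $1/2\le\lambda<1$ (i.e.\ $0<\nu\le1/2$) is imposed precisely so that \lref{lem:z*UB3smallnu} applies. First I would use \lref{lem:zeta2Bound2} (whose hypothesis $0<\delta\le1/2$ is exactly the one assumed here) to reduce the claim $\bar{\epsilon}_\lambda(\lfloor\nu s N\rfloor,N,\delta)\le\epsilon$ to the elementary inequality
\begin{align}
  z^*(k,\delta,\lambda)-k+1+\sqrt{\lambda k}\le\epsilon\lambda N,\qquad k:=\lfloor\nu s N\rfloor\le\nu s N .
\end{align}
Invoking \lref{lem:z*UB3smallnu} together with the identity $1-\nu=\lambda$ gives $z^*(k,\delta,\lambda)-k\le \frac{\lambda k}{\nu}+\frac{2\lambda\ln\delta^{-1}}{\nu}+\frac{\sqrt{2\lambda k\ln\delta^{-1}}}{\nu}+1$, so after replacing $k$ by $\nu s N$ in each term it suffices to prove
\begin{align}
  \lambda(\epsilon-s)N-\left(\sqrt{\tfrac{2\lambda s\ln\delta^{-1}}{\nu}}+\sqrt{\lambda\nu s}\,\right)\!\sqrt{N}-\left(\tfrac{2\lambda\ln\delta^{-1}}{\nu}+2\right)\ge0 .
\end{align}

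Next I would read this as a quadratic inequality $aN-b\sqrt{N}-c\ge0$ in $\sqrt{N}$, with $a=\lambda(\epsilon-s)$, $b=\sqrt{2\lambda s\ln\delta^{-1}/\nu}+\sqrt{\lambda\nu s}$, and $c=2\lambda\ln\delta^{-1}/\nu+2$. It holds once $\sqrt{N}\ge (b+\sqrt{b^2+4ac})/(2a)$, and the crude estimate $(x+y)^2\le 2x^2+2y^2$ turns this into the simpler sufficient condition $N\ge (b^2+2ac)/a^2$. The last step is the routine simplification
\begin{align}
  b^2+2ac\le \frac{4\lambda\ln\delta^{-1}}{\nu}\bigl(\nu s+\lambda\epsilon\bigr)+2\lambda\nu s+4\lambda(\epsilon-s)\le \frac{4\lambda\epsilon\ln\delta^{-1}}{\nu}+4\lambda\epsilon=\frac{4\lambda\epsilon}{\nu}\bigl(\ln\delta^{-1}+\nu\bigr),
\end{align}
where the first inequality applies $(x+y)^2\le2x^2+2y^2$ to $b^2$, and the second uses $\nu s+\lambda\epsilon\le(\nu+\lambda)\epsilon=\epsilon$ together with $2\lambda\nu s+4\lambda(\epsilon-s)=4\lambda\epsilon-2\lambda s(2-\nu)\le4\lambda\epsilon$. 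Dividing by $a^2=\lambda^2(\epsilon-s)^2$ then reproduces exactly the stated threshold $N\ge \frac{4\epsilon}{\lambda\nu(\epsilon-s)^2}(\ln\delta^{-1}+\nu)$.

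I do not expect a genuine obstacle: all of the mathematical substance sits inside \lref{lem:z*UB3smallnu}, and everything else is bookkeeping mirroring the proof of Theorem~\ref{thm:UBtestsNumber}. The only point demanding a little attention is verifying that the two lower-order contributions---the additive constant $2$ in $c$ and the $\sqrt{\lambda\nu s}$ piece of $b$ coming from the $\sqrt{\lambda k}$ term---fold into the clean coefficient $4$; as the display above shows this works because the former enters only after multiplication by $2a=2\lambda(\epsilon-s)$, while the latter is absorbed via $2\lambda\nu s+4\lambda(\epsilon-s)\le4\lambda\epsilon$, so no smallness of $\delta$ or $\nu$ is needed for these final manipulations beyond what \lref{lem:zeta2Bound2} and \lref{lem:z*UB3smallnu} already demand, namely $\delta\le1/2$ and $\nu\le1/2$.
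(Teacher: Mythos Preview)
Your proposal is correct and takes essentially the same approach as the paper: reduce via \lref{lem:zeta2Bound2}, bound $z^*$ via \lref{lem:z*UB3smallnu}, pass from $\lfloor\nu s N\rfloor$ to $\nu s N$, solve the resulting quadratic in $\sqrt N$, and simplify with $(x+y)^2\le 2x^2+2y^2$ and $s<\epsilon$. The paper's display differs from yours only by an overall factor of $\nu$ in the quadratic (it writes $a=\lambda\nu(\epsilon-s)$ rather than your $a=\lambda(\epsilon-s)$), and your final bookkeeping $\nu s+\lambda\epsilon\le\epsilon$ and $2\lambda\nu s+4\lambda(\epsilon-s)\le4\lambda\epsilon$ is exactly what the paper does in its step~(f).
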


\begin{proposition}\label{prep:iidHighProbSmallnu}
	Suppose $1/2\leq\lambda< 1$, $0<\delta\leq1/2$, $0<\epsilon<1$, and $0\leq r<1$. Then the conditions of soundness
	and robustness in \eref{eq:robustCondition} in the main text hold as long as
	\begin{align}
		k
		= \bigg\lfloor  \bigg(\frac{\sqrt{\lambda}+\sqrt{2}\,r}{\sqrt{\lambda}+\sqrt{2}}\bigg)\nu \epsilon N\bigg\rfloor, 
		\qquad
		N
		\geq
		\left\lceil 2\bigg[ \frac{\sqrt{2}+\sqrt{\lambda}}{\sqrt{\lambda}\,(1-r)} \bigg]^2\frac{ \ln\delta^{-1} + \nu}{ \nu \epsilon} \right\rceil.
	\end{align}
\end{proposition}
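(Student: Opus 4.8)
\textbf{Proof proposal for Proposition~\ref{prep:iidHighProbSmallnu}.}
The plan is to run the same two-step argument used for Theorem~\ref{thm:iidHighProb}, but to draw the soundness half from Proposition~\ref{prop:UBtestsNumberSmallnu} instead of Theorem~\ref{thm:UBtestsNumber}, since in the regime $1/2\le\lambda<1$ the former is the stronger statement (its test-number threshold scales like $1/\nu$ rather than $1/\nu^2$). As before, set the error rate $s:=\frac{\sqrt\lambda+\sqrt2\,r}{\sqrt\lambda+\sqrt2}\,\epsilon$ and take $k=\lfloor\nu sN\rfloor$. The first thing to record is a handful of elementary identities: $r\epsilon<s<\epsilon$ (the strict inequalities use $0\le r<1$ and $\lambda>0$), $\epsilon-s=\frac{\sqrt2(1-r)}{\sqrt\lambda+\sqrt2}\,\epsilon$, and $s-r\epsilon=\frac{\sqrt\lambda(1-r)}{\sqrt\lambda+\sqrt2}\,\epsilon$. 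The particular closed form of $s$ is chosen precisely so that the soundness threshold and the robustness threshold derived below coincide; in particular the hypotheses $1/2\le\lambda<1$, $0\le s<\epsilon<1$, $0<\delta\le1/2$ of Proposition~\ref{prop:UBtestsNumberSmallnu} are all in force.

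For the soundness condition $\bar\epsilon_\lambda(k,N,\delta)\le\epsilon$ I would just check that the stated lower bound on $N$ is exactly the threshold demanded by Proposition~\ref{prop:UBtestsNumberSmallnu}. Substituting $(\epsilon-s)^2=\frac{2(1-r)^2}{(\sqrt\lambda+\sqrt2)^2}\epsilon^2$ into $\frac{4\epsilon}{\lambda\nu(\epsilon-s)^2}(\ln\delta^{-1}+\nu)$ gives $2\bigl[\frac{\sqrt2+\sqrt\lambda}{\sqrt\lambda(1-r)}\bigr]^2\frac{\ln\delta^{-1}+\nu}{\nu\epsilon}$, which is the right-hand side of the displayed bound on $N$. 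Hence Proposition~\ref{prop:UBtestsNumberSmallnu} applies with $k=\lfloor\nu sN\rfloor$ and yields $\bar\epsilon_\lambda(\lfloor\nu sN\rfloor,N,\delta)\le\epsilon$, i.e., the first inequality of \eref{eq:robustCondition}.

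For the robustness condition $B_{N,k}(\nu r\epsilon)\ge1-\delta$ the case $r=0$ is trivial since $B_{N,k}(0)=1$, so assume $r>0$. Writing $B_{N,k}(\nu r\epsilon)=1-B_{N,N-k-1}(1-\nu r\epsilon)$ and noting that $k=\lfloor\nu sN\rfloor$ gives $(k+1)/N>\nu s\ge\nu r\epsilon$, the Chernoff bound \eref{eq:ChernoffB} applies; combining it with the monotonicity of $D(p\|q)$ in $p$ (\lref{lem:DpqMonoton}), the inequality $(N-k-1)/N<1-\nu s\le1-\nu r\epsilon$, and the symmetry $D(x\|y)=D(1-x\|1-y)$ yields $B_{N,k}(\nu r\epsilon)\ge1-\rme^{-ND(\nu s\|\nu r\epsilon)}$, exactly as in the proof of Theorem~\ref{thm:iidHighProb}. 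It then remains to show $ND(\nu s\|\nu r\epsilon)\ge\ln\delta^{-1}$: using $D(x\|y)\ge(x-y)^2/(2x)$ for $x\ge y$ together with $2\nu s\le2\nu\epsilon$ gives $D(\nu s\|\nu r\epsilon)\ge\frac{\nu\epsilon\lambda(1-r)^2}{2(\sqrt\lambda+\sqrt2)^2}$, and multiplying by the lower bound $N\ge2\bigl[\frac{\sqrt2+\sqrt\lambda}{\sqrt\lambda(1-r)}\bigr]^2\frac{\ln\delta^{-1}}{\nu\epsilon}$ produces exactly $\ln\delta^{-1}$. Hence $B_{N,k}(\nu r\epsilon)\ge1-\delta$, the second inequality of \eref{eq:robustCondition}, completing the proof.

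There is no genuinely hard step here; the argument is a transcription of the one for Theorem~\ref{thm:iidHighProb}, with the small-gap soundness bound Proposition~\ref{prop:UBtestsNumberSmallnu} swapped in. The only point that requires care is verifying that the chosen closed form of $s$ makes the soundness threshold (from Proposition~\ref{prop:UBtestsNumberSmallnu}) and the robustness threshold (from the relative-entropy estimate $D(x\|y)\ge(x-y)^2/(2x)$) collapse to the single expression $2\bigl[\frac{\sqrt2+\sqrt\lambda}{\sqrt\lambda(1-r)}\bigr]^2\frac{\ln\delta^{-1}+\nu}{\nu\epsilon}$ — this balancing is exactly the place where the improved $1/\nu$ scaling of Proposition~\ref{prop:UBtestsNumberSmallnu} gets inherited by the final bound.
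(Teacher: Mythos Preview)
Your proposal is correct and follows essentially the same approach as the paper's proof: set $s=\frac{\sqrt\lambda+\sqrt2\,r}{\sqrt\lambda+\sqrt2}\,\epsilon$, invoke Proposition~\ref{prop:UBtestsNumberSmallnu} for soundness after checking that its threshold equals the stated bound on $N$, and for robustness use the Chernoff bound together with $D(x\|y)\ge(x-y)^2/(2x)$ and $s\le\epsilon$ to obtain $ND(\nu s\|\nu r\epsilon)\ge\ln\delta^{-1}$. The only cosmetic difference is that the paper keeps the threshold in the form $\frac{4\epsilon(\ln\delta^{-1}+\nu)}{\lambda\nu(\epsilon-s)^2}$ through the robustness estimate and substitutes the closed form of $s$ at the very end, whereas you substitute earlier; the computations are identical.
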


According to Proposition \ref{prep:iidHighProbSmallnu}, 
when the robustness $r$ is a constant, the number of tests 
required  to verify the target state $|\Psi\>$ in the adversarial scenario 
is only $O\big( \frac{\ln \delta^{-1}}{\nu\epsilon}\big)$.  
This scaling behavior is the same as the counterpart in the i.i.d. scenario as clarified in the Methods section.

\begin{proposition}\label{prep:kboundsSmallnu}
	Suppose $1/2\leq\lambda< 1$, $0<\epsilon <1$, $0<\delta\leq1/4$, and $N,k\in\bbZ^{\geq 0}$. If the number $k$ of allowed failures satisfies 
	\begin{align}
		0\leq k\leq 
		\bigg\lfloor\nu \epsilon N - 2\sqrt{\frac{\nu\epsilon N\ln\delta^{-1}}{\lambda}}
		-2\ln\delta^{-1}-\frac{2\nu}{\lambda} \bigg\rfloor,
	\end{align}
	then we have $\bar{\epsilon}_{\lambda}(k, N, \delta)\leq\epsilon$. 
\end{proposition}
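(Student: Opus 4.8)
The plan is to mirror the proof of the second statement of Proposition~\ref{prop:kbounds}, but with the generic tail bound \lref{lem:z*UB3} replaced by the sharper small-$\nu$ bound \lref{lem:z*UB3smallnu}, which is available precisely because $1/2\le\lambda<1$. The starting point is \lref{lem:zeta2Bound2}, which gives $\bar{\epsilon}_{\lambda}(k,N,\delta)\le \bigl(z^*(k,\delta,\lambda)-k+1+\sqrt{\lambda k}\bigr)/(\lambda N)$, so it suffices to show that the hypothesis on $k$ forces $z^*(k,\delta,\lambda)-k+1+\sqrt{\lambda k}\le \lambda\epsilon N$. First I would record that the hypothesis automatically entails $0\le k\le \nu\epsilon N$ (every subtracted term on the right of the floor is nonnegative), hence $k<N$ since $\nu,\epsilon<1$; this makes \lref{lem:zeta2Bound2} applicable ($N\ge k+1$) and also supplies the inequality $\nu\epsilon N\ge k$ used below.

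Next I would substitute $z^*(k,\delta,\lambda)\le \bigl(k+2\lambda\ln\delta^{-1}+\sqrt{2\lambda k\ln\delta^{-1}}\bigr)/\nu+1$ from \lref{lem:z*UB3smallnu}. Using $\tfrac{k}{\nu}-k=\tfrac{\lambda k}{\nu}$, one gets $z^*(k,\delta,\lambda)-k+1+\sqrt{\lambda k}\le \tfrac{\lambda k}{\nu}+\tfrac{2\lambda\ln\delta^{-1}}{\nu}+\tfrac{\sqrt{2\lambda k\ln\delta^{-1}}}{\nu}+\sqrt{\lambda k}+2$. Multiplying the target inequality through by $\nu/\lambda$ and simplifying the radicals ($\tfrac{\nu}{\lambda}\cdot\tfrac{\sqrt{2\lambda k\ln\delta^{-1}}}{\nu}=\sqrt{2k\ln\delta^{-1}/\lambda}$, $\tfrac{\nu}{\lambda}\sqrt{\lambda k}=\nu\sqrt{k/\lambda}$) reduces the goal to
\[
k+\sqrt{\tfrac{2k\ln\delta^{-1}}{\lambda}}+\nu\sqrt{\tfrac{k}{\lambda}}+2\ln\delta^{-1}+\tfrac{2\nu}{\lambda}\ \le\ \nu\epsilon N .
\]
Meanwhile the hypothesis rearranges to $\nu\epsilon N\ge k+2\sqrt{\nu\epsilon N\ln\delta^{-1}/\lambda}+2\ln\delta^{-1}+2\nu/\lambda$. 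Cancelling the common terms $k+2\ln\delta^{-1}+2\nu/\lambda$, it remains to prove $\sqrt{2k\ln\delta^{-1}/\lambda}+\nu\sqrt{k/\lambda}\le 2\sqrt{\nu\epsilon N\ln\delta^{-1}/\lambda}$, and since $\nu\epsilon N\ge k$ this follows once $\sqrt{2k\ln\delta^{-1}/\lambda}+\nu\sqrt{k/\lambda}\le 2\sqrt{k\ln\delta^{-1}/\lambda}$.

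The last inequality is trivial for $k=0$, and for $k>0$ dividing by $\sqrt{k/\lambda}>0$ turns it into the elementary bound $\sqrt{2\ln\delta^{-1}}+\nu\le 2\sqrt{\ln\delta^{-1}}$, i.e.\ $\nu\le(2-\sqrt2)\sqrt{\ln\delta^{-1}}$, which holds because $\nu\le 1/2$ while $\delta\le 1/4$ gives $(2-\sqrt2)\sqrt{\ln\delta^{-1}}\ge(2-\sqrt2)\sqrt{\ln4}>1/2$. This final elementary step plays the role that \lref{lem:1.098} plays in the proof of Proposition~\ref{prop:kbounds}, and it is exactly where the hypotheses $\lambda\ge 1/2$ and $\delta\le 1/4$ are consumed. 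I expect this numerical inequality to be the only mildly delicate point — everything else is routine bookkeeping of a chain of sufficient conditions — and one could, if preferred, isolate it beforehand as a short auxiliary lemma analogous to \lref{lem:1.098}.
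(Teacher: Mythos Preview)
Your proof is correct and follows essentially the same route as the paper's: both arguments combine \lref{lem:zeta2Bound2} with the small-$\nu$ tail bound \lref{lem:z*UB3smallnu}, replace $\sqrt{\nu\epsilon N}$ by $\sqrt{k}$ via $k\le\nu\epsilon N$, and close with the elementary inequality $\sqrt{2\ln\delta^{-1}}+\nu\le 2\sqrt{\ln\delta^{-1}}$ (valid for $\nu\le 1/2$, $\delta\le 1/4$), which the paper invokes explicitly at its step~(a). The only difference is presentational---the paper writes a forward chain of implications while you argue bidirectionally---but the ingredients and the order in which the two key lemmas and the numerical inequality are applied are the same.
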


\subsection{Verification with a fixed number of allowed failures}
In the main text, we have considered the verification of quantum states 
with a fixed error rate, in which  the number of allowed failures $k$ is proportional to the number of tests $N$. 
Here, we shall consider the case in which $k$ is a fixed integer. Then $\bar{\epsilon}_{\lambda}(k,N,\delta)=O(N^{-1})$ according to \lref{lem:zeta2Bound2}. As the number of tests $N$ increases, the guaranteed infidelity approaches $0$. 
To  study the minimum number of tests required to reach a given precision,  define 
\begin{align}\label{eq:DefNTConst}
	N_{k}(\epsilon,\delta,\lambda) := \min\{ N\geq k+1\,|\, \bar{\epsilon}_{\lambda}(k,N,\delta)\leq \epsilon \}, \quad k\in\bbZ^{\geq 0},\quad  0<\lambda,\delta,\epsilon<1.
\end{align}
If the number of tests $N$ satisfies $N\geq N_{k}(\epsilon,\delta,\lambda)$, then   $\bar{\epsilon}_{\lambda}(k,N,\delta)\leq \epsilon$ given that $\bar{\epsilon}_{\lambda}(k,N,\delta)$ is nonincreasing in $N$.

Next, we  provide several informative bounds for  $N_{k}(\epsilon,\delta,\lambda)$; 
see \ref{sec:prooffixkadv} for proofs. Recall that $z^*(k,\delta,\lambda)$ is the smallest integer $z$ that satisfies $B_{z,k}(\nu)\leq \delta$, as defined in \eref{eq:definez}, and $z_*(k,\delta,\lambda)=z^*(k,\delta,\lambda)-1$.

\begin{lemma}\label{lem:zstarBoundsN}
	Suppose $0<\lambda,\epsilon< 1$, $0<\delta\leq1/2$, and $k\in\bbZ^{\geq 0}$. Then 
	\begin{align}
		\frac{(1-\nu\epsilon)z_*-(1-\epsilon)k}{\lambda \epsilon} - 1 
		\leq N_{k}(\epsilon,\delta,\lambda)
		< \frac{z^*-k+1+\sqrt{\lambda k}}{\lambda \epsilon}, 
	\end{align}
	where $z^*$ and $z_*$ are shorthands for $z^*(k,\delta,\lambda)$ and $z_*(k,\delta,\lambda)$, respectively.
\end{lemma}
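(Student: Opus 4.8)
The plan is to read both inequalities off \lref{lem:zeta2Bound2}, which for every $N\geq k+1$ sandwiches $\bar{\epsilon}_{\lambda}(k,N,\delta)$ between explicit rational functions of $N$ together with the $N$-independent quantities $z^*:=z^*(k,\delta,\lambda)$ and $z_*=z^*-1$; throughout write $a:=z^*-k+1+\sqrt{\lambda k}$. Before starting I would record two elementary facts: $z^*\geq k+1$, since $B_{k,k}(\nu)=1>\delta$; and $\nu z^*>k-1$, because $B_{z^*,k}(\nu)\leq\delta\leq 1/2$ cannot hold once $\nu z^*\leq k-1$ (then $k$ would be at least a median of $\mathrm{Bin}(z^*,\nu)$, forcing $B_{z^*,k}(\nu)>1/2$).

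For the lower bound I would set $N:=N_{k}(\epsilon,\delta,\lambda)$, so $N\geq k+1$ and $\bar{\epsilon}_{\lambda}(k,N,\delta)\leq\epsilon<1$. Plugging this into the lower estimate of \lref{lem:zeta2Bound2} and using $\epsilon<1$ to conclude that the ``$\min\{1,\cdot\}$'' there is attained by the fraction, I get $(z_*-k)/[\lambda(N+1)+\nu z_*-k]\leq\epsilon$. The denominator is positive (at least $2\lambda$, from $N\geq k+1$ and $z_*\geq k$), so cross-multiplying and isolating $N$ gives $\epsilon\lambda N\geq z_*(1-\nu\epsilon)-k(1-\epsilon)-\epsilon\lambda$, i.e.\ $N\geq[(1-\nu\epsilon)z_*-(1-\epsilon)k]/(\lambda\epsilon)-1$; the trailing $-1$ is exactly the contribution of the ``$N+1$'' in the denominator. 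The degenerate case $z_*=k$ (where the fraction is $0$ and yields no constraint) is handled separately: there the asserted bound collapses to $N_k\geq k-1$, immediate from $N_k\geq k+1$.

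For the upper bound I would instead use the middle estimate of \lref{lem:zeta2Bound2}, $\bar{\epsilon}_{\lambda}(k,N,\delta)\leq a/[\lambda(N-z^*)+a]$ for $N\geq k+1$, whose right-hand side is $\leq\epsilon$ as soon as $N\geq z^*+a(1-\epsilon)/(\lambda\epsilon)$ — and any such $N$ is automatically $\geq z^*\geq k+1$. It then suffices to produce an integer $N$ in the half-open interval $\bigl[\,z^*+a(1-\epsilon)/(\lambda\epsilon),\ a/(\lambda\epsilon)\,\bigr)$, since that gives $N_k\leq N<a/(\lambda\epsilon)$. This interval has length $a/\lambda-z^*$, and a short computation shows $a/\lambda-z^*>1$ is equivalent to $\nu z^*+\nu+\sqrt{\lambda k}>k$, which is trivial for $k=0$ and, for $k\geq1$, follows from $\nu z^*>k-1$ combined with $\nu+\sqrt{\lambda k}\geq(1-\lambda)+\sqrt{\lambda}\geq1$. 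A window of length exceeding $1$ always contains an integer, completing the argument.

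I expect the only genuine obstacle to be the \emph{strict} inequality $N_k<a/(\lambda\epsilon)$ rather than the easy $N_k<a/(\lambda\epsilon)+1$: the crude bound $\bar{\epsilon}_{\lambda}(k,N,\delta)\leq a/(\lambda N)$ from \lref{lem:zeta2Bound2} is too lossy, so one must use the sharper middle estimate together with the lower bound $\nu z^*>k-1$ (roughly $z^*\gtrsim k/\nu$). Establishing $\nu z^*>k-1$ is the one ingredient not already packaged as a lemma here, but it is a short argument via the median of the binomial tail.
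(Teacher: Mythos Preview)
Your proposal is correct and follows the same route as the paper: both bounds are read off \lref{lem:zeta2Bound2} by solving the resulting rational inequalities for $N$, with the paper writing the upper bound as $N_k\leq\lceil a(1/\epsilon-1)/\lambda+z^*\rceil$ and then asserting this is strictly less than $a/(\lambda\epsilon)$. You are in fact more careful than the paper on that strict inequality, which the paper leaves unjustified; your argument via $\nu z^*>k-1$ (the paper's own \lref{lem:Bnk1-k/n>1/2} would give the slightly stronger $\nu z^*>k$ directly) fills precisely this gap.
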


\begin{proposition}\label{prop:Nk}
	Suppose $0<\lambda,\epsilon< 1$, $0<\delta\leq1/2$, and $k\in\bbZ^{\geq 0}$. Then we have 
	\begin{align}
		\frac{k}{\nu\epsilon}	
		\leq 
		N_{k}(\epsilon,\delta,\lambda)
		&< \frac{k}{\nu\epsilon} + \frac{\ln\delta^{-1}}{2\nu^2\lambda \epsilon} 
		+ \frac{\sqrt{2\nu k\ln\delta^{-1}}}{2\nu^2\lambda \epsilon} + \frac{\sqrt{\lambda k}}{\lambda \epsilon} + \frac{2}{\lambda \epsilon}, 
		\label{eq:NuBinformCR} 
		\\
		N_{k}(\epsilon,\delta,\lambda)
		&\geq \frac{(1-\nu\epsilon)\ln\delta^{-1}}{\lambda\epsilon\ln\lambda^{-1}} - \frac{(1-\epsilon)(k+1)}{\lambda \epsilon} -2.
		\label{eq:NlBinformCR}
	\end{align}
	If in addition $1/2\leq\lambda<1$, then we have 
	\begin{align}\label{eq:NuBinformCRsmallnu} 
		N_{k}(\epsilon,\delta,\lambda)
		< \frac{k}{\nu\epsilon} + \frac{2\ln\delta^{-1}}{\nu \epsilon} 
		+ \frac{\sqrt{2\lambda k\ln\delta^{-1}}}{\lambda\nu \epsilon} + \frac{\sqrt{\lambda k}}{\lambda \epsilon} + \frac{2}{\lambda \epsilon}. 
	\end{align}
\end{proposition}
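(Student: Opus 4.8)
The plan is to read off $N_{k}(\epsilon,\delta,\lambda)$ from the two-sided estimate of \lref{lem:zstarBoundsN}, which sandwiches it between expressions in $z^{*}=z^{*}(k,\delta,\lambda)$ and $z_{*}=z_{*}(k,\delta,\lambda)=z^{*}-1$, and then to feed in the explicit bounds on $z^{*}$ already available (\lref{lem:z*UB3} in general and \lref{lem:z*UB3smallnu} when $1/2\le\lambda<1$) together with one new, very cheap lower bound on $z^{*}$. Existence of an admissible $N$ in \eqref{eq:DefNTConst} is not an issue: the upper bound in \lref{lem:zeta2Bound2} tends to $0$ as $N\to\infty$, so the minimum is well defined.

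First I would dispose of the easy lower bound $k/(\nu\epsilon)\le N_{k}(\epsilon,\delta,\lambda)$ without even invoking \lref{lem:zstarBoundsN}. If an integer $N\ge k+1$ satisfies $\overline{\epsilon}_{\lambda}(k,N,\delta)\le\epsilon<1$, then \lref{lem:LLP} rules out $k\ge\nu N$ (which would force $\overline{\epsilon}_{\lambda}(k,N,\delta)=1$) and gives $\overline{\epsilon}_{\lambda}(k,N,\delta)>k/(\nu N)$, whence $N>k/(\nu\epsilon)$; minimizing over admissible $N$ gives the claim.

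For the upper bounds I would take the right inequality of \lref{lem:zstarBoundsN}, namely $N_{k}(\epsilon,\delta,\lambda)<(z^{*}-k+1+\sqrt{\lambda k})/(\lambda\epsilon)$, and substitute $z^{*}\le k/\nu+\sqrt{2\nu k\ln\delta^{-1}}/(2\nu^{2})+\ln\delta^{-1}/(2\nu^{2})+1$ from \lref{lem:z*UB3}. The one algebraic point to watch is the identity $k/\nu-k=\lambda k/\nu$: it is this term, divided by $\lambda\epsilon$, that produces the leading term $k/(\nu\epsilon)$, and the remaining summands collect verbatim into the right-hand side of \eqref{eq:NuBinformCR}. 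The sharpened bound \eqref{eq:NuBinformCRsmallnu} for $1/2\le\lambda<1$ comes out of the same manipulation with the sharper estimate $z^{*}\le(k+2\lambda\ln\delta^{-1}+\sqrt{2\lambda k\ln\delta^{-1}})/\nu+1$ of \lref{lem:z*UB3smallnu} in place of \lref{lem:z*UB3}.

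For the remaining lower bound \eqref{eq:NlBinformCR} I would use the left inequality of \lref{lem:zstarBoundsN}, which needs a lower bound on $z_{*}=z^{*}-1$. Here the only genuinely new observation is that, since $B_{z,k}(\nu)$ is nondecreasing in $k$ by \lref{lem:Bzkmono} and $B_{z,0}(\nu)=\lambda^{z}$, any integer $z$ with $\lambda^{z}>\delta$ violates the defining inequality $B_{z,k}(\nu)\le\delta$ of $z^{*}$; hence $z^{*}\ge\ln\delta^{-1}/\ln\lambda^{-1}$ and $z_{*}\ge\ln\delta^{-1}/\ln\lambda^{-1}-1$. Inserting this (using $1-\nu\epsilon>0$) and simplifying --- using once more that $\epsilon(1-\nu)=\lambda\epsilon$, which makes the $O(1)$ corrections telescope exactly into $-(1-\epsilon)(k+1)/(\lambda\epsilon)-2$ --- gives \eqref{eq:NlBinformCR}. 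The whole argument is elementary; I do not anticipate a real obstacle, only the bookkeeping needed to check that the various $+1$'s coming from the floors and from $z_{*}=z^{*}-1$ assemble into precisely the stated constants rather than something larger.
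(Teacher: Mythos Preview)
Your proposal is correct and follows essentially the same route as the paper's proof: the lower bound $k/(\nu\epsilon)$ via \lref{lem:LLP}; the upper bounds by combining the right inequality of \lref{lem:zstarBoundsN} with \lref{lem:z*UB3} (respectively \lref{lem:z*UB3smallnu} when $1/2\le\lambda<1$) and using $k/\nu-k=\lambda k/\nu$; and the lower bound \eqref{eq:NlBinformCR} by combining the left inequality of \lref{lem:zstarBoundsN} with $z^{*}\ge\lceil\ln\delta/\ln\lambda\rceil$, obtained from $B_{z,k}(\nu)\ge\lambda^{z}$. The algebraic simplification $-(1-\nu\epsilon)/(\lambda\epsilon)-1=-(1-\epsilon)/(\lambda\epsilon)-2$ that you flag is exactly what the paper uses as well.
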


According to Proposition~\ref{prop:Nk}, for a fixed number of allowed failures, 
the sample cost of our protocol is only $O\big( \frac{\ln \delta^{-1}}{\nu\epsilon}\big)$, which has 
the optimal scaling behaviors with respect to $\nu$, $\epsilon$, and $\delta$  as one can expect. 
When $\lambda=1/2$ for example,  Proposition~\ref{prop:Nk} implies that 
\begin{align}
	N_{k}(\epsilon,\delta,\lambda=1/2)
	< \frac{2k+4\ln\delta^{-1}+4\sqrt{k\ln\delta^{-1}}+\sqrt{2k}+4}{\epsilon} .
\end{align}

Next, we consider the high-precision limit $\epsilon,\delta \to 0$. 
\begin{proposition}[Equation (89) and Lemma 5.4 in \rcite{Classical22}]\label{prop:MLT}
	Suppose $0<\lambda,\epsilon,\delta< 1$ and $k\in\bbZ^{\geq 0}$. Then
	\begin{align}
		\lim_{\delta\to 0} \frac{ N_{k}(\epsilon,\delta,\lambda)}{\ln \delta^{-1}}
		&=\frac{1-\nu\epsilon}{\lambda \epsilon \ln \lambda^{-1}},
		\\
		\lim_{\delta,\epsilon\to 0} \frac{\epsilon N_{k}(\epsilon,\delta,\lambda)}{\ln \delta^{-1}}
		&=\frac{1}{\lambda \ln \lambda^{-1}},\label{eq:XMP98}
	\end{align}
	where the order of the two limits $\epsilon\to 0$ and $\delta\to 0$ does not matter in \eref{eq:XMP98}.
\end{proposition}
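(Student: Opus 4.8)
\noindent\textbf{Proof plan for Proposition~\ref{prop:MLT}.}
The plan is to reduce the statement to the sharp asymptotics already established for the classical sampling problem in the companion paper, and to sketch the partially self-contained part so that the role of \cite{Classical22} is isolated to one sharp estimate. First I would invoke the dictionary from the METHODS section: by \eqref{eq:epslamDef} the guaranteed infidelity $\overline{\epsilon}_\lambda(k,N,\delta)$ coincides with the upper confidence limit of \cite{Classical22}, and by Proposition~\ref{prop:epsMonoton} it is nonincreasing in $N$; hence $\{N\geq k+1 : \overline{\epsilon}_\lambda(k,N,\delta)\leq\epsilon\}$ is an up-set and $N_k(\epsilon,\delta,\lambda)$ from \eqref{eq:DefNTConst} is precisely the minimal sample size at which that upper confidence limit first reaches $\epsilon$. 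The first limit is then Eq.~(89) of \cite{Classical22}, and the joint limit \eqref{eq:XMP98}, including the order-independence of $\epsilon\to0$ and $\delta\to0$, is \cite[Lemma~5.4]{Classical22}.

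For a more self-contained route, the lower half of the first limit is immediate: dividing \eqref{eq:NlBinformCR} in Proposition~\ref{prop:Nk} by $\ln\delta^{-1}$ and sending $\delta\to0$ gives $\liminf_{\delta\to0}N_k(\epsilon,\delta,\lambda)/\ln\delta^{-1}\geq(1-\nu\epsilon)/(\lambda\epsilon\ln\lambda^{-1})$. The matching upper half cannot be extracted from the crude estimates in Lemma~\ref{lem:zstarBoundsN} or \eqref{eq:NuBinformCR}: with $B_{z,k}(\nu)\sim\binom{z}{k}\nu^k\lambda^{z-k}$ one gets $z^*(k,\delta,\lambda)=\ln\delta^{-1}/\ln\lambda^{-1}+o(\ln\delta^{-1})$, so those bounds only yield $\limsup_{\delta\to0}N_k(\epsilon,\delta,\lambda)/\ln\delta^{-1}\leq 1/(\lambda\epsilon\ln\lambda^{-1})$, which differs from the true value by the factor $1-\nu\epsilon$. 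Closing this gap requires the exact formula of Theorem~\ref{thm:AdvFidelity}: in the regime $\delta>B_{N,k}(\nu)$ the condition $\overline{\epsilon}_\lambda(k,N,\delta)\leq\epsilon$ is equivalent to $\tilde{\zeta}_\lambda(k,N,\delta,\hat z)\geq(1-\epsilon)\delta$, and one must locate $\hat z$ and control $g_{\hat z},g_{\hat z+1},h_{\hat z},h_{\hat z+1}$ from \eqref{eq:hzHomo}--\eqref{eq:gzHomo} when $N$ is near the threshold (itself of order $\ln\delta^{-1}$) and $k$ is fixed. Tracking the interplay between the polynomial prefactor $z^k$ and the exponential factor $\lambda^z$ sharply enough to pin the leading coefficient at $(1-\nu\epsilon)/(\lambda\epsilon\ln\lambda^{-1})$ is the main obstacle; this is exactly the computation performed in \cite{Classical22}, which I would cite rather than reproduce.

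Finally, the joint limit \eqref{eq:XMP98} follows from the first limit through the identity $\epsilon\cdot(1-\nu\epsilon)/(\lambda\epsilon\ln\lambda^{-1})=(1-\nu\epsilon)/(\lambda\ln\lambda^{-1})\to 1/(\lambda\ln\lambda^{-1})$ as $\epsilon\to0$; showing that $\epsilon\to0$ and $\delta\to0$ can be interchanged amounts to checking that the lower-order terms in the expansion of $N_k(\epsilon,\delta,\lambda)$ are controlled uniformly in $\epsilon$, which is again what the sharp expansion in \cite[Lemma~5.4]{Classical22} provides.
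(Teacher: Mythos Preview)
Your proposal is correct and matches the paper's approach exactly: the paper's entire proof is the one-line remark that the proposition follows from Eq.~(89) and Lemma~5.4 of \cite{Classical22} via the reduction in the METHODS section, which is precisely your first paragraph. Your additional discussion of what the in-paper bounds \eqref{eq:NlBinformCR} and Lemma~\ref{lem:zstarBoundsN} can and cannot deliver (the correct $\liminf$ but an upper bound off by the factor $1-\nu\epsilon$) is accurate and useful context, but goes beyond what the paper itself provides.
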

This proposition follows from Lemma 5.4 in our companion paper \cite{Classical22} according to the discussions in the Methods section. 
By this proposition, in the 
high-precision limit $\epsilon,\delta \to 0$, the  efficiency of our protocol is determined by the factor $(\lambda\ln\lambda^{-1})^{-1}$, which attains its minimum $\rme$ when $\lambda=1/\rme$. 
Hence, the strategy with $\lambda=1/\rme$ can achieve the smallest sample cost for high-precision verification.

\subsection{Minimum number of tests for robust verification}\label{app:OptNadv}
For $0<\lambda,\delta,\epsilon<1$ and $0\leq r<1$, the  minimum number of tests required for  robust verification  in the adversarial scenario is denoted by  $N_{\rm min}(\epsilon,\delta,\lambda,r)$ as defined in \eref{eq:OptNadvDef} in the main text.  
Algorithm~\ref{alg:NoptAdv} proposed in the main text can be used to calculate $N_{\rm min}(\epsilon,\delta,\lambda,r)$ and 
the corresponding number of allowed failures.

To understand why Algorithm~\ref{alg:NoptAdv} works,
denote by $k_{\rm min}(\epsilon,\delta,\lambda,r)$ the minimum nonnegative integer $k$ such that \eref{eq:robustCondition} in the main text holds 
for some $N\geq k+1$.  
When $r=0$ for example, we have $k_{\rm min}(\epsilon,\delta,\lambda,r)=0$, given that $B_{N,0}(0)=1$ and $\lim_{N\to \infty }\bar{\epsilon}_{\lambda}(k=0, N, \delta)=0$ \cite{ZhuEVQPSshort19,ZhuEVQPSlong19}. To simplify the notation, we shall use $k_{\rm min}$ as  a shorthand for $k_{\rm min}(\epsilon,\delta,\lambda,r)$. 
Then $N_{\rm min}(\epsilon,\delta,\lambda,r)$ can be expressed as 
\begin{align}
	N_{\rm min}(\epsilon,\delta,\lambda,r)&\;=
	\min_{k\geq k_{\rm min}} N_{k}(\epsilon,\delta,\lambda,r), \label{eq:DefN_kB3}
	\\
	N_{k}(\epsilon,\delta,\lambda,r)
	&:= 
	\min \left\{N\in\bbZ^{k+1} \big|\, \bar{\epsilon}_{\lambda}(k, N, \delta)\leq\epsilon, B_{N,k}(\nu r\epsilon)\geq1-\delta \right\} 
	\nonumber\\
	&\;=
	\min \left\{N\in\bbZ^{k+1} \big|\, \bar{\epsilon}_{\lambda}(k, N, \delta)\leq\epsilon \right\}	
	\qquad \forall k\in\bbZ^{k_{\rm min}}. \label{eq:DefN_kF19}
\end{align}
The second equality in \eref{eq:DefN_kF19}  follows from the definition of $k_{\rm min}$ and the fact that $B_{N,k}(\nu r\epsilon)$ is nonincreasing in $N$ for $N\geq k$ by \lref{lem:Bzkmono} in \ref{app:UsefulLemma}. 
Note that $N_{k}(\epsilon,\delta,\lambda,r)$ reduces to $N_{k}(\epsilon,\delta,\lambda)$ defined in \eref{eq:DefNTConst} when $k\geq k_{\rm min}$. 
Recall that $\overline{\epsilon}_\lambda(k,N,\delta)$ is nondecreasing in $k$ for $0\leq k\leq N-1$ by Proposition~\ref{prop:epsMonoton}, so  $N_{k}(\epsilon,\delta,\lambda,r)$
is nondecreasing in $k$ for $k\geq k_{\rm min}$, which  implies that
\begin{align}\label{eq:optNRewriteB5}
	N_{\rm min}(\epsilon,\delta,\lambda,r)=N_{k_{\rm min}}(\epsilon,\delta,\lambda,r).
\end{align} 
By  definition in \eref{eq:DefN_kF19}, \eref{eq:robustCondition} in the main text holds when 
$N=\!N_{\rm min}(\epsilon,\delta,\lambda,r)$ and $k=k_{\rm min}(\epsilon,\delta,\lambda,r)$.  

In Algorithm~\ref{alg:NoptAdv}, the aim of steps 1--11 is to find $k_{\rm min}$. 
Notably, steps 1--2 aim to find $k_{\rm min}$ in the case $r=0$; 
steps 3--10 aim to find $k_{\rm min}$ in the case $r>0$ by virtue of  the following properties:   
(1) $B_{M,k}(\nu r\epsilon)$ is strictly decreasing in $M$ for $M\geq k$ by \lref{lem:Bzkmono}; and
(2) $\overline{\epsilon}_\lambda(k,M,\delta)$ is nonincreasing in $M$ for $M\geq k+1$ by Proposition \ref{prop:epsMonoton}.  
Steps 12--13 aim to find $N_{\rm min}$ by virtue of  Eqs.~\eqref{eq:DefN_kF19} and \eqref{eq:optNRewriteB5}. 

\begin{figure}[b]
	\includegraphics[width=7.4cm]{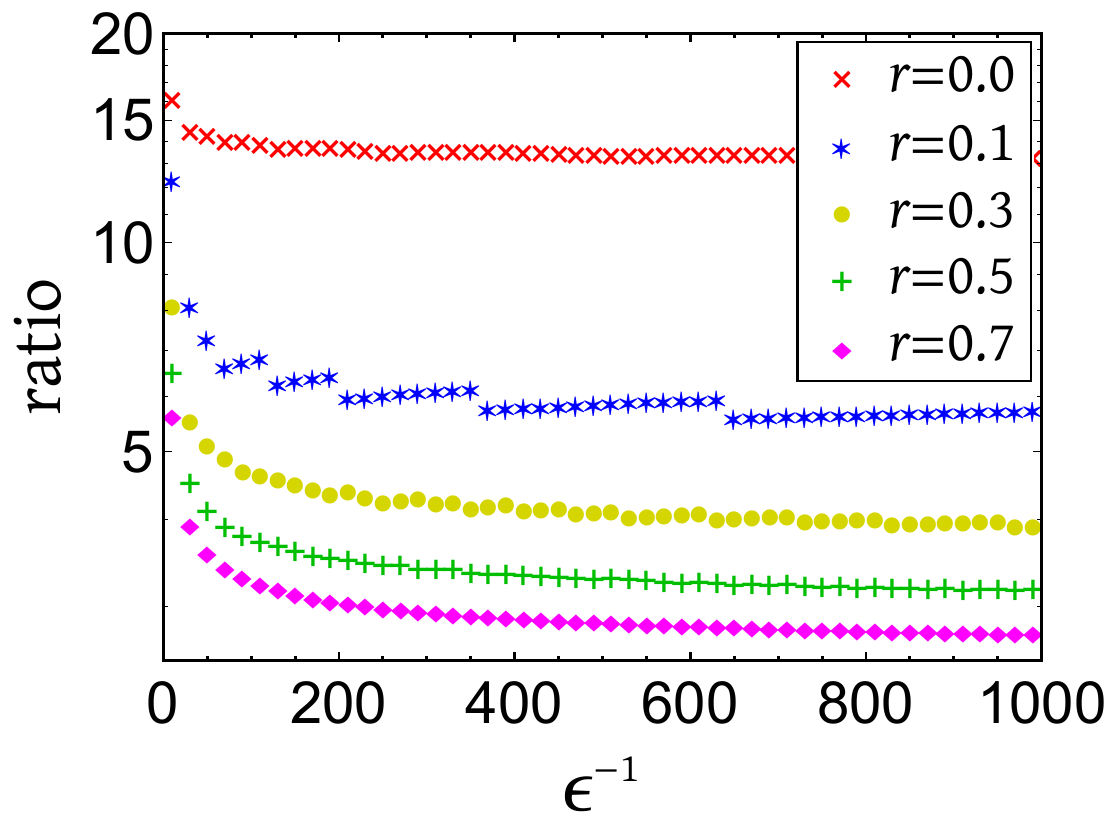}
	\caption{\label{fig:RatioNubNadv}
		The ratio of the upper bound in \eref{eq:UB1N_opt} over $N_{\rm min}(\epsilon,\delta,\lambda,r)$
		with $\lambda=1/2$ and $\delta=\epsilon$. 
	}
\end{figure}

Algorithm~\ref{alg:NoptAdv} is also complementary to 
Theorem \ref{thm:iidHighProb} and Proposition \ref{prep:iidHighProbSmallnu}, which
imply the following upper bounds for $N_{\rm min}(\epsilon,\delta,\lambda,r)$, 
\begin{align}
	N_{\rm min}(\epsilon,\delta,\lambda,r)
	&\leq
	\left\lceil \bigg[ \frac{\lambda\sqrt{2\nu}+1}{\lambda \nu (1-r)} \bigg]^2 \;
	\frac{\ln\delta^{-1} +4\lambda\nu^2}{ \epsilon} \right\rceil 
	\qquad \forall\, 0<\lambda<1,0<\delta\leq\frac{1}{2}, \label{eq:UB1N_opt}
	\\
	N_{\rm min}(\epsilon,\delta,\lambda,r)
	&\leq
	\left\lceil 2\bigg[ \frac{\sqrt{2}+\sqrt{\lambda}}{\sqrt{\lambda}\,(1-r)} \bigg]^2\frac{ \ln\delta^{-1} + \nu}{ \nu \epsilon} \right\rceil
	\qquad \forall\, \frac{1}{2}\leq\lambda<1, 0<\delta\leq\frac{1}{2}.  
\end{align}
These bounds prove that the sample cost of our protocol is only $O(\epsilon^{-1}\ln\delta^{-1})$, which achieves the optimal scaling behaviors with respect to the infidelity $\epsilon$ and significance level $\delta$. 
Numerical calculation based on Algorithm~\ref{alg:NoptAdv} shows that these bounds are still a bit conservative, especially when $r$ is small. 
Nevertheless, these bounds are quite informative about the general trends. 
In the case $\lambda=1/2$, the ratio of the bound in \eref{eq:UB1N_opt} over 
$N_{\rm min}(\epsilon,\delta,\lambda,r)$ is illustrated in Supplementary Figure~\ref{fig:RatioNubNadv}, which shows that this ratio is decreasing in $r$ and is not larger than 14 in the high-precision limit $\epsilon,\delta\to0$.

\subsection{Proofs of Propositions \ref{prop:BoundepsSmallnu}, \ref{prop:UBtestsNumberSmallnu}, \ref{prep:iidHighProbSmallnu}, and \ref{prep:kboundsSmallnu}}\label{sec:proofsmallnu}

\begin{proof}[Proof of Proposition \ref{prop:BoundepsSmallnu}]
	We have
	\begin{align}
		\bar{\epsilon}_{\lambda}(\lfloor\nu s N\rfloor,N,\delta)
		&\stackrel{(a)}{\leq}
		\frac{z^*(\lfloor\nu s N\rfloor,\delta,\lambda) -\lfloor\nu s N\rfloor+1+\sqrt{\lambda \lfloor\nu s N\rfloor}}{\lambda N } 
		\nonumber \\
		&\stackrel{(b)}{\leq}
		\frac{1}{\lambda N }\left[ \bigg(\frac{\lfloor\nu s N\rfloor+2\lambda\ln\delta^{-1}+\sqrt{2\lambda \lfloor\nu s N\rfloor\ln\delta^{-1}}}{\nu}+1\bigg) 
		-\lfloor\nu s N\rfloor+1+\sqrt{\lambda \lfloor\nu s N\rfloor}\right]
		\nonumber \\
		&\leq
		\frac{1}{\lambda N }\left[\lambda sN + \bigg(\frac{\sqrt{2 \ln\delta^{-1}}+\nu}{\nu}\bigg)\sqrt{\lambda \nu s N}  +\frac{2\lambda\ln\delta^{-1}}{\nu}+2\right]
		\nonumber \\
		&\stackrel{(c)}{\leq}
		\frac{1}{\lambda N }\left[\lambda sN + \frac{2\sqrt{\ln\delta^{-1}}\sqrt{\lambda \nu s N} }{\nu}  +\frac{2\lambda\ln\delta^{-1}}{\nu}+2\right]
		\nonumber \\&=
		s+ 2\sqrt{\frac{s \ln\delta^{-1}}{\nu\lambda N}} + \frac{2\ln\delta^{-1}}{\nu N}+\frac{2}{\lambda N}, 
	\end{align}
	which confirms Proposition~\ref{prop:BoundepsSmallnu}. 
	Here $(a)$ follows from \lref{lem:zeta2Bound2}; 
	$(b)$ follows from \lref{lem:z*UB3smallnu}; and $(c)$ follows from 
	the inequality $\sqrt{2 \ln\delta^{-1}}+\nu\leq 2\sqrt{\ln\delta^{-1}}$ for $0<\delta\leq1/3$ and $0<\nu\leq1/2$. 
\end{proof}

\begin{proof}[Proof of Proposition \ref{prop:UBtestsNumberSmallnu}]
	By assumption we have $1/2\leq \lambda<1$, $0\leq s<\epsilon<1$, $0<\delta\leq1/2$, and $N\in \bbZ^{\geq1}$. Therefore,
	\begin{align}
		&\bar{\epsilon}_{\lambda}(\lfloor\nu s N\rfloor,N,\delta) \leq \epsilon , 
		\nonumber\\   \stackrel{(a)}{\Leftarrow} \quad
		&\frac{z^*(\lfloor\nu s N\rfloor,\delta,\lambda)-\lfloor\nu s N\rfloor+1+\sqrt{\lambda\lfloor\nu s N\rfloor}}{\lambda N} \leq \epsilon ,
		\nonumber\\   \stackrel{(b)}{\Leftarrow} \quad
		&\left(\frac{\lfloor\nu s N\rfloor+2\lambda\ln\delta^{-1}+\sqrt{2\lambda \lfloor\nu s N\rfloor\ln\delta^{-1}}}{\nu}+1 \right) -\lfloor\nu s N\rfloor+1+\sqrt{\lambda\lfloor\nu s N\rfloor} \leq \epsilon\lambda N,
		\nonumber\\   \Leftarrow\quad
		&\left(\frac{\nu s N+2\lambda\ln\delta^{-1}+\sqrt{2\lambda \nu s N\ln\delta^{-1}}}{\nu}+1  \right)
		-\nu s N+1+\sqrt{\lambda\nu s N} \leq \epsilon\lambda N,
		\nonumber\\   \Leftarrow\quad
		&\lambda\nu (\epsilon-s)N - \left(\sqrt{2 \ln\delta^{-1}}+\nu\right) \sqrt{\lambda\nu s N} - \left(2\lambda\ln\delta^{-1}+2\nu\right)
		\geq 0, \label{eq:T2solveSmallnu}
		\\
		\stackrel{(c)}{\Leftrightarrow} \quad
		&N\geq \left\lceil \frac{1}{\left[ 2\lambda\nu (\epsilon-s) \right]^2}
		\left[\sqrt{\lambda\nu s \left(\sqrt{2 \ln\delta^{-1}}+\nu\right)^2
			+4\lambda\nu (\epsilon-s) \left(2\lambda\ln\delta^{-1}+2\nu\right)}
		+\left(\sqrt{2 \ln\delta^{-1}}+\nu\right)\sqrt{\lambda\nu s} \right]^2 \right\rceil,
		\nonumber\\    \stackrel{(d)}{\Leftarrow} \quad
		&N\geq \left\lceil \frac{1}{\left[ \lambda\nu (\epsilon-s) \right]^2}
		\left[ \lambda\nu s \left(\sqrt{2 \ln\delta^{-1}}+\nu\right)^2
		+2\lambda\nu (\epsilon-s) \left(2\lambda\ln\delta^{-1}+2\nu\right) \right] \right\rceil,
		\nonumber\\    \stackrel{(e)}{\Leftarrow} \quad
		&N\geq \left\lceil \frac{1}{\lambda\nu (\epsilon-s)^2}
		\left[ 4 s \ln\delta^{-1} + 2\nu^2 s +2 (\epsilon-s) \left(2\lambda\ln\delta^{-1}+2\nu\right) \right] \right\rceil,
		\nonumber\\    \Leftrightarrow \quad
		&N\geq \left\lceil \frac{1}{\lambda\nu (\epsilon-s)^2}
		\left[ (4 \nu s +4\lambda \epsilon) \ln\delta^{-1} + (4\nu\epsilon +2\nu^2 s-4\nu s) \right] \right\rceil,
		\nonumber\\    \stackrel{(f)}{\Leftarrow} \quad
		&N\geq \frac{1}{\lambda\nu (\epsilon-s)^2}
		\left( 4 \epsilon \ln\delta^{-1} + 4\nu \epsilon \right) 
		\quad   \Leftrightarrow \quad
		N\geq \frac{4\, \epsilon }{\lambda\nu (\epsilon-s)^2}
		\left(  \ln\delta^{-1} + \nu \right) , 
		\nonumber 
	\end{align}
	which confirms Proposition \ref{prop:UBtestsNumberSmallnu}. 
	Here the relation $A\Leftarrow B$ means  $B$ is a sufficient condition of $A$; 
	$(a)$ follows from \lref{lem:zeta2Bound2}; 
	$(b)$ follows from \lref{lem:z*UB3smallnu}; 
	$(c)$ is obtained by directly solving \eref{eq:T2solveSmallnu};
	both $(d)$ and $(e)$  follow from the inequality $(x+y)^2\leq 2x^2+2y^2$; 
	and $(f)$ follows from the assumption $s<\epsilon$. 
\end{proof}

\begin{proof}[Proof of Proposition \ref{prep:iidHighProbSmallnu}]
	Let
	\begin{align}
		s= \frac{\sqrt{\lambda}+\sqrt{2}\, r}{\sqrt{2}+\sqrt{\lambda}}\,\epsilon,  \qquad	k= \left\lfloor \nu s N\right\rfloor,
		\qquad
		N\geq \frac{4\, \epsilon\, (\ln\delta^{-1} + \nu )}{\lambda\nu (\epsilon-s)^2} 
		=2\bigg[ \frac{\sqrt{2}+\sqrt{\lambda}}{\sqrt{\lambda}\,(1-r)} \bigg]^2\frac{ \ln\delta^{-1} + \nu}{ \nu \epsilon}.
	\end{align}
	Then  $\bar{\epsilon}_{\lambda}(k, N, \delta)\leq\epsilon$ by Proposition \ref{prop:UBtestsNumberSmallnu}, which confirms the first inequality of \eref{eq:robustCondition} in the main text. 
	
	To complete the proof, it remains to prove the second inequality of \eref{eq:robustCondition} in the main text, that is, $B_{N,k}(\nu r\epsilon)\geq1-\delta$. 
	If $r=0$, then this inequality is obvious. 
	If $r>0$, then 
	\begin{align}
		B_{N,k}(\nu r\epsilon)
		&=1-B_{N,N-k-1}(1-\nu r\epsilon)
		\stackrel{(a)}{\geq} 1-\exp\left[-N D\!\left( \frac{N-k-1}{N}\bigg\| 1-\nu r\epsilon \right) \right] 
		\nonumber \\
		&\stackrel{(b)}{\geq} 1-\rme^{-N D\left(1- \nu s \|1-\nu r\epsilon \right) }
		\stackrel{(c)}{=}     1-\rme^{-N D\left(\nu s \| \nu r\epsilon \right) }, \label{eq:44TH3smallnu}
	\end{align}
	where $(a)$ follows from the Chernoff bound \eqref{eq:ChernoffB} and the inequality $N\nu r\epsilon\leq k+1$;
	$(b)$ follows from \lref{lem:DpqMonoton} and the inequality $(k+1)/N\geq \nu s$;
	$(c)$ follows from the relation $D(x\|y)=D(1-x\|1-y)$.
	In addition, we have
	\begin{align}
		N D(\nu s \| \nu r\epsilon )
		&\geq \frac{4\epsilon\ln\delta^{-1}}{\lambda \nu (\epsilon-s)^2} D(\nu s \| \nu r\epsilon )
		\geq \frac{4\epsilon\ln\delta^{-1}}{\lambda \nu (\epsilon-s)^2} \frac{(\nu s- \nu r\epsilon)^2}{2\nu \epsilon}
		= \frac{2(s- r\epsilon)^2}{\lambda (\epsilon-s)^2} \ln\delta^{-1}
		\nonumber \\
		&= \frac{2\Bigl(\frac{\sqrt{\lambda}+\sqrt{2}\, r}{\sqrt{2}+\sqrt{\lambda}}\epsilon-r\epsilon\Bigr)^2}{\lambda \Bigl(\epsilon-\frac{\sqrt{\lambda}+\sqrt{2}\, r}{\sqrt{2}+\sqrt{\lambda}}\epsilon\Bigr)^2} \ln\delta^{-1}
		=\ln\delta^{-1},  \label{eq:45TH3smallnu}
	\end{align}
	where the second inequality follows from the inequality $D(x\|y)\geq(x-y)^2/(2x)$ for $x\geq y$ and the fact that $s\leq \epsilon$.
	Equations~\eqref{eq:44TH3smallnu} and~\eqref{eq:45TH3smallnu} together confirm the inequality $B_{N,k}(\nu r\epsilon)\geq1-\delta$ and complete the proof of Proposition~\ref{prep:iidHighProbSmallnu}.
\end{proof}

\begin{proof}[Proof of Proposition~\ref{prep:kboundsSmallnu}]
	For given $1/2\leq\lambda<1$, $0<\epsilon <1$, $0<\delta\leq1/4$, and $N,k\in\bbZ^{\geq 0}$, we have
	\begin{align}
		\begin{split}
			&k\leq 
			\bigg\lfloor\nu \epsilon N - 2\sqrt{\frac{\nu\epsilon N\ln\delta^{-1}}{\lambda}}
			-2\ln\delta^{-1}-\frac{2\nu}{\lambda} \bigg\rfloor,
			\\   \stackrel{(a)}{\Rightarrow} \quad 
			&k
			\leq \nu \epsilon N - \Big( \sqrt{2 \ln\delta^{-1}}+\nu \Big) \sqrt{\frac{\nu\epsilon N}{\lambda}}
			- \left(2\ln\delta^{-1}+\frac{2\nu}{\lambda}\right),
			\\   \Rightarrow\quad 
			&k\leq \nu \epsilon N,\quad  
			\frac{\lambda}{\nu}k +\bigg( \frac{\sqrt{2\lambda \ln\delta^{-1}}}{\nu}+\sqrt{\lambda} \bigg) \sqrt{\nu \epsilon N} 
			+ \left(\frac{2\lambda\ln\delta^{-1}}{\nu}+2-  \lambda \epsilon N \right)\leq 0,
			\\   \Rightarrow\quad 
			&k\leq N-1,\quad  
			\frac{\lambda}{\nu}k +\bigg( \frac{\sqrt{2\lambda \ln\delta^{-1}}}{\nu}+\sqrt{\lambda} \bigg) \sqrt{k} 
			+ \left(\frac{2\lambda\ln\delta^{-1}}{\nu}+2-  \lambda \epsilon N \right)\leq 0,
			\\   \Rightarrow\quad 
			&k\leq N-1,\quad  
			\bigg(\frac{k+2\lambda\ln\delta^{-1}+\sqrt{2\lambda k\ln\delta^{-1}}}{\nu}+1\bigg) -k+1+\sqrt{\lambda k} 
			\leq \epsilon\lambda N,
			\\   \stackrel{(b)}{\Rightarrow}\quad 
			&k\leq N-1,\quad  
			\frac{z^*(k,\delta,\lambda)-k+1+\sqrt{\lambda k}}{\lambda N} \leq \epsilon 
			\\
			\stackrel{(c)}{\Rightarrow} \quad  
			&k\leq N-1,\quad  
			\bar{\epsilon}_{\lambda}(k,N,\delta) \leq \epsilon , 
		\end{split}
	\end{align}
	which confirm Proposition~\ref{prep:kboundsSmallnu}.
	Here 
	$(a)$ follows from the inequality $\sqrt{2 \ln\delta^{-1}}+\nu\leq 2\sqrt{\ln\delta^{-1}}$ for $0<\delta\leq1/4$ and $0<\nu\leq1/2$; 
	$(b)$ follows from \lref{lem:z*UB3smallnu}; and
	$(c)$ follows from \lref{lem:zeta2Bound2}.
\end{proof}

\subsection{Proofs of \lref{lem:zstarBoundsN} and Proposition \ref{prop:Nk}}\label{sec:prooffixkadv}

\begin{proof}[Proof of \lref{lem:zstarBoundsN}]
	First, we have
	\begin{align}
		N_{k}(\epsilon,\delta,\lambda)
		&= \min\{ N\geq k+1\,|\, \bar{\epsilon}_{\lambda}(k,N,\delta)\leq \epsilon \} 
		\leq \min\left\{ N\geq k+1\,\Bigg|\,\frac{z^*-k+1+\sqrt{\lambda k}}{\lambda(N-z^*)+z^*-k+1+\sqrt{\lambda k} } \leq \epsilon \right\} \nonumber \\
		&= \biggl\lceil\frac{z^*-k+1+\sqrt{\lambda k}}{\lambda}\Bigl(\frac{1}{\epsilon}-1\Bigr)+z^*\biggr\rceil 
		< \frac{z^*-k+1+\sqrt{\lambda k}}{\lambda \epsilon},  
	\end{align}
	which confirms the upper bound in \lref{lem:zstarBoundsN}. Here the first inequality follows from \lref{lem:zeta2Bound2}. 
	
	In addition, we have
	\begin{align}
		N_{k}(\epsilon,\delta,\lambda)
		&= \min\{ N\geq k+1\,|\, \bar{\epsilon}_{\lambda}(k,N,\delta)\leq \epsilon \} \geq \min\left\{ N\geq k+1\,\Bigg|\,\frac{z_*-k}{\lambda(N+1)+\nu z_*-k } \leq \epsilon \right\} 
		\nonumber \\
		&= \biggl\lceil \frac{(1-\nu\epsilon)z_*-(1-\epsilon)k}{\lambda \epsilon} - 1 \biggr\rceil , 
	\end{align}
	which confirms the lower bound in \lref{lem:zstarBoundsN}. Here the inequality follows from \lref{lem:zeta2Bound2}. 
\end{proof}

\begin{proof}[Proof of Proposition~\ref{prop:Nk}]
	If integer $N$ satisfies $k+1 \leq N<k/(\nu\epsilon)$, then  \lref{lem:LLP} implies that 
	$\bar{\epsilon}_{\lambda}(k,N,\delta)> \epsilon$. 
	This fact and the definition in \eref{eq:DefNTConst} together confirm the lower bound in \eref{eq:NuBinformCR}. 
	
	The upper bound in \eref{eq:NuBinformCR} can be derived as follows, 
	\begin{align} 
		N_{k}(\epsilon,\delta,\lambda)
		&\stackrel{(a)}{<}
		\frac{z^*-k+1+\sqrt{\lambda k}}{\lambda \epsilon}
		\stackrel{(b)}{\leq} 
		\frac{\big( \frac{k}{\nu}+\frac{\sqrt{2\nu k\ln\delta^{-1}}}{2\nu^2}+\frac{\ln\delta^{-1}}{2\nu^2}+1 \big)  -k+1+\sqrt{\lambda k}}{\lambda \epsilon}
		\nonumber \\
		&=\frac{k}{\nu\epsilon} + \frac{\ln\delta^{-1}}{2\nu^2\lambda \epsilon} 
		+ \frac{\sqrt{2\nu k\ln\delta^{-1}}}{2\nu^2\lambda \epsilon} + \frac{\sqrt{\lambda k}}{\lambda \epsilon} + \frac{2}{\lambda \epsilon},   
	\end{align}
	where $(a)$ follows from the upper bound in \lref{lem:zstarBoundsN}, and $(b)$ follows from \lref{lem:z*UB3}.

	Equation \eqref{eq:NlBinformCR} can be derived as follows,  
	\begin{align} 
		N_{k}(\epsilon,\delta,\lambda)
		&\stackrel{(a)}{\geq}
		\frac{(1-\nu\epsilon)(z^*-1)-(1-\epsilon)k}{\lambda \epsilon} - 1 
		\stackrel{(b)}{\geq} 
		\frac{(1-\nu\epsilon)\left(\frac{\ln \delta}{\ln \lambda}-1 \right) -(1-\epsilon)k}{\lambda \epsilon} - 1 
		\nonumber \\
		&= \frac{(1-\nu\epsilon)\ln\delta^{-1}}{\lambda\epsilon\ln\lambda^{-1}} - \frac{(1-\epsilon)(k+1)}{\lambda \epsilon} -2,  
	\end{align}
	where $(a)$ follows from the lower bound in \lref{lem:zstarBoundsN}, and $(b)$ holds because
	\begin{align}
		z^*&=\min\{z \in\bbZ^{\geq k} \, |\, B_{z,k}(\nu)\leq \delta \}
		\geq \min\bigl\{z\in\bbZ^{\geq k} \, |\,\lambda^z\leq \delta \bigr\}
		=\biggl\lceil\frac{\ln \delta}{\ln \lambda}\biggr\rceil .
	\end{align}

	Equation \eqref{eq:NuBinformCRsmallnu}  can be derived as follows,  
	\begin{align} 
		N_{k}(\epsilon,\delta,\lambda)
		&\stackrel{(a)}{<}
		\frac{z^*-k+1+\sqrt{\lambda k}}{\lambda \epsilon}
		\stackrel{(b)}{\leq} 
		\frac{\frac{1}{\nu} \big( k+2\lambda\ln\delta^{-1}+\sqrt{2\lambda k\ln\delta^{-1}}+\nu \big)  -k+1+\sqrt{\lambda k}}{\lambda \epsilon}
		\nonumber \\
		&=\frac{k}{\nu\epsilon} + \frac{2\ln\delta^{-1}}{\nu \epsilon} 
		+ \frac{\sqrt{2\lambda k\ln\delta^{-1}}}{\lambda\nu \epsilon} + \frac{\sqrt{\lambda k}}{\lambda \epsilon} + \frac{2}{\lambda \epsilon},  
	\end{align}
	where $(a)$ follows from the upper bound in \lref{lem:zstarBoundsN}, and $(b)$ follows from \lref{lem:z*UB3smallnu}. 
\end{proof}

\section{Robust and efficient verification of quantum states in the i.i.d.\! scenario}\label{app:IIDsetting}

In the Methods section, we proposed a robust and efficient protocol for verifying pure states in the i.i.d.\! scenario. 
In this section, we provide more results about QSV in the i.i.d.\! scenario and more details on the performance of our protocol.  
In \ref{app:IIDsetting6A}, we review the basic framework of QSV in the i.i.d.\! scenario adopted by most previous works, 
and explain its limitation with respect to robustness. 
In \ref{app:iidrelation}, we clarify the relation between the guaranteed infidelity $\bar{\epsilon}^{\,\iid}_{\lambda}(k,N,\delta)$ of our protocol and the companion paper \cite{Classical22}. 
In \ref{app:IIDsetting6C}, we prove Propositions~\ref{prop:epsiidMonoton}--\ref{prop:iidHighProbiid} presented in the Methods section. 
In \ref{app:IIDsetting6D}, we clarify the performance of our protocol
for a fixed number of allowed failures. 
In \ref{app:IIDsetting6E}, we clarify the number of allowed failures for a fixed number of tests. 
In \ref{app:IIDsetting6F}, we explain Algorithm~\ref{alg:iidNoptAdv} proposed in the Methods section.
In \ref{app:IIDsetting6G}, we derive the bounds  for the minimum number of 
tests $N_{\rm min}^{\rm iid}(\epsilon,\delta,\lambda,r)$ presented in \eref{eq:iidNoptUB} in the main text and 
illustrate the tightness of these bounds.

\subsection{QSV in the i.i.d.\! scenario: basic framework and limitations}\label{app:IIDsetting6A}

Suppose a quantum device is expected to produce the pure state $|\Psi\>\in\caH$, but actually produces the  states $\sigma_1,\sigma_2,\dots,\sigma_N$  in $N$ runs. In the i.i.d.\! scenario, all these states are identical to $\sigma$. The goal of Alice is to verify whether $\sigma$ is sufficiently close to the target state $|\Psi\>$. To this end, in each run Alice can perform a random test from  a set of accessible tests, where each test corresponds to a two-outcome measurement \cite{PLM18,ZhuEVQPSlong19}. The overall effect of these tests can also be described by a two-outcome measurement $\{\Omega, \openone-\Omega\}$, where $\Omega$ satisfies $0\leq \Omega\leq \openone$ and  $\Omega|\Psi\>=|\Psi\>$, so that the target state can always pass each test. 
If all $N$ tests are passed,  then Alice accepts the states prepared; otherwise, she rejects. 

When the infidelity between $\sigma$ and the target state $|\Psi\>$ is $\epsilon_\sigma$, 
the maximum probability that $\sigma$ can pass each test on average is given by \cite{PLM18,ZhuEVQPSlong19} 
\begin{equation}\label{eq:NAPrPass1test}
	\max_{\sigma\in\mathcal{D}(\caH)} \; \tr(\Omega \sigma) 
	= 1- [1-\lambda(\Omega)]\epsilon_\sigma
	= 1- \nu(\Omega)\epsilon_\sigma,
\end{equation}
where 
$\lambda(\Omega):=\|\Omega-|\Psi\>\<\Psi|\|$ is the second largest eigenvalue of $\Omega$, 
and $\nu(\Omega):=1-\lambda(\Omega)$ is the spectral gap from the largest eigenvalue. 
The probability of passing all $N$ tests is at most $[1-\nu(\Omega)\epsilon_\sigma]^N$.
To verify the target state $|\Psi\>$ within infidelity $\epsilon$ and significance level $\delta$, which means $[1-\nu(\Omega)\epsilon]^N\leq \delta$,
it suffices to choose the following number of tests \cite{PLM18,ZhuEVQPSlong19}, 
\begin{equation}\label{eq:NumberTestNon}
	N_{\rm NA}(\epsilon, \delta, \lambda)
	= \biggl\lceil\frac{ \ln \delta}{\ln[1-\nu(\Omega)\epsilon]}\biggr\rceil. 
\end{equation}
For many quantum states of interest, verification strategies with large spectral gaps  can be constructed using LOCC \cite{PLM18,ZH3,ZH4,LHZ19,Wang19,Yu19,LiGHZ19,Liu19,Li21,HayaTake19,Dangn20,LiuYC21,ZLC22,CLZ22}. So these states can be verified efficiently in the sense that deviation from the target state can be detected efficiently.

In the basic framework outlined above,  Alice can draw a meaningful conclusion about the state prepared  if  all $N$ tests are passed, but little information can be extracted  if one or more failures are observed. If the state prepared is not perfect, then it may be rejected with a high probability even if it has a high fidelity. 
For example, suppose $0\leq\lambda<1$, $0<\delta<1$, and  $\epsilon_\sigma=\epsilon/2\leq1/4$, where $\epsilon$ and $\delta$ are the target infidelity and significance level; 
then the probability that Alice accepts is upper bounded by 
\begin{align}\label{eq:NONpassPrIIDUB}
	p_{\rm acc}
	\leq
	\left( 1-\frac{\nu\epsilon}{2}\right) ^{N_{\rm NA}(\epsilon, \delta, \lambda)}
	\leq 
	\delta^{\, 0.415},  
\end{align}
where the second inequality is proved below.  
This probability decreases rapidly as $\delta$ decreases, which means  previous verification protocols
are not robust to noise in state preparation. 
As a consequence, many repetitions are necessary to guarantee that Alice accepts the state $\sigma$ at least once. 
To achieve confidence level $1-\delta$ for example, the number of repetitions required is given by [cf. \eref{eq:repetitionNum}]
\begin{align}
	M_{\rm NA}
	= \left\lceil \frac{\ln\delta}{\ln\left( 1-p_{\text{acc}}\right) } \right\rceil
	\geq  \frac{\ln\delta^{-1}}{p_{\text{acc}}}-\ln\delta^{-1}
	\geq  \frac{\ln\delta^{-1}}{\delta^{\, 0.415}}-\ln\delta^{-1}
	\approx \frac{\ln\delta^{-1}}{\delta^{\, 0.415}}. 
\end{align}
Accordingly, the total sample cost rises up to 
\begin{align}
	N_{\rm NA} M_{\rm NA} 
	= \biggl\lceil\frac{ \ln \delta}{\ln(1-\nu\epsilon)}\biggr\rceil \left\lceil \frac{\ln\delta}{\ln\left( 1-p_{\text{acc}}\right) } \right\rceil
	\geq \Theta\left( \frac{(\ln\delta)^2}{ \delta^{\, 0.415} \epsilon}  \right) , 
\end{align}
which is substantially larger than the sample cost in \eref{eq:NumberTestNon}, which does not take robustness into account.

\begin{proof}[Proof of \eref{eq:NONpassPrIIDUB}]
	For $0<c<1$ and $0<x<1$ define the function
	\begin{align}\label{eq:etax}
		\eta(c,x) := \frac{\ln(1-cx)}{\ln(1-x)}. 
	\end{align}	
	By assumption we have $0\leq\lambda<1$, $0<\delta<1$, and $0<\epsilon\leq1/2$; therefore,
	\begin{align}
		\left( 1-\frac{\nu\epsilon}{2}\right) ^{N_{\rm NA}(\epsilon, \delta, \lambda)}
		&\stackrel{(a)}{\leq}  \exp\left[\frac{ \ln \delta}{\ln(1-\nu\epsilon)} \ln \left( 1-\frac{\nu\epsilon}{2}\right) \right]
		=  \exp\left[  (\ln \delta)  \, \eta\left(  \frac{1}{2}, \nu\epsilon\right)  \right]
		\nonumber \\
		&\stackrel{(b)}{\leq} \exp\left[  (\ln \delta)  \, \eta\left(  \frac{1}{2}, \frac{1}{2} \right) \right]
		=  \exp\left[  (\ln \delta)  \frac{\ln(3/4)}{\ln(1/2)}  \right] 
		\stackrel{(c)}{\leq} \delta^{\,0.415}, \label{eq:NONpdelexp1}
	\end{align}
	where $(a)$ follows from \eref{eq:NumberTestNon}; $(b)$ follows from \lref{lemma:Monotetax} below and the assumption $\nu\epsilon\leq1/2$; 
	and $(c)$ follows from the inequality $\ln(3/4)/\ln(1/2)>0.415$.
\end{proof}

\begin{lemma}\label{lemma:Monotetax}
	Suppose $0<c<1$. Then $\eta(c,x)$  defined in \eref{eq:etax}
	is strictly decreasing in $x$ for $0<x<1$.  
\end{lemma}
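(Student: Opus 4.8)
The plan is to show directly that $\partial_x \eta(c,x) < 0$ on $(0,1)$ for each fixed $c\in(0,1)$. Applying the quotient rule to $\eta(c,x)=\ln(1-cx)/\ln(1-x)$ and noting that the denominator $[\ln(1-x)]^2$ is positive, the sign of $\partial_x\eta(c,x)$ equals the sign of
\[
\frac{-c\,\ln(1-x)}{1-cx} + \frac{\ln(1-cx)}{1-x};
\]
multiplying by the positive quantity $(1-cx)(1-x)$ (positive since $cx<x<1$), the problem reduces to proving
\[
\tilde N(x) := (1-cx)\ln(1-cx) - c(1-x)\ln(1-x) < 0 \qquad (0<x<1).
\]

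The key step is to spot the right auxiliary function. I would set $a:=1-x$ and $b:=1-cx$, so that $0<a<b<1$ (the strict inequality $a<b$ is exactly $c<1$), and $1-b=c(1-a)$, giving $c=(1-b)/(1-a)$ and $c(1-x)=a(1-b)/(1-a)$. Dividing $\tilde N(x)$ by the positive number $1-b=cx$ then yields
\[
\frac{\tilde N(x)}{1-b} = \frac{b\ln b}{1-b} - \frac{a\ln a}{1-a} = \psi(b) - \psi(a), \qquad \psi(t):=\frac{t\ln t}{1-t},
\]
so it suffices to prove that $\psi$ is strictly decreasing on $(0,1)$. A direct differentiation gives $\psi'(t) = (\ln t + 1 - t)/(1-t)^2$, and the numerator $\ln t + 1 - t$ is negative on $(0,1)$ by the elementary inequality $\ln t < t-1$ — which, if one wants to keep things self-contained, follows from the same derivative-at-a-point argument used in the proofs of \lref{lem:ln(1-x)} and \lref{lemma:defhdelx}. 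Hence $\psi(b)<\psi(a)$, so $\tilde N(x)<0$, and therefore $\eta(c,x)$ is strictly decreasing in $x$.

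The computation is entirely routine once the reduction is set up; the only place needing a bit of insight is the substitution $a=1-x$, $b=1-cx$ and the observation that $\tilde N$ is, up to the positive factor $1-b$, the difference $\psi(b)-\psi(a)$ of two values of a single monotone function. An essentially equivalent alternative would avoid $\psi$ altogether: since $\ln(1-cx)$ and $\ln(1-x)$ both vanish at $x=0$ and the ratio of their $x$-derivatives, $c(1-x)/(1-cx)$, is strictly decreasing in $x$ (its derivative is $c(c-1)/(1-cx)^2<0$), the monotone form of l'Hôpital's rule delivers the claim at once; I would resort to that only if a clean citation were preferred over the self-contained computation above.
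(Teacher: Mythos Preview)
Your proof is correct and structurally identical to the paper's up to the point where both reduce the claim to showing that the numerator $\tilde N(x)=(1-cx)\ln(1-cx)-c(1-x)\ln(1-x)$ is negative; the paper denotes this same function $\hat\eta_c(x)$. The only difference is the final verification: the paper proceeds by brute calculus, observing $\hat\eta_c(0)=0$, $\hat\eta_c'(0)=0$, and $\hat\eta_c''(x)=c(c-1)/[(1-x)(1-cx)]<0$, whence $\hat\eta_c(x)<0$; you instead make the substitution $a=1-x$, $b=1-cx$ to write $\tilde N(x)=(1-b)[\psi(b)-\psi(a)]$ with $\psi(t)=t\ln t/(1-t)$ and then show $\psi$ is decreasing. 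Your route is a bit more conceptual (it isolates a single monotone function) and your l'H\^opital alternative is essentially the paper's second-derivative computation in disguise, since the derivative of $c(1-x)/(1-cx)$ is exactly $c(c-1)/(1-cx)^2$. Either way the argument is complete.
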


\begin{proof}[Proof of \lref{lemma:Monotetax}]
	For $0<c<1$ and $0\leq x<1$ define the function
	\begin{align}
		\hat{\eta}_c(x) := c(x-1)\ln(1-x) + (1-cx)\ln(1-cx). 
	\end{align}
	Then we have 
	\begin{align}
		\frac{{\rm d }\hat{\eta}_c(x) }{{\rm d} x} 
		&=  c\ln(1-x) - c\ln(1-cx), \quad \frac{{\rm d }\hat{\eta}_c(x) }{{\rm d} x}\bigg|_{x=0}=0,\\
		\frac{{\rm d ^2}\hat{\eta}_c(x) }{{\rm d} x^2} 
		&=  \frac{-c}{1-x} + \frac{c^2}{1-cx}
		=  \frac{c(c-1)}{(1-x)(1-cx)}
		< 0  
		\qquad \forall\, 0<x<1, 
	\end{align}
	which imply that $\frac{{\rm d }\hat{\eta}_c(x) }{{\rm d} x}<0$ for $0<x<1$. 
	So $\hat{\eta}_c(x)<0$ for $0<x<1$ given that
	$\hat{\eta}_c(0)=0$. As a consequence,
	\begin{align}
		\frac{\partial \eta(c,x) }{\partial x} 
		= \frac{1}{[\ln(1-x)]^2} \left[ -\frac{c \ln(1-x)}{1-cx} + \frac{\ln(1-cx)}{1-x}\right]  
		= \frac{\hat{\eta}_c(x)}{[\ln(1-x)]^2 (1-cx)(1-x)} 
		< 0 
		\qquad \forall\, 0<x<1,  
	\end{align}
	which confirms \lref{lemma:Monotetax}. 
\end{proof}

\subsection{Relation between $\bar{\epsilon}^{\,\iid}_{\lambda}(k,N,\delta)$ and the companion paper \cite{Classical22}}\label{app:iidrelation}
Recall that in \eref{eq:epslamDef} of the Methods section, the two optimizations are taken over all permutation-invariant 
joint distributions on the $N+1$ variables $Y_1, Y_2,\ldots, Y_{N+1}$.  
Here we turn to an alternative setting in which  the variables $Y_1,Y_2, \ldots, Y_{N+1}$ are independent and identically distributed (i.i.d.).
Consider 
\begin{align}\label{eq:epsiidDef}
	\max_{\text{i.i.d.}} \, \{\Pr(Y_{N+1}=1|K \le k)\,|\Pr(K \le k) \geq \delta \} ,
\end{align}
where the maximization is over all independent and identical distributions on $Y_1, \ldots, Y_{N+1}$. 
This quantity is called the upper confidence limit for the i.i.d.\! setting in our companion paper \cite{Classical22},  which
studied the asymptotic behaviors of \eref{eq:epsiidDef} and related quantities.

The quantity in \eref{eq:epsiidDef} is actually equal to $\bar{\epsilon}^{\,\iid}_{\lambda}(k,N,\delta)$ defined in \eref{eq:hatzetaiid} of the Methods section.
According to \eref{eq:hatzetaiid} in the Methods section we have 
\begin{align}\label{eq:epsIIDlam0lamRel}
	\bar{\epsilon}^{\,\iid}_{\lambda}(k,N,\delta)=\frac{\bar{\epsilon}^{\,\iid}_{\lambda=0}(k,N,\delta)}{\nu}
	\qquad
	\forall\, \bar{\epsilon}^{\,\iid}_{\lambda=0}(k,N,\delta)\leq \nu.   
\end{align}
Based on this relation, 
many properties of $\bar{\epsilon}^{\,\iid}_{\lambda=0}(k,N,\delta)$ derived in Ref.~\cite{Classical22}
also apply to the guaranteed infidelity $\bar{\epsilon}^{\,\iid}_{\lambda}(k,N,\delta)$ in this section after proper modifications. 
Notably, Eqs.~\eqref{eq:iiddel0} and \eqref{eq:iideps0} below in this paper are corollaries of Eqs.~(90) and (87) in Ref.~\cite{Classical22}, respectively.
In this paper, from a more practical perspective than Ref.~\cite{Classical22}, we  derive several finite bounds for 
$\overline{\epsilon}^{\,\iid}_\lambda(k,N,\delta)$ and related quantities.  
The main goal of this section is to provide a robust and efficient protocol for verifying quantum states
in the i.i.d.\! scenario.

\subsection{Proofs of Propositions \ref{prop:epsiidMonoton}, \ref{prop:Boundepsiid}, \ref{prop:iidUBtestsNumber}, and \ref{prop:iidHighProbiid}}\label{app:IIDsetting6C}

\begin{proof}[Proof of Proposition \ref{prop:epsiidMonoton}]
	Suppose $0<\delta_1<\delta_2\leq1$, $\epsilon_1=\bar{\epsilon}^{\,\iid}_{\lambda}(k,N,\delta_1)$, and $\epsilon_2=\bar{\epsilon}^{\,\iid}_{\lambda}(k,N,\delta_2)$. Then $B_{N,k}(\nu\epsilon_1) = \delta_1$ and $B_{N,k}(\nu\epsilon_2) = \delta_2$ according to Lemma~\ref{lem:epsiidsolve} below, which implies that  $\epsilon_1>\epsilon_2$, that is, $\bar{\epsilon}^{\,\iid}_{\lambda}(k,N,\delta_1)>\bar{\epsilon}^{\,\iid}_{\lambda}(k,N,\delta_2)$, given that 
	$B_{N,k}(\nu\epsilon)$ is strictly decreasing in $\epsilon$ by Lemma~\ref{lem:Bzkmono}. 
	
	The monotonicities with $N$ and $k$ follow from a similar reasoning.
\end{proof}

\begin{lemma}\label{lem:epsiidsolve}
	Suppose $0\leq\lambda< 1$, $0<\delta\leq1$, $k\in \bbZ^{\geq 0}$, and $N\in \bbZ^{\geq k+1}$. Then $\bar{\epsilon}^{\,\iid}_{\lambda}(k,N,\delta)$ 
	is the unique solution of $\epsilon$ to the equation
	\begin{align}\label{eq:epsiidsolve}
		B_{N,k}(\nu\epsilon) = \delta, \qquad 0\leq \epsilon\leq1.
	\end{align}
\end{lemma}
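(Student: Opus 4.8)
The plan is to reduce the statement to the second characterization of $\bar{\epsilon}^{\,\iid}_{\lambda}(k,N,\delta)$ recorded in \eref{eq:hatzetaiid},
\[
\bar{\epsilon}^{\,\iid}_{\lambda}(k,N,\delta) = \max\{\,0\leq\epsilon\leq1 \mid B_{N,k}(\nu\epsilon)\geq\delta\,\},
\]
and then to study the one-variable function $g(\epsilon):=B_{N,k}(\nu\epsilon)$ on the interval $[0,1]$. Since $0\leq\lambda<1$ we have $\nu=1-\lambda\in(0,1]$, so $\epsilon\mapsto\nu\epsilon$ is a strictly increasing bijection of $[0,1]$ onto $[0,\nu]$, and $g$, being a polynomial, is continuous on $[0,1]$. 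First I would record the relevant monotonicity: by \lref{lem:Bzkmono}, using the hypothesis $N\geq k+1$ (so that $k<N$), $B_{N,k}(p)$ is \emph{strictly} decreasing in $p$; composing with $\epsilon\mapsto\nu\epsilon$ shows that $g$ is strictly decreasing on $[0,1]$. Together with the convention $x^0=1$ in \eref{eq:binomCFD}, which gives the boundary value $g(0)=B_{N,k}(0)=1$, this yields all the structure needed.

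The argument then runs as follows. Because $g(0)=1\geq\delta$, the feasible set $\{\epsilon\in[0,1]:g(\epsilon)\geq\delta\}$ is nonempty, and since $g$ is continuous and strictly decreasing it is a closed interval $[0,\epsilon^{*}]$; hence the maximum in \eref{eq:hatzetaiid} is attained, namely $\bar{\epsilon}^{\,\iid}_{\lambda}(k,N,\delta)=\epsilon^{*}$. To see that $\epsilon^{*}$ solves $g(\epsilon)=\delta$: strict monotonicity forces $g(\epsilon)=\delta$ to have at most one root in $[0,1]$, which gives uniqueness once a root is exhibited; and since $g(0)=1\geq\delta\geq g(1)=B_{N,k}(\nu)$, the intermediate value theorem produces a root, while continuity of $g$ rules out $g(\epsilon^{*})>\delta$. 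Therefore $\bar{\epsilon}^{\,\iid}_{\lambda}(k,N,\delta)$ is exactly the unique $\epsilon\in[0,1]$ with $B_{N,k}(\nu\epsilon)=\delta$, which is what we want.

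The proof is short, and there is no deep obstacle; the points that genuinely need care are bookkeeping ones. The first is invoking the correct half of \lref{lem:Bzkmono}: strictness of the monotonicity is exactly what delivers uniqueness of the root and upgrades the ``$\geq\delta$'' at the optimizer to an equality, and it is available only because $N\geq k+1$ excludes $k=N$. The second is the behavior at the endpoints --- the convention $0^0=1$ so that $g(0)=1$ and the feasible set is nonempty for every $\delta\in(0,1]$, and the value $g(1)=B_{N,k}(\nu)$, which one should note satisfies $g(1)\leq\delta$ (equivalently $\bar{\epsilon}^{\,\iid}_{\lambda}<1$) throughout the parameter ranges relevant to the paper, so that the root indeed lies in $[0,1]$. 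With these in hand the conclusion is a one-line application of continuity, strict monotonicity, and the intermediate value theorem, valid uniformly for $\lambda=0$ (where $\nu=1$) and $0<\lambda<1$.
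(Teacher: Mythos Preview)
Your approach is essentially identical to the paper's: both invoke the max-characterization in \eref{eq:hatzetaiid}, the strict monotonicity from \lref{lem:Bzkmono} (using $k<N$), continuity, and the endpoint values to obtain existence and uniqueness via the intermediate value theorem. If anything, you are more careful than the paper in flagging the boundary case $g(1)=B_{N,k}(\nu)$, whereas the paper simply writes $B_{N,k}(0)=1\geq\delta>0=B_{N,k}(1)$.
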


\lref{lem:epsiidsolve} follows from \eref{eq:hatzetaiid} in the Methods section, \lref{lem:Bzkmono}, as well as the facts that $B_{N,k}(0)=1\geq\delta>0=B_{N,k}(1)$ and that 
$B_{N,k}(\nu\epsilon)$ is continuous in $\epsilon$ for $0\leq \epsilon\leq1$.

\begin{proof}[Proof of Proposition \ref{prop:Boundepsiid}]
	We have  
	\begin{align}
		\bar{\epsilon}^{\,\iid}_{\lambda}(\lfloor\nu s N\rfloor,N,\delta) 
		&\stackrel{(a)}{=}
		\max_{\epsilon} \left\{0\leq \epsilon\leq1 \,|\, B_{N,\lfloor\nu s N\rfloor}(\nu\epsilon) \geq \delta  \right\}
		\nonumber \\&\stackrel{(b)}{\leq }
		\max_{\epsilon} \left\{\frac{\lfloor\nu s N\rfloor}{\nu N} \leq \epsilon\leq \frac{1}{\nu} \,\bigg| \exp\left[- D\!\left( \frac{\lfloor\nu s N\rfloor}{N}\bigg\| \nu\epsilon\right) N \right] \geq \delta  \right\}
		\nonumber \\&\stackrel{(c)}{\leq }
		\max_{\epsilon} \left\{ s \leq \epsilon\leq \frac{1}{\nu} \,\bigg| \exp\left[- D( \nu s \| \nu\epsilon) N \right] \geq \delta  \right\}
		\nonumber \\&\stackrel{(d)}{\leq }
		\max_{\epsilon} \left\{ s \leq \epsilon\leq \frac{1}{\nu} \,\bigg| \exp\left[- \frac{(\nu s-\nu\epsilon)^2}{2\nu\epsilon} N \right] \geq \delta  \right\}	
		\nonumber \\&= 
		\max_{\epsilon} \left\{\epsilon\geq s \,\bigg|\, \epsilon^2 - 2\left(s + \frac{\ln\delta^{-1}}{\nu N} \right)\epsilon + s^2 \leq 0   \right\}	
		\nonumber \\&= 
		s + \frac{\ln\delta^{-1}}{\nu N} + \sqrt{\left( \frac{\ln\delta^{-1}}{\nu N} \right)^2 + \frac{2s \ln\delta^{-1}}{\nu N}}
		\stackrel{(e)}{\leq }
		s + \sqrt{\frac{2s \ln\delta^{-1}}{\nu N}} + \frac{2\ln\delta^{-1}}{\nu N}, 
	\end{align}
	which confirms the upper bound in \eref{eq:iidBDeps} of the Methods section. Here 
	$(a)$ follows from \eref{eq:hatzetaiid} in the Methods section; 
	$(b)$ follows from the Chernoff bound \eqref{eq:ChernoffB}; 
	$(c)$ follows because $D( \nu s \| \nu\epsilon)\leq D( \lfloor\nu s N\rfloor/N \| \nu\epsilon)$ for $s\leq \epsilon$ 
	(according to \lref{lem:DpqMonoton}); 
	$(d)$ follows from the inequality $D(x\|y)\geq(x-y)^2/(2y)$ for $x\leq y$;
	and $(e)$ follows from the inequality $\sqrt{x+y}\leq\sqrt{x}+\sqrt{y}$ for $x,y\geq0$.

	In addition, we have
	\begin{align}
		\bar{\epsilon}^{\,\iid}_{\lambda}(\lfloor\nu s N\rfloor,N,\delta) 
		&\stackrel{(a)}{=}    \max\{0\leq\epsilon\leq1 \,| B_{N,\lfloor\nu s N\rfloor}(\nu\epsilon)\geq\delta\} \nonumber \\
		&\stackrel{(b)}{\geq} \max\{0\leq\epsilon\leq1 \,| B_{N,\lfloor\nu s N\rfloor}(\nu\epsilon)\geq1/2\}
		\stackrel{(c)}{>}    \frac{\lfloor\nu s N\rfloor}{\nu N}
		\geq s- \frac{1}{\nu N}, 
		\label{eq:barepsilonLB}
	\end{align}
	which confirms the lower bound in \eref{eq:iidBDeps} of the Methods section.
	Here $(a)$ follows from \eref{eq:hatzetaiid} in the Methods section, and $(b)$ follows from the assumption $\delta\leq 1/2$.
	To prove $(c)$,
	we shall consider two cases depending on the value of $\lfloor\nu s N\rfloor$.
	\begin{enumerate}
		\item[1.] $\lfloor\nu s N\rfloor=0$.
		In this case we have
		\begin{align}
			\max\{0\leq\epsilon\leq1 \,| B_{N,\lfloor\nu s N\rfloor}(\nu\epsilon)\geq1/2\}
			=\max\{0\leq\epsilon\leq1 \,|\, (1-\nu\epsilon)^N\geq1/2\}
			>0 ,
		\end{align}
		which confirms $(c)$ in \eref{eq:barepsilonLB}.
		
		\item[2.] $\lfloor\nu s N\rfloor\geq1$. 
		In this case we have $B_{N,\lfloor\nu s N\rfloor}\!\left(\lfloor\nu s N\rfloor/N\right)>1/2$
		by \lref{lem:Bnk1-k/n>1/2} below.
		Since $B_{N,\lfloor\nu s N\rfloor}(\nu\epsilon)$ is continuous in $\epsilon$,
		there exists a number $\epsilon> \lfloor\nu s N\rfloor/(\nu N)$ such that $B_{N,\lfloor\nu s N\rfloor}(\nu\epsilon) >1/2$.
		This fact confirms $(c)$ in \eref{eq:barepsilonLB} again.
	\end{enumerate}
	In conclusion, the inequality $(c)$ in \eref{eq:barepsilonLB} holds, which completes the proof of Proposition \ref{prop:Boundepsiid}.
\end{proof}

\begin{lemma}[Corollary B.3, \cite{Classical22}]\label{lem:Bnk1-k/n>1/2}
	Suppose $k\in\bbZ^{\geq1}$ and $z\in\bbZ^{\geq k}$; then $B_{z,k}(k/z)>1/2$.
\end{lemma}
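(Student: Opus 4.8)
The plan is to read $B_{z,k}(k/z)=\Pr\!\big(\mathrm{Bin}(z,k/z)\le k\big)$ as the statement that the lower tail of a binomial random variable, up to its own mean $k$, carries strictly more than half of the total mass. The boundary case $k=z$ is immediate, since $B_{z,z}(1)=1>1/2$ (with the convention $0^0=1$), so I would assume $1\le k\le z-1$. The first step is to pass to a Beta distribution through the standard identity $\Pr\!\big(\mathrm{Bin}(z,p)\le k\big)=I_{1-p}(z-k,k+1)$, where $I$ is the regularized incomplete beta function; equivalently $B_{z,k}(k/z)=\Pr\!\big(Y\le (z-k)/z\big)$ for $Y\sim\mathrm{Beta}(z-k,k+1)$. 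Hence the claim becomes: the median of $\mathrm{Beta}(z-k,k+1)$ is strictly smaller than $(z-k)/z$.

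To bound this median I would compute the mean and mode of $Y$, namely $\mathbb{E}[Y]=(z-k)/(z+1)$ and, for $z-k\ge 2$, the mode $(z-k-1)/(z-1)$, and check by elementary cross-multiplication—using only $k\ge 1$ and $z-k\ge 1$—that both of them are strictly below $(z-k)/z$. I would then invoke the classical ordering of mean, median and mode for the Beta family: the median of $\mathrm{Beta}(a,b)$ always lies between its mode and its mean (coinciding with both only in the symmetric case $a=b$), so the median is no larger than the mean when the mean exceeds the mode, and no larger than the mode otherwise. Since here both the mode and the mean of $Y$ lie strictly below $(z-k)/z$, so does the median, which yields $B_{z,k}(k/z)>1/2$. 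The two degenerate subcases are handled directly: when $z=2k+1$ the distribution $\mathrm{Beta}(k+1,k+1)$ is symmetric, so its median is $1/2<(k+1)/(2k+1)$; and when $z=k+1$ the distribution is $\mathrm{Beta}(1,k+1)$, whose median $1-2^{-1/(k+1)}$ is compared with $(z-k)/z=1/(k+1)$ using $(1+1/b)^{b}\ge 2$ for $b\ge1$.

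The delicate point is strictness—ruling out $B_{z,k}(k/z)=1/2$ exactly—and this is precisely where the Beta reformulation pays off: the strict inequality $\mathbb{E}[Y]<(z-k)/z$ (combined with the median being dominated by $\max\{\text{mean},\text{mode}\}$) upgrades the familiar qualitative fact ``a binomial with integer mean has that mean as a median'' to the strict statement with no extra effort. The only external ingredient is the mean–median–mode ordering for the Beta family, which is a standard property of log-concave densities on an interval; rather than reproving it I would simply cite it, or, as the paper does, cite the companion paper \cite{Classical22}, where this lemma appears as Corollary~B.3 together with a self-contained derivation along these lines.
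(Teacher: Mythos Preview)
The paper does not supply its own proof of this lemma: it is recorded verbatim as Corollary~B.3 of the companion paper~\cite{Classical22} and invoked without argument, so there is no in-paper derivation to compare against.

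Your route is correct. The passage to $Y\sim\mathrm{Beta}(z-k,k+1)$ via the regularized incomplete beta identity is standard and accurate, the mean $(z-k)/(z+1)$ and (for $z-k\ge2$) the mode $(z-k-1)/(z-1)$ are computed correctly, and the one-line cross-multiplications placing both strictly below $(z-k)/z$ go through using only $k\ge1$ and $z-k\ge1$. The boundary cases $z=k$, $z=2k+1$, and $z=k+1$ are disposed of cleanly (for the last, $(1+1/k)^{k+1}\ge 2(1+1/k)>2$ gives the strict inequality you need). The only soft spot is the justification you offer for the mean--median--mode ordering: calling it ``a standard property of log-concave densities on an interval'' is loose, since log-concavity alone is not the hypothesis under which this ordering is usually proved. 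The inequality you want \emph{is} a theorem for $\mathrm{Beta}(a,b)$ with $a,b\ge1$, but it deserves a precise citation (e.g.\ a direct reference for the Beta median bounds, or, as you note, the companion paper itself). Once that is pinned down, the argument is complete.
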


\begin{proof}[Proof of Proposition \ref{prop:iidUBtestsNumber}]
	By assumption we have $0\leq \lambda<1$, $0\leq s<\epsilon<1$, and $0<\delta<1$. Therefore, 
	\begin{align}
		\begin{split} 
			\bar{\epsilon}^{\,\iid}_{\lambda}(\lfloor\nu s N\rfloor,N,\delta) \leq \epsilon  
			&\quad \stackrel{(a)}{\Leftarrow} \quad
			B_{N,\lfloor\nu s N\rfloor}(\nu\epsilon) \leq \delta
			\quad \stackrel{(b)}{\Leftarrow} \quad
			\exp\left[- D\!\left( \frac{\lfloor\nu s N\rfloor}{N}\bigg\| \nu\epsilon\right) N \right] \leq \delta
			\\
			&\quad \stackrel{(c)}{\Leftarrow} \quad
			\exp\left[- D(\nu s \| \nu\epsilon) N \right] \leq \delta
			\quad \Leftarrow \quad 
			N\geq\frac{\ln\delta^{-1}}{D(\nu s\|\nu\epsilon)},
		\end{split} 
	\end{align}
	which confirm Proposition \ref{prop:iidUBtestsNumber}. 
	Here the relation $A\Leftarrow B$ means $B$ is a sufficient condition of $A$; 
	$(a)$ follows from \lref{lem:epsiidANDBnk} below; 
	$(b)$ follows from the Chernoff bound \eqref{eq:ChernoffB} and the inequality  $\lfloor\nu s N\rfloor/N\leq\nu \epsilon$; 
	$(c)$ follows from \lref{lem:DpqMonoton} and the inequality  $\lfloor\nu s N\rfloor/N\leq\nu s$.  
\end{proof}

The following lemma implies that the two conditions $\bar{\epsilon}^{\,\iid}_{\lambda}(k,N,\delta)\leq \epsilon$ 
and $B_{N,k}(\nu\epsilon) \leq \delta$ are equivalent. 

\begin{lemma}\label{lem:epsiidANDBnk}
	Suppose $0\leq\lambda< 1$, $0<\delta\leq1$, $0\leq \epsilon\leq1$, $k\in \bbZ^{\geq 0}$, and $N\in \bbZ^{\geq k+1}$. 
	If $B_{N,k}(\nu\epsilon) \leq \delta$, then $\bar{\epsilon}^{\,\iid}_{\lambda}(k,N,\delta)\leq \epsilon$. 
	If $B_{N,k}(\nu\epsilon) > \delta$,    then $\bar{\epsilon}^{\,\iid}_{\lambda}(k,N,\delta)> \epsilon$. 
\end{lemma}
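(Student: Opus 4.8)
The plan is to read the statement off the strict monotonicity and continuity of the binomial tail $p\mapsto B_{N,k}(p)$, which are exactly the ingredients already isolated in \lref{lem:Bzkmono} and \lref{lem:epsiidsolve}. Abbreviate $\bar\epsilon:=\bar{\epsilon}^{\,\iid}_{\lambda}(k,N,\delta)$. Since $\lambda<1$ we have $\nu=1-\lambda>0$, and since $N\in\bbZ^{\geq k+1}$ we have $k<N$; hence by \lref{lem:Bzkmono} the map $\epsilon'\mapsto B_{N,k}(\nu\epsilon')$ is strictly decreasing and continuous on $[0,1]$, with $B_{N,k}(0)=1\geq\delta$ and $B_{N,k}(1)=0$ under the convention $0^{0}=1$.

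First I would invoke \lref{lem:epsiidsolve}, which identifies $\bar\epsilon$ as the value with $B_{N,k}(\nu\bar\epsilon)=\delta$. Granting that, the first implication is immediate: if $B_{N,k}(\nu\epsilon)\leq\delta=B_{N,k}(\nu\bar\epsilon)$, then strict monotonicity of $B_{N,k}$ forces $\nu\epsilon\geq\nu\bar\epsilon$, hence $\epsilon\geq\bar\epsilon$. The second implication is equally short: if $B_{N,k}(\nu\epsilon)>\delta=B_{N,k}(\nu\bar\epsilon)$, the same monotonicity forces $\nu\epsilon<\nu\bar\epsilon$, hence $\epsilon<\bar\epsilon$. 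So once \lref{lem:epsiidsolve} is in hand the whole statement is a two-line comparison; no genuine computation is needed. If one prefers to be self-contained, the same two lines work directly from \eref{eq:hatzetaiid}: by continuity and monotonicity the constrained maximum $\bar\epsilon=\max\{0\le\epsilon'\le1\mid B_{N,k}(\nu\epsilon')\ge\delta\}$ is attained, and $B_{N,k}(\nu\epsilon)\leq\delta$ (resp.\ $>\delta$) together with $B_{N,k}(\nu\bar\epsilon)\ge\delta$ places $\nu\epsilon$ on the appropriate side of $\nu\bar\epsilon$.

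The only point that needs attention — and the step I would be most careful about — is the boundary regime where $B_{N,k}(\nu\epsilon)=\delta$ has no root in $[0,1]$, i.e.\ $B_{N,k}(\nu)>\delta$. Then $\bar\epsilon=1$ reads off directly from \eref{eq:hatzetaiid} since the feasible set is all of $[0,1]$; moreover, for any $\epsilon\le1$ monotonicity gives $B_{N,k}(\nu\epsilon)\ge B_{N,k}(\nu)>\delta$, so the hypothesis of the first clause cannot hold (it is vacuous), and the second clause reduces to $\bar\epsilon=1\ge\epsilon$ — strict for any $\epsilon<1$, which is the only range relevant to its use (in particular in the proof of Proposition~\ref{prop:iidUBtestsNumber}, where $\epsilon<1$). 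Apart from this bookkeeping the proof uses nothing beyond \lref{lem:Bzkmono}, \lref{lem:epsiidsolve}, and continuity of the binomial tail in its probability argument. The payoff is that the soundness condition ``$\bar{\epsilon}^{\,\iid}_{\lambda}(k,N,\delta)\le\epsilon$'' may be replaced verbatim by the purely combinatorial condition ``$B_{N,k}(\nu\epsilon)\le\delta$'' throughout Section~\ref{app:IIDsetting}.
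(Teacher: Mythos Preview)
Your argument is correct and essentially the same as the paper's: both invoke \lref{lem:epsiidsolve} to identify $\bar{\epsilon}^{\,\iid}_{\lambda}(k,N,\delta)$ as the root of $B_{N,k}(\nu\,\cdot\,)=\delta$ and then finish by a one-line monotonicity comparison, the only difference being that you use the monotonicity of $B_{N,k}$ in $p$ (\lref{lem:Bzkmono}) directly, whereas the paper routes through the derived monotonicity of $\bar{\epsilon}^{\,\iid}_{\lambda}$ in $\delta$ (Proposition~\ref{prop:epsiidMonoton}). Your explicit treatment of the boundary regime $B_{N,k}(\nu)>\delta$ is a useful addition that the paper's proof leaves implicit.
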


\begin{proof}[Proof of \lref{lem:epsiidANDBnk}]
	According to \lref{lem:epsiidsolve} we have $\epsilon= \bar{\epsilon}^{\,\iid}_{\lambda}\left( k,N,B_{N,k}(\nu\epsilon)\right)$. 
	In addition, $\overline{\epsilon}^{\,\iid}_\lambda(k,N,\delta)$
	is strictly decreasing in $\delta$ for $0<\delta\leq1$ according to Proposition \ref{prop:epsiidMonoton}. 
	The two facts together confirm \lref{lem:epsiidANDBnk}. 
\end{proof}	

\begin{proof}[Proof of Proposition \ref{prop:iidHighProbiid}]
	Let
	\begin{align}
		s\in (\nu\epsilon,\epsilon), 
		\qquad
		k = \left\lfloor  \nu s N \right\rfloor, 
		\qquad
		N \geq
		\frac{ \ln\delta^{-1}}{ \min\{D(\nu s\|\nu r\epsilon), D(\nu s\|\nu\epsilon)\} }.  
	\end{align}
	Then we have $\bar{\epsilon}^{\,\iid}_{\lambda}(k, N, \delta)\leq\epsilon$ by Proposition \ref{prop:iidUBtestsNumber}. 
	It follows that $B_{N,k}(\nu\epsilon)\leq\delta$ according to \lref{lem:epsiidANDBnk}, which confirms the first inequality of \eref{eq:robustConditionIID} in the Methods section.
	
	To complete the proof, it remains to prove the second inequality of \eref{eq:robustConditionIID} in the Methods section, 
	that is, $B_{N,k}(\nu r\epsilon)\geq1-\delta$.
	Note that
	\begin{align}
		B_{N,k}(\nu r\epsilon)
		&=1-B_{N,N-k-1}(1-\nu r\epsilon)
		\stackrel{(a)}{\geq} 1-\exp\left[-D\!\left( \frac{N-k-1}{N}\bigg\| 1-\nu r\epsilon \right) N\right] 
		\nonumber \\
		&\stackrel{(b)}{\geq} 1-\rme^{-D\left(1- \nu s \|1-\nu r\epsilon \right)N }
		\stackrel{(c)}{=}     1-\rme^{-D\left(\nu s \| \nu r\epsilon \right) N}, \label{eq:78iidTH3}
	\end{align}
	where $(a)$ follows from the Chernoff bound \eqref{eq:ChernoffB} and the inequality $N\nu r\epsilon\leq k+1$;
	$(b)$ follows from \lref{lem:DpqMonoton} and the inequality $(k+1)/N\geq \nu s$;
	$(c)$ follows from the relation $D(x\|y)=D(1-x\|1-y)$.
	In addition, we have
	\begin{align}\label{eq:79iidTH3}
		\begin{split}
			D(\nu s \| \nu r\epsilon )N
			\geq  \frac{ D(\nu s \| \nu r\epsilon ) \ln\delta^{-1}}{ \min\{D(\nu s\|\nu r\epsilon), D(\nu s\|\nu\epsilon)\} }
			\geq \ln\delta^{-1}. 
		\end{split}
	\end{align}
	Equations \eqref{eq:78iidTH3} and \eqref{eq:79iidTH3} together confirm the inequality  
	$B_{N,k}(\nu r\epsilon)\geq1-\delta$ and complete the proof.
\end{proof}

\subsection{Verification with a fixed number of allowed failures}\label{app:IIDsetting6D}
In the Methods section, we have considered the verification of quantum states with a fixed error rate, in which the number
of allowed failures $k$ is proportional to the number of tests $N$.
In this subsection we consider the case in which the number
of allowed failures $k$ is a fixed integer. 
Similar to $N_{k}(\epsilon,\delta,\lambda)$ for the adversarial scenario, for $k\in \bbZ^{\geq 0}$, $0<\epsilon,\delta<1$, and $0\leq\lambda<1$, we define 
\begin{align}\label{eq:DefNTConstiid}
	N_{k}^{\iid}(\epsilon,\delta,\lambda)
	:= \min\{ N\geq k+1\,|\, \bar{\epsilon}^{\,\iid}_{\lambda}(k,N,\delta)\leq \epsilon \}
	= \min\{ N\geq k+1\,| B_{N,k}(\nu\epsilon) \leq \delta \} , 
\end{align}
where the second equality follows from \lref{lem:epsiidANDBnk}. 
According to the monotonicity of $\bar{\epsilon}^{\,\iid}_{\lambda}(k,N,\delta)$ in Proposition \ref{prop:epsiidMonoton}, 
if the number of tests $N$ satisfies $N\geq N_{k}^{\iid}(\epsilon,\delta,\lambda)$, then 
$\bar{\epsilon}^{\,\iid}_{\lambda}(k,N,\delta)\leq \epsilon$.

The following proposition provides informative upper and lower bounds for $N_{k}^{\iid}(\epsilon,\delta,\lambda)$. 
It is the counterpart of Proposition~\ref{prop:Nk} in \ref{app:relationQSV}.

\begin{proposition}\label{prop:BoundNumIIDinfor}
	Suppose $0<\epsilon,\delta<1$, $0\leq\lambda<1$, and $k\in\bbZ^{\geq 0}$. Then we have
	\begin{align}\label{eq:UBiidNfixk}
		\frac{\ln\delta}{\ln (1-\nu\epsilon) }
		\leq N_{k}^{\iid}(\epsilon,\delta,\lambda)
		\leq \Biggl\lceil \frac{k+ \ln\delta^{-1}+\sqrt{(\ln\delta^{-1})^2+2k\ln\delta^{-1}}}{\nu\epsilon} \Biggr\rceil
		\leq \Biggl\lceil \frac{k+2\ln\delta^{-1}+\sqrt{2k\ln\delta^{-1}}}{\nu\epsilon} \Biggr\rceil.
	\end{align}
	If in addition $\delta\leq 1/2$, then
	\begin{align}\label{eq:LBiidNfixk}
		N_{k}^{\iid}(\epsilon,\delta,\lambda) \geq  \frac{k}{\nu\epsilon} .
	\end{align}
\end{proposition}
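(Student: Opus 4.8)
The plan is to work directly from the characterization $N_{k}^{\iid}(\epsilon,\delta,\lambda)=\min\{N\geq k+1\mid B_{N,k}(\nu\epsilon)\leq\delta\}$ recorded in \eref{eq:DefNTConstiid}, so that everything reduces to estimating the threshold at which the binomial tail $B_{N,k}(\nu\epsilon)$ crosses $\delta$; the argument is the i.i.d.\ counterpart of the proof of Proposition~\ref{prop:Nk} but noticeably simpler. I would first note that the feasible set is nonempty: the Chernoff bound \eqref{eq:ChernoffB} makes $B_{N,k}(\nu\epsilon)$ decay exponentially in $N$, so the minimum is well defined.

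For the leftmost (lower) bound $\ln\delta/\ln(1-\nu\epsilon)$: by \lref{lem:Bzkmono} the tail $B_{z,k}(p)$ is nondecreasing in $k$, hence $B_{N,k}(\nu\epsilon)\geq B_{N,0}(\nu\epsilon)=(1-\nu\epsilon)^{N}$. Every feasible $N$ thus satisfies $(1-\nu\epsilon)^{N}\leq\delta$; taking logarithms and dividing by the negative number $\ln(1-\nu\epsilon)$ gives $N\geq\ln\delta/\ln(1-\nu\epsilon)$, which bounds the minimum from below.

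For the two upper bounds I would exhibit an explicit feasible $N$. Starting from the Chernoff bound $B_{N,k}(\nu\epsilon)\leq\rme^{-N D(\frac{k}{N}\|\nu\epsilon)}$ in \eqref{eq:ChernoffB}, valid once $k\leq N\nu\epsilon$, and using $D(x\|y)\geq(x-y)^{2}/(2y)$ for $x\leq y$ (the same estimate invoked in the proof of Proposition~\ref{prop:Boundepsiid}), one obtains $N D(\frac{k}{N}\|\nu\epsilon)\geq(N\nu\epsilon-k)^{2}/(2N\nu\epsilon)$. Writing $t=N\nu\epsilon$, the condition $(t-k)^{2}/(2t)\geq\ln\delta^{-1}$ is a quadratic inequality in $t$ whose relevant solution is $t\geq k+\ln\delta^{-1}+\sqrt{(\ln\delta^{-1})^{2}+2k\ln\delta^{-1}}$. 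Since this forces $t>k$, the integer $N=\lceil t/(\nu\epsilon)\rceil$ then automatically satisfies both $N>k$ (hence $N\geq k+1$) and $k\leq N\nu\epsilon$, so the Chernoff step applies and $N$ is feasible; this yields the first upper bound. The second one follows from $\sqrt{(\ln\delta^{-1})^{2}+2k\ln\delta^{-1}}\leq\ln\delta^{-1}+\sqrt{2k\ln\delta^{-1}}$ together with the monotonicity of $\lceil\cdot\rceil$.

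For the last bound $N_{k}^{\iid}(\epsilon,\delta,\lambda)\geq k/(\nu\epsilon)$ under $\delta\leq1/2$: the case $k=0$ is trivial from $N_{k}^{\iid}\geq1$, and if $k/(\nu\epsilon)\leq k+1$ the bound follows from $N_{k}^{\iid}\geq k+1$. Otherwise, for $k\geq1$ I would rule out every integer $N$ with $k+1\leq N<k/(\nu\epsilon)$: such an $N$ has $\nu\epsilon<k/N$, so by the strict monotonicity of $B_{N,k}$ in its argument (\lref{lem:Bzkmono}, strict because $k<N$) and \lref{lem:Bnk1-k/n>1/2}, $B_{N,k}(\nu\epsilon)>B_{N,k}(k/N)>1/2\geq\delta$, so $N$ is infeasible. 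I do not expect a real obstacle; the only delicate point is checking that the explicit $N$ used for the upper bounds lies in the feasible set, i.e.\ that $k\leq N\nu\epsilon$ and $N\geq k+1$ hold simultaneously, which is precisely what $t>k$ guarantees.
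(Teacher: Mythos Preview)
Your proposal is correct and follows essentially the same route as the paper's proof: the lower bound via $B_{N,k}(\nu\epsilon)\geq(1-\nu\epsilon)^N$, the upper bounds via the Chernoff bound combined with $D(x\|y)\geq(x-y)^2/(2y)$ and the resulting quadratic in $N\nu\epsilon$, and the bound \eqref{eq:LBiidNfixk} by ruling out $N<k/(\nu\epsilon)$ through \lref{lem:Bzkmono} and \lref{lem:Bnk1-k/n>1/2}. The only cosmetic difference is that the paper packages the last step by citing Proposition~\ref{prop:kboundsiid}, whose first statement is proved with precisely the argument you wrote out directly.
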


Next, we consider the high-precision limit $\epsilon,\delta \to 0$. 
In this case, as clarified in the following proposition, the resource cost of our protocol is 
inversely proportional to the spectral gap $\nu$. 
This result is the counterpart of Proposition~\ref{prop:MLT} in \ref{app:relationQSV}. 

\begin{proposition}\label{prop:iidMLT}
	Suppose $0<\epsilon,\delta< 1$, $0\leq \lambda<1$, and $k\in\bbZ^{\geq 0}$. Then
	\begin{align}
		\lim_{\delta\to 0} \frac{N_{k}^{\iid}(\epsilon,\delta,\lambda)}{\ln \delta^{-1}}
		&=\frac{-1}{\ln (1-\nu\epsilon)}, \label{eq:iiddel0}
		\\
		\lim_{\delta,\epsilon\to 0} \frac{\epsilon N_{k}^{\iid}(\epsilon,\delta,\lambda)}{\ln \delta^{-1}}
		&=\frac{1}{\nu}, \label{eq:iidepsdel0}
	\end{align}
	where the order of the two limits $\epsilon\to 0$ and $\delta\to 0$ does not matter in \eref{eq:iidepsdel0}.
\end{proposition}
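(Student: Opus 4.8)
The plan is to prove Proposition~\ref{prop:iidMLT} by leveraging the closed-form definition $N_{k}^{\iid}(\epsilon,\delta,\lambda)=\min\{N\geq k+1 \mid B_{N,k}(\nu\epsilon)\leq\delta\}$ from \eref{eq:DefNTConstiid}, so that the asymptotics of $N_k^{\iid}$ reduce to the asymptotics of the binomial tail $B_{N,k}(\nu\epsilon)$ in the regime of fixed $k$ and fixed success-failure probability $\nu\epsilon$. For \eref{eq:iiddel0}, first I would fix $\epsilon$ and $\lambda$ (hence $p:=\nu\epsilon$) and recall the elementary fact that for fixed $k$ the tail $B_{N,k}(p)=\sum_{j=0}^k\binom{N}{j}p^j(1-p)^{N-j}$ decays as $B_{N,k}(p)=(1-p)^{N}\cdot\mathrm{poly}_k(N)$, where $\mathrm{poly}_k(N)=\sum_{j=0}^k\binom{N}{j}(p/(1-p))^j$ is a polynomial in $N$ of degree $k$ with positive leading coefficient. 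Taking logarithms, $\ln B_{N,k}(p)=N\ln(1-p)+O(\ln N)$. Hence the threshold $N$ at which this first drops below $\delta$ satisfies $N\ln(1-p)+O(\ln N)=\ln\delta$, i.e. $N=\frac{\ln\delta^{-1}}{\ln(1-p)^{-1}}+O(\ln\ln\delta^{-1})$ as $\delta\to0$; dividing by $\ln\delta^{-1}$ and letting $\delta\to0$ gives the first limit. Alternatively, one can avoid re-deriving this by invoking the sandwich already available in the excerpt: the upper bound in \eref{eq:UBiidNfixk} of Proposition~\ref{prop:BoundNumIIDinfor} gives $N_k^{\iid}\leq\lceil\frac{k+\ln\delta^{-1}+\sqrt{(\ln\delta^{-1})^2+2k\ln\delta^{-1}}}{\nu\epsilon}\rceil$, whose ratio to $\ln\delta^{-1}$ tends to $\frac{2}{\nu\epsilon}$ — but that is too lossy; the sharp constant $\frac{-1}{\ln(1-\nu\epsilon)}$ forces me to use the lower bound $N_k^{\iid}\geq\frac{\ln\delta}{\ln(1-\nu\epsilon)}$ from \eref{eq:UBiidNfixk} together with a matching sharp upper bound. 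So the cleaner route is: the lower bound $\frac{\ln\delta^{-1}}{\ln(1-\nu\epsilon)^{-1}}$ is already in hand, and for the upper bound I would show that for any $c>1$, once $N\geq c\cdot\frac{\ln\delta^{-1}}{\ln(1-\nu\epsilon)^{-1}}$ and $\delta$ is small enough, $B_{N,k}(\nu\epsilon)\leq\delta$, using $B_{N,k}(\nu\epsilon)\leq(1-\nu\epsilon)^{N-k}\binom{N}{k}\leq(1-\nu\epsilon)^{N-k}N^k$ and checking $(1-\nu\epsilon)^{N-k}N^k\leq\delta$ holds for such $N$. Letting $c\downarrow1$ after $\delta\to0$ closes the gap.

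For \eref{eq:iidepsdel0}, I would take the limit $\epsilon\to0$ of the result of \eref{eq:iiddel0}, namely $\lim_{\delta\to0}\frac{\epsilon N_k^{\iid}}{\ln\delta^{-1}}=\frac{\epsilon}{\ln(1-\nu\epsilon)^{-1}}=\frac{-\epsilon}{\ln(1-\nu\epsilon)}$, and then use $\lim_{\epsilon\to0}\frac{-\epsilon}{\ln(1-\nu\epsilon)}=\frac{1}{\nu}$ (which follows from $\ln(1-\nu\epsilon)=-\nu\epsilon+O(\epsilon^2)$, or more formally from \lref{lem:ln(1-x)} applied with $x=\nu\epsilon$, giving $\frac{1}{\nu\epsilon}-1<\frac{-1}{\ln(1-\nu\epsilon)}<\frac{1}{\nu\epsilon}$, hence $\frac{1}{\nu}-\epsilon<\frac{-\epsilon}{\ln(1-\nu\epsilon)}<\frac{1}{\nu}$). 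To justify that the order of the two limits does not matter, I would verify the interchange directly: the finite bounds in Proposition~\ref{prop:BoundNumIIDinfor} (valid for all $\epsilon,\delta\in(0,1)$) show that $\frac{\epsilon N_k^{\iid}(\epsilon,\delta,\lambda)}{\ln\delta^{-1}}$ is squeezed, uniformly in $\epsilon$ for $\delta$ small, between $\frac{\epsilon}{\ln(1-\nu\epsilon)^{-1}}(1+o(1))$-type quantities, so the double limit exists and equals $\frac1\nu$ regardless of the order. Concretely, from \eref{eq:UBiidNfixk}, $\frac{\epsilon N_k^{\iid}}{\ln\delta^{-1}}\leq\frac{1}{\nu\ln\delta^{-1}}\bigl(k+2\ln\delta^{-1}+\sqrt{2k\ln\delta^{-1}}\bigr)\cdot\frac{1}{1}$ — wait, this gives the constant $2/\nu$, which is not sharp enough for fixed $\epsilon$; so for the interchange argument I would instead combine the sharp per-$\epsilon$ limit \eref{eq:iiddel0} with the observation that the convergence in \eref{eq:iiddel0} is locally uniform in $\epsilon$ (again via the $(1-\nu\epsilon)^{N-k}N^k$ estimate with explicit error terms), which is routine once the main estimate is set up.

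The key steps in order: (1) establish the two-sided estimate $N\ln(1-\nu\epsilon)+k\ln N + O(1)$ pins down $\ln B_{N,k}(\nu\epsilon)$, so that the defining threshold $N_k^{\iid}$ satisfies $N_k^{\iid}\ln(1-\nu\epsilon)^{-1}=\ln\delta^{-1}+O(\ln\ln\delta^{-1})$; (2) divide by $\ln\delta^{-1}$, let $\delta\to0$, conclude \eref{eq:iiddel0}; (3) substitute into the expression, let $\epsilon\to0$, and use \lref{lem:ln(1-x)} to get the constant $1/\nu$; (4) argue the double limit is order-independent via local uniformity of the estimate in step (1). I expect the main obstacle to be step (1) in its sharp form: getting the $+1$ constant in the coefficient $\frac{-1}{\ln(1-\nu\epsilon)}$ exactly right (as opposed to the lossy $\frac{2}{\nu\epsilon}$ from the crude bounds) requires controlling the subexponential polynomial factor $\binom{N}{k}\approx N^k/k!$ carefully — in particular noting it is only polynomial in $N$ and therefore contributes only $O(\ln N)=o(N)$ to the log, so it washes out after dividing by $\ln\delta^{-1}$. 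This is exactly the content behind \cite[Equation (89) and Lemma 5.4]{Classical22} referenced for the adversarial counterpart Proposition~\ref{prop:MLT}, and the i.i.d.\ version is strictly simpler because $B_{N,k}$ is an explicit binomial tail rather than the conditional-probability optimum $\overline{\epsilon}_\lambda$; so I would, where possible, simply cite the companion paper's Lemma~5.4 (as the excerpt already does for Proposition~\ref{prop:MLT}) rather than re-derive, noting the reduction via \eref{eq:DefNTConstiid} and \eref{eq:epsIIDlam0lamRel}.
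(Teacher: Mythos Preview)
Your approach to \eref{eq:iiddel0} is correct and in fact more self-contained than the paper's, which simply cites \cite[Eq.~(90)]{Classical22}. The sandwich you describe---the lower bound $N_k^{\iid}\geq\frac{\ln\delta^{-1}}{-\ln(1-\nu\epsilon)}$ from \eref{eq:UBiidNfixk} together with the upper bound obtained from $B_{N,k}(\nu\epsilon)\leq C_{k,\epsilon}(1-\nu\epsilon)^N N^k$---does give the sharp constant after dividing by $\ln\delta^{-1}$, since the polynomial factor contributes only $O(\ln\ln\delta^{-1})$. Likewise, the iterated limit $\lim_{\epsilon\to0}\lim_{\delta\to0}$ in \eref{eq:iidepsdel0} follows exactly as you say via \lref{lem:ln(1-x)}.

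The gap is in the other iterated limit, $\lim_{\delta\to0}\lim_{\epsilon\to0}$. Your ``local uniformity in $\epsilon$'' claim does not survive $\epsilon\to0$: in the estimate $\ln B_{N,k}(\nu\epsilon)=N\ln(1-\nu\epsilon)+O(\ln N)$, the correction $k\ln N$ translates, via $N\sim(\nu\epsilon)^{-1}\ln\delta^{-1}$, into a term of order $k\ln(1/\epsilon)/(\nu\epsilon)$ in $N_k^{\iid}$, hence of order $\ln(1/\epsilon)/\ln\delta^{-1}$ in the ratio $\epsilon N_k^{\iid}/\ln\delta^{-1}$. For fixed $\delta$ this diverges as $\epsilon\to0$, so the convergence in \eref{eq:iiddel0} is \emph{not} uniform near $\epsilon=0$, and you correctly noticed that the crude bound from \eref{eq:UBiidNfixk} only yields $2/\nu$. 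The paper resolves this with a genuinely different ingredient: the Poisson (law-of-rare-events) limit. For fixed $\delta$, as $\epsilon\to0$ with $N\nu\epsilon\to t$ one has $B_{N,k}(\nu\epsilon)\to e^{-t}\sum_{j=0}^k t^j/j!$, so $\epsilon N_k^{\iid}\to t_\mathrm{P}(k,\delta)/\nu$ where $t_\mathrm{P}(k,\delta)$ solves \eref{eq:t1kdelta}; this is what \cite[Eq.~(87)]{Classical22} provides. Then $\lim_{\delta\to0}t_\mathrm{P}(k,\delta)/\ln\delta^{-1}=1$ closes the argument. Your proposal does not identify this Poisson step, and without it the order-independence claim remains unproved.
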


\begin{proof}[Proof of Proposition \ref{prop:BoundNumIIDinfor}]
	The upper bounds in \eref{eq:UBiidNfixk} can be derived as follows, 
	\begin{align}
		N_{k}^{\iid}(\epsilon,\delta,\lambda)
		&= \min\{ N\geq k+1\,| B_{N,k}(\nu\epsilon) \leq \delta \}
		\stackrel{(a)}{\leq} \min\left\{ N\geq \frac{k}{\nu\epsilon}\,\bigg| \rme^{- D\left(k/N \| \nu\epsilon\right) N}  \leq \delta \right\}
		\nonumber \\
		&= \min\left\{ N\geq \frac{k}{\nu\epsilon}\,\bigg| D(k/N \| \nu\epsilon) N \geq  \ln\delta^{-1} \right\}
		\stackrel{(b)}{\leq} \min\left\{ N\geq \frac{k}{\nu\epsilon}\,\bigg| \frac{(k/N -\nu\epsilon)^2}{2\nu\epsilon} N  \geq \ln\delta^{-1} \right\}
		\nonumber \\
		&= \min\left\{ N\geq \frac{k}{\nu\epsilon}\,\bigg|\, (\nu\epsilon)^2 N^2-2\nu\epsilon(k+\ln\delta^{-1})N+k^2\geq0 \right\}
		\nonumber \\
		&= \Biggl\lceil \frac{k+ \ln\delta^{-1}+\sqrt{(\ln\delta^{-1})^2+2k\ln\delta^{-1}}}{\nu\epsilon} \Biggr\rceil
		\stackrel{(c)}{\leq} \Biggl\lceil \frac{k+2\ln\delta^{-1}+\sqrt{2k\ln\delta^{-1}}}{\nu\epsilon} \Biggr\rceil,  
	\end{align}
	where $(a)$ follows from the Chernoff bound \eqref{eq:ChernoffB}; 
	$(b)$ follows from the inequality $D(x\|y)\geq(x-y)^2/(2y)$ for $x\leq y$;
	and $(c)$ follows from the inequality $\sqrt{x+y}\leq\sqrt{x}+\sqrt{y}$ for $x,y\geq0$.
	
	The lower bound in \eref{eq:UBiidNfixk} can be derived as follows,  
	\begin{align}
		\begin{split}
			N_{k}^{\iid}(\epsilon,\delta,\lambda)
			= \min\{ N\geq k+1\,| B_{N,k}(\nu\epsilon) \leq \delta \}
			\geq \min\{ N\geq k+1\,|\, (1-\nu\epsilon)^N \leq \delta \}
			=\biggl\lceil\frac{\ln \delta}{\ln (1-\nu\epsilon)}\biggr\rceil, 
		\end{split}
	\end{align}		
	where the inequality holds because $B_{N,k}(\nu\epsilon)\geq (1-\nu\epsilon)^N$. 
	
	For $0<\delta\leq 1/2$, if the integer $N$ satisfies $k+1 \leq N<k/(\nu\epsilon)$, then Proposition~\ref{prop:kboundsiid} implies that 
	$\bar{\epsilon}^{\,\iid}_{\lambda}(k,N,\delta)>\epsilon$. 
	This fact and the definition in \eref{eq:DefNTConstiid} together confirm \eref{eq:LBiidNfixk}. 
\end{proof}

\begin{proof}[Proof of Proposition \ref{prop:iidMLT}]
	According to the discussions in \ref{app:iidrelation}, Eq.~(90) of 
	our companion paper \cite{Classical22} implies \eref{eq:iiddel0}, and Eq.~(87) of Ref.~\cite{Classical22} implies that 
	\begin{align}\label{eq:iideps0}
		N_{k}^{\iid}(\epsilon,\delta,\lambda)
		&=\frac{t_\mathrm{P}(k,\delta)}{\nu\epsilon}+o\big(\epsilon^{-1}\big) ,
	\end{align}
	where 
	$t_\mathrm{P}(k,\delta)$ is defined as the unique solution of $x$ to the equation
	\begin{align}\label{eq:t1kdelta}
		\rme^{-x} \sum_{j=0}^k \frac{x^j}{j!} = \delta, \qquad x\geq 0.
	\end{align}
	In addition, it is not difficult to verify that  $\lim_{\delta\to 0}\! \left[ t_\mathrm{P}(k,\delta)/(\ln \delta^{-1})\right] =1$. 
	Therefore, we have 
	\begin{align}
		\lim_{\delta\to 0} \lim_{\epsilon\to 0} \frac{\epsilon N_{k}^{\iid}(\epsilon,\delta,\lambda) }{\ln \delta^{-1}}
		=\lim_{\delta\to 0} \frac{t_\mathrm{P}(k,\delta)}{\nu \ln \delta^{-1}} 
		=\frac{1}{\nu}. 
	\end{align}
	This fact and \eref{eq:iiddel0} together imply \eref{eq:iidepsdel0}. 
\end{proof}

\subsection{The number of allowed failures for a given number of tests}\label{app:IIDsetting6E}

As in the adversarial scenario, to  construct a concrete verification protocol, we need to specify the number $k$ of allowed failures in addition to the number $N$ of tests, so that the conditions of soundness and robustness are satisfied simultaneously. A small $k$ is preferred to guarantee soundness, while a larger $k$ is preferred to guarantee robustness. To construct a robust and efficient verification protocol, we need to find a good balance between the two conflicting requirements. 
The following proposition provides a suitable interval for  $k$ that can guarantee soundness;

\begin{proposition}\label{prop:kboundsiid}
	Suppose $0\leq \lambda<1$, $0<\epsilon <1$, $0<\delta\leq1/2$, and $N,k\in\bbZ^{\geq 0}$. 
	If $\nu \epsilon N\leq k\leq N-1$, then  
	$\bar{\epsilon}^{\,\iid}_{\lambda}(k, N, \delta)>\epsilon$. 
	If $0\leq k\leq l_{\iid}(\lambda, N, \epsilon,\delta)$, then $\bar{\epsilon}^{\,\iid}_{\lambda}(k, N, \delta)\leq\epsilon$. Here
	\begin{align}\label{eq:defineliid}
		l_{\iid}(\lambda, N, \epsilon,\delta):=
		\bigl\lfloor\nu \epsilon N - \sqrt{  2 \nu \epsilon N \ln\delta^{-1}} \bigr\rfloor.
	\end{align}
\end{proposition}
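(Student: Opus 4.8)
The plan is to reduce both assertions, via Lemma~\ref{lem:epsiidANDBnk}, to sharp estimates on the binomial tail $B_{N,k}(\nu\epsilon)$, and then to invoke the monotonicity of $B_{z,k}$ (Lemma~\ref{lem:Bzkmono}) and the Chernoff bound \eqref{eq:ChernoffB} already collected in the excerpt. Recall that $\bar{\epsilon}^{\,\iid}_{\lambda}(k,N,\delta)>\epsilon$ holds exactly when $B_{N,k}(\nu\epsilon)>\delta$, and $\bar{\epsilon}^{\,\iid}_{\lambda}(k,N,\delta)\leq\epsilon$ holds exactly when $B_{N,k}(\nu\epsilon)\leq\delta$; so it suffices to prove the corresponding pair of inequalities for $B_{N,k}(\nu\epsilon)$. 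Throughout I may assume $N\geq k+1$, which is needed for $\bar{\epsilon}^{\,\iid}_{\lambda}$ to be well defined (cf.\ Lemma~\ref{lem:epsiidsolve}).

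For the first statement, assume $\nu\epsilon N\leq k\leq N-1$. Since $\nu>0$ (as $\lambda<1$) and $\epsilon>0$, the case $k=0$ would force $N=0$, contradicting $k\leq N-1$; hence $k\geq1$. The hypothesis gives $\nu\epsilon\leq k/N<1$, so by the monotonicity of $B_{N,k}(\cdot)$ in its argument together with Lemma~\ref{lem:Bnk1-k/n>1/2}, $B_{N,k}(\nu\epsilon)\geq B_{N,k}(k/N)>\tfrac12\geq\delta$, the last step using $\delta\leq1/2$. Lemma~\ref{lem:epsiidANDBnk} then gives $\bar{\epsilon}^{\,\iid}_{\lambda}(k,N,\delta)>\epsilon$. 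This mirrors step $(c)$ in the proof of Proposition~\ref{prop:Boundepsiid}.

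For the second statement, assume $0\leq k\leq l_{\iid}(\lambda,N,\epsilon,\delta)$; if $l_{\iid}<0$ there is no admissible $k$ and the claim is vacuous, so assume $l_{\iid}\geq0$. Dropping the floor, $k\leq\nu\epsilon N-\sqrt{2\nu\epsilon N\ln\delta^{-1}}\leq\nu\epsilon N$, so the Chernoff bound \eqref{eq:ChernoffB} applies at $(z,p)=(N,\nu\epsilon)$, and combining it with $D(x\|y)\geq(x-y)^2/(2y)$ for $x\leq y$ yields $B_{N,k}(\nu\epsilon)\leq\exp[-N D(\tfrac{k}{N}\|\nu\epsilon)]\leq\exp\!\big[-(\nu\epsilon N-k)^2/(2\nu\epsilon N)\big]$. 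Taking logarithms, and using $\nu\epsilon N-k\geq0$, the last expression is $\leq\delta$ precisely when $\nu\epsilon N-k\geq\sqrt{2\nu\epsilon N\ln\delta^{-1}}$, i.e.\ when $k\leq\nu\epsilon N-\sqrt{2\nu\epsilon N\ln\delta^{-1}}$; since $k\in\bbZ$ this is exactly $k\leq l_{\iid}(\lambda,N,\epsilon,\delta)$, which holds by hypothesis. Hence $B_{N,k}(\nu\epsilon)\leq\delta$, and Lemma~\ref{lem:epsiidANDBnk} gives $\bar{\epsilon}^{\,\iid}_{\lambda}(k,N,\delta)\leq\epsilon$. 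This is the i.i.d.\ analogue of the chain of implications proving Proposition~\ref{prop:kbounds}, but simpler, since $\bar{\epsilon}^{\,\iid}_{\lambda}$ is controlled directly by a binomial tail rather than by the more involved $\overline{\epsilon}_\lambda$.

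I do not anticipate a genuine obstacle: the argument is a streamlined specialization of estimates already present in the excerpt. The only points requiring care are boundary bookkeeping — excluding $k=0$ in the first statement, treating $l_{\iid}<0$ as vacuous in the second, and checking that the Chernoff regime $k\leq\nu\epsilon N$ and the well-definedness condition $N\geq k+1$ both hold automatically once $k\leq l_{\iid}$ — and making sure the floor in the definition of $l_{\iid}$ matches the integrality of $k$, so that the square-root inequality inverts without loss.
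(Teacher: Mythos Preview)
Your proposal is correct and follows essentially the same route as the paper's proof: for the first statement both you and the paper bound $B_{N,k}(\nu\epsilon)\geq B_{N,k}(k/N)>1/2\geq\delta$ via Lemmas~\ref{lem:Bzkmono} and~\ref{lem:Bnk1-k/n>1/2} and then invoke Lemma~\ref{lem:epsiidANDBnk}; for the second, both combine the Chernoff bound \eqref{eq:ChernoffB} with the quadratic lower bound $D(x\|y)\geq(x-y)^2/(2y)$ to reduce $B_{N,k}(\nu\epsilon)\leq\delta$ to the defining inequality for $l_{\iid}$. Your treatment of the edge cases (ruling out $k=0$ in the first part, the vacuous case $l_{\iid}<0$, and the automatic well-definedness $N\geq k+1$) is in fact slightly more explicit than the paper's.
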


\begin{figure}
	\begin{center}
		\includegraphics[width=7.6cm]{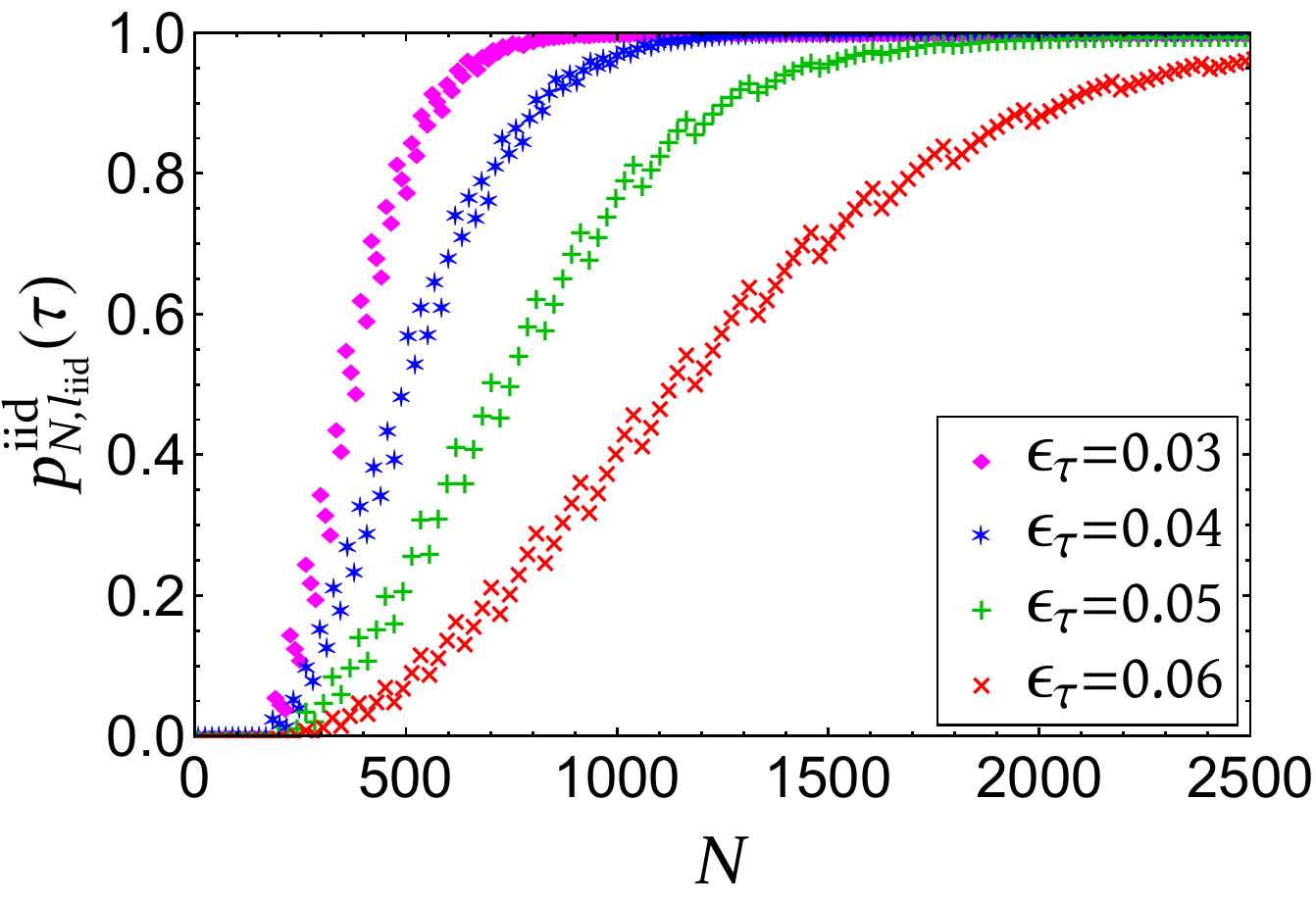}
		\caption{\label{fig:plot_pNkiid}
			The probability $p^{\iid}_{N,l_{\iid}(\lambda, N, \epsilon,\delta)}(\tau)$ that Alice accepts 
			i.i.d.\! quantum states $\tau\in \mathcal{D}(\caH)$ with various infidelities. Here 
			$\lambda=1/2$, $\epsilon=0.1$, $\delta=0.01$;
			$\epsilon_\tau$ denotes the infidelity between $\tau$ and the target state $|\Psi\>$; 
			and $l_{\iid}(\lambda, N, \epsilon,\delta)$ is the number of allowed failures defined in \eref{eq:defineliid}.
			This figure is the counterpart of Fig.~4 in the main text.
		}
	\end{center}
\end{figure}

Next, we turn to the condition of robustness.
When  Bob prepares i.i.d.\! quantum states $\tau\in \mathcal{D}(\caH)$ with $0<\epsilon_\tau<\epsilon$, 
the probability that Alice accepts $\tau$ reads $p^{\iid}_{N,k}(\tau)=B_{N,k}(\nu\epsilon_\tau)$, 
which is strictly increasing in $k$ according to \lref{lem:Bzkmono}.  
Suppose $k=l_{\iid}(\lambda, N, \epsilon,\delta)$; then  we have the following asymptotic relation (in the limit of large $N$) between the acceptance probability and the number $N$ of tests, 
\begin{align}\label{eq:pkLimitIID}
	p^{\iid}_{N,l_{\iid}}(\tau)=
	1-\exp\bigl[ - D( \nu\epsilon \| \nu\epsilon_\tau ) N +O(\sqrt{N}\,) \bigr] \qquad \forall\, 0<\epsilon_\tau<\epsilon,  
\end{align}
where $l_{\iid}$ is a shorthand for $l_{\iid}(\lambda, N, \epsilon,\delta)$. 
Equation \eqref{eq:pkLimitIID} is the counterpart of  \eref{eq:pkLimit} in the main text, and can be 
derived in a similar way. 
This equation shows that 
the probability of acceptance can be arbitrarily close to one as long as $N$ is sufficiently large, as illustrated in Supplementary Figure~\ref{fig:plot_pNkiid}.
Therefore,  our verification protocol can reach any degree of robustness in the i.i.d.\! scenario.

\begin{proof}[Proof of Proposition~\ref{prop:kboundsiid}]
	When $\nu \epsilon N\leq k\leq N-1$ we have 
	\begin{align}\label{eq:lem14Bnk>}
		B_{N,k}(\nu\epsilon)\stackrel{(a)}{\geq} B_{N,k}(k/N)\stackrel{(b)}{>} 1/2 \geq \delta,
	\end{align}
	where $(a)$ follows from \lref{lem:Bzkmono}, and $(b)$ follows from \lref{lem:Bnk1-k/n>1/2}. 
	Equation \eqref{eq:lem14Bnk>} implies that $\bar{\epsilon}^{\,\iid}_{\lambda}(k, N, \delta)>\epsilon$ according to \lref{lem:epsiidANDBnk}, which
	confirms the first statement of Proposition~\ref{prop:kboundsiid}.
	
	Given $0\leq \lambda<1$, $0<\epsilon <1$, $0<\delta\leq1$, and $N,k\in\bbZ^{\geq 0}$, we have
	\begin{align}
		&k \leq l_{\iid}(\lambda, N, \epsilon,\delta)
		&
		&\Rightarrow \quad 
		k \leq \nu \epsilon N - \sqrt{  2 \nu \epsilon N \ln\delta^{-1}}&
		\nonumber \\   \Rightarrow\quad
		&\frac{k}{N}\leq \nu \epsilon ,\ 
		\frac{(k/N-\nu\epsilon)^2}{2\nu\epsilon}N \geq \ln\delta^{-1}
		&
		&\stackrel{(a)}{\Rightarrow} \quad 
		\frac{k}{N}\leq \nu \epsilon ,\ 
		D\!\left( \frac{k}{N}\bigg\| \nu \epsilon\right)N  \geq \ln\delta^{-1}&
		\\   \stackrel{(b)}{\Rightarrow} \quad
		&B_{N,k}( \nu \epsilon) \leq \exp\!\left[ -D\!\left( \frac{k}{N}\bigg\| \nu \epsilon\right)N \right]  \leq \delta
		&
		&\stackrel{(c)}{\Rightarrow} \quad 
		\bar{\epsilon}^{\,\iid}_{\lambda}(k, N, \delta)\leq\epsilon, &\nonumber 
	\end{align}
	which confirm the second statement of Proposition~\ref{prop:kboundsiid}.
	Here the notation $A\Rightarrow B$ means  $A$ is a sufficient condition of $B$; 
	$(a)$ follows from the inequality $D(x\|y)\geq(x-y)^2/(2y)$ for $x\leq y$; 
	$(b)$ follows from the Chernoff bound \eqref{eq:ChernoffB}; and
	$(c)$ follows from \lref{lem:epsiidANDBnk}. 
\end{proof}

\subsection{Explanation of Algorithm~\ref{alg:iidNoptAdv}}\label{app:IIDsetting6F}
For $0\leq\lambda<1$, $0<\delta,\epsilon<1$ and $0\leq r<1$, the  minimum number of tests required for  robust verification  in the i.i.d.\! scenario is denoted by $N_{\rm min}^{\rm iid}(\epsilon,\delta,\lambda,r)$ as defined in \eref{eq:iidOptNadvDef2} of the Methods section.  
Algorithm~\ref{alg:iidNoptAdv} proposed in the Methods section can be used to calculate $N_{\rm min}^{\rm iid}(\epsilon,\delta,\lambda,r)$ and 
the corresponding number of allowed failures.

To understand why Algorithm~\ref{alg:iidNoptAdv} works,
denote by $k^{\rm iid}_{\rm min}(\epsilon,\delta,\lambda,r)$ the minimum nonnegative integer $k$ such that \eref{eq:robustConditionIID} in the Methods section holds 
for some $N\in\bbZ^{\geq k+1}$. 
When $r=0$ for example, we have $k_{\rm min}^{\rm iid}(\epsilon,\delta,\lambda,r)=0$, because $B_{N,0}(0)=1$ and 
$\lim_{N\to\infty}B_{N,0}(\nu\epsilon)=0$. 
To simplify the notation, we shall use $k_{\rm min}^{\rm iid}$ as a shorthand for $k_{\rm min}^{\rm iid}(\epsilon,\delta,\lambda,r)$. 
Then $N_{\rm min}^{\rm iid}(\epsilon,\delta,\lambda,r)$ can be expressed as 
\begin{align}
	N_{\rm min}^{\rm iid}(\epsilon,\delta,\lambda,r)&\;=
	\min_{k\geq k_{\rm min}^{\rm iid}} N_{k}^{\rm iid}(\epsilon,\delta,\lambda,r), \label{eq:iidDefN_kB3}
	\\
	N_{k}^{\rm iid}(\epsilon,\delta,\lambda,r)
	&:= 
	\min \left\{N\in\bbZ^{k+1} \big| B_{N,k}(\nu\epsilon)\leq\delta, B_{N,k}(\nu r\epsilon)\geq1-\delta \right\} 
	\nonumber \\
	&\;= \min \left\{N\in\bbZ^{k+1} \big| B_{N,k}(\nu\epsilon)\leq\delta\right\}  
	\qquad \forall\, k\geq k_{\rm min}^{\rm iid}. 
	\label{eq:iidDefN_kG39} 
\end{align}
The second equality in \eref{eq:iidDefN_kG39} follows from the definition of $k_{\rm min}^{\rm iid}$ and the fact that $B_{N,k}(\nu r\epsilon)$ is nonincreasing in $N$ for $N\geq k$ according to \lref{lem:Bzkmono}. 
Here $N_{k}^{\rm iid}(\epsilon,\delta,\lambda,r)$ reduces to $N_{k}^{\rm iid}(\epsilon,\delta,\lambda)$ 
defined in \eref{eq:DefNTConstiid} when $k\geq k^{\rm iid}_{\rm min}$. 
Note that $B_{N,k}(\nu\epsilon)$ is strictly increasing in $k$ for $0\leq k\leq N$ by \lref{lem:Bzkmono}, 
so $N_{k}^{\rm iid}(\epsilon,\delta,\lambda,r)$ is nondecreasing in $k$ for $k\geq k_{\rm min}^{\rm iid}$, which implies that 
\begin{align}\label{eq:IIDoptNRewriteB5}
	N_{\rm min}^{\rm iid}(\epsilon,\delta,\lambda,r)=N_{k_{\rm min}^{\rm iid}}^{\rm iid}(\epsilon,\delta,\lambda,r).
\end{align} 
By definition \eqref{eq:iidDefN_kG39}, \eref{eq:robustConditionIID} in the Methods section holds when 
$N=N^{\rm iid}_{\rm min}(\epsilon,\delta,\lambda,r)$ and $k=k^{\rm iid}_{\rm min}(\epsilon,\delta,\lambda,r)$. 

In Algorithm~\ref{alg:iidNoptAdv}, the aim of steps 1--11 is to find $k_{\rm min}^{\rm iid}$. 
In particular, steps 1--2 aim to find $k_{\rm min}^{\rm iid}$ in the case $r=0$; 
steps 3--10 aim to find $k_{\rm min}^{\rm iid}$ in the case $r>0$ by virtue of the fact that both 
$B_{M,k}(\nu r\epsilon)$ and $B_{M,k}(\nu\epsilon)$ are strictly decreasing in $M$ for $M\geq k$ according to \lref{lem:Bzkmono}.   
Steps 12--13 aim to find $N_{\rm min}^{\rm iid}$ by virtue of Eqs.~\eqref{eq:iidDefN_kG39} and~\eqref{eq:IIDoptNRewriteB5}.

\subsection{Upper bounds for $N_{\rm min}^{\rm iid}(\epsilon,\delta,\lambda,r)$ in  \eref{eq:iidNoptUB} of the Methods section}\label{app:IIDsetting6G}

\subsubsection{Main proof and illustration}

Equation \eqref{eq:iidNoptUB} in the Methods section can be proved by virtue of Proposition \ref{prop:iidHighProbiid} in the Methods section, which provides a guideline for choosing appropriate parameters $N$ and $k$ to  achieve a given verification precision and robustness.
In practice, we need to  choose a suitable error rate $s$, so that the number of tests in Proposition \ref{prop:iidHighProbiid} is as small as possible.
To this end, we determine the maximum of the denominator in \eref{eq:choosesknIID} in the Methods section. Since $D(\nu s\|\nu r\epsilon)$ is nondecreasing in $s$, while $D(\nu s\|\nu\epsilon)$ is nonincreasing in $s$ for $s\in (r\epsilon,\epsilon)$, 
the above maximum value is attained when $D(\nu s\|\nu r\epsilon)=D(\nu s\|\nu\epsilon)$. Based on this observation we can deduce that 
\begin{gather}
	\max_{s\in (r\epsilon,\epsilon)} \min\{D(\nu s\|\nu r\epsilon), D(\nu s\|\nu\epsilon)\} 
	= 
	D\!\left( \frac{\ln\!\big(\frac{1-\nu\epsilon}{1-\nu r\epsilon}\big) }
	{\ln r + \ln\!\big(\frac{1-\nu\epsilon}{1-\nu r\epsilon}\big)} \Bigg\|\nu\epsilon\right),
	\label{eq:maxDomiN} \\
	s_{\iid}(\lambda,r,\epsilon)
	:= \text{argmax}_{s} \min\{D(\nu s\|\nu r\epsilon), D(\nu s\|\nu\epsilon)\} 
	= \frac{\ln\!\big(\frac{1-\nu\epsilon}{1-\nu r\epsilon}\big) }
	{\nu\ln r + \nu\ln\!\big(\frac{1-\nu\epsilon}{1-\nu r\epsilon}\big)}.\label{eq:optimals}
\end{gather}

\begin{figure}[b]
	\includegraphics[width=7.2cm]{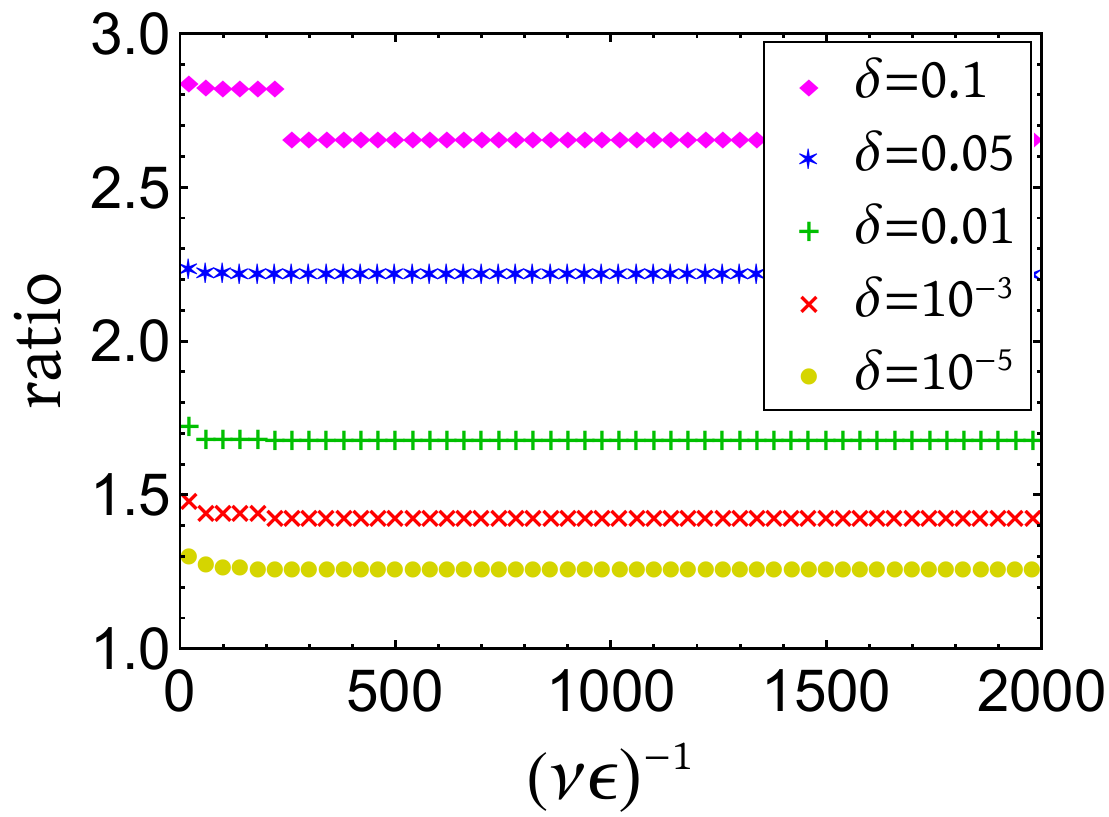}
	\caption{\label{fig:RatioNubNiid}
		The ratio of the second upper bound in \eref{eq:iidNoptUB} of the Methods section over $N_{\rm min}^{\rm iid}(\epsilon,\delta,\lambda,r)$
		with $r=1/2$. 
	}
\end{figure}

\begin{figure}
	\begin{center}
		\includegraphics[width=7.2cm]{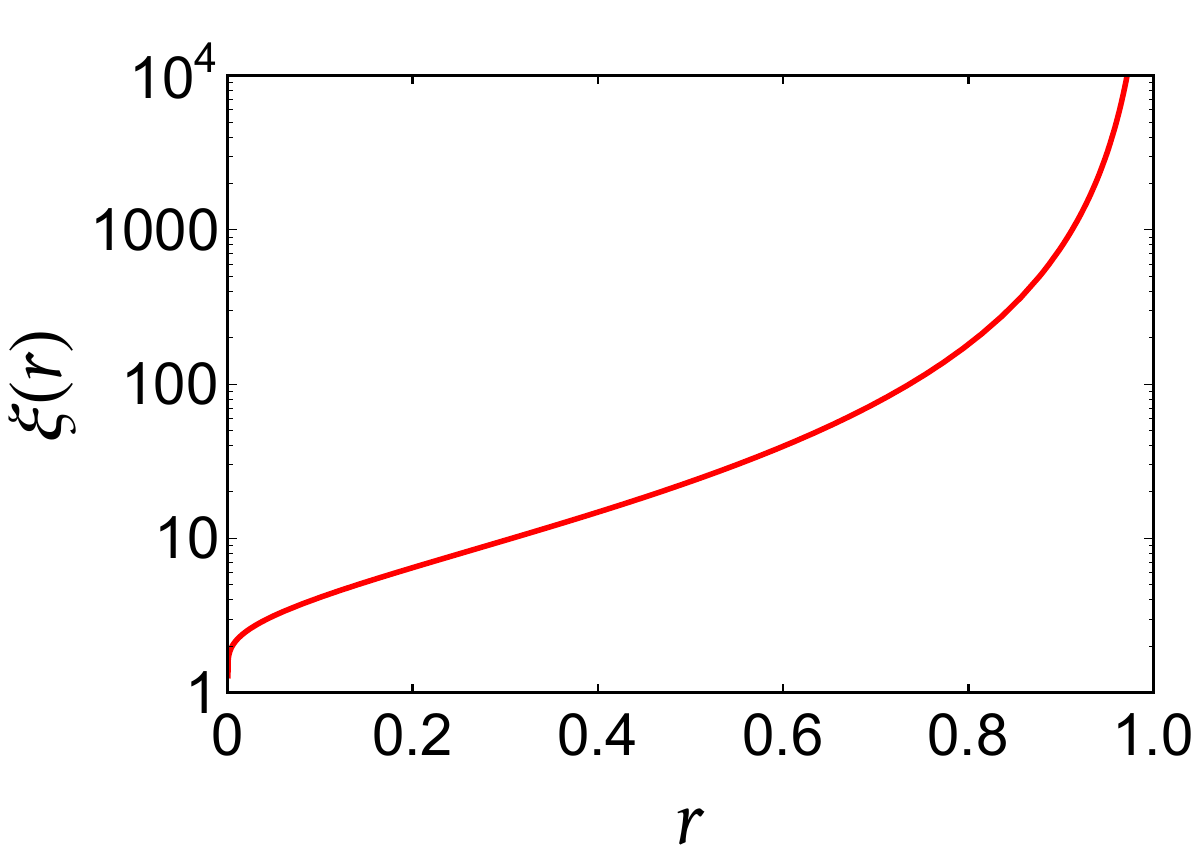}
		\caption{\label{fig:xi_t}
			Variation of the function $\xi(r)$ with $r$. 
		}
	\end{center}
\end{figure}

Thanks to \eref{eq:maxDomiN} and Proposition~\ref{prop:iidHighProbiid}, the conditions of soundness and robustness in \eref{eq:robustConditionIID} of the Methods section hold as long as 
$k=\left\lfloor\nu s_{\iid}(\lambda,r,\epsilon) N\right\rfloor$ and 
\begin{align}\label{eq:Niidt}
	N
	= \left\lceil \frac{\ln\delta^{-1}}{D\!\left( \nu s_{\iid}(\lambda,r,\epsilon) \big\|\nu\epsilon\right)}  \right\rceil  
	= \left\lceil \frac{\ln\delta^{-1}}{\nu\epsilon} \zeta(r,\nu\epsilon) \right\rceil . 
\end{align} 
This fact implies the upper bounds  for $N_{\rm min}^{\rm iid}(\epsilon,\delta,\lambda,r)$ in \eref{eq:iidNoptUB} of the Methods section. 
In particular, the second inequality of \eref{eq:iidNoptUB} in the Methods section follows from \lref{lem:zetaMono} below.

To illustrate the tightness of the bounds in \eref{eq:iidNoptUB} of the Methods section, the ratio of the second upper bound over $N_{\rm min}^{\rm iid}(\epsilon,\delta,\lambda,r)$
is plotted in Supplementary Figure~\ref{fig:RatioNubNiid}. 
In addition, the coefficient $\xi(r)$ of the second upper bound is plotted in Supplementary Figure~\ref{fig:xi_t}.

\subsubsection{Auxiliary lemmas}
\begin{lemma}\label{lem:zetaMono}
	Suppose $0<r<1$; then $\zeta(r,p)$ defined in \eref{eq:zetarpxi} of the Methods section 
	is strictly decreasing in $p$ for $0<p<1$.
\end{lemma}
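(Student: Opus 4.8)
The plan is to prove the equivalent statement that the reciprocal $1/\zeta(r,p)=D\bigl(q^{*}(p)\,\|\,p\bigr)/p$ is strictly increasing in $p\in(0,1)$, for fixed $r\in(0,1)$. Here I abbreviate $u(p):=\ln\frac{1-p}{1-rp}$ and $q^{*}(p):=\frac{u(p)}{\ln r+u(p)}$, so that $\zeta(r,p)=p/D(q^{*}(p)\|p)$ by \eref{eq:zetarpxi}. Since $r<1$ we have $u(p)<0$ and $\ln r<0$, hence $\ln r+u(p)<u(p)<0$ and $q^{*}(p)\in(0,1)$; in fact $q^{*}(p)$ is the unique $s$ with $rp<s<p$ solving $D(s\|rp)=D(s\|p)$ (the crossing point of the two binomial rate functions, as in the derivation of \eref{eq:optimals}), and the formula for $q^{*}$ is just that equation written out. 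In particular the relation
\begin{align}\label{eq:zetaKeyId}
q^{*}(p)\,\ln r=\bigl(1-q^{*}(p)\bigr)\,u(p)
\end{align}
holds and will drive the whole argument, since it both pins down $q^{*}$ and lets us replace awkward expressions by simpler ones.

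Next I would differentiate. Using \eqref{eq:zetaKeyId} and $u'(p)=\frac{r-1}{(1-p)(1-rp)}$ one obtains the closed form
\begin{align}
\frac{dq^{*}}{dp}=\frac{(1-q^{*})^{2}}{\ln r}\cdot\frac{r-1}{(1-p)(1-rp)}>0,
\end{align}
the positivity because $\frac{r-1}{\ln r}>0$. Writing $\Delta(p):=D(q^{*}(p)\|p)$ and recalling $\partial_{q}D(q\|p)=\ln\frac{q(1-p)}{p(1-q)}$ and $\partial_{p}D(q\|p)=\frac{p-q}{p(1-p)}$, the target $(\Delta/p)'>0$ is equivalent to $p\,\Delta'(p)>\Delta(p)$, i.e.
\begin{align}
p\,\frac{dq^{*}}{dp}\,\ln\frac{q^{*}(1-p)}{p(1-q^{*})}+\frac{p-q^{*}}{1-p}-D(q^{*}\|p)>0 .
\end{align}
To reduce this to something manageable I would symmetrize via $D(q^{*}\|p)=\tfrac12\bigl[D(q^{*}\|p)+D(q^{*}\|rp)\bigr]$ and then substitute the explicit parametrization obtained by setting $t:=q^{*}/(1-q^{*})\in(0,\infty)$: from \eqref{eq:zetaKeyId} one has $\frac{1-p}{1-rp}=r^{t}$, hence $p=\frac{r^{t}-1}{r^{t+1}-1}$, $1-p=\frac{r^{t}(r-1)}{r^{t+1}-1}$, $1-rp=\frac{r-1}{r^{t+1}-1}$. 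Since $p$ is strictly increasing in $t$, the inequality above becomes a single one–variable inequality in $t$, with $r$ as a parameter.

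The final step — proving that one–variable inequality — is where I expect the main difficulty. The obvious estimates (replacing logarithms using $\ln x\le x-1$, or using $D(x\|y)\ge\frac{(x-y)^{2}}{2y}$ as in \lref{lem:DpqMonoton} and the lemmas around it) throw away too much: they only give $p\,\Delta'(p)-\Delta(p)>(\text{a quantity that can be negative})$, as one checks on an example. So one has to exploit the crossing identity \eqref{eq:zetaKeyId} fully and compare the logarithmic terms against each other rather than linearizing them; concretely I would examine the sign of the $t$–derivative of the reduced expression, show it has a controlled number of sign changes, and conclude by checking the boundary behaviour as $t\to0^{+}$ (where $\zeta(r,p)\to\xi(r)$, cf. \eref{eq:xir}) and as $t\to\infty$. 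Once the inequality is established, $D(q^{*}(p)\|p)/p$ is strictly increasing, so $\zeta(r,p)$ is strictly decreasing in $p$; and letting $p\to0^{+}$ then yields $\zeta(r,p)<\xi(r)$ for $p\in(0,1)$, which is precisely what is used for the second bound in \eref{eq:iidNoptUB}.
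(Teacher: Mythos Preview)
Your setup is fine and your derivative formula for $dq^{*}/dp$ is correct, but the proposal is not a proof: you explicitly leave the decisive one–variable inequality unproved, with only a vague plan (``examine the sign of the $t$–derivative \ldots\ controlled number of sign changes \ldots\ boundary behaviour'') for how you might attack it. That is precisely the hard part, and nothing in your outline guarantees it goes through; indeed you already note that the natural relaxations (linearizing logs, Pinsker-type bounds) are too weak. So as written there is a genuine gap.

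The paper avoids this difficulty altogether by a different decomposition. Instead of differentiating the composite $p\mapsto D(q^{*}(p)\|p)/p$ directly, it first uses the crossing identity to switch to $D(q^{*}(p)\|rp)/p$ (these are equal by the very definition of $q^{*}$), and then compares two points $p_{1}<p_{2}$ in two separate steps:
\begin{enumerate}
\item For fixed $a\in(r,1)$, the map $q\mapsto D(aq\|rq)/q$ is nondecreasing on $(0,1)$. This is a clean calculus lemma: one checks $q\,[D(aq\|rq)]''\ge 0$ and that the expression vanishes at $q\to 0^{+}$.
\item The ratio $q^{*}(p)/p$ is strictly increasing in $p$. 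This is proved by a (longer but entirely elementary) chain of sign checks on successive $p$–derivatives, reducing ultimately to $2(r-1)-(r+1)\ln r>0$ for $0<r<1$.
\end{enumerate}
With these two lemmas in hand, setting $a=q^{*}(p_{1})/p_{1}$ one gets
\[
\frac{D(q^{*}(p_{2})\|rp_{2})}{p_{2}}
\;>\;
\frac{1}{p_{2}}\,D\!\Bigl(\tfrac{q^{*}(p_{1})}{p_{1}}\,p_{2}\,\Big\|\,rp_{2}\Bigr)
\;\ge\;
\frac{1}{p_{1}}\,D\!\Bigl(\tfrac{q^{*}(p_{1})}{p_{1}}\,p_{1}\,\Big\|\,rp_{1}\Bigr)
\;=\;
\frac{D(q^{*}(p_{1})\|rp_{1})}{p_{1}},
\]
the first inequality from step~2 together with the monotonicity of $s\mapsto D(s\|rp_{2})$ on $(rp_{2},p_{2})$, the second from step~1. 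The point is that freezing the ratio $a=q^{*}/p$ decouples the two sources of variation and turns the problem into two independent monotonicity statements, each of which is tractable by straightforward calculus. Your direct route keeps them entangled, which is why the resulting inequality looks intractable.
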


\begin{proof}[Proof of \lref{lem:zetaMono}]
	For $0<r<1$ and $0<p<1$, define 
	\begin{align}\label{eq:Freqpt}
		\beta(r,p):=\frac{\ln\!\big(\frac{1-p}{1-rp}\big) }{\ln r + \ln\!\big(\frac{1-p}{1-rp}\big)}.   
	\end{align}
	One can easily verify that $D\!\left( \beta(r,p) \big\|p\right) =D\!\left( \beta(r,p) \big\|rp\right)$. 
	This fact and \lref{lem:DpqMonoton} imply that $rp<\beta(r,p)< p$. 
	Therefore, to prove \lref{lem:zetaMono}, it suffices to show that 
	$D\!\left( \beta(r,p) \big\|rp\right)\!/p$ 
	is strictly increasing in $p$ for $0<p<1$. 
	
	For $0<p_1<p_2<1$ we have 
	\begin{align}
		\frac{D\!\left(\beta(r,p_2) \big\|r  p_2\right)}{p_2}
		\stackrel{(a)}{>}
		\frac{1}{p_2} \, D\!\left( p_2 \frac{\beta(r,p_1)}{p_1}  \Bigg\|r  p_2\right) 
		\stackrel{(b)}{\geq}
		\frac{1}{p_1} \, D\!\left( p_1 \frac{\beta(r,p_1)}{p_1}  \Bigg\|r  p_1\right) 
		=
		\frac{D\!\left( \beta(r,p_1) \big\|r  p_1\right)}{p_1} , 
	\end{align}
	which confirms \lref{lem:zetaMono}. 
	Here $(a)$ follows from \lref{lem:DpqMonoton} and the inequality
	$rp_2<\beta(r,p_1)p_2/p_1< \beta(r,p_2)$ by \lref{lem:f/pIncrease} below; 
	$(b)$ follows because $\frac{1}{q}  D\!\left( \left. q \frac{\beta(r,p_1)}{p_1}  \right\|\! r  q\right)$ 
	is nondecreasing in $q$ for $0<q<1$ according to \lref{lem:D(ap,p)/pMono} below. 
\end{proof}

\begin{lemma}\label{lem:D(ap,p)/pMono}
	Suppose $0<r,a,q<1$; then 
	$D\!\left(aq \|rq\right)\!/q$ is nondecreasing in $q$.  
\end{lemma}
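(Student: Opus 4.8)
The plan is to reduce the monotonicity of $D(aq\|rq)/q$ to the positivity on $(0,1)$ of a single scalar function that vanishes at the origin, and then to verify that the derivative of that function is a perfect square divided by manifestly positive quantities.

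First I would fix $a,r\in(0,1)$ and set $f(q):=D(aq\|rq)$ for $q\in(0,1)$. Since the ratio $a/r$ does not depend on $q$, one has the convenient form
\begin{align}
f(q)=aq\ln\frac{a}{r}+(1-aq)\ln\frac{1-aq}{1-rq}.
\end{align}
Because $\bigl(f(q)/q\bigr)'=\bigl(qf'(q)-f(q)\bigr)/q^{2}$, it suffices to prove $qf'(q)-f(q)\geq0$ on $(0,1)$.

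Next I would differentiate $f$ and simplify. A direct computation gives
\begin{align}
f'(q)=a\ln\frac{a}{r}+a\ln\frac{1-rq}{1-aq}-a+\frac{r(1-aq)}{1-rq},
\end{align}
and, using $aq+(1-aq)=1$ to merge the two occurrences of $\ln\frac{1-rq}{1-aq}$, the combination collapses to
\begin{align}
qf'(q)-f(q)=g(q):=\ln\frac{1-rq}{1-aq}-aq+\frac{rq(1-aq)}{1-rq}.
\end{align}
Since $g(0)=0$, the lemma follows once we show $g$ is nondecreasing on $(0,1)$.

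Finally I would compute $g'(q)$. Grouping the two terms carrying a factor $r$ over the common denominator $(1-rq)^{2}$ and simplifying $\frac{a}{1-aq}-a=\frac{a^{2}q}{1-aq}$ yields
\begin{align}
g'(q)=q\left[\frac{r\,(r-2a+arq)}{(1-rq)^{2}}+\frac{a^{2}}{1-aq}\right]=\frac{q\,(r-a)^{2}}{(1-rq)^{2}(1-aq)}\geq0,
\end{align}
where the second equality rests on the polynomial identity $r(r-2a+arq)(1-aq)+a^{2}(1-rq)^{2}=(r-a)^{2}$. Hence $g(q)\geq g(0)=0$ on $(0,1)$, which proves the claim. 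The main obstacle is nothing conceptual but the bookkeeping in this last expansion: one has to see that all the $q$-dependent cross terms cancel and what remains is exactly $(r-a)^{2}$; the earlier steps are routine differentiation.
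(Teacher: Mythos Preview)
Your proof is correct and follows essentially the same route as the paper: reduce to showing $qf'(q)-f(q)\geq 0$ with value $0$ at $q=0$, then verify its derivative $g'(q)=qf''(q)=\dfrac{q(r-a)^2}{(1-aq)(1-rq)^2}\geq 0$. The paper just records $f''(q)$ directly rather than packaging the computation via the polynomial identity, but the content is identical.
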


\begin{proof}[Proof of \lref{lem:D(ap,p)/pMono}]
	We have 
	\begin{align}
		\frac{{\rm d}[D\!\left(aq \|rq\right)\!/q]}{{\rm d} q} 
		&= \frac{q\big[ \frac{{\rm d}}{{\rm d} q}D\!\left(aq \|rq\right)\!\big] -D\!\left(aq \|rq\right)}{q^2}.  
	\end{align} 
	In the following, we shall prove that the numerator of the RHS satisfies  
	\begin{align}\label{eq:der>0H47}
		q\bigg[ \frac{{\rm d}}{{\rm d} q}D\!\left(aq \|rq\right)\!\bigg] -D\!\left(aq \|rq\right)\geq0 \qquad \forall 0<r,a,q<1,
	\end{align}
	which implies \lref{lem:D(ap,p)/pMono}. 
	First, note that $\lim_{q\to 0^+}\!\big( q\big[ \frac{{\rm d} D(aq \|rq)}{{\rm d} q}\big] -D\!\left(aq \|rq\right)\big) =0$.  
	Therefore, to prove \eref{eq:der>0H47}, it suffices to show that 
	$q\big[ \frac{{\rm d}}{{\rm d} q}D\!\left(aq \|rq\right)\!\big] -D\!\left(aq \|rq\right)$ is nondecreasing in $q$ for $0<q<1$, 
	that is, 
	\begin{align}
		\frac{{\rm d}}{{\rm d} q}  \left\{ q\bigg[ \frac{{\rm d}}{{\rm d} q}D\!\left(aq \|rq\right)\!\bigg] -D\!\left(aq \|rq\right) \right\}
		=q\left[ \frac{{\rm d^2}D\!\left(aq \|rq\right)}{{\rm d} q^2} \right] 
		\geq0. 
	\end{align}
	This inequality can be proved as follows, 
	\begin{align}
		\frac{{\rm d}D\!\left(aq \|rq\right)}{{\rm d} q} 
		&=
		a\ln\left( \frac{a}{r}\right) -a\ln\bigg( \frac{1-aq}{1-rq}\bigg)  +\frac{r-a}{1-rq} , 
		\\
		\frac{{\rm d^2}D\!\left(aq \|rq\right)}{{\rm d} q^2} 
		&=
		\frac{(r-a)^2}{(1-aq)(1-rq)^2} 
		\geq0 , 
	\end{align} 
	which completes the proof of \lref{lem:D(ap,p)/pMono}. 
\end{proof}

\begin{lemma}\label{lem:f/pIncrease}
	Suppose $0<r,p<1$; then 
	$\beta(r,p)/p$ is strictly increasing in $p$, where $\beta(r,p)$ is defined in \eref{eq:Freqpt}. 
\end{lemma}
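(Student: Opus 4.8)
The plan is to reduce the claim to the sign of the numerator of $\frac{d}{dp}\bigl(\beta(r,p)/p\bigr)$ and then to an elementary one‑variable calculus inequality. Write $c:=\ln r<0$ and $u(p):=\ln\frac{1-p}{1-rp}$, so that $\beta(r,p)=\frac{u(p)}{c+u(p)}$ by \eref{eq:Freqpt}; note $u(p)<0$ and hence $c+u(p)<0$ for $p\in(0,1)$, so the denominator never vanishes. Applying the quotient rule to $\beta(r,p)/p=\frac{u(p)}{p\,(c+u(p))}$ and cancelling the cross terms, I get
\[
\frac{d}{dp}\!\left(\frac{\beta(r,p)}{p}\right)=\frac{G(p)}{p^{2}\,(c+u(p))^{2}},\qquad G(p):=c\bigl(pu'(p)-u(p)\bigr)-u(p)^{2}.
\]
Since the denominator is positive, it suffices to show $G(p)>0$ on $(0,1)$. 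One checks $G(0^{+})=0$ (because $u(0)=0$ and $u'(0)=-(1-r)$ is finite), so it is enough to prove $G'(p)>0$ on $(0,1)$.

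Next I would compute $u'(p)=\frac{-(1-r)}{(1-p)(1-rp)}$ and $u''(p)=\frac{-(1-r)\,(r(1-p)+(1-rp))}{[(1-p)(1-rp)]^{2}}$, noting that $r(1-p)+(1-rp)=1+r-2rp>0$ on $[0,1]$, so $u'<0$ and $u''<0$. Then $G'(p)=c\,p\,u''(p)-2u(p)u'(p)$. Introducing the abbreviations $a:=1-p$, $b:=1-rp$ (so $b-a=(1-r)p$ and $r(1-p)+(1-rp)=ra+b$), then multiplying $G'(p)$ through by $[(1-p)(1-rp)]^{2}=a^{2}b^{2}>0$ and finally substituting $\rho:=b/a=\frac{1-rp}{1-p}>1$ and $s:=1/r>1$, a short computation turns $G'(p)>0$ into
\[
\psi(s):=\ln s\,(\rho-1)(s\rho+1)-2(s-1)\rho\ln\rho>0 .
\]
As $p$ ranges over $(0,1)$ the quantity $\rho$ ranges over $(1,\infty)$ while $s=1/r$ is a fixed constant exceeding $1$, so it is enough to establish $\psi(s)>0$ for all $s,\rho>1$.

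Finally I would prove $\psi(s)>0$ for $s>1$ with $\rho>1$ fixed. We have $\psi(1)=0$ and $\psi''(s)=\frac{(\rho-1)(\rho s-1)}{s^{2}}>0$; moreover $\psi'(1)=\rho^{2}-1-2\rho\ln\rho$, which is positive for $\rho>1$ because it vanishes at $\rho=1$ with derivative $2(\rho-1-\ln\rho)>0$ (the standard inequality $\ln x<x-1$). Hence $\psi'$ is strictly increasing on $[1,\infty)$ with $\psi'(1)>0$, so $\psi'>0$ and $\psi$ is strictly increasing there; with $\psi(1)=0$ this gives $\psi(s)>0$ for $s>1$. Therefore $G'(p)>0$ on $(0,1)$, whence $G(p)=\int_{0}^{p}G'(t)\,dt>0$, and $\beta(r,p)/p$ is strictly increasing in $p$, as claimed.

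The routine‑but‑fiddly step is the algebra reducing $G'(p)>0$ to $\psi(s)>0$: one must expand $p\,u''(p)$, use $b-a=(1-r)p$ and $ra+b=1+r-2rp$ to clear the logarithmic‑derivative terms, and then substitute $\rho,s$. I do not anticipate a genuine obstacle here — everything after the reduction is a two‑line sign check — but care is needed to keep the signs ($\ln r<0$, $u<0$, $u''<0$) tracked correctly so the inequalities point the right way.
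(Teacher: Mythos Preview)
Your argument is correct. The reduction of $G'(p)>0$ to $\psi(s)>0$ checks out: after multiplying by $a^2b^2$ and factoring out $(1-r)>0$, one needs $-cp(ra+b)+2ab\,u>0$; substituting $c=-\ln s$, $u=-\ln\rho$, $p=(b-a)/(1-r)=(b-a)s/(s-1)$ and $ra+b=(a+bs)/s$, then dividing by $a^2/(s-1)>0$, gives exactly your $\psi(s)$. The convexity check $\psi''(s)=(\rho-1)(\rho s-1)/s^2>0$ together with $\psi(1)=0$ and $\psi'(1)=\rho^2-1-2\rho\ln\rho>0$ (via $\ln x<x-1$) closes the argument.

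The paper's proof follows the same opening (the paper's condition (S48) is your $G(p)<0$ with the opposite sign convention, and its first derivative step matches your $G'(p)>0$), but from there it stays in the $(r,p)$ variables and differentiates in $p$ three more times, each time checking the boundary value at $p=0$ via two separately proved auxiliary inequalities, $2(r-1)-(\ln r)(r+1)>0$ and $2(1-r^2)+4r\ln r>0$. Your substitution $\rho=(1-rp)/(1-p)$, $s=1/r$ decouples the variables and lets you differentiate in $s$ instead; this collapses the endgame to a single two-derivative convexity argument needing only $\ln x<x-1$, so you avoid both auxiliary lemmas. The trade-off is that the paper's route is entirely mechanical (just keep differentiating), while yours requires spotting the right change of variables; once spotted, yours is shorter and cleaner.
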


\begin{proof}[Proof of \lref{lem:f/pIncrease}]
	The proof is composed of three steps.
	
	\noindent \textbf{Step 1:} The aim of this step is to prove the following inequality,
	\begin{align}\label{eq:f/pInStep1Ineq1}
		2(r-1)- (\ln r)(r+1) >0 \qquad \forall \, 0<r<1. 
	\end{align}

	The following relations hold when $0<r<1$:  
	\begin{align}
		\begin{split}
			\text{\eref{eq:f/pInStep1Ineq1}}
			&\quad \stackrel{(a)}{\Leftarrow} \quad
			0> \frac{{\rm d}}{{\rm d} r}	\bigl[ 2(r-1)- (\ln r)(r+1) \bigr]
			=1-\frac{1}{r}-\ln r 
			\\
			&\quad \stackrel{(b)}{\Leftarrow} \quad
			0< \frac{{\rm d}}{{\rm d} r}	\Big( 1-\frac{1}{r}-\ln r \Big)
			= \frac{1-r}{r^2} , 
		\end{split}
	\end{align}
	which confirm \eref{eq:f/pInStep1Ineq1}. 
	Here the notation $A\Leftarrow B$ means  $B$ is a sufficient condition of $A$; 
	$(a)$ holds because the LHS of \eref{eq:f/pInStep1Ineq1} equals 0 when $r=1$; 
	and $(b)$ holds because $1-1/r-\ln r=0$ when $r=1$.

	\noindent \textbf{Step 2:} The aim of this step is to prove the following inequality.
	\begin{align}\label{eq:f/pInStep2Ineq2}
		2(1+r)(1-r)+4\,r(\ln r) >0 \qquad \forall \, 0<r<1. 
	\end{align}
	
	The following relation holds when $0<r<1$:  
	\begin{align}
		\text{\eref{eq:f/pInStep2Ineq2}}
		&\quad \stackrel{(a)}{\Leftarrow} \quad
		0>
		\frac{{\rm d}}{{\rm d} r}\bigl[ 2(1+r)(1-r)+4\,r(\ln r) \bigr]
		=4(1+\ln r-r), 
	\end{align}
	which confirms \eref{eq:f/pInStep2Ineq2}. 
	Here $(a)$ holds because the LHS of \eref{eq:f/pInStep2Ineq2} equals 0 when $r=1$.

	\noindent \textbf{Step 3:} The aim of this step is to prove \lref{lem:f/pIncrease}. 
	Let $\gamma_r(p):=\ln\frac{1-p}{1-rp}$, which satisfies $\gamma_r(0)=0$ and 
	\begin{align}
		\gamma_r'(p) = \frac{r-1}{(1-rp)(1-p)}, 
		\qquad\quad
		\gamma_r''(p) = \frac{(1-r)(2rp-r-1)}{(1-rp)^2(1-p)^2}. 
	\end{align}

	The following relations hold when $0<r<1$ and $0<p<1$:  
	\begin{align}
		\frac{\beta(r,p)}{p} \ &\text{is increasing in}\ p
		\quad \Leftarrow \quad
		\frac{p\ln r}{\gamma_r(p)} + p \ \text{is decreasing in}\ p
		\quad \Leftarrow \quad
		\frac{\partial}{\partial p} \left[ \frac{p\ln r}{\gamma_r(p)} + p \right] <0 
		\nonumber\\
		&\quad \Leftarrow \quad
		(\ln r) \left[ \gamma_r(p) - p\,\gamma_r'(p) \right] + \gamma_r(p)^2 <0
		\label{eq:G43condition}\\
		&\quad \stackrel{(a)}{\Leftarrow} \quad
		\frac{\partial}{\partial p} 
		\bigl[ (\ln r) \left[ \gamma_r(p) - p\,\gamma_r'(p) \right] + \gamma_r(p)^2 \bigr]  <0 
		\nonumber\\
		&\quad \Leftarrow \quad 
		2 \gamma_r(p) \gamma_r'(p) - (\ln r) p \,\gamma_r''(p) <0
		\nonumber\\
		&\quad \Leftarrow \quad
		2 \gamma_r(p) + \frac{(\ln r)(2rp-r-1)p }{(1-rp)(1-p)} >0
		\label{eq:G44condition}\\
		&\quad \stackrel{(b)}{\Leftarrow} \quad 
		\frac{\partial}{\partial p}  \left[ 2 \gamma_r(p) + \frac{(\ln r)(2rp-r-1)p }{(1-rp)(1-p)} \right] > 0
		\nonumber\\ 
		&\quad \Leftarrow \quad
		2(r-1)(rp-1)(p-1) + (\ln r)(4rp -rp^2-r^2p^2-r-1) >0
		\label{eq:G45condition}\\ 
		&\quad \stackrel{(c)}{\Leftarrow} \quad
		\frac{\partial}{\partial p} \left[ 2(r-1)(rp-1)(p-1) + (\ln r)(4rp -rp^2-r^2p^2-r-1) \right]  >0
		\nonumber\\ 
		&\quad \Leftarrow \quad
		2(r-1)(2rp-1-r) + (\ln r)(4r -2rp-2r^2p) >0
		\label{eq:G46condition}\\ 
		&\quad \stackrel{(d)}{\Leftarrow} \quad
		\frac{\partial}{\partial p} \left[ 2(r-1)(2rp-1-r) + (\ln r)(4r -2rp-2r^2p) \right]  >0
		\nonumber\\ 
		&\quad \Leftarrow \quad
		2(r-1)- (\ln r)(r+1) >0. \label{eq:G47condition}
	\end{align}
	Here 
	$(a)$ holds because the LHS of \eref{eq:G43condition} equals 0 when $p=0$; 
	$(b)$ holds because the LHS of \eref{eq:G44condition} equals 0 when $p=0$;
	$(c)$ holds because the LHS of \eref{eq:G45condition} equals $2(r-1)-(\ln r)(r+1)$ when $p=0$, which is positive according to \eref{eq:f/pInStep1Ineq1};
	$(d)$ holds because the LHS of \eref{eq:G46condition} equals $2(1+r)(1-r)+4\,r(\ln r)$ when $p=0$, which is positive according to \eref{eq:f/pInStep2Ineq2}; 
	Eq.~\eqref{eq:G47condition} holds according to \eref{eq:f/pInStep1Ineq1}, which confirms \lref{lem:f/pIncrease}. 
\end{proof}

\end{document}